\newif\iffull 
\newif\ifarxiv 
\makeatletter \@input{texdirectives.tex} \makeatother
\def\@authorsaddresses{}
  \renewcommand{\headrulewidth}{\z@}%
  \renewcommand{\footrulewidth}{\z@}%
  \renewcommand{\headrulewidth}{\z@}%
  \renewcommand{\footrulewidth}{\z@}%
\def\@mkbibcitation{}
\renewcommand\footnotetextcopyrightpermission[1]{}
\newcommand\tableofcontentsA{%
    \section*{\contentsname
        \@mkboth{%
           \MakeUppercase\contentsname}{\MakeUppercase\contentsname}}%
    \@starttoc{toca}%
    }
\DeclareExpandableDocumentCommand{\IfNoValueOrEmptyTF}{mmm}
 {
  \IfNoValueTF{#1}{#2}
   {
    \tl_if_empty:nTF {#1} {#2} {#3}
   }
 }
\declaretheorem[style=remark,numbered=no]{case}
 \DeclareDocumentCommand{\impR}{O{\Omega} m m m m O{\sprec}}{#1 |- #2 #6 #3: #4 \sprec #5}
 \DeclareDocumentCommand{\impRu}{O{\Omega} m m m m O{\seprecv}}{#1 |- #2 #6 #3: #4 \sprec #5}
  \DeclareDocumentCommand{\impRev}{O{\Omega} m m m m}{#1 |- #2 \sprec #3: #4 \sprec #5}
\DeclareDocumentCommand{\similar}{m m}{#1 \triangleright #2}
\newcommand{\seprecv}{\sprec_{\hspace{-0.1em}v\hspace{0.1em}}}
\DeclareDocumentCommand{\impRel}{m m m m O{\W} O{\tpesub}}
{{({#5}, {#1}, {#2}) \in \mathcal{V}_{#6}\llbracket#3 \gprec #4\rrbracket}}
\DeclareDocumentCommand{\inImpDef}{m m O{\tpesub}}{\text{Imp}_{#3} [#1 \gprec #2]}
\DeclareDocumentCommand{\setvIR}{m m O{\tpesub}}{{\mathcal{V}_{#3}\llbracket#1 \gprec #2\rrbracket}}
\newcommand{\expandTypeSy}{\mathit{ex}}
\DeclareDocumentCommand{\expandType}{m O{}}{\expandTypeSy({#1}, {#2})}
\DeclareDocumentCommand{\setnIP}{m m O{\tpesub}}{\mathcal{V}_{#3}\llbracket {#1} \sim {#2} \rrbracket}
\DeclareDocumentCommand{\setcIP}{m m O{\tpesub}}{\mathcal{C}_{#3} \llbracket{#1} \sim {#2}\rrbracket}
\DeclareDocumentCommand{\inImpIS}{m m O{\tpesub} }{\text{Imp}_{#3} \llbracket{#1} \sim {#2}\rrbracket}
\DeclareDocumentCommand{\setnSP}{ m m O{\tpesub}}{{\mathcal{V}_{#3}\llbracket#1 \gprec #2\rrbracket}}
\newcommand{\wo}{W}
\DeclareDocumentCommand{\setcSP}{ m m O{\tpesub}}{{\mathcal{C}_{#3}\llbracket#1 \gprec #2\rrbracket}}
\DeclareDocumentCommand{\inStaticSP}{m m  O{\tpesub} }{\text{Imp}_{#3} [#1 \gprec #2]}
\DeclareDocumentCommand{\staticGSP}{ m m m O{\wo.\sstore} O{\tpesub} }{ \uvalStatic(#1) \land \proj{2}[\ev] = \enrich{#5(#2)}{#4}}
\DeclareDocumentCommand{\lrwggSP}{O{\sstore; \Delta; \Gamma} m m m}{#1 \Vdash #2 : #3 \gprec  #4}
\DeclareDocumentCommand{\setggSP}{ m O{\tpesub}}{{\mathcal{G}_{#2}\llbracket#1\rrbracket}}
\DeclareDocumentCommand{\atomvalueSP}{m O{\rho}}{\text{Atom}^{=}_{#2}[#1]}
\DeclareDocumentCommand{\atomtermSP}{m O{\rho}}{\text{Atom}_{#2}[#1]}
\DeclareDocumentCommand{\atomSP}{m O{n}}{\text{Atom}_{#2}[#1]}
\DeclareDocumentCommand{\atomvSP}{m O{n}}{\text{Atom}^\text{val}_{#2}[#1]}
\newcommand{\normyName}{\mathit{norm}}
\DeclareDocumentCommand{\normy}{m O{}}{\normyName({#1}, {#2})}
\newcommand{\dynLang}{\textnormal{${\lste{\lambda_{\footnotesize{\textsf{dyn}}}}}$}\xspace}
\newcommand{\lambdaSeal}{\textnormal{${\lste{\lambda_{\footnotesize{\textsf{seal}}}}}$}\xspace}
\DeclareDocumentCommand{\sealedC}{m m}{\lste{\{#1\}_{#2}}}
\DeclareDocumentCommand{\unsealC}{m m m m}{\mathsf{let} \; \{#1\}_{#2} = {#3} \; \mathsf{in} \; {#4}}
\DeclareDocumentCommand{\letC}{m m m}{\mathsf{let} \; {#1} = {#2} \; \mathsf{in} \; {#3}}
\DeclareDocumentCommand{\letT}{m m m}{\mathsf{let} \; {#1} = {#2} \; \mathsf{in} \; {#3}}
\DeclareDocumentCommand{\sealC}{m m}{\lste{\nu {#1}.{#2}}}
\newcommand{\heap}{\lste{\mu}}
\newcommand{\heapp}{\lste{\mu'}}
\newcommand{\seal}{\lste{\sigma}}
\newcommand{\Seal}{\oblset{Seal}}
\DeclareDocumentCommand{\confLS}{O{\heap} m }{#1 \triangleright #2}
\newcommand{\su}{\mathit{su}}
\newcommand{\sus}{\su^{\sigma}}
\newcommand{\sue}{\mathit{su}_{\ev}}
\newcommand{\susep}{\su^{\sigma'}_{\ev}}
\newcommand{\suse}{\su^\sigma_{\ev}}
\newcommand{\sylogeqLS}{\approx}
\newcommand{\lsealcolor}{propcolor}
\DeclareDocumentCommand{\storeevalLS}{O{\heap} m}{\lste{{#2}  \;\|\; {#1}}}
\DeclareDocumentCommand{\lste}{m}{{\color{\lsealcolor}{#1}}}
\DeclareDocumentCommand{\simulationRelT}{m m m O{\lste{\heap};\store; \Gamma}}{ {#4} |- {\color{\lsealcolor}{#1}}  \approx {#2} : {#3}}
\DeclareDocumentCommand{\simulationRel}{m m m O{\lste{\heap};\store}}{ {#4} |- {\color{\lsealcolor}{#1}}  \approx {#2} : {#3}}
\newcommand{\gT}{D}
\newcommand{\gE}{E}
\newcommand{\stError}{\lste{\textup{\textbf{error}}}\xspace}
\newcommand{\typeError}{\lste{\textup{\textbf{type\_error}}}\xspace}
\newcommand{\sealError}{\lste{\textup{\textbf{seal\_type\_error}}}\xspace}
\newcommand{\unsealError}{\lste{\textup{\textbf{unseal\_error}}}\xspace}
\DeclareDocumentCommand{\compileLSA}{m}{\lceil {\color{\lsealcolor}{#1}} \rceil}
\DeclareDocumentCommand{\terminationGSF}{O{t}}{{#1}\terminationsy}  
\DeclareDocumentCommand{\terminationLS}{m O{\heap} O{v}}{\lste{{#1} \terminationsy {#3}  \;\|\; {#2}}}  
\DeclareDocumentCommand{\terminationLSError}{m O{\heap} O{v}}{\lste{{#1 \terminationsy   \; {\unsealError}}}}
\DeclareDocumentCommand{\terminationLSTypeError}{m O{\heap} O{v}}{\lste{{#1 \terminationsy   \; {\typeError}}}}\DeclareDocumentCommand{\terminationLSTError}{m O{\heap} O{v}}{\lste{{#1 \terminationsy   \; {\stError}}}}              
\DeclareDocumentCommand{\divergeLS}{m }{\lste{{#1}\divergesy}}
\definecolor{mygray}{HTML}{800000}
\DeclareDocumentCommand{\initialDelta}{O{\dom(\tpesub)}}{#1}
\DeclareDocumentCommand{\evidenceExists}{m m m}{{#1}[{#2}, {#3}]}
\DeclareDocumentCommand{\getinitialStore}{O{\W}}{#1\!.\store}
\DeclareDocumentCommand{\projectunliftliftEv}{m O{} }{{\IfNoValueTF{#2}{{\pi^{*}_{#1}}}{{\pi^{*}_{#1}({#2})}}}}
\DeclareDocumentCommand{\projectCompEv}{m O{} }{{\IfNoValueTF{#2}{{\pi^{2}_{#1}}}{{\pi^{2}_{#1}({#2})}}}}
\DeclareDocumentCommand{\tpgsfeVSim}{m m m m m O{\sstore} O{\Delta} O{\Gamma}}{#6; #7; #8 |- #1 : #3 \joatsyExttVSim[#5] #2 : #4}
\DeclareDocumentCommand{\gsfTypingImp}{m m m O{\sstore} O{\Delta} O{\Gamma}}{#4; #5; #6 |- #1 : #2 \gprec #3  }
\newcommand{\joatsyExttVSim}[1][T]{\approx_{#1}}
\DeclareDocumentCommand{\swellGammaTPSim}{O{\store} O{\Delta}  O{\Gamma}}{#1; #2 |-   #3}
\DeclareDocumentCommand{\typingctxImp}{O{T} O{T'} O{\ctx}  O{\sstore; \Delta; \Gamma} O{\sstore'; \Delta'; \Gamma'} O{\cT} O{\cT'}}{\vdash #3: (#4 \vdash #1 \gprec #6)\tcsy (#5 \vdash #2  \gprec #7)}
\DeclareDocumentCommand{\typingctxImpP}{O{T} O{T''} O{\ctx} O{\cT} O{\cT''} O{\sstore; \Delta; \Gamma} O{\sstore'; \Delta'; \Gamma'}}{\vdash #3: (#6 \vdash #1 \gprec #4)\tcsy (#7 \vdash #2  \gprec #5)}
\DeclareDocumentCommand{\contequivImp}{m m m m O{T}  O{\sstore; \Delta; \Gamma}}{{#6} \vdash {#1} : {#3} \cesyimp{#5} {#2} : {#4}}
\DeclareDocumentCommand{\contequivImpempty}{m m m m O{T}}{{#1} : {#3} \cesyimp{#5} {#2} : {#4}}
\DeclareDocumentCommand{\contaproxImp}{m m m m  O{\store; \Delta; \Gamma}}{{#5} \vdash #1 \casy #2 : {#3} \gprec {#5}}
\newcommand{\cesyimp}[1]{\approx^{\mathit{ctx}}_{#1}}
\newcommand{\iSWstore}{\sstore}
\DeclareDocumentCommand{\igetSStore}{O{} O{}}{{\W_{#2}\!.\iSWstore_{#1}}}
\DeclareDocumentCommand{\igetSStorep}{O{} O{}}{{\W'_{#2}\!.\iSWstore_{#1}}}
\DeclareDocumentCommand{\igetSStorepp}{O{} O{}}{{\W''_{#2}\!.\iSWstore_{#1}}}
\DeclareDocumentCommand{\igetSStoreppp}{O{} O{}}{{\W'''_{#2}\!.\iSWstore_{#1}}}
\DeclareDocumentCommand{\iigetStorew}{O{} O{} }{{\W_{#1}\!.\iSWstore_{#2}}}
\DeclareDocumentCommand{\iigetStorewp}{O{} O{} }{{\W'_{#1}\!.\iSWstore_{#2}}}
\DeclareDocumentCommand{\iigetStorewpp}{O{} O{} }{{\W''_{#1}\!.\iSWstore_{#2}}}
\DeclareDocumentCommand{\iigetStorewppp}{O{} O{} }{{\W'''_{#1}\!.\iSWstore_{#2}}}
\DeclareDocumentCommand{\iigetStorewpppp}{O{} O{} }{{\W''''_{#1}\!.\iSWstore_{#2}}}
\DeclareDocumentCommand{\isetv}{m O{\cT} O{\rho}}{\mathcal{N}_{#3} \llbracket #1 \gprec  #2\rrbracket}
\DeclareDocumentCommand{\iisetv}{O{\rho} m }{\mathcal{N}_{#1} \llbracket #2 \gprec  #2\rrbracket}
\DeclareDocumentCommand{\isett}{m O{\cT} O{\rho}}{\mathcal{C}_{#3} \llbracket #1 \gprec  #2\rrbracket}
\DeclareDocumentCommand{\iisett}{O{\rho} m }{\mathcal{C}_{#1} \llbracket #2 \gprec  #2\rrbracket}
\newcommand{\isetg}[2][\rho]{\mathcal{G}_{#1}\llbracket#2\rrbracket}
\DeclareDocumentCommand{\iatomvalue}{m m O{\rho}}{\text{Imp}_{#3}[#1 \gprec #2]}
\DeclareDocumentCommand{\iatomterm}{m m O{\rho}} {\text{Imp}_{#3}^{\mathit{t}}[#2]}
\DeclareDocumentCommand{\iatom}{m m O{n}}{\text{Imp}_{#3}[#1, #2]}
\DeclareDocumentCommand{\iatomv}{m m O{n}}{\text{Imp}^\text{val}_{#3}[#1, #2]}
\DeclareDocumentCommand{\ilgrp}{m m m }{(#1, #2, #3)}
\DeclareDocumentCommand{\ilgrpp}{m m m m m O{\rho}} {(#1, #2, #3) \in \iatomvalue{#4}{#5}[#6]}
\DeclareDocumentCommand{\ilgrpt}{m m m m m O{\rho}} {(#1, #2, #3) \in \iatomterm{#4}{#5}[#6]}
\DeclareDocumentCommand{\ilgrvm}{m m m m m O{\rho} O{\cT}}{\isetv{#4}[#7][#6] &=\quad \{\ilgrpp{#1}{#2}{#3}{#4}{#7}[#6] ~|~ #5 \}}
\DeclareDocumentCommand{\ilgrvmalpha}{m m m m m O{\rho} O{\cT}}{\isetv{#4}[#7][#6] &=\quad \{(#1, #2, #3) \in \iatomvalue{#4}{#7}[#6] ~|~ #5 \}}
\DeclareDocumentCommand{\ilgrtm}{m m m m m O{\rho} O{\cT}}{\isett{#4}[#7][#6] &=\quad \{\ilgrpt{#1}{#2}{#3}{#4}{#7}[#6] ~|~ #5 \}}
\DeclareDocumentCommand{\ilgrt}{m m m m O{\rho} }{\ilgrp{#1}{#2}{#3} \in \isett{#4}[#5]}
\DeclareDocumentCommand{\ilgrv}{m m m m O{\rho} }{\ilgrp{#1}{#2}{#3} \in \isetv{#4}[#5]}
\DeclareDocumentCommand{\ilgrg}{m m m m O{\rho} }{\ilgrp{#1}{#2}{#3} \in \isetg[#5]{#4}}
\DeclareDocumentCommand{\precEvGG}{O{\ev[1]} O{\ev[2]} O{\Delta} O{\store} O{\store'}}{#4 \vdash {#1}\ \joatsyExttV[#3] \ #5 \vdash {#2}}
\DeclareDocumentCommand{\precEvGGExt }{m m O{\Delta} O{\store} O{\store'}}{#4 \vdash{#1}\ \joatsyExttV[#3] \ #5 \vdash {#2}}
\newcommand{\joatsy}{{\mathrel{\ooalign{\raise0.2ex\hbox{$\sqsubset$}\cr\hidewidth\raise-0.9ex\hbox{\scalebox{0.9}{$\sim$}}\hidewidth\cr}}}}
\DeclareDocumentCommand{\joatsyExtt}{O{\Delta} O{\Gamma}}{\mathrel{%
  \ooalign{\raise0.2ex\hbox{$\sqsubset$}\cr\hidewidth\raise-0.8ex\hbox{\scalebox{0.9}{$\sim$}}\hidewidth\cr}}_{#1; #2}}
\DeclareDocumentCommand{\joatsyExttV}{O{\Delta}}{\mathrel{%
  \ooalign{\raise0.2ex\hbox{$\sqsubset$}\cr\hidewidth\raise-0.8ex\hbox{\scalebox{0.9}{$\sim$}}\hidewidth\cr}}_{#1}}
\DeclareDocumentCommand{\syctaprocjack}{O{\Delta} O{\Gamma}}{\mathrel{%
  \ooalign{\raise0.2ex\hbox{$\prec$}\cr\hidewidth\raise-0.8ex\hbox{\scalebox{0.9}{$\sim$}}\hidewidth\cr}}_{#1; #2}}
 \newcommand\joatsysource{\mathrel{%
  \ooalign{\raise0.2ex\hbox{$\prec$}\cr\hidewidth\raise-0.8ex\hbox{\scalebox{0.9}{$\sim$}}\hidewidth\cr}}_\mathit{s}}
\DeclareDocumentCommand{\swellGammaTP}{O{\store}O{\store'}  O{\Delta}  O{\Gamma} O{\Gamma'}}{#1; #2; #3 |-   #4; #5}  
\DeclareDocumentCommand{\tpgsfe}{m m m m O{\store} O{\store'} O{\Delta} O{\Gamma} }{#5 |- #1 : #3 \joatsyExtt[#7][#8] #6 |- #2 : #4}
\DeclareDocumentCommand{\tpgsfeV}{m m m m O{\store} O{\store'} O{\Delta} O{\Gamma} O{\Gamma'}}{#5; #8 |- #1 : #3 \joatsyExttV[#7] #6; #9 |- #2 : #4}
\DeclareDocumentCommand{\ctaprocjack}{m m m m O{\store} O{\store'} O{\Delta} O{\Gamma}}{#5 |- #1 : #3 \syctaprocjack[#7][#8] #6 |- #2 : #4}
\DeclareDocumentCommand{\tpgsfeG}{m m O{\store} O{\store'}}{#3 |- #1 \joatsy_{\Delta} #4 |- #2}
\DeclareDocumentCommand{\tpgsfeE}{m m O{\store} O{\store'}}{ #1 \joatsy #4}
\newcommand{\clienteEx}{C}
\newcommand{\semafvar}{\mathit{Sem}_1}
\newcommand{\semafvartres}{\mathit{Sem}_3}
\newcommand{\semaf}{\mathit{Sem}}
\DeclareDocumentCommand{\evinn}{O{\ev}  O{}}{#1[#2\mathit{in}]}
\DeclareDocumentCommand{\evoutE}{O{\ev}  O{}}{#1[#2\mathit{out}]}
\DeclareDocumentCommand{\evoutEE}{O{\ev}  O{}  O{}}{#1_{#2\mathit{out}}^{#3}}
\DeclareDocumentCommand{\cEUO}{O{}}{\cE_{#1*}}
\DeclareDocumentCommand{\tyExists}{O{X} O{\cT}}{\exists #1. #2}
 \DeclareDocumentCommand{\unpackExists}{O{X} O{x} O{t_1} O{t_2}}{\textsf{unpack} \langle #1, #2 \rangle = #3 \ \textsf{in}  \ #4}
\DeclareDocumentCommand{\unpackExistsIn}{O{X} O{x} O{t_1}}{\textsf{unpack} \langle #1, #2 \rangle = #3 \ \textsf{in}}
\DeclareDocumentCommand{\packExists}{O{\cT'} O{v} O{X} O{\cT} }{\textsf{pack}\langle#1, #2\rangle \ \textsf{as} \ \exists #3. #4}
\DeclareDocumentCommand{\packExistsG}{O{\cT'} O{v} O{\exists X.\cT}}{\textsf{pack}\langle#1, #2\rangle \ \text{as} \ #3}
\DeclareDocumentCommand{\packExistsDy}{O{\cT'} O{v} O{X} O{\cT} }{\textsf{pack}\langle #2\rangle \ \textsf{as} \ #4}
\DeclareDocumentCommand{\packExistsC}{O{\cT'} O{v}  O{\cT}}{\textsf{pack}\langle#1, #2\rangle \ \textsf{as} \ #3}
\DeclareDocumentCommand{\packExistst}{O{\cT'} O{v}  O{X} O{\cT}}{\textsf{pack}\langle#1, #2\rangle \ \textsf{as} \ \exists #3. #4}
\DeclareDocumentCommand{\packExistsr}{O{\cT'} O{v}  O{X} O{\cT}}{\textsf{pack}\langle#1, #2\rangle}
\DeclareDocumentCommand{\evExistst}{O{\ev}  O{\evlift{\cT'}} O{\evlift{\alpha}} }{#1^{#2}_{#3}}
\DeclareDocumentCommand{\evEsubst}{O{\ev} O{X}  O{\hat{\cT'}} O{\hat{\alpha}} }{\mathit{subst}(#1, #2, #3, #4)}
\DeclareDocumentCommand{\precTR}{O{\store; \Delta; \Gamma} m m m m}{#1 |- #2 : #3 \sprec #4: #5}
\newcommand{\gsfredsy}{\Downarrow}
\DeclareDocumentCommand{\gsfred}{m m O{\store} O{\store'}}{\conf[#3]{#1} \gsfredsy \conf[#4]{#2}}
\newcommand{\gsfreds}[2]{#1 \gsfredsy #2}
\DeclareDocumentCommand{\lrwgg}{O{\sstore; \Delta; \Gamma} m m m}{#1 \Vdash #2 : #3 \gprec  #4}
\DeclareDocumentCommand{\StaticgJ}{ O{\sstore; \Delta; \Gamma} m }{#1 |- #2}
\DeclareDocumentCommand{\atomvalue}{m O{\rho}}{\text{Atom}^{=}_{#2}[#1]}
\DeclareDocumentCommand{\atomterm}{m O{\rho}}{\text{Atom}_{#2}[#1]}
\DeclareDocumentCommand{\atom}{m m O{n}}{\text{Atom}_{#3}[#1, #2]}
\DeclareDocumentCommand{\atomv}{m m O{n}}{\text{Atom}^\text{val}_{#3}[#1, #2]}
\renewcommand{\consistent}[1]{#1^{\sharp}}
\newcommand{\equiSymbol}{\equiv}
\DeclareDocumentCommand{\equivType}{O{\rho} m m }{#2 \equiSymbol #3}
\newcommand{\richtype}[2][\alpha]{{#1}^{#2}}
\newcommand{\const}{b}
\newcommand{\ftype}{\mathit{ty}}
\newcommand{\basetype}{B}
\newcommand{\vectorOp}[1]{\overline{#1}}
\newcommand{\aop}{\mathit{op}}
\newcommand{\op}[1]{\aop(#1)}
\newcommand{\pairsy}{\times}
\newcommand{\pairtype}[2]{#1 \pairsy #2}
\newcommand{\pair}[2]{\langle #1, #2 \rangle}
\DeclareDocumentCommand{\fst}{ O{} }{ {\IfNoValueTF{#1}{{\pi_1}}{{\pi_1({#1})}}}}
\DeclareDocumentCommand{\snd}{ O{} }{ {\IfNoValueTF{#1}{{\pi_2}}{{\pi_2({#1})}}}}
\DeclareDocumentCommand{\proj}{ m O{} }{ {\IfNoValueTF{#2}{{\pi_{#1}}}{{\pi_{#1}({#2})}}}}
\DeclareDocumentCommand{\pconf}{ m O{} }{ \conf[{\IfNoValueTF{#2}{{\store}}{{\store[#2]}}}]{#1}}
\definecolor{evcolor}{HTML}{333399}
\newcommand{\evc}[1]{{\color{evcolor} #1}}
\newcommand{\evblack}[1]{\bgroup \renewcommand{\evc}[1]{##1} #1 \egroup}
\DeclareDocumentCommand{\evproj}{ m O{} }{
\evc{{\IfNoValueTF{#2}{{p_{#1}}}{{p_{#1}({#2})}}}}}
\DeclareDocumentCommand{\tproj}{ m O{} }{\proj{#1}[#2]}
\newcommand{\TypeName}{\oblset{TypeName}}
\newcommand{\VarType}{\oblset{TypeVar}}
\newcommand{\EType}{\oblset{EType}}
\DeclareDocumentCommand{\eqrules}{m m O{\sstore} O{\Delta}}{#3; #4 |- #1 = #2}
\DeclareDocumentCommand{\EnvSS}{O{\sstore} O{\Delta} O{\Gamma}}{#1; #2; #3 |- }
\DeclareDocumentCommand{\EnvSG}{O{\store} O{\Delta} O{\Gamma}}{#1; #2; #3 |- }
\DeclareDocumentCommand{\EnvG}{O{\store}}{#1 |-  }
\DeclareDocumentCommand{\swellGamma}{O{\store} O{\Delta}  O{\Gamma}}{#1; #2 |- #3}
\DeclareDocumentCommand{\sswellGamma}{O{\sstore} O{\Delta}  O{\Gamma}}{#1; #2 |- #3}
\DeclareDocumentCommand{\twf}{m O{\sstore} O{\Delta}}{#2; #3 |- #1}
\DeclareDocumentCommand{\gtwf}{m O{\store} O{\Delta}}{#2; #3 |- #1}
\newcommand{\twfsimpl}[2]{#1 |- #2}
\DeclareDocumentCommand{\storeeval}{O{\sstore}}{{#1}  \triangleright}
\DeclareDocumentCommand{\abst}{O{} m O{\store;\Delta}}{\as^{#1}(#2)}
\DeclareDocumentCommand{\conc}{O{} m O{\store;\Delta}}{\cs^{#1}(#2)}
\newcommand{\as}{A}
\newcommand{\cs}{C}
\renewcommand{\cT}{G}
\newcommand{\ceqrules}[2]{\store |- #1 \sim  #2}
\DeclareDocumentCommand{\deqrules}{O{\rls} m m}{#2 = #3}
\renewcommand{\TermT}[1]{#1}
\newcommand{\itu}[1]{#1}
\newcommand{\inTermT}[1]{: \TermT{#1}}
\newcommand{\initu}[1]{: \itu{#1}}
\newcommand{\itermi}{t}
\newcommand{\iterm}[1]{\itermi}
\newcommand{\itermp}[1]{\itermi'}
\DeclareDocumentCommand{\itermg}{m O{} O{}}{\itermi_{#2}^{#3}}
\DeclareDocumentCommand{\ceqrulessimpl}{O{\store; \Delta} m m}{#1 |- #2 \sim  #3}
\renewcommand{\cast}[2]{\evcast{\evc{#1}}{#2}}
\DeclareDocumentCommand{\newev}{m O{\key[1]} O{\lock[2]}}{{\braket{#1}}}
\DeclareDocumentCommand{\newevs}{m O{\key} O{\lock}}{{\braket{#1}}}
\renewcommand{\ev}[1][]{\evc{\varepsilon_{#1}}}
\renewcommand{\trans}[1]{\circ}
\newcommand{\evp}[1][]{\evc{\varepsilon'_{#1}}}
\newcommand{\evpp}[1][]{\evc{\varepsilon''_{#1}}}
\DeclareDocumentCommand{\evsub}{m O{}}{\evc{\ev_{#1}^{#2}}}
\newcommand{\evinst}[2]{\evc{{#1}[{#2}]}}
\newcommand{\evlift}[1]{\evc{\hat{#1}}}
\newcommand{\evliftname}{\textit{lift}}
\newcommand{\evunliftname}{\textit{unlift}}
\newcommand{\evunlift}{\evunliftname_{\store}}
\newcommand{\unlift}[1]{\evunliftname({#1})}
\newcommand{\evliftop}[2]{\evliftname_{#2}({#1})}
\newcommand{\EElemType}{\oblset{EType}}
\DeclareDocumentCommand{\evg}{O{} O{}}{\evc{\varepsilon_{#1}^{#2}}}
\DeclareDocumentCommand{\evSubst}{O{i} O{\tpesub} }{\evc{\varepsilon_{#1}^{#2}}}
\DeclareDocumentCommand{\unrolling}{ m O{\alpha} O{\store}  }{#3^{#1}(#2)}
\DeclareDocumentCommand{\unrollingRel}{ m O{\alpha} O{\store}  }{#3^{#1}(#2)}
\DeclareDocumentCommand{\downtn}{m O{n}}{{#1}\downarrow^{#2}}
\DeclareDocumentCommand{\downtng}{m O{n} O{\store}}{{#1}\downarrow^{#2}}
\newcommand{\rls}{R}
\newcommand{\redopn}{\delta}
\newcommand{\redop}[1]{\delta(\aop, #1)}
\newcommand{\keys}{K}
\newcommand{\locks}{L}
\DeclareDocumentCommand{\key}{ O{} }{ {\IfNoValueTF{#1}{{\keys}}{{\keys_{#1}}}}}
\DeclareDocumentCommand{\keyp}{ O{} }{ {\IfNoValueTF{#1}{{\keys'}}{{\keys'_{#1}}}}}
\DeclareDocumentCommand{\keypp}{ O{} }{ {\IfNoValueTF{#1}{{\keys''}}{{\keys''_{#1}}}}}
\DeclareDocumentCommand{\lock}{ O{} }{ {\IfNoValueTF{#1}{{\locks}}{{\locks_{#1}}}}}
\DeclareDocumentCommand{\lockp}{ O{} }{ {\IfNoValueTF{#1}{{\locks'}}{{\locks'_{#1}}}}}
\DeclareDocumentCommand{\lockpp}{ O{} }{ {\IfNoValueTF{#1}{{\locks''}}{{\locks''_{#1}}}}}
\DeclareDocumentCommand{\rl}{ O{} }{ {\IfNoValueTF{#1}{{\rls}}{{\rls_{#1}}}}}
\DeclareDocumentCommand{\rlp}{ O{} }{ {\IfNoValueTF{#1}{{\rls'}}{{\rls'_{#1}}}}}
\DeclareDocumentCommand{\store}{ O{} }{ {\IfNoValueTF{#1}{{\Xi}}{{\Xi_{#1}}}}}
\DeclareDocumentCommand{\storep}{ O{} }{ {\IfNoValueTF{#1}{{\Xi'}}{{\Xi'_{#1}}}}}
\DeclareDocumentCommand{\sstore}{ O{} }{ {\IfNoValueTF{#1}{{\Sigma}}{{\Sigma_{#1}}}}}
\newcommand{\W}{W}
\newcommand{\K}{\kappa}
\newcommand{\Wstore}{\store}
\newcommand{\getStore}[1][]{{\W\!.\Wstore_{#1}}}
\DeclareDocumentCommand{\getStorew}{O{} O{} }{{\W_{#1}\!.\Wstore_{#2}}}
\DeclareDocumentCommand{\getStorewp}{O{} O{} }{{\W'_{#1}\!.\Wstore_{#2}}}
\DeclareDocumentCommand{\getStorewpp}{O{} O{} }{{\W''_{#1}\!.\Wstore_{#2}}}
\DeclareDocumentCommand{\getStorewppp}{O{} O{} }{{\W'''_{#1}\!.\Wstore_{#2}}}
\DeclareDocumentCommand{\getStorewpppp}{O{} O{} }{{\W''''_{#1}\!.\Wstore_{#2}}}
\DeclareDocumentCommand{\getStorewg}{O{} O{} O{} }{{\W^{#3}_{#1}\!.\Wstore_{#2}}}
\DeclareDocumentCommand{\getStorewgp}{O{} O{} O{} }{{\W'^{#3}_{#1}\!.\Wstore_{#2}}}
\newcommand{\getStorep}[1][]{{\W'.\Wstore_{#1}}}
\newcommand{\getRel}{{\W\!.\K}}
\newcommand{\getRelp}{{\W'\!.\K}}
\newcommand{\tpesub}{\rho}
\newcommand{\tpesubSI}[2][]{\rho_{#1}(#2)}
\DeclareDocumentCommand{\tpesubEnv}{m O{\getStore[i]} O{\rho} }{#3(#1, #2)}
\newcommand{\getIdx}[1][j]{{\W\!.{#1}}}
\newcommand{\getIdxp}[1][j]{{\W'\!.{#1}}}
\DeclareDocumentCommand{\getIdxg}{O{\W} O{} O{} }{{{#1}_{#2}^{#3}}.j}
\DeclareDocumentCommand{\lgru}{m m m m O{\rho} }{\lgrp{#1}{#2}{#3} \in \setu[#5]{#4}}
\DeclareDocumentCommand{\lgrt}{m m m m O{\rho} }{\lgrp{#1}{#2}{#3} \in \sett[#5]{#4}}
\DeclareDocumentCommand{\lgrv}{m m m m O{\rho} }{\lgrp{#1}{#2}{#3} \in \setv[#5]{#4}}
\DeclareDocumentCommand{\lgrg}{m m m m O{\rho} }{\lgrp{#1}{#2}{#3} \in \setg[#5]{#4}}
\DeclareDocumentCommand{\lgrum}{m m m m O{\rho} m}{\setu[#5]{#4} &=\quad \{\lgrp{#1}{#2}{#3} ~|~ #6 \}}
\DeclareDocumentCommand{\lgrvmext}{m m m m O{\rho} m m}{\setv[#5]{#4} &=\quad \{\lgrp{#1}{#2}{#3} ~|~ #6 \} \\ &\cup\quad \{ #7 \} }
\DeclareDocumentCommand{\lgrtm}{m m m m O{\rho} m}{\sett[#5]{#4} &=\quad \{\lgrpt{#1}{#2}{#3}[#4] ~|~ #6 \}}
\DeclareDocumentCommand{\lgrvmalpha}{m m m m O{\rho} m}{\setv[#5]{#4} &=\quad \{(#1, #2, #3) \in \atomvalue{#4}[\emptyset] ~|~ #6 \}}
\DeclareDocumentCommand{\lgrgm}{m m m m O{\rho} m}{\setg[#5]{#4} &=\quad \{\lgrp{#1}{#2}{#3} ~|~ #6 \}}
\DeclareDocumentCommand{\lgrvm}{m m m m O{\rho} m}{\setv[#5]{#4} &=\quad \{\lgrpp{#1}{#2}{#3}[#4] ~|~ #6 \}}
\DeclareDocumentCommand{\lgrvmexists}{m m m O{\rho} m}{\setv[#4]{#3} &=\quad \{\lgrppexists{#1}{#2}[{#3}] ~|~ #5 \}}
\DeclareDocumentCommand{\lgrppexists}{m m O{}}{(#1, #2) \in \atomvalue{#3}}
\newcommand{\translate}{\leadsto}
\newcommand{\backtranslate}{\reflectbox{$\leadsto$}}
\DeclareDocumentCommand{\translationgsfe}{O{t} O{t_{\ev}} O{\cT} O{\store; \Delta; \Gamma}}{#4 |- #1 \translate #2 : #3}
\DeclareDocumentCommand{\backtranslationgsfe}{O{t} O{t_{\ev}} O{\cT} O{\store; \Delta; \Gamma}}{#4 |- #1 \; \backtranslate \; #2 : #3}
\DeclareDocumentCommand{\tt}{O{} O{t}}{#2_{\ev[#1]}}
\DeclareDocumentCommand{\ctxev}{O{} O{\ctx}}{#2_{\ev[#1]}}
\newcommand{\tcsy}{\rightsquigarrow}
\newcommand{\terminationsy}{\Downarrow}
\newcommand{\divergesy}{\Uparrow}
\newcommand{\ctx}{C}
\newcommand{\casy}{\preceq^{\mathit{ctx}}}
\newcommand{\cesy}{\approx^{\mathit{ctx}}}
\DeclareDocumentCommand{\contaprox}{m m m O{\store; \Delta; \Gamma}}{{#4} \vdash #1 \casy #2 : {#3}}
\DeclareDocumentCommand{\contequiv}{m m m O{\store; \Delta; \Gamma}}{{#4} \vdash #1 \cesy #2 : {#3}}
\DeclareDocumentCommand{\contequivempty}{m m m }{#1 \cesy #2 : {#3}}
\DeclareDocumentCommand{\contequivs}{m m m O{\sstore; \Delta; \Gamma}}{{#4} \vdash #1 \cesy #2 : {#3}}
\DeclareDocumentCommand{\contequivg}{m m m O{\store; \Delta; \Gamma}}{{#4} \vdash #1 \cesy #2 : {#3}}
\DeclareDocumentCommand{\contequivgs}{m m m O{\store; \Delta; \Gamma}}{{#4} \vdash #1 \cesy_{\mathit{st}} #2 : {#3}}
\DeclareDocumentCommand{\contlrs}{ O{T} O{T'} O{\ctx_1} O{\ctx_2} O{\sstore; \Delta; \Gamma} O{\sstore';  \Delta' ; \Gamma' }}{\vdash #3 \preceq #4: (#5 \vdash #1) \tcsy (#6 \vdash #2)}
\DeclareDocumentCommand{\typingctxs}{O{T} O{T'} O{\ctx}  O{\sstore; \Delta; \Gamma} O{\sstore'; \Delta'; \Gamma' }}{\vdash #3: (#4 \vdash #1)\tcsy (#5 \vdash #2)}
\DeclareDocumentCommand{\typingctxg}{O{\cT} O{\cT'} O{\ctx}  O{\store; \Delta; \Gamma} O{\store'; \Delta'; \Gamma' }}{\vdash #3: (#4 \vdash #1)\tcsy (#5 \vdash #2)}
\DeclareDocumentCommand{\typingctx}{O{\cT} O{\cT'} O{\ctx} O{\store; \Delta; \Gamma} O{\store'; \emptyenv; \emptyenv}}{\vdash #3: (#4 \vdash #1)\tcsy (#5 \vdash #2)}
\DeclareDocumentCommand{\termination}  {m O{\store'}}          {\conf[#2]{#1\terminationsy}}
\DeclareDocumentCommand{\terminations} {m O{\ctx} O{\sstore'}} {\conf[#3]{#2[#1] \terminationsy}}
\DeclareDocumentCommand{\terminationg} {m O{\ctx} O{\store'}}  {\conf[#3]{#2[#1] \terminationsy}}
\DeclareDocumentCommand{\diverges}{m O{\ctx} O{\sstore'}}{\conf[#3]{#2[#1]} \divergesy}
\DeclareDocumentCommand{\diverge}{m O{\ctx} O{\store'}}{\conf[#3]{#2[#1]} \divergesy}
\renewcommand{\invdom}{\mathit{dom}}
\renewcommand{\invcod}{\mathit{cod}}
\newcommand{\invdoma}[1]{\evc{\invdom(#1)}}
\newcommand{\invcoda}[1]{\evc{\invcod(#1)}}
\DeclareDocumentCommand{\staticJ}{ O{\store; \Delta} m }{#1 |-_{u} #2}
\DeclareDocumentCommand{\staticgJ}{ O{\store; \Delta; \Gamma} m }{#1 |- #2}
\DeclareDocumentCommand{\conf}{ O{\store} m }{#1 \triangleright #2}
\DeclareDocumentCommand{\ceqrules}{m m O{\store; \Delta}}{#3 |- #1 \sim #2}
\DeclareDocumentCommand{\ceqrulesS}{m m O{\sstore; \Delta}}{#3 |- #1 \sim #2}
\newcommand{\gprecinv}{\sqsubseteq}
\DeclareDocumentCommand{\tprulesinv}{m m O{\store; \Delta}}{#1\ \gprecinv\  #2}
\DeclareDocumentCommand{\tprules}{m m O{\store; \Delta}}{#1\ \gprec\  #2}
\renewcommand{\cT}{G}
\DeclareDocumentCommand{\lgrp}{m m m }{(#1, #2, #3)}
\DeclareDocumentCommand{\lgrpp}{m m m O{}}{(#1, #2, #3) \in \atomvalue{#4}}
\DeclareDocumentCommand{\lgrpt}{m m m O{}}{(#1, #2, #3) \in \atomterm{#4}}
\newcommand{\setu}[2][\rho]{\mathcal{U}_{#1}\llbracket#2\rrbracket}
\newcommand{\sett}[2][\rho]{\mathcal{T}_{#1}\llbracket#2\rrbracket}
\newcommand{\setv}[2][\rho]{\mathcal{V}_{#1}\llbracket#2\rrbracket}
\newcommand{\setg}[2][\rho]{\mathcal{G}_{#1}\llbracket#2\rrbracket}
\newcommand{\sets}[1][\Wstore]{\mathcal{S}\llbracket#1\rrbracket}
\newcommand{\setd}[1][\Delta]{\mathcal{D}\llbracket#1\rrbracket}
\newcommand{\futureW}{\succeq}
\newcommand{\world}{\oblset{World}}
\newcommand{\worldn}{\oblset{World}_n}
\newcommand{\reln}[3][n]{\oblset{Rel}_{#1}[#2, #3]}
\newcommand{\nat}{\oblset{Nat}}
\newcommand{\tnstore}{\oblset{Store}}
\newcommand{\relj}[1][j]{\oblset{Rel}_#1}
\newcommand{\logaproxg}[4][\store; \Delta; \Gamma]{#1 |- #2 \preceq #3 : #4}
\newcommand{\logeqg}[4][\store; \Delta; \Gamma]{#1 |- #2 \approx #3 : #4}
\newcommand{\delequal}{\triangleq}
\newcommand{\krel}{\kappa}
\newcommand{\acotrel}[2]{\lfloor #1 \rfloor_{#2}}
\newcommand{\extworld}{\boxtimes}
\newcommand{\lcorchete}{[}
\newcommand{\rcorchete}{]}
\newcommand{\evout}{\ev[\mathit{out}]}
\newcommand{\cEU}{\cE_{*}}
\newcommand{\te}{t_{\varepsilon}}
\newcommand{\transc}{\trans{}}
\newcommand{\barr}{\!->\!}
\renewcommand{\emptyenv}{\cdot}
\newcommand{\downstep}[1][]{\downarrow_{#1}}
\newcommand{\interiorS}[1][\rl]{\interior{}}
\newcommand{\interiorTN}[1][\rl]{\interior{\mathit{TN}}}
\newcommand{\FTV}{FTV}
\definecolor{failcolor}{HTML}{A9341F}
\DeclareDocumentCommand{\getev}{m}{\mathit{ev}(#1)}
\newcommand{\lamB}{$\lambda B$\xspace}
\newcommand{\sysFc}{System~F$_{C}$\xspace}
\newcommand{\sysFg}{System~F$_{G}$\xspace}
\newcommand{\sysF}{System~F\xspace}
\newcommand{\sysFw}{System~F$_{\omega}$\xspace}
\newcommand{\csa}{CSA\xspace}
\newcommand{\gsf}{GSF\xspace}
\newcommand{\gsfe}{\mbox{\gsf\negthickspace$\varepsilon$}\xspace}
\newcommand{\gsfeex}{\mbox{GSF$_{\varepsilon}^{\exists}$}\xspace}
\newcommand{\glangev}{\gsfe}
\newcommand{\gsfex}{GSF$^{\exists}$\xspace}
\newcommand{\SPFLex}{SF$^{\exists}$\xspace}
\newcommand{\polyG}{PolyG$^\nu$\xspace}
\newcommand{\polyC}{PolyC$^\nu$\xspace}
\newcommand{\CBPV}{CBPV$_{\text{OSum}}$\xspace}
\newcommand{\cE}{E}
\newcommand{\enrich}[2]{\evliftop{#1}{#2}}
\newcommand{\substTermPaper}[3]{#3[#2/#1]}
\DeclareDocumentCommand{\pitwoeveq}{m m O{\tpesub}}{#3 |- #1 \equiSymbol #2}
\DeclareDocumentCommand{\sync}{m O{\W}}{\mathit{sync}(#1, #2)}
\newcommand{\B}{B}
\newcommand{\SPFL}{SF\xspace}
\newcommand{\scheme}{\mathit{schm}_{\footnotesize u}}
\newcommand{\schemeEx}{\mathit{schm}_{\footnotesize e}}
\newcommand{\cschemeEx}{\consistent{\schemeEx}}
\newcommand{\cscheme}{\consistent{\scheme}}
\newcommand{\insta}{\mathit{inst}}
\newcommand{\cinsta}{\consistent{\insta}}
\newcommand{\projts}{\mathit{proj}}
\newcommand{\projt}[2]{\projts_{#1}(#2)}
\newcommand{\cprojts}{\consistent{\projts}}
\newcommand{\cprojt}[2]{\cprojts_{#1}(#2)}
\newcommand{\unknowifys}{\mathit{const}}
\newcommand{\unknowify}[1]{\unknowifys(#1)}
\newcommand{\undefinedow}{\text{ undefined o/w }}
\newcommand{\unkL}{\?}
\newcommand{\unkR}{\?}
\newcommand{\unkG}[1]{\?}
\newcommand{\isjudgment}[3]{#1 \Vdash #2 \sim #3}
\newcommand{\ijudgment}[3]{#1 \Vdash \ceqrulessimpl{#2}{#3}}
\DeclareDocumentCommand{\ijudgment}{m O{\store; \Delta} m m}{#1 \Vdash \ceqrulessimpl[#2]{#3}{#4}}
\newcommand{\tu}{s}
\DeclareDocumentCommand{\insertAsc}{m m m O{\store}}{\mathit{asc?}(#1,#2,#3,#4)}
\newcommand{\mland}{\land {}}
\DeclareDocumentCommand{\staticgJx}{ O{\store; \emptyenv; \emptyenv} m }{#1 |- #2}
\DeclareDocumentCommand{\ceqrulessimplx}{O{\store; \emptyenv} m m}{\ceqrulessimpl[#1]{#2}{#3}}
\DeclareDocumentCommand{\ceqrulessimpls}{O{\sstore; \Delta} m m}{\ceqrulessimpl[#1]{#2}{#3}}
\DeclareDocumentCommand{\projx}{ m O{} }{\cprojt{#1}{#2}}
\DeclareDocumentCommand{\valuestore}{ m O{} }{\IfNoValueTF{#2}{v}{\conf[#2]{#1}}}
\DeclareDocumentCommand{\setn}{ m m O{\sstore} O{\tpesub} }{\mathcal{N}^{#3}_{#4}\llbracket#1 \gprec #2\rrbracket}
\DeclareDocumentCommand{\setc}{ m m O{\sstore} O{\tpesub} }{\mathcal{C}^{#3}_{#4}\llbracket#1 \gprec #2\rrbracket}
\DeclareDocumentCommand{\inStatic}{ m m O{\sstore} O{\tpesub} }{\text{ImpSV}^{#3}_{#4} [#1 \gprec #2]}
\DeclareDocumentCommand{\staticG}{ m m m O{\sstore} O{\tpesub} }{\uvalStatic(#1) \land \proj{2}[\ev] = \enrich{#5(#2)}{#4} \land \staticgJ[\sstore; \emptyenv; \emptyenv]{v} : #5(#3)}
\DeclareDocumentCommand{\setgg}{ m O{\sstore} O{\tpesub} }{\mathcal{G}^{#2}_{#3}\llbracket#1\rrbracket}
\DeclareDocumentCommand{\setdd}{ m O{\sstore}  }{\mathcal{D}^{#2}\llbracket#1\rrbracket}
\newcommand{\uvalStatic}{\mathit{static}}
\DeclareDocumentCommand{\norm}{m m m O{\store}}{\mathit{norm}(#1, #2, #3, #4)}
\newcommand{\impPolyName}{\mathit{dip}}
\newcommand{\impPoly}[3]{\impPolyName(#1,#2)}
\newcommand{\nsprec}{\not\leqslant}
\newcommand{\sprec}{\leqslant}
\newcommand{\functype}[2]{#1 -> #2}
\newcommand{\sdgg}{DGG$^\sprec$\xspace}
\newcommand{\gdgg}{DGG\xspace}
\newcommand\idx{\mathit{id_X}}
\newcommand\idu{\mathit{id_\?}}
\newcommand\idxu{\mathit{id_{X\?}}}
\newcommand{\myparagrapht}[1]{\noindent{\bf #1}}
\newcommand{\myparagraph}[1]{\vspace{0.5em}\myparagrapht{#1.}}
\DeclareDocumentCommand{\iffullv}{m O{}}{\iffull{#1}\else{#2}\fi}
\lstdefinelanguage{scala}{
    morekeywords={let,in,if,then,else,fun, unpack, pack, as, error},
    otherkeywords={=>,!,:=,+},
    sensitive=true,
    morecomment=[l]{//},
    morecomment=[n]{/*}{*/},
    morestring=[b]",
    morestring=[b]',
    morestring=[b]""",
    escapeinside={(*}{*)},
    moredelim=**[is][{\btHL}]{`}{`}
  }
\def\smallunderbrace#1{\mathop{\vtop{\m@th\ialign{##\crcr
   $\hfil\displaystyle{#1}\hfil$\crcr
   \noalign{\kern3\p@\nointerlineskip}%
   \tiny\upbracefill\crcr\noalign{\kern3\p@}}}}\limits}
\renewcommand\footnotemark{}
\begin{document} 
\iffull
\title{Gradual System F (with Appendix)}
\else
\title{Gradual System F}
\fi
\titlenote{This work is partially funded by CONICYT/FONDECYT Regular/1190058, CONICYT/Doctorado Nacional/2015-21150510 \& 21151566, and by the ERC Starting Grant SECOMP (715753).
}

\author{Elizabeth Labrada}
\affiliation{%
  \institution{University of Chile}
  \department{Computer Science Department (DCC)}
  \streetaddress{Beauchef 851}
  \city{Santiago}
  \country{Chile}
}
\author{Mat{\'i}as Toro}
\affiliation{%
  \institution{University of Chile}
  \department{Computer Science Department (DCC)}
  \streetaddress{Beauchef 851}
  \city{Santiago}
  \country{Chile}
}
\author{\'Eric Tanter}
\affiliation{%
  \institution{University of Chile}
  \department{Computer Science Department (DCC)}
  \streetaddress{Beauchef 851}
  \city{Santiago}
  \country{Chile}
  }

\addtocontents{toc}{\protect\setcounter{tocdepth}{0}}

\begin{abstract}
Bringing the benefits of gradual typing to a language with parametric polymorphism like System F, while preserving relational parametricity, has proven extremely challenging: first attempts were formulated a decade ago, and several designs have been recently proposed, with varying syntax, behavior, and properties.
Starting from a detailed review of the challenges and tensions that affect the design of gradual parametric languages, this work presents an extensive account of the semantics and metatheory of \gsf, a gradual counterpart of \sysF . In doing so, we also report on the extent to which the Abstracting Gradual Typing methodology can help us derive such a language. Among gradual parametric languages that follow the syntax of \sysF, \gsf achieves a unique combination of properties. We clearly establish the benefits and limitations of the language, and discuss several extensions of \gsf towards a practical programming language.
\end{abstract}  

\ifarxiv\else
\begin{CCSXML}
<ccs2012>
<concept>
<concept_id>10003752.10010124.10010125.10010130</concept_id>
<concept_desc>Theory of computation~Type structures</concept_desc>
<concept_significance>500</concept_significance>
</concept>
<concept>
<concept_id>10003752.10010124.10010131</concept_id>
<concept_desc>Theory of computation~Program semantics</concept_desc>
<concept_significance>500</concept_significance>
</concept>
</ccs2012>
\end{CCSXML}
\ccsdesc[500]{Theory of computation~Type structures}
\ccsdesc[500]{Theory of computation~Program semantics}
\keywords{Gradual typing, polymorphism, parametricity}
\fi

\maketitle

\section{Introduction}

There are many approaches to integrate static and dynamic type checking~\cite{,cartwrightFagan:pldi1991,abadiAl:toplas1991,
matthewsFindler:popl2007,tobinFelleisen:dls2006,biermanAl:ecoop2010}. In particular, gradual typing supports the smooth integration of static and dynamic type checking by introducing the notion of {\em imprecision} at the level of types, which induces a notion of {\em consistency} between plausibly equal types~\cite{siekTaha:sfp2006}. A gradual type checker does a best effort statically, treating imprecision optimistically. The runtime semantics of the gradual language detects at runtime any invalidation of optimistic static assumptions. Such detection is usually achieved by compilation to an internal language with explicit casts, called a cast calculus.
In addition to being type safe, a gradually-typed language is expected to satisfy a number of properties, in particular that it conservatively extends a corresponding statically-typed language, that it can faithfully embed dynamically-typed terms, and that the static-to-dynamic transition is smooth, a property formally captured as the (static and dynamic) gradual guarantees~\cite{siekAl:snapl2015}.

Since its early formulation in a simple functional language~\citep{siekTaha:sfp2006}, gradual typing has been explored in a number of increasingly challenging settings such as subtyping~\citep{siekTaha:ecoop2007,garciaAl:popl2016},
references~\citep{hermanAl:hosc10,siekAl:esop2015},
 effects~\citep{banadosAl:icfp2014,banadosAl:jfp2016}, ownership~\citep{sergeyClarke:esop2012}, typestates~\citep{wolffAl:ecoop2011,garciaAl:toplas2014}, information-flow typing~\citep{disneyFlanagan:stop2011,fennellThiemann:csf2013,toroAl:toplas2018}, session types~\citep{igarashiAl:icfp2017b}, refinements~\citep{lehmannTanter:popl2017}, set-theoretic types~\citep{castagnaLanvin:icfp2017}, Hoare logic~\citep{baderAl:vmcai2018} and parametric polymorphism~\citep{ahmedAl:popl2011,ahmedAl:icfp2017,ina:oopsla2011,igarashiAl:icfp2017,xieAl:esop2018}.

In the case of parametric polymorphism, a long-standing challenge has been to prove that the gradual language preserves a rich semantic property known as {\em relational parametricity}~\citep{reynolds:83}, which dictates that a polymorphic value must behave uniformly for all possible
instantiations. The first gradual language to come with a proof of parametricity is the cast calculus \lamB~\citep{ahmedAl:icfp2017}, which extends the archetypal polymorphic lambda calculus also known as \sysF with casts and a dynamic (or unknown) type. \lamB is
also used as a target language by~\citet{xieAl:esop2018}, who explore the treatment of {\em implicit} polymorphism. Another recent effort is \sysFg, an actual gradual source language that is compiled to a cast calculus akin to \lamB, called \sysFc~\cite{igarashiAl:icfp2017}. These efforts highlighted the many challenges involved in design a gradual parametric languages that satisfies both parametricity and the expected properties of gradually-typed languages~\cite{siekAl:snapl2015}. Inspired by these challenges, \cite{newAl:popl2020} propose a fairly different design, which does not start from \sysF, but instead consider a language with explicit terms for sealing and unsealing values. This change of perspective allows their language \polyG to satisfy all the expected theorems, but at the cost of a fairly different programming model, with its own limitations.

\myparagraph{Contributions} This work starts from the identification of several design issues in existing gradual languages, and studies a gradual parametric language in the style of \sysF by applying a general methodology to gradualize programming languages~\cite{garciaAl:popl2016}. The resulting language, called \gsf (for Gradual \sysF), 
embodies a number of important design choices. The first is in its name: it is an extension of \sysF, and therefore sticks to the traditional syntax of the polymorphic lambda calculus, where terms need not bother with sealing explicitly as in \polyG. \gsf also pays attention to respect type instantiations on imprecise types in order to adequately extend \sysF towards \sysFw, in the same way that the gradually-typed lambda calculus~\cite{siekTaha:sfp2006} extends the simply-typed lambda calculus towards \sysF.
This characteristic alone differentiates \gsf from other gradual parametric languages. 

We introduce gradual parametricity informally and discuss its challenges by reviewing closely related work (\S\ref{sec:need}), and present a quick tour of \gsf, including its design principles and main properties (\S\ref{sec:gsf-inform}).
We then explain how we derive \gsf from a variant of \sysF called \SPFL (\S\ref{sec:spfl-lang}), by following the Abstracting Gradual Typing methodology (AGT)~\cite{garciaAl:popl2016}. While the statics of \gsf follow naturally from \SPFL using AGT (\S\ref{sec:gsf-statics}), the dynamic semantics are more challenging (\S\ref{sec:gsf-dynamics}/\S\ref{sec:evidence}). In particular, satisfying parametricity forces us to sacrifice one of the desirable properties of gradual languages, known as the dynamic gradual guarantee~\cite{siekAl:snapl2015}. We study this tension in details and expose a weaker form of this gradual guarantee that \gsf does satisfy (\S\ref{sec:gsf-dgg}). While weaker, this guarantee is stronger that what other gradual languages based on \sysF achieve. 
We then review the notions of gradual parametricity from the literature and present the gradual parametricity that \gsf satisfies, along with gradual free theorems (\S\ref{sec:gsf-param}). We then study the dynamic end of the gradual typing spectrum supported by \gsf, and show that beyond a standard dynamically-typed language, \gsf can faithfully embed a language with dynamic sealing primitives~\cite{sumiiPierce:popl2004} (\S\ref{sec:embedding}). Finally, we study two extensions of \gsf. The first explores a novel technique to circumvent the limitations of the explicit treatment of polymorphism in \gsf, in order to recover the ability to fully deal with typed-untyped interoperability (\S\ref{sec:dynamic-implicit-polymorphism}). The second studies the addition of existential types to \gsf, which are at the core of data abstraction (\S\ref{sec:existentials}).

\myparagraph{Prior publication}
This article substantially revises and extends a prior conference publication~\cite{toroAl:popl2019}. First of all, the presentation of related approaches (\S\ref{sec:need}) has been largely revised, with novel illustrations and analysis, and now also includes the recent work on \polyG~\cite{newAl:popl2020}. This revised version presents several novel technical contributions: the detailed analysis of the dynamic gradual guarantee violation is new, and so is the development of the weaker guarantee that \gsf satisfies (\S\ref{sec:gsf-dgg}). The presentation of gradual parametricity (\S\ref{sec:gsf-param})
includes a new comparison between the approaches of \citet{ahmedAl:icfp2017} and \citet{newAl:popl2020}, shedding light on the current design space. The results that follow from both parametricity and the weaker guarantee subsume those presented in the earlier publication. Finally, the last three sections---embedding dynamic sealing, dynamic implicit polymorphism, and existential types---are completely novel developments.

\myparagraph{Supplementary material}
Auxiliary definitions and proofs of the main results can be found in Appendices.
Additionally, an interactive prototype of \gsf, which exhibits both typing derivations and reduction traces, is available online: \url{https://pleiad.cl/gsf}.
All the examples mentioned in this paper, as well as others, are readily available in the online demo.

\section{Gradual Parametricity: Motivation and Challenges}
\label{sec:need}

We start with a quick introduction to parametric polymorphism and parametricity, before motivating gradual parametricity through examples and finally exposing different challenges in the design gradually parametric languages.

\subsection{Background: Parametric Polymorphism}
\label{sec:back:param}

Parametric polymorphism allows the definition of terms that can operate over any type, with the introduction of type variables and universally-quantified types. For instance, a function of type $\forall X. X -> X$ can be used at any type, and returns a value of the same type as its actual argument.
For the sake of this work, it is important to recall two crucial distinctions that apply to languages with parametric polymorphism, one syntactic---whether polymorphism is explicit or implicit---and one semantic---whether polymorphic types impose strong behavioral guarantees or not.

\myparagraph{Explicit vs Implicit}
In a language with {\em explicit} polymorphism, such as the Girard-Reynolds polymorphic lambda calculus (\emph{a.k.a.}~\sysF)~\citep{girard,reynolds:ps1974}, the term language includes explicit type abstraction $\Lambda X.e$ and explicit type application $e\;[T]$, as illustrated next:
\begin{lstlisting}[numbers=none]
let f : (*$\forall X.X -> X$*) = (*$\Lambda$*)X.(*$\lambda$*)x:X.x in f [Int] 10
\end{lstlisting}
The function \lstinline{f} has the polymorphic (or universal) type $\forall X.X -> X$. By applying \lstinline{f} to type $\Int$ (we also say that \lstinline{f} is {\em instantiated} to $\Int$), the resulting function has type $\Int -> \Int$; it is then passed the number \lstinline{10}. Hence the program evaluates to \lstinline{10}.

In contrast to this intrinsic, Church-style formulation, the Curry-style presentation of {\em polymorphic type assignment}~\cite{curry1972} does not require type abstraction and type application to be reflected in terms. This approach, known as {\em implicit} polymorphism, has inspired many languages such as ML and Haskell.
Technically, implicit polymorphism induces a notion of subtyping that relates polymorphic types to their instantiations~\citep{mitchell:ic1988,oderskyLaufer:popl1996}; \eg~$\forall X. X -> X <: \Int -> \Int$.
Implicitly-polymorphic languages generally use an explicitly-polymorphic language underneath (\eg~GHC Core), providing the convenience of implicitness through an inference phase that produces an explicitly-annotated program. In essence, the use of the subtyping judgment $\forall X. X -> X <: \Int -> \Int$ is materialized in terms by introducing an explicit instantiation $[\Int]$, and vice-versa, the use of the judgment
$\Int -> \Int <: \forall X. \Int -> \Int$ is materialized by inserting a type abstraction constructor $\Lambda X$. 

\myparagraph{Genericity vs. Parametricity}
Some languages with universal type quantification also support intensional type analysis or reflection, which allows a function to behave differently depending on the type to which it is instantiated. 
For instance, in Java, a generic method of type $\forall X. X -> X$
can use \lstinline{instanceof} to discriminate the actual type of the argument, and behave differently for \lstinline{String}, say, than for \lstinline{Integer}. 
Therefore these languages only support {\em genericity}, \ie~the fact that a value of a universal type can be safely instantiated at any type.\footnote{We call this property {\em genericity}, by analogy to the name {\em generics} in use in object-oriented languages like Java and C\#.}

Parametricity is a much stronger interpretation of universal types, which dictates that a polymorphic value {\em must behave uniformly} for all possible
instantiations~\citep{reynolds:83}. This implies that one can derive interesting theorems about the behavior of a program by just looking at its type, hence the name ``free theorems'' coined by~\citet{wadler:fpca89}.
For instance, one can prove using parametricity that any polymorphic list permutation function commutes with the polymorphic map function.
Technically, parametricity is expressed in terms of a (type-indexed) {\em logical relation} that denotes when two terms behave similarly when viewed at a given type. All well-typed terms of \sysF are related to themselves in this logical relation, meaning in particular that all polymorphic terms behave uniformly at all instantiations~\citep{reynolds:83}.

Simply put, if a value \lstinline{f} has type
$\forall X. X -> X$, genericity only tells us that
\lstinline{f [Int] 10} reduces to {\em some} integer, while
parametricity tells the much stronger result that \lstinline{f [Int] 10} necessarily evaluates to \lstinline{10} (\ie~\lstinline{f} has to be the identity function). 
In the context of gradual typing, \citet{ina:oopsla2011} have explored genericity with a gradual variant of Java. All other work has focused on the challenge of enforcing parametricity~\citep{ahmedAl:popl2011,ahmedAl:icfp2017,igarashiAl:icfp2017,xieAl:esop2018,newAl:popl2020}.

\subsection{Gradual Parametricity in a Nutshell}
\label{sec:gp-nut}

Gradual parametricity supports imprecise typing information, yet ensures that assumptions about parametricity are enforced at runtime whenever they are not provable statically. 
In the following program, function \lstinline{f} is given the polymorphic type $\forall X. X -> X$, and is therefore expected to behave parametrically. It is then instantiated at type $\Int$, and applied to the value $10$.

\begin{lstlisting}[numbers=none]
let g: ? = (*$\lceil$*)(*$\lambda$*)a.(*$\lambda$*)b.if b then a else a + 1(*$\rceil$*) in
let f: (*$\forall$*)X.X(*$->$*)X = (*$\Lambda$*)X.(*$\lambda$*)x:X.g x (*$\framebox{v}$*) in
f [Int] 10
\end{lstlisting}
Note that \lstinline{f} is implemented using a function \lstinline{g} of unknown type, which is the result of embedding (\lstinline[mathescape]{$\lceil\cdot\rceil$}) untyped code into the gradual language.
Because of the type consistency introduced by the imprecision of declared types, this program is well-typed; however the compliance of \lstinline{f} with respect to its declared parametric behavior is unknown statically. 

In particular, by parametricity,  we can deduce that \lstinline{f} should behave like the identity function (\S\ref{sec:back:param}), i.e. the program should reduce to $10$. But \lstinline{g} itself behaves as an identity function only if its second argument is \lstinline{true}. Therefore, \lstinline{f} behaving as the identity function actually  depends on whether the second argument passed to \lstinline{g}, \lstinline[mathescape]{$\framebox{v}$} above, is \lstinline{true} or \lstinline{false}.
If \lstinline[mathescape]{$\framebox{v}$} is \lstinline{true}, \lstinline{f} behaves as the identity function, and the program above successfully reduces to $10$.
Conversely, if \lstinline[mathescape]{$\framebox{v}$} is \lstinline{false} then the program could safely reduce to $11$; however, this would be a violation of parametricity because \lstinline{f} would not be behaving as the identity function. Therefore a runtime error is raised.

Therefore, as a consequence of gradual parametricity, by only looking at the type of \lstinline{f} and its use, we can prove that if the program above terminates, it should either produce 10, or fail with a runtime error, possibly denoting that \lstinline{f} did not behave like an identity function.

This simple example highlights two crucial characteristics of gradual parametricity. First, to enforce parametricity gradually requires more than tracking type safety. If we let the program reduce to $11$ when \lstinline[mathescape]{$\framebox{v}$} is \lstinline{false}, then type {\em safety} is not endangered; only type {\em soundness} (\ie~parametricity) is. Second, in presence of gradual types, the conclusions of free theorems from parametricity should be relaxed to admit both runtime type errors, and non-termination (which follows from gradual simple types~\citep{siekTaha:sfp2006}).

\subsection{Challenges of Gradual Parametricity}
\label{sec:state-of-the-art}
While the basics of gradual parametricity illustrated above are well understood and uncontroversial, the devil is in the details: we can find very different approaches to gradual parametricity in the literature, which reflect the many challenges and tensions involved in the design space of gradual parametric languages. 
These tensions arise from the different desirable metatheoretical properties, both from a parametricity point of view and from a gradual typing point of view, the tensions and interactions between them, as well as from some design decisions that emerge from these interactions. 
As we will see, even the notion of parametricity in a gradual context raises questions. 

Then, beyond parametricity and type safety, there are several important properties that are relevant for gradual languages~\cite{siekAl:snapl2015}, most notably the conservative extension of the static discipline, the gradual guarantees, and the embedding of a dynamically-typed counterpart. The former states that, on fully static programs, a gradual language should behave exactly like its static counterpart. The static and dynamic gradual guarantees state that making types less precise does not introduce new static or dynamic type errors, respectively. The latter characterizes the flexibility and potential expressiveness gain of gradual types over the static discipline. 

Here we focus on these metatheoretical results and their treatment in the literature, and discuss other design considerations in 
\S\ref{sec:design}. The aim of these sections is to informally expose the subtleties and tensions involved in the design of a gradual parametric language, by reporting on how existing languages differ in their respective choices.

\myparagraph{Parametricity}
Establishing that a gradual parametric language enforces parametricity has been a long-standing open issue: early work on the polymorphic blame calculus did not prove parametricity~\citep{ahmedAl:stop2009,ahmedAl:popl2011}, and the first parametricity result was established several years later for a variant of that calculus, \lamB~\citep{ahmedAl:icfp2017}. In fact, \lamB is a cast calculus, not a gradual source language, meaning that the program written above would not be valid; explicit casts should be sprinkled in different places to achieve the same result. \citet{igarashiAl:icfp2017} developed a gradual source language, \sysFg, which does support the intended lightweight, cast-free syntax of gradual languages. Following the early tradition of gradual typing~\citep{siekTaha:sfp2006}, the semantics of \sysFg are given by translation to a cast calculus, \sysFc, which is a close cousin of \lamB. Igarashi \etal in fact do not prove parametricity, but conjecture that due to the similarity between \sysFc and \lamB, parametricity should hold. \citet{xieAl:esop2018} develop a language (here referred to as \csa) with implicit polymorphism, which compiles to \lamB and therefore inherits its parametricity result. More recently, 
\citet{newAl:popl2020} explore a completely different point in the design space of gradual parametricity by considering a source language \polyG that departs from the syntax of \sysF in requiring explicit sealing and unsealing terms. The advantage of that approach is that certain issues observed in prior work can be sidestepped; in particular the obtained {\em notion} of gradual parametricity is stronger and more faithful to the original presentation of Reynolds than that of prior work. We come back to the discussion of different notions of gradual parametricity, which gets fairly technical, in \S\ref{sec:gsf-param}.

\myparagraph{Conservation Extension of \sysF} The conservative extension of a static discipline means that a gradual language should coincide with such static discipline on fully-precise terms, \ie~terms that do not use the unknown type at all. Of course, for this property to be meaningful, one must explicit what the ``static language'' is. Most work (\lamB, \sysFg, \csa, as well as the present work) consider \sysF as the starting point, meaning that \sysF programs should be valid programs in these gradual languages, and behave as they would in \sysF. 

On the one hand, \sysFg is a conservative extension of \sysF, and \csa of an implicit variant of \sysF. 
On the other hand, and despite its similar syntax for fully-typed terms, \lamB is not a conservative extension of \sysF.
Consider the type of a polymorphic identity function, $\forall X. X -> X$. 
In \lamB this type is not only compatible with $\Int -> \Int$ (which is a defining feature of {\em implicit} polymorphism, despite \lamB being explicitly polymorphic), but it is also compatible with $\Int -> \Bool$. Likewise, $\forall X. X -> X$ is compatible with $\forall X. \Int -> \Bool$. Therefore the type system of \lamB disagrees with that of \sysF on some fully static terms. 

The case of \polyG is different: as alluded to above, in order to better accommodate some desired metatheoretical results, \polyG departs from the syntax of \sysF. 
For instance, the following \sysF program defines a function \lstinline|f|, which is the identity function, and instantiates it at type $\Int$, applies it to $1$, and then adds $1$ to the result, yielding $2$.

\begin{lstlisting}[numbers=none]
let f : (*$\forall X.X -> X$*) = (*$\Lambda$*)X.(*$\lambda$*)x:X.x in (f [Int] 1) + 1
\end{lstlisting}
This program is rejected statically in \polyG, because the sealing and unsealing that implicitly underlies polymorphic behavior in \sysF must
happen {\em explicitly} in the syntax of terms:
\begin{lstlisting}[numbers=none]
let f : (*$\forall X.X -> X$*) = (*$\Lambda$*)X.(*$\lambda$*)x:X.x in unseal(*$_\mathtt{X}$*)(f [X=Int] (seal(*$_\mathtt{X}$*) 1)) + 1
\end{lstlisting}
Note that the syntax of \polyG forces an outward scoping of type variables, \ie~\lstinline{[X=Int]} above puts \lstinline{X} in scope for subsequent use by \lstinline[mathescape]{seal$_\mathtt{X}$} and 
\lstinline[mathescape]{unseal$_\mathtt{X}$}. \polyG is certainly a conservative extension of that special static source language with explicit sealing.

\myparagraph{Gradual Guarantees}
The gradual guarantees (both static and dynamic) were introduced by \citet{siekAl:snapl2015} in order to formally capture the expectations of programmers using gradual languages: namely that a loss of precision should be harmless. The major tension faced by gradual parametric languages in the past decade has been to attempt to reconcile these guarantees with parametricity.

While this article will dive into this question repeatedly and at a quite technical level, let us present here what happens on the surface, for a programmer.
Consider this program, which is the same as above, except that the return type of the function \lstinline{f} is now unknown:
\begin{lstlisting}[numbers=none]
let f : (*$\forall X.X -> \?$*) = (*$\Lambda$*)X.(*$\lambda$*)x:X.x in (f [Int] 1) + 1
\end{lstlisting}
Following the motto that imprecision is harmless, a programmer might expect this program to both typecheck and run without errors, yielding $2$.
However, in \lamB and \sysFc, the above program\footnote{Rather, its elaboration with explicit casts, as mandated by \lamB and \sysFc, which are cast calculus and not source languages.} fails with a runtime error, because the result of \texttt{f [Int] 1} is {\em sealed}, and therefore unusable directly.
\citet{ahmedAl:popl2011} justify this behavior (already present in early work~\citep{ahmedAl:stop2009}), or the alternative of always failing before returning, based on a claim about gradual free theorems, framed as a consequence of parametricity. This can be quite surprising because the underlying value is the \sysF identity function, which {\em does} behave parametrically.

This failure behavior is a violation of the dynamic gradual guarantee (DGG). But in fact, technically, it only is a violation if we consider the program above to actually be a more imprecise variant of the \sysF program presented previously (where the return type of \lstinline{f} is \lstinline{X} instead of $\?$). While this change of what precision {\em is} might not make intuitive sense to programmers (removing static type information yields a program that is {\em by definition} less precise), it can make sense when working on the metatheory. That is the approach followed by \sysFg. Specifically, \sysFg does not allow losses of precision in {\em parametric} positions of a polymorphic type. For instance, $\forall X. X -> \Int$ is considered more precise than $\forall X. X -> \?$, but 
$\forall X. X -> X$ is not. Because precision induces consistency, it means that $\forall X. X -> X$ and $\forall X. X -> \?$ are inconsistent with each other.

This means that the program above does not typecheck in \sysFg. So the restriction on precision imposed breaks the intuition of programmers that, starting program from a well-typed program, removing static type information yields a program that is {\em by definition} less precise---and should also be well-typed.  \citet{igarashiAl:icfp2017} prove the static gradual guarantee for \sysFg based on this 
restricted notion of precision, and leave the dynamic guarantee as a conjecture, so it is unclear whether the restrictions imposed are indeed sufficient.

\polyG was designed to address the tension between parametricity and the DGG that manifested in prior work, and does so by using a syntax with explicit sealing and unsealing, as introduced previously.
If we start with the following fully-static program:
\begin{lstlisting}[numbers=none]
let f : (*$\forall X.X -> X$*) = (*$\Lambda$*)X.(*$\lambda$*)x:X.x in unseal(*$_\mathtt{X}$*)(f [X=Int] (seal(*$_\mathtt{X}$*) 1)) + 1
\end{lstlisting}
Then making the return type of \lstinline{f} unknown yields a program that still typechecks and runs successfully. This is clear because the sealing behavior that was causing problem is now explicit in the terms, and therefore not affected by a loss of precision in types. However, this choice of syntax is not innocuous. Consider the following imprecise program, where the body of \lstinline{f} has been elided:
\begin{lstlisting}[numbers=none]
let f : (*$\forall X.X -> \?$*) = (*$\framebox{body}$*) in unseal(*$_\mathtt{X}$*)(f [Int] (seal(*$_\mathtt{X}$*) 1)) + 1
\end{lstlisting}
As we said, if \lstinline[mathescape]{$\framebox{body}$ is $\Lambda$X.$\lambda$x:X.x}, then the program evaluates to $2$ as expected.
However, if \lstinline{f} is a constant function, 
\eg~\lstinline[mathescape]{$\framebox{body}$} is 
\lstinline[mathescape]{$\Lambda$X.$\lambda$x:X.1}, then this \polyG program fails because the call-site unsealing of the value returned by \lstinline{f} is now invalid. If one removes \lstinline[mathescape]{unseal$_\mathtt{X}$} around the application of 
\lstinline{f}, the program behaves properly. But now, the case where 
\lstinline[mathescape]{$\framebox{body}$} 
is the polymorphic identity function fails, when trying to add $1$ to a sealed value.

This means that in \polyG, the decision to use unsealing or not at a call site cannot be done {\em modularly}: one needs to know the implementation of \lstinline{f} to decide. This limitation of \polyG is precisely what makes it enjoy a very complete metatheory, reconciling parametricity and the DGG. Said differently, the objective of sticking to \sysF syntax where sealing and unsealing does not manifest at the term level is precisely what makes other gradual parametric languages suffer in some aspects of the metatheoretical wish list.

\myparagraph{Embedding of a Dynamic Language}
A gradual language is expected to cover a spectrum between two typing disciplines, such as simple static typing and dynamic typing. 
The static end of the spectrum is characterized by the conservative extension result mentioned previously. The dynamic end of the spectrum is usually captured by an embedding from a given dynamic language to the gradual language~\cite{siekAl:snapl2015}. For instance, in the case of GTLC~\cite{siekTaha:sfp2006}, the dynamic language is an untyped lambda calculus with primitives.

This aspect of the design space has not received any attention in the literature on gradual parametricity so far. Of course, because a polymorphic language includes a simply-typed one at its core, one naturally expects an untyped lambda calculus to be embeddable in a gradual polymorphic language. As we will see in \S\ref{sec:embedding}, the dynamic end of the spectrum for a gradual parametric language can be even more interesting.

\subsection{Additional Challenges and Design Considerations}
\label{sec:design}

We now look at other design considerations that are relevant for gradual parametric languages.

\myparagraph{Polymorphism and Interoperability}
\lamB, \sysFg, and \polyG are languages with {\em explicit} polymorphism, \ie~with explicit type abstraction and type application terms. Despite this, \lamB and \sysFg accommodate some form of implicit polymorphism, with different flavors. The underlying motivation is to address the issue of {\em interoperability} between typed and untyped code. The archetypal example being the \sysF polymorphic identity function, that one would like to be able to use in untyped code as an $\? -> \?$ function, or vice versa, using the untyped identity function as a polymorphic one.

In \lamB~\cite{ahmedAl:icfp2017}, two type compatibility rules to support this kind of implicit polymorphism:
\begin{mathpar}
  \inference[(Comp-AllR)]{\Sigma; \Delta, X |- T_1 <: T_2 & X \not\in T_1}
  {\Sigma; \Delta |- T_1 <: \forall X. T_2}
  \and
  \inference[(Comp-AllL)]{\Sigma; \Delta |- T_1[\? / X] <: T_2}
  {\Sigma; \Delta |- \forall X. T_1 <: T_2}
\end{mathpar}
Clearly, these rules permit $\forall X. X -> X$ to be compatible with $\? -> \?$. But, as first identified by \citet{xieAl:esop2018} and recalled above, the problem with these rules is that they also break the conservative extension of \sysF: the type $\forall X. X -> X$ is compatible with both $\forall X. \Int -> \Bool$ and $\Int -> \Bool$.

As an explicitly polymorphic language, \sysFg does not relate $\forall X. X -> X$ with any of its static instantiations. However, it {\em does} relate that type with $\? -> \?$, considered to be {\em quasi-polymorphic}, on the basis that using the unknown type should bring some of the flexibility of implicit polymorphism.

\citet{xieAl:esop2018} argue that it is preferable to clearly separate the subtyping relation induced by implicit polymorphism from the consistency relation induced by gradual types. Their notion of consistent subtyping extends the notion of~\citet{siekTaha:ecoop2007}. As a result, \csa features intuitive and straightforward definitions of precision and consistency, while accommodating the flexibility of implicit polymorphism in full.
Note that the implicit polymorphism of \citet{xieAl:esop2018} faces other challenges, most notably the lack of coherence of the runtime semantics.

\myparagraph{Respecting Type Instantiations}
What should type instantiations on terms of unknown type mean? Below is a simple program in which the polymorphic identity function ends up instantiated to $\Int$ and passed a $\Bool$ value:
\begin{lstlisting}[numbers=none]
let g : ? = (*$\Lambda$*)X.(*$\lambda$*)x:X.x in g [Int] true
\end{lstlisting}
This program in \sysFg, and its adaptation to \lamB, both return \lstinline{true}, despite the explicit instantiation to $\Int$. Internally, this happens because \lstinline{g} is first consistently considered to be of type $\forall X.\?$ in order to accommodate the type instantiation, but then the instantiation yields a substitution of $\Int$ for $X$ in $\?$, which in both languages is just $\?$. There is no tracking of the decision to instantiate the underlying value to $\Int$. 

In contrast, the same program in \polyG:
\begin{lstlisting}[numbers=none]
let g : ? = (*$\Lambda$*)X.(*$\lambda$*)x:X.x in g [X=Int] (seal(*$_\mathtt{X}$*) true)
\end{lstlisting}
does not even typecheck, because sealing \lstinline{true} with \lstinline{X} requires the types to coincide. If we ascribe \lstinline{true} to the unknown type before sealing it, then the program typechecks but fails at runtime, thereby respecting the type instantiation to \lstinline{Int}.

\myparagraph{Expressiveness of Imprecision}
One of the major interest of gradual types is that they {\em soundly} augment the expressiveness of the original static type system. Let us illustrate first in a simply-typed setting (STLC refers to the simply-typed lambda calculus with base types), and how imprecision allows bridging the gap towards \sysF:

\begin{enumerate}
\item Consider the STLC term $t = \lambda x: \_. x$, which behaves as the identity function. $t$ is incomplete because the type annotation on $x$ is missing so far. 
\item $t$ is operationally valid at different types, but it cannot be given a general type in STLC. Its type has to be fixed at either $\Int -> \Int$, $\Bool -> \Bool$, etc. 
\item Intuitively, a proper characterization of $t$ requires going from simple types to parametric polymorphism, such as \sysF. In \sysF, we could use the type $\forall X. X -> X$ to precisely specify that $t$ can be applied with any argument type and return the same type.
\item With a gradual variant of STLC, we can give term $t$ the imprecise type $\? -> \?$ to statically capture the fact that $t$ is definitely a function, without committing to specific domain and codomain types. 
\item This lack of precision is soundly backed by runtime enforcement, such that the term $(t\; 3)\; 1$ evaluates to a runtime type error.
\end{enumerate}

\noindent We can unfold the exact same line of reasoning, starting from \sysF, and bridging the gap towards \sysFw:

\begin{enumerate}
\item Consider the \sysF term $t = \lambda x: \_. (x\;[\Int])$, which behaves as an instantiation function to $\Int$. $t$ is incomplete because the type annotation on $x$ is missing so far. 
\item $t$ is operationally valid at different types, but cannot be given a general type in \sysF. Its type has to be fixed at either $(\forall X. X\!->\!X) \!->\! (\Int\!->\!\Int)$, $(\forall XY. X\!->\!Y\!->\!X) \!->\! (\forall Y.\Int\!->\!Y\!->\!\Int)$, etc. 
\item Intuitively, a proper characterization of $t$ requires going from \sysF to higher-order polymorphism, such as \sysFw. In \sysFw, we could use the type $\forall P. (\forall X. P\; X) -> (P\; \Int)$ to precisely specify that $t$ instantiates any polymorphic argument to $\Int$.
\item With a gradual variant of \sysF, we ought to be able to give term $t$ the imprecise type $(\forall X.\?) -> \?$ to statically capture the fact that $t$ is definitely a function that operates on a polymorphic argument, without committing to a specific domain scheme and codomain type. 
\item This lack of precision ought to be soundly backed by runtime enforcement, such that, given $id: \forall X. X->X$, the term $(t\; id)\; \mathtt{true}$ should evaluate to a runtime type error.
\end{enumerate}

The fact that \lamB and \sysFc do not respect type instantiations on imprecise types mean that in these systems, the $(t\; id)\; \mathtt{true}$ does not raise any error.\footnote{In \sysFc, $(t\; id)\; \mathtt{true}$ fails because $\forall X.\?$ is not deemed consistent with $\forall X.X->X$. Consequently, $t$ must be declared to take an argument of type $\?$ instead of $\forall X.\?$. The result is the same as in \lamB however: no runtime error is raised.}
Therefore, while these higher-order polymorphic patterns can be expressed, they are unsound.

Respecting type instantiations is however not sufficient to be able to accommodate the augmented expressiveness described above. Consider the example in \polyG:
\begin{lstlisting}[numbers=none]
let t : (*$(\forall X.\?) -> \?$*) = (*$\lambda$*)x:(*$(\forall X.?)$*). x [X=Int] in (t id) (seal(*$_\mathtt{X}$*) true)
\end{lstlisting}

This program does not typecheck in \polyG because the type variable \lstinline{X} used explicitly in the body of the function is {\em no longer in scope} at the use site to seal the value \lstinline{true}. Recall that \lstinline{[X=Int]} puts \lstinline{X} in scope for the rest of the {\em lexical} scope of the instantiation, but it does not cross function boundaries. So, in addition to the modularity issues presented in the previous section, the explicit (un)sealing mechanism of \polyG cannot accommodate higher-order patterns like the above, which requires abstracting over type applications.

\section{\gsf, Informally} 
\label{sec:gsf-inform}

This paper presents the design, semantics and metatheory of \gsf, a gradual counterpart of \sysF. Here, we briefly introduce the methodology and principles we follow to design 
\gsf, and briefly review its properties and examples of use.

\subsection{Design Methodology}
\label{sec:design-methodology}
In order to assist language designers in crafting new gradual languages, \citet{garciaAl:popl2016} proposed the Abstracting Gradual Typing methodology (AGT, for short). The promise of AGT is that, starting from a specification of the {\em meaning} of gradual types in terms of the set of possible static types they represent, one can systematically derive all relevant notions, including precision, consistent predicates (\eg~consistency and consistent subtyping), consistent functions (\eg~consistent meet and join), as well as a direct runtime semantics for gradual programs, obtained by reduction of gradual typing derivations augmented with evidence for consistent judgments.

The AGT methodology has so far proven effective to assist in the gradualization of a number of disciplines, including effects~\citep{banadosAl:icfp2014,banadosAl:jfp2016}, record subtyping~\citep{garciaAl:popl2016}, set-theoretic types~\citep{castagnaLanvin:icfp2017}, union types~\citep{toroTanter:sas2017}, refinement types~\citep{lehmannTanter:popl2017} and security types~\citep{toroAl:toplas2018}. The applicability of AGT to gradual parametricity is an open question repeatedly raised in the literature---see for instance the discussions of AGT by \citet{igarashiAl:icfp2017} and \citet{xieAl:esop2018}. 
Considering the variety of successful applications of AGT, and the complexity of designing a gradual parametric language, in this work we decide to adopt this methodology, and report on its effectiveness.

\subsection{Design Principles}
\label{sec:design-principles}
Considering the many concerns involved in developing a gradual language with parametric polymorphism, we should be very clear about the principles, goals and non-goals of a specific design. In designing \gsf, we respect the following design principles:

\begin{description}[leftmargin=0cm]

\item[\sysF syntax:] \gsf is meant to be a gradual version of \sysF, and as such, adopts its syntax of both terms and types. Types are only augmented with the unknown type $\?$ to introduce the imprecision that is at the core of gradual typing. In particular, this precludes the use of unconventional syntactic constructs like the explicit (un)sealing terms of \polyG. 

\item[Explicit polymorphism:] 
\gsf is a gradual counterpart to \sysF, and as such, is a fully {\em explicitly} polymorphic language: type abstraction and type application are part of the term language, reflected in types. \gsf gradualizes type information, not term structure. The impact on interoperability is addressed in a second phase (see below).

\item[Simple statics:] \gsf embodies the complexity of dynamically enforcing parametricity solely in its dynamic semantics; its static semantics is as straightforward as possible.

\item[Natural precision:] Precision is intended to capture the level of static typing information of a gradual type, with $\?$ as the most imprecise, and static types as the most precise~\cite{siekAl:snapl2015}. \gsf preserves this simple intuition.
\end{description}

\subsection{Challenges and Properties}
\label{sec:properties}

Regarding the challenges and properties discussed previously, here is where \gsf stands:

\begin{description}[leftmargin=0cm]
\item[Type safety:] \gsf is type safe, meaning all programs either evaluate to a value, halt with a runtime error, or diverge. Well-typed \gsf terms do not get stuck.

\item[Conservative extension:] \gsf is a conservative extension of \sysF: both languages  coincide in their static and dynamic semantics for  fully static programs.

\item[Parametricity:] \gsf enforces a notion of gradual parametricity (\S\ref{sec:gsf-param}), directly inspired by \lamB~\citep{ahmedAl:icfp2017}.

\item[Static gradual guarantee:] By virtue of the simple statics principle stated above, \gsf satisfies the static gradual guarantee, \ie~typeability is monotonic with respect to the natural notion of precision. 

\item[Dynamic gradual guarantee:] In order to satisfy all the principles above, 
\gsf does not satisfy the dynamic gradual guarantee (DGG) for the natural notion of precision. In addition to studying this tension, we show that \gsf satisfies a weaker DGG, which in particular implies that imprecise ascriptions are harmless (\S\ref{sec:gsf-dgg}). 

\item[Embedding of a Dynamic Language:] We show that a standard dynamically-typed language can be embedded in \gsf. While novel, this result is not particularly surprising. More interestingly, we prove that \gsf can embed a dynamically-typed language with runtime sealing primitives~\cite{sumiiPierce:popl2004} (\S\ref{sec:embedding}).

\item[Interoperability:] While the core of \gsf is explicitly polymorphic, we describe an extension of \gsf that introduces a form of implicit polymorphism {\em dynamically}, which allows \gsf to properly support typed-untyped interoperability scenarios without compromising on other aspects (\S\ref{sec:dynamic-implicit-polymorphism}). 

\item[Faithful instantiations:] \gsf enforces type instantiations of imprecise types.

\item[Expressive imprecision:] \gsf soundly supports imprecise higher-order polymorphic patterns, bridging the gap towards \sysFw.
\end{description}

\subsection{\gsf in Action}
\label{sec:gsf-action}

We now briefly illustrate \gsf in action with a number of examples that correspond the main properties of the language. Other illustrative examples are available with the online interactive prototype. The different sections of the rest of this paper also come back to such representative examples as needed.

First, \sysF programs are \gsf programs, and behave as expected:
\begin{lstlisting}[numbers=none]
let f : (*$\forall X.X -> X$*) = (*$\Lambda$*)X.(*$\lambda$*)x:X.x in (f [Int] 1) + 1  ----> 2
\end{lstlisting}

\gsf enforces gradual parametricity. Recall the example from \S\ref{sec:gp-nut}:
\begin{lstlisting}[numbers=none]
let g: ? = (*$\lambda$*)a:?.(*$\lambda$*)b:?.if b then a else a + 1 in
let f: (*$\forall$*)X.X(*$->$*)X = (*$\Lambda$*)X.(*$\lambda$*)x:X.g x (*$\framebox{v}$*) in
f [Int] 10
\end{lstlisting}
As expected, if \lstinline[mathescape]{$\framebox{v}$} is \lstinline{true}, the program reduces to \lstinline{10}, and if \lstinline[mathescape]{$\framebox{v}$} is \lstinline{false}, the program fails with a runtime error when the body of the function \lstinline{g}attempts to perform an addition, since this type-specific operation is a violation of parametricity.

In \gsf, the natural notion of precision is used for typing, meaning that the following program is a less precise version than the \sysF program given at the beginning of this section. Also, imprecise ascriptions on values are harmless:
\begin{lstlisting}[numbers=none]
let f : (*$\forall X.X -> \?$*) = (*$\Lambda$*)X.(*$\lambda$*)x:X.x in (f [Int] 1) + 1     ----> 2
\end{lstlisting}

\gsf enforces type instantiations even when applied to an imprecisely-typed value: 
\begin{lstlisting}[numbers=none]
let g : ? = (*$\Lambda$*)X.(*$\lambda$*)x:X.x in g [Int] true               ----> error
\end{lstlisting}

\gsf soundly augments the expressiveness of \sysF to higher-order polymorphic code:
\begin{lstlisting}[numbers=none]
let t : (*$(\forall X.\?) -> \?$*) = (*$\lambda$*)x:(*$(\forall X.?)$*).x [Int] in (t id) 1     ----> 1
let t : (*$(\forall X.\?) -> \?$*) = (*$\lambda$*)x:(*$(\forall X.?)$*).x [Int] in (t id) true  ----> error
\end{lstlisting}

In \gsf, we can exploit the underlying runtime sealing mechanism used to enforce gradual parametricity in order to emulate the runtime sealing primitives of the cryptographic lambda calculus~\cite{sumiiPierce:popl2004} (\S\ref{sec:embedding}). Indeed, we can define a pair of functions of type \mbox{$\forall X. X -> \? \times \? -> X$}. The first component of the pair is a function of type $X -> \?$, which intuitively justifies sealing the argument (of type $X$) at runtime, but not unsealing the returned value (of type $\?$). Dually, the type of the second component is $\? -> X$, which only justifies unsealing the returned value. The underlying mechanism ensures that the unsealing function only succeeds if its argument was sealed by the first function:

\begin{lstlisting}[numbers=none]
let p : (*$\forall X. X -> \? \times \? -> X$*) = (*$\Lambda$*)X.(*$\langle$*)(*$\lambda$*)x:X.x::?, (*$\lambda$*)x:?.x::X(*$\rangle$*) in
let su = p [?] in let seal = fst su in let unseal = snd su in
(unseal (seal 1)) + 1    ----> 2     
\end{lstlisting}
On the second line, \lstinline{p [?]} creates a fresh pair of functions with an underlying type name that acts as the runtime sealing key: therefore, \lstinline[mathescape]{seal} seals the value \lstinline{1}, and \lstinline[mathescape]{unseal} unseals it. 
The whole program reduces to $2$. Unsealing the sealed value with any other generated unsealing function, or attempting to add directly to the sealed value, yields a runtime error.

By following \sysF, \gsf only supports explicit polymorphism. This means that certain desirable interoperability scenarios are not supported. For instance, the following program fails at runtime:
\begin{lstlisting}[numbers=none]
  let g : ? = (*$\lambda$*)x:(*$(\forall. X. X -> X)$*).x [Int] 1
  let h : ? = (*$\lambda$*)x:?.x
  g h
\end{lstlisting}
The runtime error is raised when \lstinline|g| is applied to \lstinline|h|, because 
$\? -> \?$ (the ``underlying type'' of \lstinline|h|) is not consistent with the polymorphic function type $\forall X. X -> X$. We study (and have implemented) an extension of \gsf with a {\em dynamic} adaptation mechanism that accounts for implicit polymorphism (\S\ref{sec:dynamic-implicit-polymorphism}). In essence, in the scenario above, the runtime system automatically wraps a type abstraction around \lstinline|h| instead of failing at the application. A dual adaptation occurs for missing type applications. This extended \gsf smoothly supports interaction with untyped code.

Another extension of \gsf studied in this article is that of existential types (\S\ref{sec:existentials}), which are the foundation of data abstraction and information hiding~\cite{mitchellPlotkin:toplas1888}. 
As an example, 
consider a semaphore abstract datatype (ADT) with operations: \lstinline{bit} to create a semaphore, \lstinline{flip} to produce a semaphore in the inverted state, and \lstinline{read} to consult the state of the semaphore. This interface can be expressed with the existential type \lstinline{Sem} $\triangleq \exists X.\{X, X -> X, X -> X\}$. 
Consider the embedding of an untyped implementation of a semaphore, \lstinline{u}, which is then declared to have the static existential type \lstinline{Sem}, using the unknown type \? as the representation type:
\begin{lstlisting}[numbers=none]
let u : ? = (*$\lceil$*){bit = true,flip = (*$\lambda$*)x.not x,read = (*$\lambda$*)x.x}(*$\rceil$*) in
let t : Sem = pack(*$\langle$*)?, t(*$\rangle$*) as Sem in
unpack(*$\langle$*)X,x(*$\rangle$*) = t in x.read (x.flip (*$\framebox{v}$*)) 
\end{lstlisting}
If \lstinline[mathescape]{$\framebox{arg}$} is \lstinline{x.bit}, then the program runs properly without error, yields \lstinline{false}. But, if the programmer tries to violate type abstraction by passing a boolean value such as \lstinline{true} for \lstinline[mathescape]{$\framebox{arg}$}, a runtime error is raised. Gradual existential types 
accommodate a variety of scenarios, including both imprecise ADT signatures and implementations. As we will see, gradual parametricity also allows proving representation independence results between gradual ADTs.

\begin{figure}[!hp]
  \begin{small}
  \begin{displaymath}
    \begin{array}{rcll}
    \multicolumn{4}{c}{  
      x \in \Var, X \in \VarType, \alpha \in \TypeName \quad
      \sstore  \in \TypeName \finto \Type,
      \Delta \subset \VarType,
      \Gamma \in \Var \finto \Type
      }\\
      T & ::= & \basetype | T -> T | \forall X. T|\pairtype{T}{T}| X | \alpha   & \text{(types)}\\
      t & ::= & \const | \lambda x:T.t | \Lambda X. t | \pair{t}{t}| x |t :: T | \op{\vectorOp{t}} | t\;t |t\;[T] | \proj{i}[t] & \text{(terms)}\\
      v & ::= & \const | \lambda x:T.t | \Lambda X. t | \pair{v}{v} & \text{(values)}
    \end{array}   
  \end{displaymath}
 \end{small}
 \begin{small}
 \begin{flushleft}
  \framebox{$\EnvSS t : T$}~\textbf{Well-typed terms}
  \end{flushleft}
  \begin{mathpar}
    \inference[(T$\const$)]{  \ftype(\const) = \basetype & \sswellGamma}{\EnvSS \const : \basetype}\and 
    \inference[(T$\lambda$)]{\EnvSS[\sstore][\Delta][\Gamma, x : T] t : T'
    }{\EnvSS \lambda x:T.t : T -> T'}\and
    \inference[(T$\Lambda$)]{\EnvSS[\sstore][\Delta, X][\Gamma]t : T & \sswellGamma  }{\EnvSS \Lambda X.t : \forall X.T}\and
     \inference[(Tpair)]{\EnvSS t_1 : T_1 & \EnvSS t_2 : T_2}{\EnvSS \pair{t_1}{t_2} : \pairtype{T_1}{T_2}}\and
      \inference[(Tx)]{x:T\in\Gamma & \sswellGamma}{\EnvSS x : T} \and
    \inference[(Tasc)]{\EnvSS t : T && \eqrules{T}{T'}}{\EnvSS t :: T' : T'} \and
 \inference[(Top)]{\EnvSS \vectorOp{t} : \vectorOp{T_1} & \ftype(op) = \vectorOp{T_2} -> T \\ \eqrules{\vectorOp{T_1}}{\vectorOp{T_2}}}{\EnvSS \op{\vectorOp{t}} : T}
     \and
    \inference[(Tapp)]{\EnvSS t_1 : T_1 & \EnvSS t_2 : T_2 \\ \eqrules{\dom(T_1)}{T_2}
    }{\EnvSS t_1\;t_2 :\cod(T_1 )}
     \and
    \inference[(TappT)]{ \EnvSS t : T& \Sigma; \Delta |- T' 
    }{\EnvSS t\; [T'] : \insta(T,T')}
    \and
    \inference[(Tpair$i$)]{ \EnvSS t : T }{\EnvSS \proj{i}[t] : \projt{i}{T}}
  \end{mathpar}
  \newline
  \begin{displaymath}
    \begin{block}
    \dom : \Type \rightharpoonup \Type \\
    \dom(T_1 -> T_2) = T_1\\
    \dom(T)\undefinedow
    \end{block}  
    \quad
    \begin{block}
    \cod : \Type \rightharpoonup \Type \\
    \cod(T_1 -> T_2) = T_2\\
    \cod(T)\undefinedow
    \end{block}    
    \quad
    \begin{block}
    {\insta : \Type^2 \rightharpoonup \Type} \\
    {\insta(\forall X. {T}, T') = T[T'/X]}\\
    {\insta(T, T')\undefinedow}
    \end{block} 
    \quad
    \begin{block}
    {\projts_i : \Type \rightharpoonup \Type} \\
    {\projt{i}{\pairtype{T_1}{T_2}} = T_i}\\
    {\projt{i}{T}\undefinedow}
    \end{block} 
  \end{displaymath}
\end{small}

\begin{small}
\begin{flushleft}
\framebox{$\eqrules{T}{T}$}~\textbf{Type equality}
\end{flushleft}
  \begin{mathpar} 
           \inference{|- \sstore}{\eqrules{B}{B}} \and
           \inference{|- \sstore & X \in \Delta}{\eqrules{X}{X}} \and
           \inference{\eqrules{T_1}{T_1'} & \eqrules{T_2}{T_2'}}{\eqrules{T_1 -> T_2}{T_1' -> T_2'}}\and
           \inference{\eqrules{T_1}{T_2}[\sstore][\Delta,X]}{\eqrules{\forall X. T_1}{\forall X. T_2}}\and
           \inference{\eqrules{T_1}{T_1'} & \eqrules{T_2}{T_2'}}{\eqrules{\pairtype{T_1}{T_2}}{\pairtype{T_1'} {T_2'}}}\\
           \inference{|- \sstore & \alpha \in \dom(\sstore)}{\eqrules{\alpha}{\alpha}} \and
           \inference{ \eqrules{\sstore(\alpha)}{T}}{\eqrules{\alpha}{T}}\and
           \inference{   \eqrules{T}{\sstore(\alpha)}}{\eqrules{T}{\alpha}}
    \end{mathpar}
\end{small}

\begin{small}
\begin{flushleft}
\framebox{$ \storeeval{t} \longrightarrow \storeeval t$}
~\textbf{Notion of reduction}
\end{flushleft}
\begin{mathpar}
\storeeval v :: T \longrightarrow \storeeval v
  \and
\storeeval \op{\vectorOp{v}} \longrightarrow \storeeval \redop{\vectorOp{v}}
  \and
  \storeeval (\lambda x : T. t) \;v \longrightarrow \storeeval t[v/x]
 \and 
\storeeval (\Lambda X. t)\; [T] \longrightarrow  \storeeval[\sstore, \alpha := T] t[\alpha/X] \quad\text{where }\alpha \not\in \dom(\sstore)
  \and 
\storeeval \proj{i}[\pair{v_1}{v_2}] \longrightarrow \storeeval v_i\\
\end{mathpar}
\begin{flushleft}
\framebox{$ \storeeval t \longmapsto \storeeval t$}
~\textbf{Evaluation frames and reduction}
\end{flushleft}
\begin{displaymath}
\begin{array}{rcll}
f & ::= & { [] }:: T | \op{\vectorOp{v}, [] , \vectorOp{t}} | [] \ t | v \ [] | [] \ [T] | \pair{[]}{t}| \pair{v}{[]} | \proj{i}[[]] & \text{(term frames)} 
\end{array}
\end{displaymath}
\begin{mathpar}
\inference[]{\storeeval t \longrightarrow \storeeval[\sstore'] t'
    }{ \storeeval t \longmapsto \storeeval[\sstore'] t'} \and
\inference[]{\storeeval t \longmapsto \storeeval[\sstore'] t'
    }{ \storeeval f[t] \longmapsto \storeeval[\sstore'] f[t']}
\end{mathpar}
\end{small}
 \caption{\SPFL: Simple Static Polymorphic Language with Runtime Type Generation}  
  \label{fig:spfl}
  \label{fig:spfl-syntax-statics}  
  \label{fig:spfl-dyn}
\end{figure}

\section{Preliminary: The static language \SPFL}
\label{sec:spfl-lang}

We systematically derive \gsf by applying AGT to a largely standard polymorphic language similar to \sysF, called \SPFL (Figure~\ref{fig:spfl}). In addition to the standard \sysF types and terms, \SPFL includes base types $B$ inhabited by constants $b$, typed using the auxiliary function $\ftype$, and primitive n-ary operations $\aop$ that operate on base types and are given meaning by the function $\redopn$. \SPFL also includes pairs $\pair{t_1}{t_2}$, and the associated projection operations $\proj{i}[t]$,\footnote{We omit the constraint $i \in \set{1,2}$ when operating on pairs throughout this paper.} as well as type ascriptions $t :: T$. 

The statics are standard. The typing judgment is defined over three contexts: a type name store $\sstore$ (explained below), a type variable set $\Delta$ that keeps track of type variables in scope, and a standard type environment $\Gamma$ that associates term variables to types. We adopt the convention of using partial type functions to denote computed types in the rules: $\dom$ and $\cod$ for domain and codomain types, $\insta$ for the resulting type of an instantiation, and $\projts_i$ for projected types. These partial functions are undefined if the argument is not of the appropriate shape. We also make the use of type equality explicit as a premise whenever necessary. These conventions are helpful for lifting the static semantics to the gradual setting~\citep{garciaAl:popl2016}. For closed terms, we write $\emptyenv; \emptyenv; \emptyenv |- t : T$, or simply $|- t : T$.

The dynamics are standard call-by-value semantics, specified using reduction frames. The only peculiarity is that they rely on {\em runtime type generation}: upon type application, a fresh type name $\alpha$ is generated and bound to the instantiation type $T$ in a global type name store $\sstore$. The notion of reduction and reduction rules all carry along the type name store. While type names only occur at runtime, and not in source programs, reasoning about \SPFL terms as they reduce requires accounting for programs with type names in them. This is why the typing rules are defined relative to a type name store as well. Similarly, type equality is relative to a type name store: a type name $\alpha$ is considered equal to its associated type in the store. The recursive definition of equality modulo type names is necessary to derive equalities~\citep{igarashiAl:icfp2017}. 
For instance, in the reduction of the well-typed program 
$(\mathit{id}\;[\Int -> \Int])\;(\mathit{id}\;[\Int])$, where $\mathit{id}$ is the polymorphic identity function, the equality 
$\eqrules{\alpha}{\beta -> \beta}[\alpha := \Int -> \Int, \beta := \Int]$ should be derivable.

Rules in Figure~\ref{fig:spfl} appeal to auxiliary well-formedness judgments, omitted for brevity\iffullv{ (\S\ref{asec:SFLanguage}).}[.] A type $T$ is well-formed ($\sstore; \Delta |- T$) if it only contains type variables in the type variable environment $\Delta$, and type names bound in a well-formed type name store. A type name store is well-formed ($|- \sstore$) if all type names are distinct, and associated to well-formed types. A type environment $\Gamma$ binds term variables to types, and is well-formed ($\Sigma; \Delta |- \Gamma$) if all types are well-formed. 

The decision of using type names instead of the traditional substitution semantics is 
in anticipation of gradualization, and based on prior work 
that has shown that runtime type generation is key in order to be able to distinguish between different type variables instantiated with the same type~\citep{matthewsAhmed:esop2008,ahmedAl:popl2011,ahmedAl:icfp2017}. We follow the approach already in \SPFL because we want the dynamics and type soundness argument of the static language to help us with \gsf, as afforded by AGT~\cite{garciaAl:popl2016}.

Unsurprisingly, \SPFL is type safe, and all well-typed terms are parametric. These results also follow from the properties of \gsf, and the strong relation between both languages.

\section{\gsf: Statics}
\label{sec:gsf-statics}

The first step of the Abstracting Gradual Typing methodology (AGT) is to define the syntax of gradual types and give them meaning through a concretization function to the set of static types they denote. Then, by finding the corresponding abstraction function to establish a Galois connection, the static semantics of the static language can be lifted to the gradual setting.

\subsection{Syntax and Syntactic Meaning of Gradual Types}
\label{sec:syntax-syntactic-meaning-gradual-types}
We introduce the syntactic category of gradual types $\cT \in \GType$, by admitting the unknown type in any position, namely:
$$ \cT  ::= \basetype | \cT -> \cT| \forall X. \cT| \pairtype{\cT}{\cT}| X  |\alpha| \?$$
Observe that static types $T$ are syntactically included in gradual types $G$.

\begin{figure}
\begin{footnotesize}
\begin{minipage}{0.4\textwidth}
    \begin{align*}
    \cs : \GType &-> \Pow^{*}(\Type)\\
    \conc{\basetype} &=  \set{\basetype}\\
    \conc{\cT_1 -> \cT_2} &=   \{T_1 -> T_2| T_1 \in \conc{\cT_1}, T_2 \in \conc{\cT_2} \} \\
    \conc{\pairtype{\cT_1}{\cT_2}} &= \{T_1 \pairsy T_2| T_1 \in \conc{\cT_1}, T_2 \in \conc{\cT_2}\}\\
    \conc{X} &= \set{X} \\ 
    \conc{\alpha} &= \set{\alpha} \\ 
    \conc{\forall X. \cT} &= \{\forall X. T| T \in \conc{\cT}[\store;\Delta,X]\}\\
    \conc{\?} &= \Type
  \end{align*}
\end{minipage}\quad\quad
\begin{minipage}{0.4\textwidth}
\begin{align*}
   \as : \Pow^{*}(\Type)&-> \GType \\
    \abst{\set{\basetype}} &=  \basetype\\
    \abst{\set{\overline{T_{i1} -> T_{i2}}}} &=  \abst{\set{\overline{T_{i1}}}} -> \abst{\set{\overline{T_{i2}}}}\\
    \abst{\set{\overline{T_{i1} \pairsy T_{i2}}}} &=  \abst{\set{\overline{T_{i1}}}} \pairsy \abst{\set{\overline{T_{i2}}}}\\
    \abst{\set{X}} &= X \\
    \abst{\set{\alpha}} &= \alpha \\
    \abst{\set{\overline{\forall X.T_i}}} &= \forall X. \abst{{\set{\overline{T_i}}}}[\store;\Delta, X]\\
    \abst{\set{\overline{T_i}}} &= \? \ \ \mathit{otherwise} 
  \end{align*}
\end{minipage}

\end{footnotesize}
 \caption{Type concretization ($\cs$) and abstraction ($\as$)}
  \label{fig:c-a}
\end{figure}

The syntactic meaning of gradual types is straightforward: the unknown type represents any type, and a precise type (constructor) represents the equivalent static type (constructor). In other words, $\Int -> \?$ denotes the set of all function types from $\Int$ to any static type. Perhaps surprisingly, we can simply extend this syntactic approach to deal with universal types, type variables, and type names; the concretization function $\cs$ is defined in Figure~\ref{fig:c-a}. Note that the definition is purely syntactic and does not even consider well-formedness ($\?$ stands for {\em any} static type); notions built above concretization, such as consistency, will naturally embed the necessary restrictions (\S\ref{sec:lifting-statics}).

Following the abstract interpretation framework, the notion of precision is not subject to tailoring: precision coincides with set inclusion of the denoted static types~\citep{garciaAl:popl2016}.

\begin{restatable}[Type Precision]{definition}{TypePrecision}
\label{def:precision}
$\tprules{\cT_1}{\cT_2}$ if and only if $\conc{\cT_1} \subseteq \conc{\cT_2}$.
 \end{restatable}

\begin{restatable}[Precision, inductively]{proposition}{Precisioninductively}
\label{Precisioninductively} The inductive definition of type precision given in Figure~\ref{fig:gsf-statics} is equivalent to Definition~\ref{def:precision}.
\end{restatable}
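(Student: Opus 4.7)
The plan is to prove the equivalence by mutual inclusion, each direction by a structural induction.

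For soundness (the inductive rules imply $\conc{\cT_1} \subseteq \conc{\cT_2}$), I would induct on the derivation of the inductive precision judgment. The reflexive base cases ($B \sqsubseteq B$, $X \sqsubseteq X$, $\alpha \sqsubseteq \alpha$) and the top case ($\cT \sqsubseteq \?$) are immediate from the definition of $\cs$, since $\conc{B} = \{B\}$, $\conc{X} = \{X\}$, $\conc{\alpha} = \{\alpha\}$, and $\conc{\?} = \Type$. Each compositional case ($\to$, $\pairsy$, $\forall$) follows by applying the induction hypothesis to the subderivations and observing that $\cs$ commutes with the respective constructor: for instance, any $T_1 \to T_2 \in \conc{\cT_1 \to \cT_2}$ decomposes as $T_i \in \conc{\cT_i} \subseteq \conc{\cT_i'}$, hence $T_1 \to T_2 \in \conc{\cT_1' \to \cT_2'}$.

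For completeness (set inclusion implies the inductive rules), I would induct on the structure of $\cT_2$ with case analysis on $\cT_1$. The key structural fact is that, for any non-$\?$ gradual type, the concretization contains only static types sharing its head constructor, and the images under different constructors are pairwise disjoint. Thus if $\cT_2 = \?$, the rule $\cT_1 \sqsubseteq \?$ applies directly; if $\cT_2 \neq \?$, the case $\cT_1 = \?$ is ruled out because $\conc{\?} = \Type \not\subseteq \conc{\cT_2}$ (the right-hand side omits types with a different head constructor than $\cT_2$); otherwise both $\cT_1$ and $\cT_2$ must share a head constructor, and the induction hypothesis applied to their subcomponents yields the inductive derivation. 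The atomic cases (base types, type variables, type names) reduce to checking the relevant singleton containment.

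The main subtlety is the universal case: for $\cT_1 = \forall X.\cT_1'$ and $\cT_2 = \forall X.\cT_2'$, I need $\conc{\forall X.\cT_1'} \subseteq \conc{\forall X.\cT_2'}$ to imply $\conc{\cT_1'} \subseteq \conc{\cT_2'}$ under the extended binder context $\Delta, X$. This follows from injectivity of the syntactic wrapping $T \mapsto \forall X.T$: every $T \in \conc{\cT_1'}$ lifts to $\forall X.T$ on the left, which must coincide with $\forall X.T'$ for some $T' \in \conc{\cT_2'}$, forcing $T = T' \in \conc{\cT_2'}$. The remaining bookkeeping is to thread the store $\sstore$ and variable set $\Delta$ through the induction consistently with how they appear in the concretization clauses, and to verify that the cross-shape cases (\eg~a universal compared with an arrow) are indeed excluded by disjointness of concretization images.
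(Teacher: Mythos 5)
Your proposal is correct and follows essentially the same route as the paper's proof: soundness by induction on the derivation of the inductive judgment, and completeness by structural induction using disjointness of concretizations across head constructors plus decomposition of the compositional cases (where the only implicit ingredient is non-emptiness of $\cs$, guaranteed by its codomain $\Pow^{*}(\Type)$, to recover the componentwise inclusions). Nothing further to add.
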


Observe that both $\forall X.X -> \?$ and $\forall X.\? -> X$ are more precise than $\forall X.? -> \?$, and less precise than $\forall X.X -> X$, thereby reflecting the original intuition about precision~\citep{siekAl:snapl2015}. Also $\forall X.? -> \?$ and $\? -> \?$ are unrelated by precision, since they correspond to different constructors (and \gsf is a language with {\em explicit} polymorphism); they are both more precise than $\?$, of course.

The notion of precision induces a notion of precision \emph{meet} between gradual types, which coincides with the abstraction of the intersection of both concretizations \cite{garciaAl:popl2016}.
\begin{restatable}[Precision Meet]{definition}{Meet}
\label{def:meet}
$\cT_1 \meet \cT_2 \triangleq \abst{\conc{\cT_1} \cap \conc{\cT_2}}$.
\end{restatable}

Dual to concretization is abstraction, which produces a gradual type from a non-empty set of static types. The abstraction function $\as$  is direct (Figure~\ref{fig:c-a}): it preserves type constructors and falls back on the unknown type whenever an heterogeneous set is abstracted. 
$\as$ is both sound and optimal: it produces the {\em most precise} gradual type that over-approximates a given set of static types.

\begin{restatable}[Galois connection]{proposition}{Galoisconnection}
\label{Galoisconnection}
$\pair{\cs}{\as}$ is a Galois connection, \ie:\\
$a)$ (Soundness) for any non-empty set of static types $S = \set{\overline{T}}$, we have $S \subseteq \conc{\abst{S}}$\\
$b)$ (Optimality) for any gradual type $G$, we have $\tprules{\abst{\conc{G}}}{G}$.
\end{restatable}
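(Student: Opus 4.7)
\begin{proofsketch}
The plan is to prove the two halves by structural inductions that mirror the clause-by-clause definitions of $\cs$ and $\as$ in Figure~\ref{fig:c-a}, exploiting the fact that every clause of $\as$ either preserves a top-level type constructor shared by all elements of $S$ or else collapses to $\?$, whose concretization is the whole of $\Type$.

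For part (a), soundness, I would proceed by case analysis on the shape of $S = \set{\overline{T_i}}$, following the case split used to define $\as$. In the homogeneous base case $S = \set{B}$, both $\as(S)$ and $\cs(\as(S))$ evaluate to $\set{B}$, so the inclusion is immediate. In the function, pair, and universal cases, $S$ consists of elements sharing a common top constructor, say $\set{\overline{T_{i1} -> T_{i2}}}$; unfolding the definitions, $\cs(\as(S)) = \set{T_1 -> T_2 \mid T_1 \in \cs(\as(\set{\overline{T_{i1}}})), T_2 \in \cs(\as(\set{\overline{T_{i2}}}))}$, and applying the induction hypothesis componentwise gives each $T_{i1} -> T_{i2} \in \cs(\as(S))$. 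The $\forall$ case requires carrying the extended context $\Delta, X$ through the induction, but is otherwise identical. The singleton cases $\set{X}$ and $\set{\alpha}$ are immediate. In every remaining configuration, $\as(S) = \?$ by the otherwise clause and $\cs(\?) = \Type$, trivially containing $S$.

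For part (b), optimality, I would proceed by structural induction on $G$. For $G = B$, $X$, or $\alpha$, both $\cs$ and $\as$ act as the identity on the relevant singleton, and the reflexivity of $\sqsubseteq$ (Proposition~\ref{Precisioninductively}) closes the case. For $G = G_1 -> G_2$, the structure of $\cs(G_1 -> G_2)$ makes every element a function type, so $\as(\cs(G_1 -> G_2)) = \as(\set{T \mid T \in \cs(G_1)}) -> \as(\set{T \mid T \in \cs(G_2)})$; the induction hypothesis yields $\as(\cs(G_i)) \sqsubseteq G_i$, and the inductive rule for $->$ gives the result. The pair and $\forall$ cases are analogous, with the $\forall X.\ $ case again carrying the extended $\Delta$. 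Finally, for $G = \?$, $\cs(\?) = \Type$ contains elements of multiple shapes (e.g.\ $B$ and $B -> B$), so $\as$ falls into its otherwise branch and returns $\?$, giving $\? \sqsubseteq \?$.

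The main obstacle I anticipate is bookkeeping for the $\forall X.\ $ case: both $\cs$ and $\as$ recurse under an extended context $\store; \Delta, X$, and one must state the induction hypothesis parametrically in the type-variable context to apply it there. A secondary subtlety is checking that the otherwise clause for $\as$ genuinely fires on $\cs(\?)$ in part (b); this reduces to observing that $\Type$ is not contained in any of the shape-homogeneous subsets that trigger the earlier clauses of $\as$. Once these points are handled, the rest of the proof is routine structural unfolding of the definitions.
\end{proofsketch}
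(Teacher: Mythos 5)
Your proof is correct and follows essentially the same route as the paper's: soundness by induction on the case analysis defining $\as$ (with the heterogeneous cases discharged trivially via $\conc{\?} = \Type$), and optimality by structural induction on the gradual type, with the $\forall$ case handled by generalizing the induction hypothesis over the type-variable context. The two subtleties you flag (context generalization, and checking that the otherwise-clause of $\as$ fires on $\conc{\?}$) are exactly the points that need care, and your treatment of them is adequate.
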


\begin{figure}[hp]
\begin{small}
  \begin{displaymath}
    \begin{array}{rcll}
      \multicolumn{4}{c}{  
      x \in \Var, X \in \VarType, \alpha \in \TypeName \quad
      \store  \in \TypeName \finto \GType,
      \Delta \subset \VarType,
      \Gamma \in \Var \finto \GType
      }\\
      \cT & ::= & \basetype | \cT -> \cT| \forall X. \cT| \pairtype{\cT}{\cT} | X  | \alpha | \? & \text{(gradual types)}\\

      t & ::= & \const | \lambda x:\cT.t | \Lambda X. t | \pair{t}{t}| x |t :: \cT | \op{\vectorOp{t}} | t\;t | t\;[\cT] | \proj{i}[t]   & \text{(gradual terms)}\\
    \end{array}   
  \end{displaymath}
 \end{small}
  \begin{small}
   \begin{flushleft}
  \framebox{$\EnvSG t : \cT$}~\textbf{Well-typed terms}
  \end{flushleft}
  \begin{mathpar}
    \inference[(G$\const$)]{  \ftype(\const) = \basetype & \swellGamma 
    }{\EnvSG \const : \basetype}
    \and
    \inference[(G$\lambda$)]{\EnvSG[\store][\Delta][\Gamma, x : \cT] t : \cT'
    }{\EnvSG \lambda x:\cT.t : \cT -> \cT'}
     \and
    \inference[(G$\Lambda$)]{\EnvSG[\store][\Delta, X][\Gamma]t : \cT & \swellGamma  }{\EnvSG \Lambda X.t : \forall X.\cT}
     \and
     \inference[(Gpair)]{\EnvSG t_1 : \cT_1 & \EnvSG t_2 : \cT_2  }{\EnvSG \pair{t_1}{t_2} : \pairtype{\cT_1}{\cT_2}}
     \and
      \inference[(Gx)]{x:\cT\in\Gamma & \swellGamma
    }{\EnvSG x : \cT}
    \and
    \inference[(Gasc)]{\EnvSG t : \cT && \ceqrules{\cT}{\cT'}
    }{\EnvSG t :: \cT' : \cT'} \and

   \inference[(Gop)]{\EnvSG \vectorOp{t} : \vectorOp{\cT_1} & \ftype(op) = \vectorOp{\cT_2} -> \cT \\ \ceqrules{\vectorOp{\cT_1}}{\vectorOp{\cT_2}}}{\EnvSG \op{\vectorOp{t}} : \cT}
     \and
    \inference[(Gapp)]{\EnvSG t_1 : \cT_1 & \EnvSG t_2 : \cT_2 \\ \ceqrules{\consistent{\dom}(\cT_1)}{\cT_2}
    }{\EnvSG t_1\;t_2 :\consistent{\cod}(\cT_1 )}
     \and
    \inference[(GappG)]{ \EnvSG t : \cT& \gtwf{\cT'} 
    }{\EnvSG t\; [\cT'] : \cinsta(\cT, \cT')}
    \and
    \inference[(Gpair$i$)]{ \EnvSG t : \cT }{\EnvSG 
    \proj{i}[t] : \cprojt{i}{\cT}}\\
  \end{mathpar}
  \begin{displaymath}
    \begin{block}
    \cdom : \GType \rightharpoonup \GType \\
    \cdom(\cT_1 -> \cT_2) = \cT_1\\
    \cdom(\?) = \? \\
    \cdom(\cT)\undefinedow
    \end{block}  
    ~ \quad
    \begin{block}
    \ccod : \GType \rightharpoonup \GType \\
    \ccod(\cT_1 -> \cT_2) = \cT_2\\
    \ccod(\?) = \? \\
    \ccod(\cT)\undefinedow
    \end{block} 
    ~ \quad
    \begin{block}
    {\cinsta : \GType^2 \rightharpoonup \GType} \\
    {\cinsta(\forall X. {\cT}, \cT') = \cT[\cT'/X]}\\
    {\cinsta(\?, \cT') = \?} \\
    {\cinsta(\cT, \cT')\undefinedow}
    \end{block} 
    ~
    \quad
    \begin{block}
    {\cprojts_i : \GType \rightharpoonup \GType} \\
    {\cprojt{i}{\pairtype{\cT_1}{\cT_2}} = \cT_i}\\
    {\cprojt{i}{\?} = \?}\\
    {\cprojt{i}{\cT}\undefinedow}
    \end{block} 
    \end{displaymath}
\end{small}
\begin{small}
\begin{flushleft}
\framebox{$\ceqrules{\cT}{\cT}$}~\textbf{Type consistency}
\end{flushleft}
    \begin{mathpar} 
           \inference{|- \store}{\ceqrules{B}{B}} \and
           \inference{|- \store & X \in \Delta}{\ceqrules{X}{X}} \and
           \inference{\ceqrules{\cT_1}{\cT_1'} & \ceqrules{\cT_2}{\cT_2'}}{\ceqrules{\cT_1 -> \cT_2}{\cT_1' -> \cT_2'}}\and
           \inference{\ceqrules{\cT_1}{\cT_2}[{\store; \Delta, X}]}{\ceqrules{\forall X. \cT_1}{\forall X. \cT_2}}\and
           \inference{\ceqrules{\cT_1}{\cT_1'} & \ceqrules{\cT_2}{\cT_2'}}{\ceqrules{\cT_1 \pairsy \cT_2}{\cT_1' \pairsy \cT_2'}}\and
           \inference{|- \store & \alpha \in \dom(\store)}{\ceqrules{\alpha}{\alpha}} \and
          \inference{\ceqrules{\store(\alpha)}{\cT}}{\ceqrules{\alpha}{\cT}}\and
           \inference{\ceqrules{\cT}{\store(\alpha)}}{\ceqrules{\cT}{\alpha}}\and
           \inference{\gtwf{\cT}}{\ceqrules{\cT}{\?}}\and
           \inference{\gtwf{\cT}}{\ceqrules{\?}{\cT}}
    \end{mathpar}
  \end{small}
  \begin{small}
\begin{flushleft}
\framebox{$\tprules{\cT}{\cT}$}~\textbf{Type precision}
\end{flushleft}
\begin{mathpar}
\inference{}{\tprules{\basetype}{\basetype}}
 \and
 \inference{}{\tprules{X}{X}}\and
           \inference{\tprules{\cT_1}{\cT_1'} & \tprules{\cT_2}{\cT_2'}}{\tprules{\cT_1 -> \cT_2}{\cT_1' -> \cT_2'}}\and
           \inference{\tprules{\cT_1}{\cT_2}[\store;\Delta, X]}{\tprules{\forall X. \cT_1}{\forall X. \cT_2}}\and
           \inference{\tprules{\cT_1}{\cT_1'} & \tprules{\cT_2}{\cT_2'}}{\tprules{\cT_1 \pairsy \cT_2}{\cT_1' \pairsy \cT_2'}}\and
           \inference{}{\tprules{\alpha}{\alpha}}\and
           \inference{}{\tprules{\cT}{\?}}  
\end{mathpar}
\end{small}

 \caption{\gsf: Syntax and Static Semantics}
  \label{fig:gsf-statics}
\end{figure}

\subsection{Lifting the Static Semantics}
\label{sec:lifting-statics}

The key point of AGT is that once the meaning of gradual types is agreed upon, there is no space for ad hoc design in the static semantics of the language. The abstract interpretation framework provides us with the {\em definitions} of type predicates and functions over gradual types, for which we can then find equivalent inductive or algorithmic {\em characterizations}.

In particular, a predicate on static types induces a counterpart on gradual types through {\em existential} lifting. Our only predicate in \SPFL is type equality, whose existential lifting is type consistency:

\begin{restatable}[Consistency]{definition}{Consistency}
\label{def:consistency}
  $\ceqrules{\cT_1}{\cT_2}$ if and only if $\eqrules{T_1}{T_2}[\sstore]$  for some $\sstore \in
  \conc{\store}$, $T_i \in \conc{\cT_i}$.
\end{restatable}
For closed types we write
$\cT_1 \sim \cT_2$.
This definition uses a {\em gradual} type name store $\store$, which binds type names to gradual types. Its concretization is the pointwise concretization:
\begin{align*}
    \conc{\cdot} &=  \emptyset & 
    \conc{\store, \alpha := \cT} &= \set{\sstore, \alpha := T | \sstore \in \conc{\store}, T \in \conc{\cT}}
  \end{align*}
Note that because consistency is the consistent lifting of static type equality, which does impose well-formedness, consistency is only defined on well-formed types (\ie~$\ceqrules{X}{X}[\emptyenv;\emptyenv]$ does {\em not} hold).

\begin{restatable}[Consistency, inductively]{proposition}{Consistencyinductively} The inductive definition of type consistency given in Figure~\ref{fig:gsf-statics} is equivalent to Definition~\ref{def:consistency}.
\end{restatable}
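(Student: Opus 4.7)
The plan is to prove each direction of the equivalence separately by induction on the appropriate derivation.

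For soundness (inductive $\Rightarrow$ semantic), I would induct on the derivation of $\ceqrules{\cT_1}{\cT_2}$ using the rules of Figure~\ref{fig:gsf-statics}. Base cases ($\ceqrules{B}{B}$, $\ceqrules{X}{X}$, $\ceqrules{\alpha}{\alpha}$) take the gradual type itself as a static witness, yielding a reflexive static equality. Compositional cases (arrow, pair, quantifier) combine the IH witnesses constructor-wise, with the quantifier case using the induction hypothesis under an extended $\Delta$. The two $\?$-rules invoke well-formedness of the non-$\?$ type to produce some $T \in \conc{\cT}$, and observe that $T \in \conc{\?}$ always, giving the reflexive equality $\eqrules{T}{T}[\sstore]$. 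The type-name rules $\ceqrules{\alpha}{\cT}$ and $\ceqrules{\cT}{\alpha}$ apply the IH to the premise $\ceqrules{\store(\alpha)}{\cT}$ and re-apply the corresponding static type-name rule, using the fact that $\sstore(\alpha) \in \conc{\store(\alpha)}$.

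For completeness (semantic $\Rightarrow$ inductive), fix $\sstore \in \conc{\store}$, $T_i \in \conc{\cT_i}$, and $\eqrules{T_1}{T_2}[\sstore]$, and induct on this static derivation, with a case analysis on the shapes of $\cT_1$ and $\cT_2$. If either $\cT_i$ is $\?$, the corresponding $\?$-rule of the gradual system applies directly, with well-formedness of the other gradual type transferring from well-formedness of its static counterpart. If neither is $\?$ and neither is a type name, then both must be the same top-level constructor matching $T_i$ (by the concretization definition), so I decompose and apply the IH componentwise. If one is a type name $\alpha$, the static derivation must have used one of the static type-name rules, reducing the equality to $\eqrules{\sstore(\alpha)}{T}[\sstore]$ (or symmetrically); since $\sstore(\alpha) \in \conc{\store(\alpha)}$, the IH yields $\ceqrules{\store(\alpha)}{\cT}$, and the corresponding gradual type-name rule applies.

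The main obstacle is coordinating the static witness store across subderivations in the soundness direction: each recursive application of the IH produces its own $\sstore_i \in \conc{\store}$, yet the outer conclusion demands a single common $\sstore$. The cleanest remedy is to strengthen the soundness statement to a universally quantified form: for every $\sstore \in \conc{\store}$, there exist $T_i \in \conc{\cT_i}$ witnessing $\eqrules{T_1}{T_2}[\sstore]$. This strengthening is viable because the concretization of any well-formed type is non-empty, so subwitnesses can be chosen coherently against a pre-committed $\sstore$. A secondary subtlety is that the $\?$-rules and the type-name rules can overlap in the derivations they admit (for instance on $\ceqrules{\alpha}{\?}$), but this poses no problem since the statement concerns equivalence of the underlying relations, not uniqueness of derivations; any applicable rule suffices at each step.
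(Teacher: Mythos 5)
Your completeness direction is workable, but the soundness direction rests on a strengthening that is false. You propose to prove that for \emph{every} $\sstore \in \conc{\store}$ there exist witnesses $T_i \in \conc{\cT_i}$ with $\eqrules{T_1}{T_2}$, justified by non-emptiness of concretizations. Non-emptiness does not give you the freedom you need: for a type name the concretization is the singleton $\conc{\alpha} = \set{\alpha}$, so there is nothing to choose, and the truth of the static judgment is then dictated entirely by the pre-committed $\sstore$. Concretely, take $\store = (\alpha := \?)$. The inductive rules derive $\ceqrules{\alpha}{\Int}$ (the type-name rule reduces it to $\ceqrules{\?}{\Int}$, which holds), yet for the concretization $\sstore = (\alpha := \Bool) \in \conc{\store}$ the only candidate witnesses are $T_1 = \alpha$ and $T_2 = \Int$, and $\eqrules{\alpha}{\Int}[(\alpha := \Bool)][\emptyenv]$ reduces to $\eqrules{\Bool}{\Int}[(\alpha := \Bool)][\emptyenv]$, which fails. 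So the universally quantified induction hypothesis is refuted already at a leaf of the derivation.

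Moreover, the store-coordination problem you correctly identified is not a bookkeeping inconvenience that a stronger hypothesis dissolves; it is the crux of the proposition, and your proposal papers over it rather than resolving it. With the same store, the inductive rules derive $\ceqrules{\alpha -> \alpha}{\Int -> \Bool}$ from $\ceqrules{\alpha}{\Int}$ and $\ceqrules{\alpha}{\Bool}$, and here the two component witnesses demand \emph{incompatible} concretizations of $\alpha$ ($\Int$ for the domain, $\Bool$ for the codomain), since $\conc{\alpha -> \alpha}$ forces the same name $\alpha$ on both sides of the arrow. No single $\sstore \in \conc{\store}$ can support the static congruence rule in this case, so the arrow (and pair, and quantifier) cases of your soundness induction cannot be discharged either with the existential statement or with your strengthened one. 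Any correct proof must confront exactly this situation---for instance by establishing an invariant about which type names a derivation actually traverses and how the chosen store concretization can be adjusted on them---and until you supply that ingredient, the soundness direction does not go through.
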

Again, observe that the resulting definition of consistency relates any two types that only differ in unknown type components, without any restriction.
Also, because of explicit polymorphism, top-level constructors must match, so $\? -> \?$ is not consistent with $\forall X.\? -> \?$. However, in line with gradual typing, both are consistent with $\?$, as expected.
Therefore \gsf does not treat $\? -> \?$ as a special ``quasi-polymorphic'' type, unlike \sysFg~\cite{igarashiAl:icfp2017}. Rather, consistency in \gsf coincides with that of \csa~\cite{xieAl:esop2018}.

Lifting type functions such as  $\dom$ requires abstraction: a lifted function is the abstraction of the results of applying the static function to all the denoted static types~\citep{garciaAl:popl2016}:
\begin{restatable}[Consistent lifting of functions]{definition}{Consistentliftingoffunctions}
\label{def:fun-lift}
Let $F_n$ be a function of type $\Type^n -> \Type$. Its consistent lifting  
$\consistent{F_n}$, of type $\GType^n -> \GType$, is defined as: $\consistent{F_n}(\overline{\cT}) = \abst{\set{ F_n(\overline{T}) | \overline{T} \in \overline{\conc{\cT}}}} $
\end{restatable}

The abstract interpretation framework allows us to prove the following definitions:
\begin{restatable}[Consistent type functions]{proposition}{Consistenttypefunctions}
\label{Consistenttypefunctions} The definitions of 
$\cdom$, $\ccod$, $\cinsta$, and $\cprojts_i$ given in Fig.~\ref{fig:gsf-statics} are consistent liftings, as per Def.~\ref{def:fun-lift}, of the corresponding functions from Fig.~\ref{fig:spfl}.
\end{restatable}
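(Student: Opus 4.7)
The plan is to prove each of the four equations separately by case analysis on the input gradual type(s), using the definitions of $\cs$ and $\as$ from Figure~\ref{fig:c-a}. For a given $\overline{\cT}$, we compute the set $\{F(\overline{T}) \mid \overline{T} \in \overline{\conc{\cT}}\}$ of static results, apply $\as$, and compare with the algorithmic clause. A useful preliminary, strengthening Proposition~\ref{Galoisconnection}, is that $\pair{\cs}{\as}$ is actually a Galois insertion, i.e., $\abst{\conc{\cT}} = \cT$ for every well-formed gradual type; this is an easy structural induction on $\cT$, observing that each clause of $\as$ inverts the corresponding clause of $\cs$ (and that $\conc{\?} = \Type$ is heterogeneous, so $\abst{\Type} = \?$).

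For $\cdom$, when $\cT = \cT_1 -> \cT_2$, the set reduces to $\conc{\cT_1}$ because $\dom$ reads off the first component of each arrow in $\conc{\cT}$; the insertion property yields $\abst{\conc{\cT_1}} = \cT_1$, matching the algorithmic clause. When $\cT = \?$, $\dom$ is defined on every arrow type and the projection sweeps all of $\Type$, so $\abst{\Type} = \?$. In every other case, the set is empty since $\dom$ is undefined on base types, quantifiers, pairs, variables and names, and $\as$ is undefined on the empty set (its domain is $\Pow^{*}(\Type)$), matching the undefined algorithmic clause. The proofs for $\ccod$ and $\cprojts_i$ follow the same pattern, projecting the codomain of an arrow or the $i$-th component of a pair.

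The delicate case is $\cinsta$, because it involves type substitution. When the first argument is $\?$, the induced set contains $T_0[T'/X]$ for every $\forall X.\,T_0 \in \Type$ and every $T' \in \conc{\cT'}$, which already ranges over all of $\Type$ (take $T_0 = X$ and vary $T'$), so $\abst{\cdot} = \?$. When the first argument is $\forall X.\,\cT_0$, using that $\conc{\forall X.\,\cT_0}$ is exactly $\{\forall X.\,T_0 \mid T_0 \in \conc{\cT_0}[\store;\Delta,X]\}$, the induced set equals $\{T_0[T'/X] \mid T_0 \in \conc{\cT_0}, T' \in \conc{\cT'}\}$, and we must show that it abstracts to $\cT_0[\cT'/X]$. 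This requires a substitution lemma proved by structural induction on $\cT_0$: the base cases are immediate (for $\cT_0 = X$ the set is $\conc{\cT'}$, abstracting to $\cT'$; for other atoms the set is a singleton); the compound cases appeal to the inductive hypothesis on each subterm.

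The principal obstacle I anticipate is that a single $T'$ is shared across all occurrences of $X$, so the resulting set is not a full Cartesian product of the independent substitutions (e.g., for $\cT_0 = X -> X$ the set is $\{T' -> T' \mid T' \in \conc{\cT'}\}$, not $\{T_1 -> T_2 \mid T_i \in \conc{\cT'}\}$). However, since $\as$ is defined per-component on arrows, pairs, and quantifiers, this coupling is invisible to $\as$: projecting onto each component gives a set whose abstraction coincides, by the inductive hypothesis, with the substituted subterm. Any remaining shape of the first argument makes the set empty, matching the undefined algorithmic clause, and the proof closes.
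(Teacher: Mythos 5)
Your proof is correct and takes the same route the paper does: the Galois-insertion property $\abst{\conc{\cT}} = \cT$ (strengthening optimality) disposes of $\cdom$, $\ccod$ and $\cprojts_i$, and a substitution lemma proved by structural induction on the body of the quantifier handles $\cinsta$, with the per-component definition of $\as$ absorbing exactly the coupling you rightly flag as the delicate point. One minor slip worth fixing: in the $\cinsta(\?,\cT')$ case, taking $T_0 = X$ and varying $T'$ yields only $\conc{\cT'}$, not all of $\Type$; the correct observation is that letting $T_0$ range over $X$-free types already puts types of distinct top-level constructors into the collected set, so $\as$ falls to its heterogeneous clause and returns $\?$ regardless.
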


The gradual typing rules of \gsf (Figure~\ref{fig:gsf-statics}) are obtained by replacing type predicates and functions with their corresponding liftings. Note that in (Gapp), the premise $\ceqrules{\consistent{\dom}(\cT_1)}{\cT_2}$ is a compositional lifting of the corresponding premise in (Tapp), as justified by \citet{garciaAl:popl2016}. Of particular interest here is the fact that a term of unknown type can be optimistically treated as a polymorphic term and hence instantiated, yielding $\?$ as the result type of the type application ($\cinsta(\?, \cT') = \?$). In contrast, a term of function type, even imprecise, cannot be instantiated because the known top-level constructor does not match (\eg~$\cinsta(\? -> \?, \cT')$ is undefined).

\subsection{Static Properties of \gsf}
\label{sec:static-proprties}

As established by \citet{siekTaha:sfp2006} in the context of simple types, we can prove that the \gsf type system is equivalent to the \SPFL type system on fully-static terms. We say that a gradual type is static if the unknown type does not occur in it, and a term is static if it is fully annotated with static types. Let $|-_S$ denote the typing judgment of \SPFL.\footnote{As usual, the propositions here are stated over closed terms, but are proven as corollaries of statements over open terms.}

\begin{restatable}[Static equivalence for static terms]{proposition}{Staticequivalenceforstaticterms} 
\label{prop:static-eq}
Let $t$ be a static term and $\cT$ a static type ($\cT = T$). We have 
$|-_S t : T$ if and only if $|- t : T$
\end{restatable}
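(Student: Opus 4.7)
\begin{proofsketch}
The plan is to prove the stronger open-term version: for any static type environment $\Gamma$ (no $\?$ anywhere), static term $t$, and static type $T$, we have $\emptyset;\Delta;\Gamma \vdash_S t : T$ iff $\emptyset;\Delta;\Gamma \vdash t : T$. The proposition then follows by specializing $\Gamma = \emptyset$ and $\Delta = \emptyset$. Since source programs contain no type names $\alpha$, we can work with an empty type name store throughout.

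The key enabling observation is that on static types, the gradual machinery collapses to the static one. First, consistency coincides with equality: restricted to static types, the rules defining $\sim$ (Fig.~\ref{fig:gsf-statics}) are exactly the rules defining $=$ (Fig.~\ref{fig:spfl}) because the two rules $\sim \?$ cannot fire when both operands are static. Second, the consistent liftings agree with their static counterparts on static inputs: $\consistent{\dom}(T_1 \to T_2) = T_1 = \dom(T_1 \to T_2)$, and similarly for $\ccod$, $\cinsta$, and $\cprojts_i$, since the $\?$-clauses in their definitions do not apply. These two facts can be proved by a direct inspection of the definitions.

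For the ($\Rightarrow$) direction, I proceed by induction on the derivation of $\vdash_S t : T$. Every SPFL rule has a GSF rule with exactly the same shape; the only differences are that premises involving $=$ become premises involving $\sim$, and the partial functions $\dom, \cod, \insta, \projts_i$ become $\consistent{\dom}, \consistent{\cod}, \consistent{\insta}, \consistent{\projts_i}$. By the observation above, every such premise in the SPFL derivation implies the corresponding GSF premise, so the induction goes through rule-by-rule.

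For the ($\Leftarrow$) direction, I first establish an auxiliary lemma: if $t$ is a static term, $\Gamma$ is static, and $\vdash t : \cT$ is derivable in GSF, then $\cT$ is static and every type appearing in the derivation is static. This is proved by induction on the GSF derivation: each rule either propagates types from the context/annotations (which are static by assumption) or applies one of $\consistent{\dom}, \consistent{\cod}, \consistent{\insta}, \consistent{\projts_i}$ to a type that is static by the induction hypothesis, and on static inputs these functions return static outputs (crucially, the $\?$-clauses producing $\?$ never fire). With this lemma in hand, a second induction on the GSF derivation translates each rule to its SPFL counterpart, again using that $\sim$ restricted to static types coincides with $=$.

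The main obstacle is the auxiliary lemma for the $(\Leftarrow)$ direction: one has to verify that no rule can silently introduce $\?$ into a derivation of a static term, which requires checking each elimination-form rule (application, type application, projection) and the ascription rule. The ascription rule is the interesting case, but since the ascribed type in a static term is static, consistency between the subterm's static type and the static ascription forces them to be equal, preserving staticness.
\end{proofsketch}
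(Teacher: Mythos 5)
Your proposal is correct and follows essentially the same route as the paper: generalize to open terms (as the paper's footnote indicates), observe that consistency and the consistent liftings $\consistent{\dom}$, $\consistent{\cod}$, $\cinsta$, $\cprojts_i$ collapse to their static counterparts on static types, and then do a rule-by-rule induction in each direction, with the backward direction relying on the auxiliary fact that a derivation for a static term never introduces $\?$. No gaps to report.
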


The second important property of the static semantics of a gradual language is the static gradual guarantee, which states that typeability is monotonic with respect to precision~\citep{siekAl:snapl2015}. 

Type precision (Def.~\ref{def:precision}) extends  to {\em term} precision. A term $t$ is more precise than a term $t'$ if they both have the same structure and $t$ is more precisely annotated than $t'$\iffullv{ (\S\ref{asec:GSFStaticProperties})}. The static gradual guarantee ensures that removing type annotations does not introduce type errors (or dually, that gradual type errors cannot be fixed by making types more precise).

\begin{restatable}[Static gradual guarantee]{proposition}{Staticgradualguarantee}
\label{prop:Staticgradualguarantee}
Let $t$ and $t'$ be closed \gsf terms such that $t \gprec t'$ and $|- t : \cT$.
Then $|- t' : \cT'$ and $\cT \gprec \cT'$.
\end{restatable}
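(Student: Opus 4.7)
The plan is to prove a strengthened open-term version by structural induction on the term precision relation (equivalently, on the typing derivation of $t$), and then recover the closed-term statement as a special case. For open terms, we should extend precision pointwise to type environments (writing $\Gamma \gprec \Gamma'$ when they bind the same variables and each $\Gamma(x) \gprec \Gamma'(x)$, and similarly for the type name store $\store \gprec \store'$). The induction hypothesis becomes: if $\store; \Delta; \Gamma |- t : \cT$, $t \gprec t'$, $\store \gprec \store'$, and $\Gamma \gprec \Gamma'$, then there exists $\cT'$ with $\store'; \Delta; \Gamma' |- t' : \cT'$ and $\cT \gprec \cT'$.

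Before starting the induction, I would establish a handful of monotonicity lemmas that AGT gives us essentially for free from the Galois connection together with the specific shapes of the partial type functions in Figure~\ref{fig:gsf-statics}. First, monotonicity of consistency: if $\cT_1 \sim \cT_2$, $\cT_1 \gprec \cT_1'$, $\cT_2 \gprec \cT_2'$ (and the less precise sides are well-formed), then $\cT_1' \sim \cT_2'$; this follows because consistency is the existential lifting of equality (Def.~\ref{def:consistency}) and precision is concretization-inclusion (Def.~\ref{def:precision}). Second, monotonicity of each lifted type function: if $\cT \gprec \cT'$ and $\cdom(\cT)$ is defined, then so is $\cdom(\cT')$ and $\cdom(\cT) \gprec \cdom(\cT')$, and analogously for $\ccod$, $\cinsta$, $\cprojts_i$. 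Third, a precision substitution lemma: if $\cT_1 \gprec \cT_1'$ and $\cT_2 \gprec \cT_2'$, then $\cT_1[\cT_2/X] \gprec \cT_1'[\cT_2'/X]$, needed for the $\cinsta$ case.

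The structural induction itself is then routine. For the leaf cases (G$b$), (Gx), precision is the identity on static constants/variable types, and the conclusion is immediate. For (G$\lambda$), (G$\Lambda$), (Gpair) the induction hypothesis applied to subterms plus congruence of precision under the type constructors suffices. For (Gasc), (Gop), (Gapp), the type equality/consistency premises in the less-precise derivation follow from monotonicity of consistency applied to the IH on the subterms. For (Gapp) we additionally use monotonicity of $\cdom$ and $\ccod$; for (GappG) monotonicity of $\cinsta$ together with the substitution lemma; for (Gpair$i$) monotonicity of $\cprojts_i$.

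The main obstacle I anticipate is the (GappG) case, because $\cinsta$ is defined both on $\forall X.\cT$ (by substitution) and on $\?$ (by absorption to $\?$), so one must handle the situation where $t$ has a universal type while $t'$ has the unknown type: the result of instantiating $t'$ is $\?$, which is less precise than $\cT[\cT'/X]$ by the top rule of precision. The precision substitution lemma covers the genuine $\forall$-to-$\forall$ case; reasoning by case analysis on whether the less precise type has collapsed to $\?$ handles the rest. A minor but worth-mentioning subtlety is that well-formedness premises (such as $\gtwf{\cT'}$ in (GappG) or $\swellGamma$ in (Gx)) carry over because precision preserves the set of free type variables and type names, and $\store \gprec \store'$ preserves the domain of the store.
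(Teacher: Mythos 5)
Your proposal is correct and follows essentially the same route as the paper's proof: generalize to open terms with pointwise precision on $\Gamma$ and $\store$, establish monotonicity of consistency and of the lifted partial functions $\cdom$, $\ccod$, $\cinsta$, $\cprojts_i$ (which AGT gives via the Galois connection), plus a precision substitution lemma, and then induct on term precision, with (GappG) handled by case analysis on whether the less precise type has collapsed to $\?$. The only nitpick is that precision does not \emph{preserve} free type variables but only guarantees $FTV(\cT') \subseteq FTV(\cT)$ when $\cT \gprec \cT'$ — which is exactly the inclusion your well-formedness argument needs, so the proof is unaffected.
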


\section{\gsf: Evidence-Based Dynamics}
\label{sec:gsf-dynamics}

We now turn to the dynamic semantics of \gsf. As anticipated, this is where the complexity of gradual parametricity manifests. Still, in addition to streamlining the design of the static semantics, AGT provides effective (though incomplete) guidance for the dynamics. 
In this section, we first briefly recall the main ingredients of the AGT approach to dynamic semantics, namely {\em evidence} for consistent judgments and {\em consistent transitivity}. We then describe the reduction rules of \gsf by treating evidence as an abstract datatype. This allows us to clarify a number of key operational aspects before turning in \S\ref{sec:evidence} to the details of the representation and operations of evidence that enable \gsf to satisfy parametricity while adequately tracking type instantiations.

\subsection{Background: Evidence-Based Semantics for Gradual Languages}
\label{sec:agt-ev}

For obtaining the dynamic semantics of a gradual language, AGT augments a consistent judgment (such as consistency or consistent subtyping) with the {\em evidence} of {\em why} such a judgment holds. Then, reduction mimics proof reduction of the type preservation argument of the static language, combining evidences through steps of {\em consistent transitivity}, which either yield more precise evidence, or fail if the evidences to combine are incompatible. A failure of consistent transitivity corresponds to a cast error in a traditional cast calculus~\citep{garciaAl:popl2016}.

Consider the gradual typing derivation of $(\lambda x:\?. x+1)\; \false$. In the inner typing derivation of the function, the consistent judgment $\? \sim \Int$ supports the addition expression, and at the top-level, the judgment $\Bool \sim \?$ supports the application of the function to $\false$. When two types are involved in a consistent judgment, we {\em learn} something about each of these types, namely the justification of {\em why} the judgment holds. This justification can be captured by a pair of gradual types, $\ev = \newev{\cT_1,\cT_2}$, which are at least as precise as the types involved in the judgment~\citep{garciaAl:popl2016}. (Throughout this article, we use blue color for evidence $\ev$ to enhance readability of the structure of terms.)

\begin{small}
$$ \ev \Vdash \cT_1 \sim \cT_2 \iff
\ev \gprec \abst[2]{\{\pr{T_1,T_2} | T_1 \in \conc{\cT_1}, T_2 \in \conc{\cT_2}, T_1 = T_2\}}$$
\end{small}

\ie~if evidence $\newev{\cT'_1,\cT'_2}$ justifies the consistency judgment $\cT_1 \sim \cT_2$, then $\cT'_1 \gprec \cT_1$ and $\cT'_2 \gprec \cT_2$.
For instance, by knowing that $\? \sim \Int$ holds, we learn that the first type can only possibly be $\Int$, while we do not learn anything new about the right-hand side, which is already fully static. Therefore the evidence of that judgment is $\ev[1] = \newev{\Int,\Int}$. Similarly, the evidence for the second judgment is $\ev[2] = \newev{\Bool,\Bool}$. Types in evidence can be gradual, \eg~$\newev{\? -> \?,\? -> \?}$ justifies that $(\? -> \?) \sim \?$.
Note that with the lifting of simple static type equality, both components of the evidence always coincide, so evidence can be represented as a single gradual type. However, type equality in \SPFL is more subtle (\S\ref{sec:spfl-lang}), so the general presentation of evidence as pairs is required.

At runtime, reduction rules need to {\em combine} evidences in order to either justify or refute a use of transitivity in the type preservation argument. In our example, we need to combine $\ev[1]$ and $\ev[2]$ in order to (try to) obtain a justification for the transitive judgment, namely that $\Bool \sim \Int$. The combination operation, called {\em consistent transitivity} $\trans{=}$, determines whether two evidences support the transitivity: here, 
$\ev[2] \trans{=}\ev[1]=\newev{\Bool,\Bool}\trans{=}\newev{\Int,\Int}$ is undefined, so a runtime error is raised.

The evidence approach is very general and scales to disciplines where consistent judgments are not symmetric, involve more complex reasoning, and even other evidence combination operations~\citep{garciaAl:popl2016,lehmannTanter:popl2017}. All the definitions involved are justified by the abstract interpretation framework. Also, both type safety and the dynamic gradual guarantee become straightforward to prove. In particular, the dynamic gradual guarantee follows directly from the monotonicity (in precision) of consistent transitivity. In fact, the generality of the approach even admits evidence to range over other abstract domains; for instance, for gradual security typing with references, evidence is defined with {\em label intervals}, not gradual labels~\citep{toroAl:toplas2018}.

\subsection{Reduction for \gsf}
\label{sec:gsf-reduction}

\begin{figure}[hp]
\vspace{1em}
\begin{small}
  \begin{displaymath}
  \begin{array}{rcll}
    t &::=& v | \pair{t}{t} | x | \cast{\ev}{t} :: \cT | \op{\vectorOp{t}} | 
      t \; t | t \; [ \cT ] | \proj{i}[t]  & (\text{terms})\\
    v & ::= & \cast{\ev}{u} :: \cT & (\text{values})\\
    u & ::= & \const | \lambda x : \cT. t  | \Lambda X. t | \pair{u}{u} & (\text{raw values})\\
    \end{array}   
  \end{displaymath}
   \end{small}
  \begin{small}
   \begin{flushleft}
  \framebox{$\EnvSG \tu : \cT$}~\textbf{Well-typed terms} (for conciseness, $s$ ranges over both $t$ and $u$)
  \end{flushleft}
  \begin{mathpar}
    \inference[($E$\const)]{\ftype(\const) = \basetype & \swellGamma
    }{ \staticgJ{\const} \initu{\basetype}}
     \and
    \inference[($E\lambda$)]{\staticgJ[\store; \Delta; \Gamma, x : \cT]{\iterm{\cT_2}} \inTermT{\cT'}
    }{ \staticgJ{\lambda x:{\cT}.\iterm{\cT}} \initu{\cT -> \cT'}}
     \and
    \inference[($E\Lambda$)]{\staticgJ[{\store;\Delta, X}]{\iterm{\cT}} \inTermT{\cT}& \swellGamma}{ \staticgJ{\Lambda X.\iterm{\cT}} \initu{\forall X.\cT}}
     \and
     \inference[($E$pair)]{\staticgJ{\tu_1} \inTermT{\cT_{1}} &&  \staticgJ{\tu_2} \inTermT{\cT_{2}}}
      {\staticgJ{  \pair{\tu_1}{\tu_2}} \initu{\cT_1 \pairsy \cT_{2}}}
     \and
     \inference[($E$x)]{
    x:\cT\in\Gamma & \swellGamma}{\staticgJ{x} \inTermT{\cT}}
    \and
    \inference[($E$asc)]{\staticgJ{\tu} \inTermT{\cT} &&  
    \Gbox{\ev \Vdash \ceqrulessimpl{\cT}{\cT'}}
    }{\staticgJ{\Gbox{\ev \tu :: \cT'}} \inTermT{\cT'}}
    \and
    \inference[($E$op)]{\staticgJ{\vectorOp{\iterm{}}} \inTermT{\vectorOp{\B}} && \ftype(\aop) = \vectorOp{\B} -> \B'}
      {\staticgJ{\op{\vectorOp{\iterm{}}}} \inTermT{\B'}}
     \and
    \inference[($E$app)]{\staticgJ{\iterm{\cT_{11} -> \cT_{12}}_1} \inTermT{\cT -> \cT'} &&  \staticgJ{\iterm{\cT_{11}}_2} \inTermT{\cT}}
      {\staticgJ{\iterm{\cT_{11} -> \cT_{12}}_1 \; \iterm{\cT_{11}}_2} \inTermT{\cT'}}
     \and
    \inference[($E$app$\cT$)]{\staticgJ{\iterm{\forall X. \cT}} \inTermT{\forall X.\cT} & \gtwf{\cT'} }
    {\staticgJ{\iterm{\forall X. \cT_1} \; [\cT']} \inTermT{ \cT[\cT'/X]}}
    \and
    \inference[($E$pairi)]{\staticgJ{\iterm{}} \inTermT{\pairtype{\cT_1}{\cT_2}}}
    {\staticgJ{\proj{i}[t]} \inTermT{\cT_i}}
  \end{mathpar}
\end{small} 
\vspace{1em}
\begin{small}  
\begin{flushleft}
\framebox{$\conf{t} \nred \conf{t} \text{ or } \error$}
~\textbf{Notion of reduction}
\end{flushleft}
\begin{displaymath}
\begin{array}{p{0.6cm}rcl}
\text{($R$asc})& \conf{\cast{\ev[2]}(\cast{\ev[1]}{u} :: \cT_1) :: \cT_2} &\nred&
  \begin{cases}
  \conf{\cast{(\ev[1] \trans{=} \ev[2])}{u} :: \cT_2 }\\
  \error \qquad \text{if not defined}
  \end{cases}
  \\ 
\text{($R$op)} &   \conf {\op{\vectorOp{{\ev  u :: G}}}} &\nred& 
  \conf{ \ev[\basetype]\; \redop{\vectorOp{u}} :: \basetype}
  \quad\text{where } \basetype \triangleq \cod(\ftype(\aop))
  \\
\text{($R$app)}&   \conf{(\cast{\ev[1]}{
              (\lambda x:\cT_{11}.t) :: \cT_{1} -> \cT_{2}})\;
          (\cast{\ev[2]}{u} :: \cT_1)}
  &\nred&
          \begin{block}
          \begin{cases}
          \conf{\cast{\invcod(\ev[1])}{
              (  t[
              \cast{(\ev[2] \trans{=} \invdom(\ev[1]))}
               {u} :: 
              \cT_{11})/x])} :: \cT_{2}} 
              \\
          \error \qquad \text{if not defined}
          \end{cases}
          \end{block}
  \\
  \text{($R$pair)}&   \conf{{\pair{\cast{\ev[1]}{u_1} :: \cT_1}{\cast{\ev[2]}{u_2} :: \cT_2}}} 
  &\nred& 
  \conf{\cast{(\pairtype{\ev[1]}{\ev[2]})}{\pair{u_1}{u_2}} :: \pairtype{\cT_1}{\cT_2}}
  \\
\text{($R$proj$i$)}&   \conf{\tproj{i}[\cast{\ev}{\pair{u_1}{u_2}} :: \pairtype{\cT_1}{\cT_2}]} 
  &\nred& 
  \conf{\cast{\evproj{i}[\ev]}{u_i} :: \cT_i}\\
\text{($R$app$\cT$)}&   \conf{(\cast{\ev}{
              \Lambda X. t :: {\forall X. \cT}})\;
          [\cT']}
  &\nred&
          \conf[\store']{
            \cast{\evout}{(\cast{\evinst{\ev}{\evlift{\alpha}}}
            \substTermPaper{X}{\evlift{\alpha}}{t} :: \cT[\alpha/X])}} :: \cT[\cT'/X] 
            \\
            & & &
            \text{where } \store' \triangleq \store, \alpha := {\cT'} \text{ for some } \alpha \notin \dom(\store) \\
            & & & \text{and } \evlift{\alpha} = \evliftname_{\store'}(\alpha)\\
\end{array} 
\end{displaymath}
\end{small} 
\begin{small}  
\begin{flushleft}
\framebox{$\conf{t} \red \conf{t} \text{ or } \error$}
~\textbf{Evaluation frames and reduction}
\end{flushleft}
\begin{equation*}
      \begin{array}{rcl}
      f & ::= & {{\cast{\ev}{ \lcorchete] }::\cT}} | \op{\vectorOp{v}, [] , \vectorOp{t}} | [] \ t | v \ [] | [] \ [\cT] | \pair{[]}{t}| \pair{v}{[]}|\tproj{i}[{[]}]
      \end{array}
  \end{equation*} 
\begin{mathpar}
\inference[($R-->$)]{\conf{t} \nred \conf[\store']{t'}}{ \conf{t} \red \conf[\store']{t'}}
\and
\inference[($Rf$)] {\conf{ t } \red \conf[\store']{ t'}}{ \conf {f[t]} \red \conf[\store']{ f[t']}}\\
\inference[($R$err)]{\conf{t} \nred \error}{ \conf{t} \red \error}\and 
\inference[($Rf$err)]{\conf{t} \red \error}{ \conf{f[t]} \red \error}
\end{mathpar}
\end{small}
 \caption{\glangev: Syntax, Static and Dynamic Semantics}
  \label{fig:gsfe}
\end{figure}

In order to denote reduction of (evidence-augmented) gradual typing derivations, \citet{garciaAl:popl2016} use {\em intrinsic} terms as a notational device; while appropriate, the resulting description is fairly hard to comprehend and unusual, and it does implicitly involve a (presentational) transformation from source terms to their intrinsic representation. 

In this work, we simplify the exposition by avoiding the use of intrinsic terms; instead, we rely on a type-directed, straightforward translation that inserts explicit ascriptions everywhere consistency is used---very much in the spirit of the coercion-based semantics of subtyping~\citep{pierce:tapl}.
For instance, the small program of \S\ref{sec:agt-ev} above, $(\lambda x:\?. x+1)\; \false$,  is translated to:
$$(\ev[\? -> \Int](\lambda x:\?. (\ev[1]x :: \Int) + (\ev[\Int]1 :: \Int)) :: \? -> \Int)\; (\ev[2](\ev[\Bool]\false :: \Bool) :: \?)$$
where $\ev[\cT]$ is the evidence of the reflexive judgment $\cT \sim \cT$  (\eg$\ev[\Int]$ supports $\Int \sim \Int$). Evidences $\ev[1]$ and $\ev[2]$ are the ones from \S\ref{sec:agt-ev}. Such initial evidences are computed by means of an {\em interior} function, given by the abstract interpretation framework~\citep{garciaAl:popl2016}: in this setting, the interior\iffullv{ (\S\ref{asec:Interior})} coincides with the precision meet (\S\ref{sec:syntax-syntactic-meaning-gradual-types}). The definition
of the type-preserving translation is direct\iffullv{ (\S\ref{asec:translation})}.

Despite this translation, we do preserve the essence of the AGT dynamics approach in which evidence and consistent transitivity drive the runtime monitoring aspect of gradual typing. Furthermore, by making the translation explicitly ascribe all base values to their base type, we can present a uniform syntax and greatly simplify reduction rules compared to the original AGT exposition. This presentation also streamlines the proofs by reducing the number of cases to consider.

Figure~\ref{fig:gsfe} presents the syntax and semantics of \gsfe, a simple variant of \gsf in which all values are ascribed, and ascriptions carry evidence. Key changes with respect to Figure~\ref{fig:gsf-statics} are highlighted in gray. Here, we treat evidence as a pair of elements of an {\em abstract} datatype; we define its actual representation (and operations) in the next section. 

Because the translation from \gsf to \gsfe introduces explicit ascriptions everywhere consistency is used, the only remaining use of consistency in the typing rules of \gsfe is in the rule (Easc). The evidence of the term itself supports the consistency judgment in the premise. All other rules require types to match exactly; the translation inserts ascriptions to ensure that top-level constructors match in every elimination form. 

The notion of reduction for \gsfe terms deals with evidence propagation and composition with consistent transitivity. 
Rule ($R$asc) specifies how an ascription around an ascribed value reduces to a single value if consistent transitivity holds: $\ev[1]$ justifies that $\cT_u \sim \cT_1$, where $\cT_u$ is the type of the underlying simple value $u$, and $\ev[2]$ is evidence that $\cT_1 \sim \cT_2$. The composition via consistent transitivity, if defined, justifies that $\cT_u \sim \cT_2$; if undefined, reduction steps to $\error$. 
Rule ($R$op) simply strips the underlying simple values, applies the primitive operation, and then wraps the result in an ascription, using a canonical base evidence $\ev[\basetype]$ (which trivially justifies that $\basetype \sim \basetype$).
Rule ($R$app) combines the evidence from the argument value $\ev[2]$ with the domain evidence of the function value $\invdoma{\ev[1]}$ in an attempt to transitively justify that $\cT_u \sim \cT_{11}$. Failure to justify that judgment, as in our example in \S\ref{sec:agt-ev}, produces $\error$.
The return value is ascribed to the expected return type, using the codomain evidence of the function $\invcoda{\ev[1]}$. 
Rule ($R$pair) produces a pair value when the subterms of a pair have been reduced to values themselves, using the product operator on evidences $\evc{\pairtype{\ev[1]}{\ev[2]}}$. 
This rule is necessary to enforce a uniform presentation of all values as ascribed values, which simplifies technicalities.
Dually, Rule ($R$proj$i$) extracts a component of a pair and ascribes it to the projected type, using the corresponding evidence obtained with $\evproj{i}[\ev]$ (not to be confused with $\proj{i}[\ev]$, which refers to the first projection of evidence, itself a metalanguage pair).

Apart from the presentational details, the above rules are standard for an evidence-based reduction semantics. Rule ($R$app$\cT$) is {\em the} rule that specifically deals with parametric polymorphism, reducing a type application. This is where most of the complexity of gradual parametricity concentrates.
Observe that there are two ascriptions in the produced term: 
\begin{itemize}
\item The {\em inner} ascription (to $\cT[\alpha/X]$) is for the body of the polymorphic term, asserting that substituting a fresh type name $\alpha$ for the type variable $X$ preserves typing. The associated evidence $\evinst{\ev}{\evlift{\alpha}}$ is the result of instantiating $\ev$ (which justifies that the actual type of $\Lambda X.t$ is consistent with $\forall X. \cT$) with the fresh type name, hence justifying that the body after substitution is consistent with $\cT[\alpha/X]$. 
\item The {\em outer} ascription asserts that $\cT[\alpha/X]$ is consistent with $\cT[\cT'/X]$, witnessed by evidence $\evout$. 
We define $\evout$ 
in \S\ref{sec:ref-evidence} below, once the representation of evidence is introduced. 
\end{itemize}

The use of $\evlift{\alpha}$ is a technicality: because so far we treat evidence as an abstract datatype from an as-yet-unspecified domain, say pairs of $\EElemType$, we cannot directly use gradual types ($\GType$) inside evidences. The connection between $\GType$ and $\EElemType$ is specified by lifting operations,
$\evliftname_{\store} : \GType  -> \EElemType$ and $\evunliftname : \EElemType -> \GType$. (In standard AGT~\citep{garciaAl:popl2016} the lifting is simply the identity, \ie~$\EElemType = \GType$.)
Because type names have meaning related to a store, the lifting is parameterized by the store $\store$. Term substitution is mostly standard: it uses $\evunliftname$ to recover $\alpha$, and is extended to substitute recursively in evidences.
Substitution in evidence, also triggered by evidence instantiation, is simply component-wise substitution on evidence types.

Finally, the evaluation frames and associated reduction rules in Figure~\ref{fig:gsfe} are straightforward; in particular ($R$err) and ($\mathit{Rf}$err) propagate
 $\error$ to the top-level. 

\section{Evidence for Gradual Parametricity}
\label{sec:evidence}

We now turn to the actual representation of evidence. We first explain in \S\ref{sec:evfail} why the standard representation of evidence as pair of gradual types is insufficient for gradual parametricity. We then introduce the  refined representation of evidence to enforce parametricity (\S\ref{sec:ref-evidence}), and basic properties of the language. 
Richer properties of \gsf are discussed in \S\ref{sec:gsf-param}, \S\ref{sec:gsf-dgg} and \S\ref{sec:free-theorems}.

\subsection{Simple Evidence, and Why It Fails}
\label{sec:evfail}

In standard AGT~\citep{garciaAl:popl2016}, evidence is simply represented as a pair of gradual types, \ie~$\EElemType = \GType$. Consistent transitivity is defined through the abstract interpretation framework. The definition for simple types is as follows ($\ev \Vdash J$ means $\ev$ supports the consistent judgment $J$):
\begin{definition}[Consistent transitivity]\label{def:ct} 
Suppose $\isjudgment{\ev[ab]}{\cT_a}{\cT_b}$ and $\isjudgment{\ev[bc]}{\cT_b}{\cT_c}$. Evidence for consistent transitivity is deduced as $\isjudgment{(\ev[ab] \trans{=} \ev[bc] )}{\cT_a}{\cT_b}$, where:\\[0.5em]
$\braket{{\cT_1, \cT_{21}}} \trans{=} \braket{{\cT_{22}, \cT_{3}}} =
\as^2(\{\pr{T_1, T_3} \in \conc{\cT_1}\times\conc{\cT_3}| \exists T_2 \in \conc{\cT_{21}} \cap \conc{\cT_{22}},
T_1 = T_2 \land T_2 = T_3 \})
$
\end{definition}
In words, if defined, the evidence that supports the transitive judgment is obtained by abstracting over the pairs of static types denoted by the outer evidence types ($\cT_1$ and $\cT_3$) {\em such that} they are connected through a static type common to both middle evidence types ($\cT_{21}$ and $\cT_{22}$). This definition can be proven to be equivalent to an inductive definition that proceeds in a syntax-directed manner on the structure of types~\citep{garciaAl:popl2016}.

Consistent transitivity satisfies some important properties. First, it is associative. Second, the resulting evidence is more precise than the outer evidence types, reflecting that during evaluation, typing justification only gets more precise (or fails). Violating this property breaks type safety. The third property is key for establishing the dynamic gradual guarantee~\citep{garciaAl:popl2016}. 
\begin{restatable}{lemma}{ctopt}(Properties of consistent transitivity).
\label{lm:ctopt}\label{lm:ctassoc} \label{lm:ctmon}  \mbox{}\\
(a) Associativity. $(\ev[1]\trans{=} \ev[2]) \trans{=} \ev[3] =
\ev[1]\trans{=} (\ev[2] \trans{=} \ev[3])$, or both are undefined.\\
(b) Optimality. If $\ev = \ev[1]\trans{=} \ev[2]$ is defined, then $\proj{1}[\ev] \sqsubseteq \proj{1}[\ev[1]]$ and
$\proj{2}[\ev] \sqsubseteq \proj{2}[\ev[2]]$.\\
(c) Monotonicity. If $\ev[1] \gprec \ev[1]'$ and $\ev[2] \gprec \ev[2]'$ and $\ev[1] \trans{} \ev[2]$ is defined, then 
$ \ev[1]\trans{=}\ev[2]  \gprec \ev[1]'\trans{=}\ev[2]'$.
\end{restatable}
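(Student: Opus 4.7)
The plan is to work primarily with the set-theoretic definition of consistent transitivity (Definition~\ref{def:ct}) and exploit the Galois connection $\langle \cs,\as\rangle$ established in Proposition~\ref{Galoisconnection}. I would treat the three claims in order of increasing difficulty: optimality, then monotonicity, then associativity.

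For \textbf{(b) Optimality}, I would observe that by construction the set $S_{12}$ defining $\ev[1]\trans{=}\ev[2]$ is a subset of $\conc{\proj{1}[\ev[1]]}\times\conc{\proj{2}[\ev[2]]}$. Since $\as$ is monotone (being an adjoint to $\cs$) and, by Proposition~\ref{Galoisconnection}(b), $\abst{\conc{G}}\gprec G$, and since precision coincides with inclusion of concretizations (Definition~\ref{def:precision}), the resulting pair satisfies $\proj{1}[\ev]\gprec\proj{1}[\ev[1]]$ and $\proj{2}[\ev]\gprec\proj{2}[\ev[2]]$ directly.

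For \textbf{(c) Monotonicity}, precision-as-inclusion (Definition~\ref{def:precision}) gives that if $\ev[1]\gprec\ev[1]'$ then $\conc{\proj{j}[\ev[1]]}\subseteq\conc{\proj{j}[\ev[1]']}$ componentwise, and similarly for $\ev[2]\gprec\ev[2]'$. The set defining $\ev[1]\trans{=}\ev[2]$ is therefore included in the set defining $\ev[1]'\trans{=}\ev[2]'$, and monotonicity of $\as^2$ on its argument yields the desired precision relation. Definedness propagates upward because a non-empty tighter set witnesses the non-emptiness of the looser one.

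The principal obstacle is \textbf{(a) Associativity}. Here I would unfold both $(\ev[1]\trans{=}\ev[2])\trans{=}\ev[3]$ and $\ev[1]\trans{=}(\ev[2]\trans{=}\ev[3])$ and aim to show that each equals the $\as^2$-abstraction of a common ``three-way'' relation consisting of pairs $(T_1,T_4)\in\conc{\proj{1}[\ev[1]]}\times\conc{\proj{2}[\ev[3]]}$ for which there exist $T_2\in\conc{\proj{2}[\ev[1]]}\cap\conc{\proj{1}[\ev[2]]}$ and $T_3\in\conc{\proj{2}[\ev[2]]}\cap\conc{\proj{1}[\ev[3]]}$ with $T_1=T_2=T_3=T_4$. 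The subtle point is that after computing $\ev[1]\trans{=}\ev[2]=\as^2(S_{12})$, the second composition proceeds over $\conc{\proj{2}[\as^2(S_{12})]}$, which by Proposition~\ref{Galoisconnection}(a) only contains, but is not equal to, the projection of $S_{12}$. To rule out spurious or missing middle witnesses introduced by the round-trip $\cs\circ\as^2$, I expect to appeal to the inductive, syntax-directed characterization of $\trans{=}$ promised in the text (analogous to the treatment in \citet{garciaAl:popl2016}): on the structure of gradual types, consistent transitivity reduces to a partial meet-like operation, for which associativity follows by straightforward structural induction and joint (un)definedness is immediate. The bridging lemma between the denotational definition and this syntactic form is the step where I would expect to spend most of the effort, since it must simultaneously cover the base, function, product, universal, variable, type-name, and $\?$ cases that appear throughout the gradual type grammar.
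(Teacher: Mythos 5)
Your proposal takes essentially the same route as the paper: optimality and monotonicity are discharged denotationally from the Galois connection of Proposition~\ref{Galoisconnection} (monotonicity of $\as$, inclusion of the defining sets of pairs, and preservation of non-emptiness for definedness), exactly as you describe. For associativity the paper likewise does not argue purely set-theoretically but leans on the equivalence of Definition~\ref{def:ct} with the syntax-directed characterization imported from \citet{garciaAl:popl2016}, establishing the property by structural induction on the evidence types---so the round-trip through $\as^2$ and $\cs$ that you single out is indeed the crux, and your plan for handling it is the one the paper follows.
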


Unfortunately, adopting gradual types for evidence types and simply extending the consistent transitivity definition to deal with \gsf types and consistency judgments yields a gradual language that breaks parametricity.\footnote{The obtained language is type safe, and satisfies the dynamic gradual guarantee. This novel design could make sense to gradualize impure polymorphic languages, which do not enforce parametricity. }
To illustrate, consider this simple program:
\begin{lstlisting}
((*$\Lambda$*)X.((*$\lambda$*)x:X.let y:? = x in let z:? = y in z + 1)) [Int] 1
\end{lstlisting}
The function is not parametric because it ends up adding $1$ to its argument, although it does so after two intermediate bindings, typed as $\?$. Without further precaution, the parametricity violation of this program would not be detected at runtime. Assume that the type application generates the fresh name $\alpha$, bound to $\Int$ in the store. 
For justifying that \lstinline{x} can flow to \lstinline{y} (the let-binding is equivalent to a function application), we need evidence for ${\Int}\sim{\?}$ by consistent transitivity between the evidences $\pr{\Int, \alpha}$, which justifies $\Int \sim \alpha$, and $\pr{\alpha, \alpha}$, which justifies $\alpha \sim \?$. Observe that conversely to the simply-typed setting, both components of evidence are not necessarily equal. The evidence $\pr{\alpha, \alpha}$ is obtained by substituting $\alpha$ for $X$ in the initial evidence $\pr{X,X}$ for $X \sim \?$.
 Using the definition of consistent transitivity (Def.~\ref{def:ct}), $\pr{\Int, \alpha} \trans{=} \pr{\alpha, \alpha} = \pr{\Int, \alpha}$.
Similarly, for justifying the flow of \lstinline{y} to \lstinline{z}, the previous evidence must be combined with $\pr{\?, \?}$, which  justifies $\?\sim\?$. By Def.~\ref{def:ct}, $\pr{\Int, \alpha} \trans{=} \pr{\?, \?} = \as^2(\set{\pr{\Int, \Int}, \pr{\Int, \alpha}}) = \pr{\Int, \?}$. This evidence can subsequently be used to produce evidence to justify that the addition is well-typed, since $\pr{\Int, \?} \trans{=} \pr{\Int, \Int} = \pr{\Int, \Int}$. Therefore the program produces $2$, without errors: parametricity is violated. 

\subsection{Refining Evidence}
\label{sec:ref-evidence}
For gradual parametricity, evidence must do more than just ensure type safety. It needs to safeguard the sealing that type variables are meant to represent, also taking care of unsealing as necessary. First of all, we need to define evidence to adequately represent consistency judgments of \gsf.

\myparagraph{Evidence Types}
We define {\em evidence types}, $\cE \in \EType$, to be an enriched version of gradual types:
     \begin{displaymath}
        \begin{array}{rcll}
          \cE & ::= & \basetype | \cE -> \cE| \forall X. \cE| \pairtype{\cE}{\cE}| \Gbox{\richtype{\cE}}| X | \?  \\
        \end{array} 
      \end{displaymath}  
\SPFL equality judgments, and hence \gsf consistency judgments, are relative to a store. It is therefore not enough to use type names in evidence: we need to keep track of their associated types in the store. An evidence type name $\richtype{\cE}$ therefore captures the type associated to the type name $\alpha$ through the store. For instance, evidence that a variable has a polymorphic type $X$ is initially $\pr{X,X}$. When $X$ is instantiated, say to $\Int$, and a fresh type name $\alpha$ is introduced, 
the evidence becomes $\pr{\richtype{\Int},\richtype{\Int}}$.
An evidence type name does not only record the end type to which it is bound, but the whole path. For instance, $\richtype{\beta^\Int}$ is a valid evidence type name that embeds the fact that $\alpha$ is bound to $\beta$, which is itself bound to $\Int$.

Note that as a program reduces, evidence can not only become more precise than statically-used types, but also than the global store. For instance, it can be the case that $\alpha := \?$ in the global store $\store$, but that locally, the evidence for $\alpha$ has gotten more precise, such as $\richtype{\Int}$. 
Formally, a type name is enriched with its transitive bindings in the store, $\enrich{\alpha}{\store} = \richtype{\enrich{\store[](\alpha)}{\store}}$. 
Unlifting simply forgets the additional information: $\evunlift(\richtype{\cE}) = \alpha$.
In all other cases, both operations recur structurally.

It is crucial to understand the intuition behind the {\em position} of type names in a given evidence. The position of $\richtype{\cE}$ in an evidence can correspond to a {\em sealing}, an {\em unsealing}, or neither. If $\richtype{\cE}$ is {\em only} on the right side, \eg~$\pr{\Int, \richtype{\Int}}$, then the evidence is a sealing (here, of $\Int$ with $\alpha$). Dually, if $\richtype{\cE}$ is {\em only} on the left side, \eg~$\pr{\richtype{\Int}, \Int}$, the evidence is an unsealing (here, of $\Int$ from $\alpha$). Sealing and unsealing evidences arise through reduction, as will be illustrated later in this section.

\begin{figure}[t]
  \begin{small}
\begin{flushleft}
{\footnotesize  (unsl)\hspace{16em}(idL)\hspace{13em}(sealL)}
\end{flushleft}
\begin{mathpar}
\inference[]{\pr{\cE_1, {\cE_2}} \transc \pr{{\cE_3}, \cE_4} = \pr{\cE'_1, \cE'_2}}
      {\pr{\cE_1, \richtype{\cE_2}} \transc \pr{\richtype{\cE_3}, \cE_4} = \pr{\cE'_1, \cE'_2}}
      \and
\inference[]{}{\pr{\cE, \cE} \transc \pr{\?, \?} = \pr{\cE, \cE}}
      \and 
\inference[]{
        \pr{\cE_1, {\cE_2}} \transc \pr{{\cE_3}, \cE_4} = \pr{\cE'_1, \cE'_2}}
      {\pr{\cE_1, \cE_2} \transc \pr{\cE_3, \richtype{\cE_4}} = \pr{\cE'_1, \richtype{\cE'_2}}}
      \and
\inference[(func)]{
        \pr{\cE_{41}, \cE_{31}}  \trans{} \pr{\cE_{21}, \cE_{11}}  = \pr{\cE_{3}, \cE_{1}} &
        \pr{\cE_{12}, \cE_{22}} \trans{} \pr{\cE_{32}, \cE_{42}} = \pr{\cE_{2}, \cE_{4}}
      }
      {\pr{\cE_{11} \barr \cE_{12}, \cE_{21} \barr \cE_{22}} \trans{=} \pr{\cE_{31} \barr \cE_{32}, \cE_{41} \barr \cE_{42}} = 
      \pr{\cE_{1} \barr \cE_{2},\cE_{3} \barr \cE_{4}}}
  \and
\inference[(func$\?$L)]{
        \pr{\cE_{1}\barr \cE_{2}, \cE_{3} \barr \cE_{4}} \trans{=} \pr{\unkL \barr \unkL, \unkR \barr \unkR} = 
        \pr{\cE'_1 \barr \cE'_2, \cE'_3 \barr \cE'_4}
      }
      {\pr{\cE_{1} \barr \cE_{2}, \cE_{3} \barr \cE_{4}} \trans{=} \pr{\unkL, \unkR} = 
      \pr{\cE'_1\barr \cE'_2, \cE'_3 \barr \cE'_4}
      }
    \end{mathpar}
  \end{small}
 \caption{Consistent Transitivity (selected rules)}
  \label{fig:gsf-trans}
\end{figure}

\myparagraph{Consistent Transitivity}
With this syntactic enrichment, consistent transitivity can be strengthened to account for sealing and unsealing, ensuring parametricity. Consistent transitivity is defined inductively; selected rules are presented in Figure~\ref{fig:gsf-trans}. 

Rule (unsl) specifies that when a sealing and an unsealing of the same type name meet in the middle positions of a consistent transitivity step, the type name can be eliminated in order to calculate the resulting evidence. For instance, $\pr{\Int, \richtype{\Int}} \trans{=} \pr{\richtype{\?}, \?} = \pr{\Int, \Int} \trans{=} \pr{\?, \?} = \pr{\Int,\Int}$.

As shown in \S\ref{sec:evfail}, it is important for consistent transitivity to not lose precision when combining an evidence with an unknown evidence. To this end, rule (identL)
in Fig.~\ref{fig:gsf-trans} preserves the left evidence. 
Going back to the example of \S\ref{sec:evfail}, we now have $\pr{\Int, \richtype{\Int}} \trans{=} \pr{\unkG{\alpha},\unkG{\alpha}} = \pr{\Int, \richtype{\Int}}$, instead of $\pr{\Int, \?}$. Because $\pr{\Int, \richtype{\Int}} \trans{=} \pr{\Int, \Int}$ is undefined, reduction steps to $\error$ as desired.

Rule (sealL) shows that when an evidence is combined with a sealing, the resulting evidence is also a sealing. This sealing can be more precise, \eg~$\pr{\Int, \Int} \trans{=} \pr{\?, \richtype{\?}} = \pr{\Int, \richtype{\Int}}$.

Figure~\ref{fig:gsf-trans} only shows one structurally-recursive rule, corresponding to the function case (func); consistent transitivity is computed recursively with the domain and codomain evidences. To combine a function evidence with unknown evidence, the unknown evidence is first ``expanded'' to match the type constructor (func?L). There are similar rules for the other type constructors. Also, there are symmetric variants of the above rules---such as (identR) and (sealR)---in which every evidence and every evidence type is swapped. \iffullv{The complete definition is provided in \S\ref{asec:ConsistentTransitivity}.} 
 
Importantly, this refined definition of consistent transitivity preserves associativity and optimality, based on a natural notion of precision for evidence types\iffullv{ (\S\ref{asec:ev-precision})}. It does however break monotonicity, and consequently, the dynamic gradual guarantee (we come back to this in \S\ref{sec:gsf-dgg}).

\myparagraph{Outer Evidence} 
The reduction rule of a type application ($R$app$\cT$) produces an outer evidence $\evout$ that justifies that $\cT[\alpha/X]$ is consistent with $\cT[\cT'/X]$. The precise definition of this evidence is delicate, addressing a subtle tension between the precision required for justifying unsealing when possible, and the imprecision required for parametricity. 
$$
\evout \triangleq \pr{\cEU [\richtype{\cE}], \cEU [\cE']}
\qquad \text{where } \cEU \!=\! \enrich{\unlift{\proj{2}[\ev]}}{\store}, 
\richtype{\cE} \!=\! \enrich{\alpha}{\store'}, \cE'\!=\!\enrich{G'}{\store}
$$
In this definition, $\ev$, $\alpha$, $\cT'$, $\store$, and $\store'$ come from rule ($R$app$\cT$). 
Determining $\cEU$ is the key challenge.
The second evidence type of $\ev$ refines $\forall X.\cT$ by exploiting the fact that the underlying polymorphic value $\Lambda X.t$ is consistent with it;
this extra precision is crucial for unsealing. The roundtrip unlift/lift ``resets'' the sealing information of evidence type names to that contained in the store; this relaxation is crucial for parametricity (to prove the compositionality lemma---\S\ref{sec:gsf-param}).

Note that $\evout$ will never cause a runtime error when combined with the resulting evidence of the parametric term result, because both are necessarily related by precision.

\myparagraph{Illustration} The following reduction trace illustrates all the important aspects of reduction: 

\begin{small}
\begin{mathpar}
\begin{array}{p{2.1em}p{0.8em}lr}
&&(\ev[\forall X.X\barr X](\Lambda X.\lambda x:X.x)::\forall X.X\barr\?) \;[\Int] \; (\ev[\Int]1::\Int) & \text{\footnotesize initial evidence}\\
\footnotesize($R$app$\cT$)&$\red$&
  (\pr{\richtype{\Int}\barr\richtype{\Int},\Int\barr\Int}(\ev[\alpha\barr\alpha]
  (\lambda x:\alpha.x)::\alpha\barr\?)::\Int\barr\?) \; (\ev[\Int]1::\Int) 
  & \text{\footnotesize note the precision of $\evout$}\\
\footnotesize($R$asc)&$\red$&
  (\pr{\richtype{\Int}\barr\richtype{\Int},\Int\barr\Int}
  (\lambda x:\alpha.x)::\Int\barr\?) \; (\ev[\Int]1::\Int)
  & \text{\footnotesize consistent transitivity}\\
\footnotesize($R$app)&$\red$&
  \pr{\richtype{\Int},\Int}(\pr{\Int,\richtype{\Int}}1::\alpha)::\?
  & \text{\footnotesize argument is sealed}\\
\footnotesize($R$asc)&$\red$& 
  \pr{\Int,\Int}1::\?
  & \text{\footnotesize unsealing eliminates $\alpha$}
\end{array} 
\end{mathpar}
\end{small}

Crucially, the initial evidence of the identity function is fully precise, even though it is ascribed an imprecise type. Consequently, in the first reduction step above, $\evout$ is calculated as:

\begin{small}
 \begin{displaymath}
  \evout \triangleq \pr{\cEU [\richtype{\cE}], \cEU [\cE']} = \pr{(\forall X.X\barr X) [\richtype{\Int}] , (\forall X.X\barr X) [\Int] } = \pr{\richtype{\Int} \barr \richtype{\Int}, \Int \barr \Int}
 \end{displaymath}
 \end{small}

The application step ($R$app) then gives rise to sealing and unsealing evidences after deconstructing $\evout$: the inner evidence $\pr{\Int,\richtype{\Int}}$ seals the number $1$ at type $\alpha$, while the outer evidence $\pr{\richtype{\Int},\Int}$ allows the subsequent unsealing in the ascription step ($R$asc). 
As a result, the ascribed identity function yields usable values, because the outer evidence subsequently takes care of unsealing. This addresses the excess of failure reported with \lamB and \sysFc in \S\ref{sec:dynamic-issues}. Note that if the function were not intrinsically precise on its return type, \eg~$\Lambda X.\lambda x:X.(x::\?)$, then initial evidence would likewise be imprecise, and deconstructing $\evout$ would {\em not} justify unsealing the result anymore. 

\subsection{Basic Properties of \gsf Evaluation}
\label{sec:basic-properties-of-gsf-evaluation}
The runtime semantics of a \gsf term are given by first translating the term to \gsfe (noted $|- t  \translate \te : \cT$) and then reducing the \gsfe term. We write $\gsfreds{t}{\valuestore{v}[\store]}$ (resp. $\gsfreds{t}{\error}$) if $|- t  \translate \te : \cT$ and $\conf[\emptyenv] \te \red^{*} \conf[\store]{v}$ (resp. $\conf[\emptyenv] \te \red^{*} \conf[\store]{\error
}$) for some resulting store $\store$. We write $\valuestore{v}[\store]:\cT$ for $\EnvSG[\store][\emptyenv][\emptyenv] v \inTermT{\cT}$.
We write $t \divergesy$ if the translation of $t$ diverges, and $\gsfreds{t}{v}$ when the store is irrelevant.

The properties of \gsf follow from the same properties of \gsfe, expressed using the small-step reduction relation, due to the fact that the translation $\translate$ preserves typing\iffullv{ (\S\ref{asec:translation})}. In particular, \gsf terms do not get stuck, although they might produce $\error$ or diverge:

\begin{restatable}[Type Safety]{proposition}{typesafety}
\label{prop:typesafety}
If $|- t : \cT$ then either $\gsfreds{t}{\valuestore{v}[\store]}$ with $
\valuestore{v}[\store]:\cT$, $\gsfreds{t}{\error}$, or $t \divergesy$.
\end{restatable}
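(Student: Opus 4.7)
The plan is to establish Type Safety for \gsf by transferring it from \gsfe through the type-preserving translation, and for \gsfe by the standard combination of progress and preservation for its small-step reduction. Since the paper states that the translation $|- t \translate \te : \cT$ preserves typing, it suffices to prove: if $\EnvSG[\emptyenv][\emptyenv][\emptyenv] \te : \cT$, then every maximal reduction sequence from $\conf[\emptyenv]{\te}$ either diverges, reaches a value $\conf[\store]{v}$ with $\valuestore{v}[\store]:\cT$, or reaches $\error$.

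First I would state and prove a canonical forms lemma for \gsfe: every closed, well-typed value has shape $\cast{\ev}{u} :: \cT$, and the underlying raw value $u$ has a shape determined by $\cT$ (a constant when $\cT = \basetype$, a $\lambda$-abstraction when $\cT = \cT_1 -> \cT_2$, a type abstraction when $\cT = \forall X.\cT'$, a pair when $\cT = \pairtype{\cT_1}{\cT_2}$). This follows by induction on the typing derivation of $v$, using that the only value-forming rules are ($E\const$), ($E\lambda$), ($E\Lambda$), ($E$pair), and ($E$asc), together with the fact that evidence $\ev \Vdash \cT_u \sim \cT$ forces the top-level constructors of $\cT_u$ and $\cT$ to match (or one of them to be $\?$, but raw values always carry a fully-constructor-headed type).

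Next I would prove \emph{Progress}: if $\EnvSG[\emptyenv][\emptyenv][\emptyenv] \te : \cT$, then either $\te$ is a value, or $\conf[\store]{\te} \nred \conf[\store']{\te'}$, or $\conf[\store]{\te} \nred \error$, for any well-formed $\store$ containing the free type names of $\te$. The proof is by induction on the typing derivation. Non-value cases decompose $\te$ via evaluation frames: if a subterm is not a value, use the induction hypothesis and rule ($Rf$) or ($Rf$err); if all relevant subterms are values, appeal to canonical forms to show one of the notion-of-reduction rules ($R$asc, $R$op, $R$app, $R$pair, $R$proj$i$, $R$app$\cT$) applies. For ($R$asc), ($R$app), and ($R$op) the step may fall into the \textbf{error} branch when consistent transitivity is undefined; this is fine. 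For ($R$app$\cT$), freshness of $\alpha$ ensures progress, with $\evout$ defined unconditionally from the structure of $\ev$.

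Then I would prove \emph{Preservation}: if $\EnvSG[\emptyenv][\emptyenv][\emptyenv] \te : \cT$ and $\conf[\store]{\te} \red \conf[\store']{\te'}$, then $\EnvSG[\store'][\emptyenv][\emptyenv] \te' : \cT$. Again by induction on the reduction, with the interesting cases being the notion-of-reduction rules. The key auxiliary facts are: (i) optimality of consistent transitivity (Lemma~\ref{lm:ctopt}(b)), so that the combined evidence $\ev[1]\trans{=}\ev[2]$, when defined, still witnesses the appropriate consistency judgment; (ii) a substitution lemma for term variables preserving typing; (iii) a substitution lemma for type names, which is needed for ($R$app$\cT$): substituting $\evlift{\alpha}$ for $X$ in $t$ with $\alpha := \cT'$ added to the store preserves typing, and the evidence $\evinst{\ev}{\evlift{\alpha}}$ supports the consistency judgment $\cT[\alpha/X] \sim \cT[\alpha/X]$ relative to $\store'$. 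The subtlety is verifying that $\evout$ is well-formed and actually supports $\cT[\alpha/X] \sim \cT[\cT'/X]$ in $\store'$; this follows directly from the definition $\evout = \pr{\cEU[\richtype{\cE}], \cEU[\cE']}$ using that $\cEU$ refines $\forall X.\cT$ and $\enrich{\alpha}{\store'}$ denotes $\cT'$ through the store extension. Evidence propagation in ($R$pair) and ($R$proj$i$) uses the product and projection operators on evidence, which preserve the relevant judgments by construction.

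Finally, combining Progress and Preservation with standard coinductive reasoning on reduction sequences yields the three-way disjunction. The main obstacle I anticipate is the type-name substitution in ($R$app$\cT$): one must carefully thread the store extension $\store' = \store, \alpha := \cT'$ through every subterm, ensuring evidence types (with their embedded $\richtype{\cE}$ annotations) remain well-formed under $\store'$, and that the freshly generated $\evout$ composes properly with whatever evidence the reduced body eventually produces. Once those lifting/unlifting commutations with substitution are verified, the rest of the proof is routine gradual-typing bookkeeping.
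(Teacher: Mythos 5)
Your proposal is correct and follows essentially the same route as the paper: type safety is established for \gsfe{} by a standard progress-and-preservation argument over the evidence-based small-step semantics (with canonical forms, optimality of consistent transitivity, and substitution lemmas for term variables and type names as the key ingredients), and then transferred to \gsf{} via the type-preserving translation $\translate$. The subtleties you flag---the store extension and the well-formedness of $\evout$ in the type-application case---are indeed where the paper's development concentrates its effort, so nothing essential is missing.
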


Proposition~\ref{prop:static-eq} established that \gsf typing coincides with \SPFL typing on static terms. A similar result holds considering the dynamic semantics. In particular, static \gsf terms never produce $\error$:

\begin{restatable}[Static terms do not fail]{proposition}{dgequivs}
  \label{prop:dgequivs}
  \mbox{}
  Let $t$ be a static term. 
  If $|- t : T$ then $\neg(\gsfreds{t}{\error})$.
\end{restatable}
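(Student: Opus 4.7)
The plan is to show that a static source term, after translation to \gsfe, has fully precise evidence throughout the term, that this property is preserved by reduction, and that consistent transitivity on fully precise evidence never fails. Since the only reduction rules that can yield $\error$ are $(R\mathrm{asc})$ and $(R\mathrm{app})$, and both do so only via failure of consistent transitivity, the result follows.

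First, I would define a notion of \emph{fully precise} \gsfe term as one in which $\?$ does not occur anywhere (neither in type ascriptions, type abstractions/applications, nor in either component of any evidence pair). I would then argue that if $t$ is a static \gsf term, then the translation $|- t \translate \te : T$ produces a fully precise $\te$. This is because, for a static consistent judgment $T_1 \sim T_2$ (which by definition can only hold when $T_1$ and $T_2$ are equal modulo the store), the initial evidence computed via the interior is just the pair $\pr{T_1, T_1}$, which contains no $\?$.

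Second, I would show that full precision is preserved under $\red$. The delicate case is $(R\mathrm{app}\cT)$: a fresh name $\alpha$ is generated and bound to a static type $\cT' = T'$ in the store, so the enrichment $\evlift{\alpha} = \enrich{\alpha}{\store'}$ is a fully precise evidence type (a nested $\richtype{\cE}$ chain with a static base). The inner ascription's instantiated evidence $\evinst{\ev}{\evlift{\alpha}}$ and the outer evidence $\evout$ are built only from $\ev$, $\evlift{\alpha}$, and the enrichment of $T'$, all of which are static; hence both remain fully precise. The other reduction rules either pass evidence along or combine it by consistent transitivity, projections, $\invdom$/$\invcod$, or pairing—all of which visibly preserve fully-precise shape. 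Type names themselves do not count as $\?$: they are a distinct syntactic category representing sealed static types.

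Third, I would prove the key lemma: if $\ev[1]$ and $\ev[2]$ are fully precise evidences supporting consistent judgments $T_a \sim T_b$ and $T_b \sim T_c$ in store $\store$ (where $\store$ itself maps type names to static types), then $\ev[1] \trans{=} \ev[2]$ is defined. The argument proceeds by induction over the inductive definition of consistent transitivity. The rules $(\mathrm{identL})$, $(\mathrm{identR})$ and the $\?$-expansion rules such as $(\mathrm{func\?L})$ do not fire because no component is $\?$; the structural rules like $(\mathrm{func})$ recurse on strictly smaller evidences; and the unsealing/sealing rules $(\mathrm{unsl})$, $(\mathrm{sealL})$, $(\mathrm{sealR})$ reduce to combining the underlying evidence type components, which by the static equality of the middle types must match. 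Equivalently, this step can be set up as a simulation between fully precise $\gsfe$ reduction and $\SPFL$ reduction, using Proposition~\ref{prop:static-eq} and the $\SPFL$ type preservation/progress theorems: any purported $\gsfreds{t}{\error}$ would correspond to a stuck configuration in $\SPFL$, contradicting \SPFL type safety. The main obstacle is handling the nested evidence type names $\richtype{\cE}$ cleanly, since these refinements have no direct counterpart in \SPFL types; this is addressed by an erasure map from fully precise evidence types to \SPFL types that forgets refinements but preserves the head type name, and by showing that on fully precise evidences, consistent transitivity is defined exactly when the corresponding erased type equality is derivable in \SPFL—a condition guaranteed by \SPFL preservation along the simulated reduction. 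Combining these three ingredients with Proposition~\ref{prop:typesafety} yields $\neg(\gsfreds{t}{\error})$.
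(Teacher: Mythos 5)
Your plan is correct and follows essentially the same route as the paper, which justifies this proposition by observing that all evidences in a static program are static (hence never gain precision) and that initial type checking guarantees consistent transitivity never fails on such evidences. Your proposal is a fleshed-out version of that argument, with the same key invariant (full precision of evidences preserved under reduction) and the same key lemma (definedness of consistent transitivity on static evidences).
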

This result follows from the fact that all evidences in a static program are static, hence never gain precision; the initial type checking ensures that combination through transitivity never fails.

\section{\gsf and the Dynamic Gradual Guarantee} 
\label{sec:gsf-dgg}

The previous section clarified several aspects of the semantics of \gsf programs, by establishing type safety, and by showing that static terms do not fail. This section studies the dynamic gradual guarantee (DGG)~\citep{siekAl:snapl2015}, also known as graduality~\cite{newAhmed:icfp2018,newAl:popl2020}. 
In a big-step setting, this guarantee essentially says that if $|- t : \cT$ and $\gsfreds{t}{v}$, then for any $t'$ such that $t \gprec t'$, we have  $\gsfreds{t'}{v'}$ for some $v'$ such that $v \gprec v'$. Intuitively: losing precision is harmless, or, reducibility is monotonous with respect to precision.

Unfortunately, in order to enforce parametricity (\S\ref{sec:gsf-param}), and as already alluded to earlier (\S\ref{sec:ref-evidence}), \gsf does not satisfy the DGG. First, we exhibit a counterexample, and identify the non-monotonicity of consistent transitivity as the root cause for this behavior (\S\ref{sec:dgg-violation}). 
Then, in order to better understand the behavior of \gsf programs when losing precision, we study a weaker variant of the DGG---weaker in the sense that it is valid for a stricter notion of precision---first in \gsfe (\S\ref{sec:wdgg-gsfe}) and then in \gsf (\S\ref{sec:wdgg-gsf}). 
The idea of devising a stricter notion of precision for which a variant of the DGG can be satisfied was first explored by \citet{igarashiAl:icfp2017}, even though they leave the proof of such a result for \sysFg as a conjecture. 

\subsection{Violation of the Dynamic Gradual Guarantee in \gsf}
\label{sec:dgg-violation}

To show that \gsf does not satisfy the dynamic gradual guarantee (DGG), it is sufficient to exhibit two terms in \gsf, related by precision, whose behavior contradicts the DGG. Consider the polymorphic identity function $\idx \triangleq \Lambda X. \lambda x:X.x::X$, and an imprecise variant $\idu \triangleq \Lambda X. \lambda x:\?.x::X$. 
Then $\idx\;[\Int]\;1\Downarrow 1$, but $\idu\;[\Int]\;1\Downarrow \error$, despite the fact that $\idx\;[\Int]\;1 \gprec \idu\;[\Int]\;1$.

Conceptually, it is interesting to shed light on what causes such a violation. Recall that \citet{garciaAl:popl2016} prove the DGG for their language using (mostly) the monotonicity of consistent transitivity (Prop~\ref{lm:ctmon} (c)). In fact, while not sufficient, we can prove that monotonicity of consistent transitivity (CT) is a {\em necessary} condition for the DGG to hold:

\begin{restatable}[$\neg$ monotonicity of CT $=>$ $\neg$ DGG]{proposition}{not-monotone-implies-not-dgg}
\label{prop:not-monotone-implies-not-dgg}
Let $\ev[1] \gprec \evp[1]$, $\ev[2] \gprec \evp[2]$, 
$\ev[1] \Vdash \cT_{1} \sim \cT_{2}$,
$\ev[2] \Vdash \cT_{2} \sim \cT_{3}$,
$\evp[1] \Vdash \cT'_{1} \sim \cT'_{2}$,
$\evp[2] \Vdash \cT'_{2} \sim \cT'_{3}$,
where $\cT_i \gprec \cT'_i$.\\
If $\ev[1]\trans{=}\ev[2]  \not\gprec \evp[1] \trans{=}\evp[2]$,
then $\exists t \gprec t'$, such that $t \red v$, $t' \red v'$ such that $v \not\gprec v'$.
\end{restatable}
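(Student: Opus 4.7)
The plan is to contrapose: from the concrete witness of non-monotonicity of $\trans{=}$, directly build two \gsfe terms $t \gprec t'$ whose reductions exhibit the same precision violation at the value level. The central observation is that rule $(R\text{asc})$ is precisely the rule that fires a consistent-transitivity step: when an ascription value $\cast{\ev[1]}{u} :: \cT$ is further ascribed by $\ev[2]$, reduction produces a value whose carried evidence is $\ev[1] \trans{=} \ev[2]$. So if we nest the hypothesized evidences in ascriptions around a common raw value, reduction exposes $\ev[1] \trans{=} \ev[2]$ and $\evp[1] \trans{=} \evp[2]$ as value-level annotations whose precision comparison fails by assumption.

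Concretely, select a raw value $u$ whose intrinsic type is $\cT_1$, and let $\ev_0 \triangleq \ev[\cT_1]$ be the reflexive (and hence canonical) evidence on $\cT_1$. Then $\cast{\ev_0}{u} :: \cT_1$ is a well-typed value, and the same $\ev_0$ also supports $\cast{\ev_0}{u} :: \cT'_1$ because $\cT_1 \gprec \cT'_1$ (preservation of evidence under the raising of target types). Define
\[
  t \triangleq \cast{\ev[2]}(\cast{\ev[1]}(\cast{\ev_0}{u} :: \cT_1) :: \cT_2) :: \cT_3, \quad
  t' \triangleq \cast{\evp[2]}(\cast{\evp[1]}(\cast{\ev_0}{u} :: \cT'_1) :: \cT'_2) :: \cT'_3.
\]
By the pointwise hypotheses $\ev[i] \gprec \evp[i]$ and $\cT_i \gprec \cT'_i$ (and the shared $u$, $\ev_0$), term precision yields $t \gprec t'$. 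Reducing via two applications of $(R\text{asc})$ and using associativity of $\trans{=}$ (Lemma~\ref{lm:ctassoc}(a)), we obtain $t \red^{\ast} v \triangleq \cast{(\ev_0 \trans{=} \ev[1] \trans{=} \ev[2])}{u} :: \cT_3$. For $t'$ we split: if $\evp[1] \trans{=} \evp[2]$ is undefined then $t' \red \error$ while $t \red^{\ast} v$, which is already a DGG failure and can be repackaged into the stated form via an enclosing context distinguishing errors from values; otherwise $t' \red^{\ast} v' \triangleq \cast{(\ev_0 \trans{=} \evp[1] \trans{=} \evp[2])}{u} :: \cT'_3$.

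Finally, suppose for contradiction that $v \gprec v'$. Since both values share the same raw $u$ and the outer ascription types satisfy $\cT_3 \gprec \cT'_3$ by hypothesis, unfolding value precision forces the componentwise precision $\ev_0 \trans{=} \ev[1] \trans{=} \ev[2] \gprec \ev_0 \trans{=} \evp[1] \trans{=} \evp[2]$. Because $\ev_0$ is the reflexive evidence on the intrinsic type of $u$, its contribution to $\trans{=}$ is a left-identity (and in particular identical on both sides), so this inequality collapses to the core comparison $\ev[1] \trans{=} \ev[2] \gprec \evp[1] \trans{=} \evp[2]$, directly contradicting the hypothesis. Hence $v \not\gprec v'$, as required. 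The main obstacle is discharging the left-identity claim about the reflexive $\ev_0$, which requires verifying that composition with $\ev_0$ on the left preserves the precision comparison against its counterpart on the $t'$ side; this follows by case analysis on the refined transitivity rules of Figure~\ref{fig:gsf-trans}, with rules (idL) and its structural siblings doing the work. A secondary obstacle is producing a raw value of intrinsic type $\cT_1$ in full generality (trivial for base, function, product, and polymorphic top-constructors, but degenerate for bare type variables), which is handled by restricting the existential conclusion to type shapes where such a raw value exists or by embedding the construction under a type/term binder that supplies one by substitution.
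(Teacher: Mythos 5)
Your construction is essentially the paper's: nest the hypothesized evidences as ascriptions around a shared raw value, fire $(R\text{asc})$, and read the failed precision comparison off the evidences carried by the resulting values. The one difference is your extra innermost layer with reflexive evidence $\ev[\cT_1]$, which is what forces you to invoke associativity and then discharge the ``left-identity'' claim you flag as the main obstacle; the paper avoids that obligation entirely by taking the innermost term to be the value $\cast{\ev[1]}{u}::\cT_2$ directly (in \gsfe every value already carries its evidence), so a single $(R\text{asc})$ step yields $\cast{(\ev[1]\trans{=}\ev[2])}{u}::\cT_3$ and the conclusion is immediate. Dropping that layer makes your main obstacle disappear; your remaining caveats --- inhabiting $\cT_1$ with a raw value, and the case where $\evp[1]\trans{=}\evp[2]$ is undefined so $t'$ steps to $\error$ rather than a value --- are legitimate, but the paper's one-line proof glosses over them in exactly the same way (it picks ``some $u \gprec u'$'' and implicitly assumes both compositions are defined).
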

\begin{proof}
  Let $t \triangleq \ev[2](\ev[1]u::\cT_{2})::\cT_{3}$, and $t' \triangleq  \evp[2](\evp[1]u'::\cT'_{2})::\cT'_{3}$, for some $u \gprec u'$. 
  We know $t \gprec t'$.
  Let $\ev[1]\trans{=}\ev[2] = \ev[12]$ and $\evp[1] \trans{=}\evp[2] = \evp[12]$, then
  $t \red \ev[12]u::\cT_{3}$ and $t' \red \evp[12]u::\cT'_{3}$,
  but as $\ev[12] \not\gprec \evp[12]$ then 
  $\ev[12]u::\cT_{3} \not\gprec \evp[12]u'::\cT'_{3}$ and the result holds.
\end{proof}

\citet{garciaAl:popl2016} study a language without universal types. But in \gsf, because of universal types, there is an additional monotonicity condition that is necessary for the DGG to hold: \emph{monotonicity of evidence instantiation} (EI). Monotonicity of EI states that given two type abstractions related by precision, the new evidences created after type application remain related. Formally:

\begin{restatable}[$\neg$ monotonicity of EI $=>$ $\neg$ DGG]{proposition}{not-monotone-ei-implies-not-dgg}
\label{prop:not-monotone-ei-implies-not-dgg}
Let $\ev[1] \gprec \ev[2]$, $\cT_1 \gprec \cT_2$, $\store_1 \gprec \store_2$, $\alpha := {\cT_1} \in \store_1$, $ \alpha := {\cT_2} \in \store_2$,
\mbox{$\evlift{\alpha_1} = \evliftname_{\store_1}(\alpha)$}, 
\mbox{$\evlift{\alpha_2} = \evliftname_{\store_2}(\alpha)$}, and
$\evinst{\ev[1]}{\evlift{\alpha_1}}$ is defined.\\
If 
$\evinst{\ev[1]}{\evlift{\alpha_1}} \not\gprec \evinst{\ev[2]}{\evlift{\alpha_2}}$, or $\evoutE[\ev][1] \not\gprec \evoutE[\ev][2]$,
then $\exists t \gprec t'$, such that $t \red v$, $t' \red v'$ such that $v \not\gprec v'$.
\end{restatable}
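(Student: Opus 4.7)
The plan is to adapt the proof of the preceding proposition (for non-monotonicity of consistent transitivity), now using type applications to exercise evidence instantiation and the outer-evidence operation via rule $(R$app$\cT)$, rather than combining ascriptions via $(R$asc$)$. From the hypothesis we have $\ev[1]\gprec\ev[2]$, $\cT_1 \gprec \cT_2$, and $\store_1 \gprec \store_2$, with $\alpha := \cT_i \in \store_i$, together with at least one non-precision witness: $\evinst{\ev[1]}{\evlift{\alpha_1}}\not\gprec\evinst{\ev[2]}{\evlift{\alpha_2}}$ or $\evoutE[\ev][1]\not\gprec\evoutE[\ev][2]$. Let $\ev[i]$ support $\forall X.\cT_{A_i}\sim\forall X.\cT_{B_i}$ (with $\cT_{A_1}\gprec\cT_{A_2}$ and $\cT_{B_1}\gprec\cT_{B_2}$, as forced by $\ev[1]\gprec\ev[2]$).

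I pick a raw polymorphic body $\Lambda X.u$ and a refinement $\Lambda X.u'$ with $u \gprec u'$, chosen so that $u[\evlift{\alpha_i}/X]$ is itself a raw value (e.g., a $\lambda$-abstraction of the appropriate type), and form
$$
  t \triangleq (\ev[1]\,(\Lambda X.u) :: \forall X.\cT_{B_1})\,[\cT_1],
  \qquad
  t' \triangleq (\ev[2]\,(\Lambda X.u') :: \forall X.\cT_{B_2})\,[\cT_2].
$$
By compositionality of precision, $t \gprec t'$. One application of $(R$app$\cT)$ on each side introduces the fresh bindings $\alpha := \cT_1$ and $\alpha := \cT_2$, producing intermediate terms that embed exactly the evidences $\evinst{\ev[i]}{\evlift{\alpha_i}}$ (inside the substituted body) and $\evoutE[\ev][i]$ (on the surrounding ascription). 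A subsequent $(R$asc$)$ step collapses these via consistent transitivity into a value $v_i$ whose carried evidence is $\evinst{\ev[i]}{\evlift{\alpha_i}} \trans{} \evoutE[\ev][i]$.

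Choosing $u$ and $\cT_{B_i}$ so that this ambient combination is an identity-like operation on one ingredient (for instance, arranging that one side is already fully static so it acts as a unit for $\trans{}$) preserves the hypothesized non-precision in the value's evidence, and hence $v \not\gprec v'$. The main obstacle is precisely this last step: consistent transitivity could in principle mask the non-monotonicity of either ingredient. The remedy is a careful choice of $u$ and the ascribed codomain, or alternatively placing $t$ and $t'$ inside an observation context (e.g., a surrounding projection that inspects the polymorphic body directly) that exposes the non-monotone evidence without an intervening $\trans{}$-combination. Either route transports a hypothetical failure of EI monotonicity (or of monotonicity of $\evoutE[\ev]$) into an observable failure of the dynamic gradual guarantee, establishing the proposition.
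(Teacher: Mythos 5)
Your construction is the same as the paper's: both proofs form $t \triangleq \ev[1](\Lambda X.t_1)::\_\;[\cT_1]$ and $t' \triangleq \ev[2](\Lambda X.t'_1)::\_\;[\cT_2]$ with $t_1 \gprec t'_1$, and fire one step of ($R$app$\cT$) so that $\evoutE[\ev][1]$ and $\evinst{\ev[1]}{\evlift{\alpha_1}}$ (resp.\ the primed versions) appear in the reducts. Where you diverge is in then taking a \emph{further} ($R$asc) step that collapses the two evidences via $\trans{}$, which forces you to worry that consistent transitivity might mask the hypothesized non-precision --- and your remedy (``a careful choice of $u$ and the ascribed codomain,'' or a surrounding observation context) is never actually exhibited. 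That is a genuine gap as written: you have identified the obstacle but not discharged it.

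The paper avoids the obstacle entirely by stopping after the single ($R$app$\cT$) step. The reduct $\evoutE[\ev][1](\evinst{\ev[1]}{\evlift{\alpha_1}}t_2::\_)::\_$ carries the two evidences in two \emph{distinct} syntactic positions (the outer and inner ascriptions), and term precision is defined componentwise on evidences, so if either $\evoutE[\ev][1] \not\gprec \evoutE[\ev][2]$ or $\evinst{\ev[1]}{\evlift{\alpha_1}} \not\gprec \evinst{\ev[2]}{\evlift{\alpha_2}}$ holds, the two reducts are already unrelated by $\gprec$; no $\trans{}$-combination ever intervenes. Since the (small-step) DGG demands that precision be preserved at \emph{every} reduction step, exhibiting unrelated reducts after one step suffices --- there is no need to reduce to a value, and hence no need for the identity-like arrangement or the observation context you sketch. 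If you simply delete your second reduction step and invoke the componentwise definition of term precision on the one-step reducts, your argument closes and coincides with the paper's.
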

\begin{proof}
  Let $t \triangleq \ev[1](\Lambda X.t_1)::\_\;[\cT_1]$, and $t' \triangleq 
  \ev[2](\Lambda X.t'_1)::\_\;[\cT_2]$, for some $t_1 \gprec t'_1$. 
  We know $t \gprec t'$.
  We know that 
  $\conf[\store_1]{t} \red 
  \conf[\store_1, \alpha := \cT_1]{\evoutE[\ev][1](\evinst{\ev[1]}{\evlift{\alpha_1}}t_2::\_)::\_}$ 
  and 
  $\conf[\store_2]{t'} \red 
  \conf[\store_2, \alpha := \cT_2]{\evoutE[\ev][2](\evinst{\ev[2]}{\evlift{\alpha_2}}t'_2::\_)::\_}$ 
  but as either $\evoutE[\ev][1] \not\gprec \evoutE[\ev][2]$ or
  $\evinst{\ev[1]}{\evlift{\alpha_1}} \not\gprec \evinst{\ev[2]}{\evlift{\alpha_2}}$ 
  then $\evoutE[\ev][1](\evinst{\ev[1]}{\evlift{\alpha_1}}t_2::\_)::\_ \not\gprec \evoutE[\ev][2](\evinst{\ev[2]}{\evlift{\alpha_2}}t'_2::\_)::\_$ and the result holds.
\end{proof}

As mentioned in \S\ref{sec:ref-evidence}, monotonicity of consistent transitivity is broken by the strengthening we impose to enforce parametricity. For instance, consider $\pr{\Int, \richtype{\Int}} \gprec \pr{\Int, \richtype{\Int}}$ and $\pr{\richtype{\Int}, \Int} \gprec \pr{\?, \?}$. By consistent transitivity, $\pr{\Int, \richtype{\Int}} \trans{} \pr{\richtype{\Int},\Int} = \pr{\Int, \Int}$ (rule unsl), and
$\pr{\Int, \richtype{\Int}} \trans{} \pr{\?, \?} = \pr{\Int, \richtype{\Int}}$  (rule idL), but $\pr{\Int, \Int} \not\gprec \pr{\Int, \richtype{\Int}}$.
Therefore the DGG cannot be satisfied as such. 
We later on discuss a tension between our notion of parametricity and the DGG (\S\ref{sec:gsf-param}), but first, we look at how to characterize the set of terms for which loss of precision is indeed harmless in \gsf.

\subsection{A Weak Dynamic Gradual Guarantee for \gsfe}
\label{sec:wdgg-gsfe}

One way to accommodate the dynamic gradual guarantee in languages like \lamB, \gsf, and \sysFg, would be to change the definition of type (and term) precision. This is the approach taken by \citet{igarashiAl:icfp2017}, 
although they do not prove that the DGG holds with this adjusted precision, and leave it as a conjecture.
Dually, if one sticks to the natural notion of precision, as adopted by both \gsf and \csa, and justified by the AGT interpretation, reconciliation might come from considering other forms of parametricity, or perhaps less flexible gradual language designs~\cite{devrieseAl:popl2018}.
Inspired by the approach of \citet{igarashiAl:icfp2017}, we now explore an alternative notion of precision for which the DGG does hold. Conversely to \citet{igarashiAl:icfp2017}, however, we do not intend this alternative precision relation to be the one used to typecheck programs, but only to serve as a technical device to characterize harmless losses of precision. 

Clearly, \gsf terms do not start to behave unexpectedly when losing precision. At most, they may fail and violate the DGG. These failures come from the interaction between polymorphic types and imprecision, which affects runtime sealing with type names. Therefore one might expect the DGG to hold at least for the simply-typed subset of \gsf. Even more, we observe that in \gsf, losing precision {\em extrinsically} (\ie through imprecise type ascriptions) has a different impact on reducibility, compared to losing precision {\em intrinsically} (\ie by modifying the types of binders). Specifically, extrinsic loss of precision is harmless in \gsf when the ascribed term is closed with respect to type variables. This means for instance that a fully-static polymorphic function embedded in a gradual context and used adequately (type-wise) will behave as expected.\footnote{Our prior work on \gsf~\cite{toroAl:popl2019} formalizes exactly this result. Here, we go further and establish more general results, from which this sort of preservation of behavior for ascribed static terms easily follows (\S\ref{sec:free-theorems}).}  

For instance, in \S\ref{sec:dgg-violation}, $\idu$ presents an {\em intrinsic} loss of precision compared to the polymorphic identity function $\idx$, because the term binder changes from having type $X$ to having type $\?$. The runtime error when applying $\idu$ violates the DGG. In contrast, the following function $\idxu = (\Lambda X. \lambda x:X.x::X) :: \forall X.?->X$, which has the same imprecise type as $\idu$, presents an {\em extrinsic} loss of precision compared to $\idx$. 
Interestingly, $\idxu$ behaves just as $\idx$.
  
\myparagraph{Strict precision} 
Armed with the intuition presented above, we define a strict notion of precision, noted $\sprec$, which closely characterizes \gsfe terms for which monotonicity of consistent transitivity holds. While not sufficient, this property is necessary for the DGG to hold, as established in Prop.~\ref{prop:not-monotone-implies-not-dgg}.
Of course, because strict precision $\sprec$ is more restrictive than standard precision $\gprec$, the dynamic gradual guarantee that one may establish with respect to it is {\em weaker}; hereafter, we denote it \sdgg.

We define {\em strict type precision} $\sprec$ in Figure~\ref{fig:sprec}. For brevity, in this section we focus on the core of \gsf and \gsfe and omit both pairs and operators on base types.
Intuitively, $\sprec$ avoids any interference between runtime sealing and loss of precision. More specifically, $\sprec$ is similar to $\gprec$, save for
universal types, type variables and type names: these are not more precise than the unknown type anymore.
For instance,  $\forall X. X -> X \nsprec \forall X. X -> \? \nsprec \?$.
We say $\cT_1$ is ``more strictly precise'' than $\cT_2$ when $\cT_1 \sprec \cT_2$.
The definition of $\sprec$ for evidences is defined as the natural lifting of $\sprec$ to evidence types $\cE$. Crucially, monotonicity of consistent transitivity holds with respect to $\sprec$.

\begin{figure}[t]
\begin{small}
\begin{mathpar}   
  \inference{}{\B \sprec \B} \and
  \inference{}{X \sprec X} \and
  \inference{}{\alpha \sprec \alpha} \and
  \inference{}{\B \sprec \?} \and
  \inference{\functype{\cT_1}{\cT_2} \sprec \functype{\?}{\?}}{\functype{\cT_1}{\cT_2} \sprec \?} \and
  \inference{}{\? \sprec \?} \and
  \inference{\cT_1 \sprec \cT_3 & \cT_2 \sprec \cT_4}{\functype{\cT_1}{\cT_2} \sprec \functype{\cT_3}{\cT_4}} \and
  \inference{\cT_1 \sprec \cT_2}{\forall X. \cT_1 \sprec \forall X. \cT_2} \and
\end{mathpar}
\end{small}
\caption{\gsf: Strict type precision}
\label{fig:sprec}
\end{figure}

\begin{restatable}[$\sprec$-Monotonicity of Consistent Transitivity]{proposition}{monotone-trans-precsM}
\label{prop:monotone-trans-precsM}
  If $\ev[1] \sprec \ev[2]$, $\ev[3] \sprec \ev[4]$, and $\ev[1] \trans{=} \ev[3]$ is defined,  then $\ev[1] \trans{=} \ev[3] \sprec \ev[2] \trans{=} \ev[4]$. 
\end{restatable}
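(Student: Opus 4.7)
The plan is to proceed by induction on the derivation of $\ev[1] \trans{=} \ev[3]$, following the inductive definition of consistent transitivity (Figure~\ref{fig:gsf-trans} and its full presentation in the appendix). For each rule, we inspect the shape of $\ev[1]$ and $\ev[3]$, use the strict precision relations $\ev[1] \sprec \ev[2]$ and $\ev[3] \sprec \ev[4]$ to constrain the shape of $\ev[2]$ and $\ev[4]$, apply the appropriate consistent transitivity rule on the right-hand side, and then invoke the induction hypothesis (and monotonicity of auxiliary operations such as expansion of $\?$ to a given constructor) to conclude that the resulting evidences satisfy $\ev[1] \trans{=} \ev[3] \sprec \ev[2] \trans{=} \ev[4]$.

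The crux is that $\sprec$ is designed precisely to rule out the problematic interaction that breaks monotonicity for $\gprec$. Recall the counter-example: the failure arose because $\pr{\richtype{\Int}, \Int} \gprec \pr{\?, \?}$, so on the left we fire (unsl) to eliminate the type name, while on the right we fire (idL/idR) and the type name information is lost. Under $\sprec$, type names $\alpha$ and universal types $\forall X.\cT$ are \emph{not} strictly more precise than $\?$: from $\ev \sprec \evp$ we can derive that $\ev$ contains an occurrence of $\richtype{\cE}$ (resp.\ $\forall X.\cE$) if and only if $\evp$ does. Hence whenever the derivation of $\ev[1] \trans{=} \ev[3]$ uses (unsl) to cancel a sealing/unsealing pair, the corresponding positions on the right must also contain matching type names, forcing (unsl) to fire there as well; and whenever the derivation uses (idL)/(idR), the corresponding positions on the right must also be $\?$. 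A small preliminary lemma, stating that $\sprec$ is preserved by the substitutions and lift/unlift operations used internally by consistent transitivity, lets us thread $\sprec$ through the premises.

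The structural cases, (func), (func$\?$L), (func$\?$R), and their analogues for other constructors, are then essentially routine: the inductive hypothesis on the domain/codomain (or component) evidences gives $\sprec$ for the recursive subcalls, and the constructor rule on the right-hand side reassembles them into strictly-more-precise evidence. The symmetric rules (sealR), (identR), (func$\?$R), etc., follow by a symmetric argument; since consistent transitivity is presented inductively with explicit symmetric variants, no extra machinery is needed beyond duplicating the case analysis.

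The main obstacle will be administrative rather than conceptual: the inductive definition of consistent transitivity has many cases (sealings, unsealings, identity with $\?$, and structural recursions for every type constructor, in both left and right variants), so the case analysis is long, and some care is needed to show that the chosen rule on the left uniquely determines (up to confluence of the algorithmic definition) a compatible rule on the right. Given the design of $\sprec$---which surgically removes exactly the precisions between type names / universal types and $\?$ that cause the algorithm to diverge between the two sides---this compatibility holds in each case, and the result follows.
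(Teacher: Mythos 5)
Your proposal is correct and follows essentially the same route as the paper's proof: induction on the (inductive) definition of consistent transitivity, with the key observation that under $\sprec$ type names and universal types are never related to $\?$, so the same rule class (unsealing vs.\ identity vs.\ structural/expansion) must fire on both sides, after which the structural cases close by the induction hypothesis. The remaining work is exactly the long but routine case analysis you describe.
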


For illustration purposes, let us recall the counterexample to monotonicity presented in \S\ref{sec:dgg-violation}.
Consider $\pr{\Int, \richtype{\Int}} \gprec \pr{\Int, \richtype{\Int}}$ and $\pr{\richtype{\Int}, \Int} \gprec \pr{\?, \?}$. By consistent transitivity, 
$\pr{\Int, \richtype{\Int}} \trans{} \pr{\richtype{\Int},\Int} = \pr{\Int, \Int}$ (rule unsl), and
$\pr{\Int, \richtype{\Int}} \trans{} \pr{\?, \?} = \pr{\Int, \richtype{\Int}}$  (rule idL),
but
$\pr{\Int, \Int} \not\gprec \pr{\Int, \richtype{\Int}}$.
This argument is not longer valid with strict precision, as $\richtype{\Int} \nsprec \?$ and therefore $\pr{\richtype{\Int}, \Int} \nsprec \pr{\?, \?}$.

\myparagraph{\sdgg for \gsfe}
Armed with strict precision, and the fact that consistent transitivity is monotone with respect to it, we can prove a weak dynamic gradual guarantee, denoted \sdgg, for \gsfe. We establish the \sdgg for \gsf in \S\ref{sec:wdgg-gsf}.

First, we define strict type precision for \gsfe terms in Figure~\ref{fig:sprec-term}, which relates two possibly-open terms and their respective types. It is worth noting that types can be related in $\sprec$ or $\gprec$ depending on the rule. The precision judgment $\impR{s_1}{s_2}{\cT_1}{\cT_2}$ uses $\Omega$ to relate variables. $\Omega$ binds a variable to a pair of types related by precision. Rule ($\sprec$x$ _{\ev}$) establishes $\impR{x}{x}{\cT_1}{\cT_2}$ if $x:\cT_1 \gprec \cT_2\in\Omega$, and Rule ($\sprec$$\lambda_{\ev}$) extends $\Omega$ with the annotated types of the functions to relate.

Strict term precision is the natural lifting of strict type precision $\sprec$ to terms, except for types that do not influence evidence in the runtime semantics, namely function argument types and ascription types: for these, we can use the more liberal type precision relation $\gprec$.
For example, Rule~($\sprec$asc$ _{\ev}$) has the premise $\cT_1 \gprec \cT_2$. This means for instance that 
the term $\cast{\ev[\forall X. X -> X]}{(\Lambda X.\lambda x: X. x)} :: \forall X. X -> X$ is more strictly precise than $\cast{\ev[\forall X. X -> X]}{(\Lambda X.\lambda x: X. x)} :: \forall X. \? -> X$ because evidences are the same (and thus related by $\sprec$), whereas the type annotations are related by $\gprec$. These two terms would not be related if we imposed a strict precision relation between ascribed types. Note that these two terms correspond to the translation of the \gsf terms $\idx$ and $\idxu$ discussed previously. In contrast, the translation of $\idu$ is {\em not} related by strict precision because the associated evidence $\ev[\forall X. ? -> X]$ is not related to $\ev[\forall X. X -> X]$.

Rule ($\sprec$appG$ _{\ev}$) states that types involved in a type application must be related by strict precision because they do influence evidence during reduction: after elimination of type abstractions, new evidences are created using these types, and such evidences need to be related as well. Note that this restriction is sufficient to satisfy monotonicity of evidence instantiation, which is needed for the dynamic gradual guarantee (Prop~\ref{prop:not-monotone-ei-implies-not-dgg}).

Finally, we need to strengthen the relation with an additional rule ($\sprec$Masc$ _{\ev}$) to account for \gsfe terms that are the result of the elaboration from \gsf. This will be important to scale the \sdgg from \gsfe to \gsf below. Recall that, as explained briefly in \S\ref{sec:gsf-dynamics}, the translation from \gsf to \gsfe introduces evidences to ensure that \gsfe terms are well-typed. 
In particular, the translation uses {\em type matching} $\similar{}{}$~\cite{ciminiSiek:popl2016} (repeated here in Figure~\ref{fig:sprec-term}) to ascribe subterms of type $\?$ in elimination positions to the corresponding top-level type constructor (\eg~$\forall X.\?$ for a type application, or $\?->\?$ for a function application). For subterms of a more precise type, type matching is the identity. When an actual matching expansion occurs, the corresponding evidence is generated (\eg~$\ev[\forall X.\?]$, or $\ev[\?->\?]$). Such evidences are related by $\gprec$, but not necessarily by $\sprec$. Rule ($\sprec$Masc$ _{\ev}$) accounts for the case where they are not.

Figure~\ref{fig:sprec-term} also defines strict type precision for \gsfe stores and configurations. A store is more strictly precise than another if it binds each type name
to a more strictly precise type. Finally, a configuration is more strictly precise than another if the store and term components are more strictly precise, and the terms are well-typed with their respective stores.

\begin{figure}[t]
\begin{small}
\begin{flushleft}
\framebox{$\impR{s}{s}{\cT}{\cT}$}
~\textbf{Strict term precision} (for conciseness, $s$ ranges over both $t$ and $u$) \\
\;
\end{flushleft}
  \begin{mathpar}
    \inference[($\sprec$$\const_{\ev}$)]{\ftype(\const) = \basetype
    }{\impR{b}{b}{\B}{\B}}
    \and
    \inference[($\sprec$$\lambda_{\ev}$)]{\impR[\Omega, x : \cT_1 \gprec \cT_2]{t_1}{t_2}{\cT'_1}{\cT'_2} & \cT_1 \gprec \cT_2
    }{\impR{\lambda x:\cT_1.t_1}{\lambda x:\cT_2.t_2}{\functype{\cT_1}{\cT'_1}}{\functype{\cT_2}{\cT'_2}}}
     \and
    \inference[($\sprec$$\Lambda_{\ev}$)]{\impR{t_1}{t_2}{\cT_1}{\cT_2}}
    {\impR{\Lambda X.t_1}{\Lambda X.t_2}{\forall X.\cT_1}{\forall X.\cT_2}} \and
    \inference[($\sprec$x$ _{\ev}$)]{x:\cT_1 \gprec \cT_2\in\Omega 
    }{\impR{x}{x}{\cT_1}{\cT_2}} \and
    \inference[($\sprec$app$ _{\ev}$)]{
      \impR{t_1}{t_2}{\cT'_1 -> \cT_1}{\cT'_2 -> \cT_2} & 
      \impR{t'_1}{t'_2}{\cT'_1}{\cT'_2}
    }{\impR{t_1\;t'_1}{t_2 \; t'_2}{\cT_1}{\cT_2}} \and
    \inference[($\sprec$appG$ _{\ev}$)]{ \impR{t_1}{t_2}{\forall X.\cT_1}{\forall X.\cT_2} & \cT'_1 \sprec \cT'_2
    }{\impR{t_1\; [\cT'_1]}{t_2\; [\cT'_2]}{\cT_1 [\cT'_1/X]}{\cT_2 [\cT'_2/X]}} \and
    \inference[($\sprec$asc$ _{\ev}$)]{\ev[1] \sprec \ev[2] &  \impR{s_1}{s_2}{\cT'_1}{\cT'_2}  & \cT_1 \gprec \cT_2
    }
    {\impR{\cast{\ev[1]}{s_1} :: \cT_1}{\cast{\ev[2]}{s_2} :: \cT_2}{\cT_1}{\cT_2}} \and
    \inference[($\sprec$Masc$ _{\ev}$)]{\ev[\cT_1] \nsprec \ev[\cT_2] & \impR{t_1}{t_2}{\cT'_1}{\cT'_2} & \cT_1 \gprec \cT_2 & \similar{\cT'_1}{\cT_1} & \similar{\cT'_2}{\cT_2}
    }{\impR{\cast{\ev[\cT_1]}{t_1} :: \cT_1}{\cast{\ev[\cT_2]}{t_2} :: \cT_2}{\cT_1}{\cT_2}} 
     \end{mathpar}
     \begin{flushleft}
        \framebox{$\similar{\cT}{\cT}$}
        ~\textbf{Type matching} 
        \end{flushleft}  
      \begin{mathpar}
        \similar{\cT_1 -> \cT_2}{\cT_1 -> \cT_2} \and
        \similar{\forall X. \cT}{\forall X. \cT} \and
        \similar{\?}{\? -> \?} \and
        \similar{\?}{\forall X. \?} \and
      \end{mathpar}
      \\
        \;
       \begin{flushleft}
        \framebox{$\store |- t \sprec \store |- t$}
        ~\textbf{Configuration precision} \\
        \;
        \end{flushleft}  
      \begin{mathpar}  
\inference{\store_1 \sprec \store_2 & \impR[]{t_1}{t_2}{\cT_1}{\cT_2} & \store_1 |- t_1: \cT_1 & \store_2 |- t_2: \cT_2}
{\store_1 |- t_1 \sprec \store_2 |- t_2}\and
\inference{\forall \alpha \in \dom(\store_1), \store_1(\alpha) \sprec \store_2(\alpha)}{ \store_1 \sprec \store_2}
  \end{mathpar}  
\end{small}
\caption{\gsfe: Strict term, store and configuration precision}
\label{fig:sprec-term}
\end{figure}

The \sdgg holds for \gsfe: given two configurations related by strict precision, small-step reduction ($\red$) of the most precise one implies that of the less precise one. Alternatively, if the first configuration is already a value, then so is the second.
\begin{restatable}[Small-step \sdgg for \gsfe]{proposition}{dggLemma}
\label{prop:dggLemma}
Suppose $\store_1 |- t_1 \sprec \store_2 |- t_2$.
\begin{enumerate}[label=\alph*.]
  \item \label{case:caseBB}  If $\conf[\store_1]{t_1} \red \conf[\store'_1]{t'_1}$, then $\conf[\store_2]{t_2} \red \conf[\store'_2]{t'_2}$, and we have $\store'_1 |- t'_1 \sprec \store'_2 |- t'_2$.
  \item \label{case:caseCC} If $t_1 = v_1$, then $t_2 = v_2$.
\end{enumerate}
\end{restatable}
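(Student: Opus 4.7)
The plan is to prove both parts simultaneously by induction on the reduction derivation $\conf[\store_1]{t_1} \red \conf[\store'_1]{t'_1}$ for part (a), with part (b) handled by a preliminary inspection of the strict precision rules. For part (b), observe that the only rules deriving $\impR{t_1}{t_2}{\cT_1}{\cT_2}$ when $t_1$ is a value are ($\sprec$asc$_{\ev}$) and ($\sprec$Masc$_{\ev}$), both of which require the inner raw term to be related to another raw term, and raw-value precision (via ($\sprec$$\const_{\ev}$), ($\sprec$$\lambda_{\ev}$), ($\sprec$$\Lambda_{\ev}$), plus a pair rule) forces $t_2$ to be a value of the same shape.

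For part (a), I proceed by case analysis on the reduction rule. For frame rules ($Rf$) the result follows immediately by the induction hypothesis, after observing that strict term precision on a frame decomposes into precision of the subterm and precision of the surrounding frame. The interesting cases are the notions of reduction. For ($R$asc), the more precise step succeeds because $\ev[1] \trans{=} \ev[2]$ is defined; by inversion of ($\sprec$asc$_{\ev}$) on both ascriptions we obtain strictly related evidences, and Proposition~\ref{prop:monotone-trans-precsM} ($\sprec$-monotonicity of consistent transitivity) gives that the composition on the less precise side is also defined and produces a strictly related evidence, hence the resulting ascribed terms are related by ($\sprec$asc$_{\ev}$). The case ($R$app) is analogous: monotonicity of CT handles the argument evidence composition, and a standard substitution lemma, stating that strict term precision is preserved by substituting strictly related values for a variable, yields the desired relation on the reduct. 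The cases ($R$op), ($R$pair) and ($R$proj$i$) are straightforward structural arguments once one verifies that the evidence operators ($\ev[\basetype]$, $\pairtype{\cdot}{\cdot}$, and $\evproj{i}$) are monotone under $\sprec$.

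The technically delicate case is ($R$app$\cT$). Here both sides generate the same fresh type name $\alpha$ (using $\alpha$-conversion if necessary) and extend their stores with $\store'_k = \store_k, \alpha := \cT'_k$. By rule ($\sprec$appG$_{\ev}$), the instantiation types satisfy $\cT'_1 \sprec \cT'_2$, and by the IH on the store precision we obtain $\store'_1 \sprec \store'_2$. The delicate step is showing that the newly created evidences $\evinst{\ev[1]}{\evlift{\alpha_1}}$ and $\evinst{\ev[2]}{\evlift{\alpha_2}}$ are related by $\sprec$, and similarly for the outer evidence $\evoutE[\ev][k]$. This is precisely the monotonicity of evidence instantiation identified as necessary in Proposition~\ref{prop:not-monotone-ei-implies-not-dgg}; the key point is that $\sprec$ was designed so that type variables are not loosely related to $\?$, so the substitution $X \mapsto \evlift{\alpha}$ on strictly related evidences preserves the relation, and the lift operation is monotone under $\sprec$ with respect to strictly related stores.

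The main obstacle I expect is handling the interaction between rule ($\sprec$Masc$_{\ev}$) and reduction: this rule relates terms whose surrounding ascription evidences are \emph{not} related by $\sprec$ (they arise from type-matching expansion during elaboration from \gsf to \gsfe). After a reduction step, the consistent-transitivity composition involving these elaboration-introduced evidences may yield evidences whose strict precision must be reestablished through ($\sprec$Masc$_{\ev}$) again, not ($\sprec$asc$_{\ev}$). Proving that the rule ($\sprec$Masc$_{\ev}$) is preserved under reduction therefore requires a careful auxiliary lemma characterizing the shape of evidences produced by combining matching-generated evidences with arbitrary strictly-related ones, ensuring the result still falls under one of the two ascription-precision rules. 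Once this lemma is in hand, the remaining cases compose cleanly.
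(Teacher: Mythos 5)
Your plan matches the paper's proof: a case analysis on the reduction step, discharging ($R$asc) and ($R$app) via $\sprec$-monotonicity of consistent transitivity (Proposition~\ref{prop:monotone-trans-precsM}) together with a substitution lemma for strictly related values, and handling ($R$app$\cT$) via monotonicity of evidence instantiation under $\sprec$, exactly as the surrounding text of \S\ref{sec:wdgg-gsfe} anticipates. The auxiliary lemma you flag for the ($\sprec$Masc$_{\ev}$) interaction is essentially the one the paper already invokes in the proof of Lemma~\ref{prop:ascribingharmeless}---composing an evidence with the initial evidence toward a less precise type of matching shape leaves it unchanged---so the obligation you defer is discharged by machinery the development already contains.
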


\noindent Next, we exploit this small-step \sdgg result for \gsfe in order to establish the \sdgg for \gsf.

\subsection{A Weak Dynamic Gradual Guarantee for \gsf}
\label{sec:wdgg-gsf}

\begin{figure}[t]
\begin{small}
\begin{flushleft}
\framebox{$\impRu{v}{v}{\cT}{\cT}$}
~\textbf{Strict value precision}
\end{flushleft}
  \begin{mathpar}
    \inference[($\sprec$$\const$)]{  \ftype(\const) = \basetype
    }{\impRu{b}{b}{\B}{\B}}
    \and
    \inference[($\sprec$$\lambda$)]{\impR[\Omega, x : \cT_1 \gprec \cT_2]{t_1}{t_2}{\cT'_1}{\cT'_2} & \cT_1 \gprec \cT_2
    }{\impRu{(\lambda x:\cT_1.t_1)}{(\lambda x:\cT_2.t_2)}{\functype{\cT_1}{\cT'_1}}{\functype{\cT_2}{\cT'_2}}}
     \and
    \inference[($\sprec$$\Lambda$)]{\impR{t_1}{t_2}{\cT_1}{\cT_2}}
    {\impRu{(\Lambda X.t_1)}{(\Lambda X.t_2)}{\forall X.\cT_1}{\forall X.\cT_2}}
     \end{mathpar}
\\
\;
\begin{flushleft}
\framebox{$\impR{t}{t}{\cT}{\cT}$}
~\textbf{Strict term precision} \\
\;
\end{flushleft}
     \begin{mathpar}
    \inference[($\sprec$x)]{x:\cT_1 \gprec \cT_2\in\Omega
    }{\impR{x}{x}{\cT_1}{\cT_2}} \and
    \inference[($\sprec$v)]{\impRu{v_1}{v_2}{\cT_1}{\cT_2} & 
    \cT_1 \sprec \cT_2
    }{\impR{v_1}{v_2}{\cT_1}{\cT_2}} \and
    \inference[($\sprec$ascv)]{\impRu{v_1}{v_2}{\cT'_1}{\cT'_2} & \cT'_1 \meet \cT_1 \sprec \cT'_2 \meet \cT_2 & \cT_1 \gprec \cT_2
    }{\impR{v_1 :: \cT_1}{v_2 :: \cT_2}{\cT_1}{\cT_2}} \and
    \inference[($\sprec$asct)]{\impR{t_1}{t_2}{\cT'_1}{\cT'_2} & \cT'_1 \meet \cT_1 \sprec \cT'_2 \meet \cT_2 & \cT_1 \gprec \cT_2 & t_1, t_2 \neq v
    }{\impR{t_1 :: \cT_1}{t_2:: \cT_2}{\cT_1}{\cT_2}} \and
    \inference[($\sprec$app)]{
      \impR{t_1}{t_2}{\cT_1}{\cT_2} & 
      \impR{t'_1}{t'_2}{\cT'_1}{\cT'_2} &
      \cT'_1 \meet \cdom(\cT_1) \sprec  \cT'_2 \meet \cdom(\cT_2)
    }{\impR{t_1\;t'_1}{t_2 \; t'_2}{\ccod(\cT_1)}{\ccod(\cT_2)}}
     \and
    \inference[($\sprec$appG)]{ \impR{t_1}{t_2}{\cT_1}{\cT_2} & \cT'_1 \sprec \cT'_2 
    }{\impR{t_1\; [\cT'_1]}{t_2\; [\cT'_2]}{\cinsta(\cT_1, \cT'_1)}{\cinsta(\cT_2, \cT'_2)}}
  \end{mathpar}  
\end{small}
  \caption{\gsf: Strict term precision}
  \label{fig:weak-dgg-source}
\end{figure}

It would be helpful for programmers to be able to reason about \gsf terms directly in order to understand the (variant of the) dynamic gradual guarantee that is satisfied.
To do so, we first have to define when two \gsf terms are in the strict precision relation. We could simply say that two \gsf terms are related by $\sprec$ if their translations to \gsfe are related by $\sprec$. Unfortunately, with this definition it would be hard for programmers to get an intuition about when two terms are related by strict precision, as it would require understanding the elaborations to \gsfe. 

Instead, we design a strict type relation $\sprec$ for \gsf terms syntactically. Like for \gsfe, just lifting of $\sprec$ for \gsf terms would be too conservative. 
For instance, we could not relate
$(\Lambda X.\lambda x:X. x) :: \forall X. X -> X$ and $(\Lambda X.\lambda x:X. x) :: \forall X. ? -> X$ because $\forall X. X -> X \not\sprec \forall X. ? -> X$, although as discussed before, this extrinsic imprecision is harmless (and the translated terms {\em are} related by strict precision in \gsfe). Figure~\ref{fig:weak-dgg-source} defines a strict precision relation for \gsf that soundly reflects strict precision for \gsfe, and can account for the translation of \gsf terms to \gsfe.

Most of the rules are straightforward. We use metavariable $v$ in \gsf to range over constants, functions and type abstractions, and use $\seprecv$ to relate them.
Relation $\seprecv$ does not require the types of values to be related in either $\sprec$ or $\gprec$; this is specified in rules ($\sprec$v) and ($\sprec$ascv).
Rule ($\sprec$v) demands that the intrinsic types of the values be related in $\sprec$.
In contrast, rule ($\sprec$ascv) is more permissive, establishing that the intrinsic types can be in $\gprec$---but only if the values have ascriptions such that their meet (\ie~initial evidences) are in $\sprec$. This allows capturing some intrinsic losses of precision, whenever the surrounding type information ensures that the associated evidence will be related by $\sprec$. For instance, $(\Lambda X.\lambda x:X. x :: X) :: \forall X. X -> X \sprec (\Lambda X.\lambda x:\?. x :: X) :: \forall X. X -> X$ at the corresponding type.

Rule ($\sprec$asct) uses the same technique to be as permissive as possible: it only requires $\cT_1 \gprec \cT_2$, but requires the meets of the types involved in the ascriptions to be related by $\sprec$.
Likewise, Rule ($\sprec$app) requires the meets
of the function argument types and the actual argument types to be related by $\sprec$.
Finally, Rule ($\sprec$appG) follows the \gsfe precision rule for type instantiation and uses $\sprec$ to relate the instantiation types.

Strict term precision for \gsf is sound with respect to strict term precision of the translated terms in \gsfe:

\begin{restatable}{proposition}{equivalencesprecelab}
\label{prop:equivalencesprecelab}
If $\impR[]{t_1}{t_2}{\cT_1}{\cT_2}$ and 
$|- t_i \translate t'_i  : \cT_i$,
 then $\impR[]{t'_1}{t'_2}{\cT_1}{\cT_2}$.
\end{restatable}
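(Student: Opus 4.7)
The plan is to prove Proposition~\ref{prop:equivalencesprecelab} by structural induction on the derivation of $\impR[]{t_1}{t_2}{\cT_1}{\cT_2}$, analyzing each rule of Figure~\ref{fig:weak-dgg-source} and showing that the corresponding translation into \gsfe (which elaborates consistent judgments into explicit evidence-carrying ascriptions) produces terms related by $\sprec$ as specified in Figure~\ref{fig:sprec-term}. For each case, I will need to reconstruct the evidence produced by the type-directed translation, and verify that these evidences are either related by $\sprec$ (to use rule ($\sprec$asc$_\ev$)) or, if they are only related by $\gprec$, that the translation falls into the shape covered by ($\sprec$Masc$_\ev$). This requires an auxiliary observation that the evidences introduced by the translation are computed via the interior function, which coincides with the precision meet; hence, whenever two \gsf subterms have annotated types whose meets are related by $\sprec$, the evidences inserted by the translation are also related by $\sprec$.

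The easy cases are the base ones: rule ($\sprec$$\const$) and ($\sprec$x) translate to $\cast{\ev[\B]}{b} :: \B$ and $x$ (or ascribed variants) at matching types, both directly handled by ($\sprec$$\const_\ev$)/($\sprec$x$_\ev$) and ($\sprec$asc$_\ev$). For ($\sprec$v) and the abstractions ($\sprec$$\lambda$), ($\sprec$$\Lambda$), the induction hypothesis applied to the bodies gives us precision of the translated bodies, and the outer ascription to the intrinsic type uses a fully precise initial evidence, so ($\sprec$asc$_\ev$) applies. The interesting ascription cases are ($\sprec$ascv) and ($\sprec$asct): here $\cT_1 \gprec \cT_2$ but the meets $\cT'_1 \meet \cT_1 \sprec \cT'_2 \meet \cT_2$; the translation produces evidence $\ev[i] \Vdash \cT'_i \sim \cT_i$ that is precisely the interior, i.e. the meet, so $\ev[1] \sprec \ev[2]$ follows, and rule ($\sprec$asc$_\ev$) in \gsfe applies.

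The application cases are where the main technical subtlety lies. For ($\sprec$app), the elaboration uses type matching $\similar{}{}$ on the function subterm and inserts an ascription on the argument whose evidence is the meet of the actual argument type with the matched domain type. The premise $\cT'_1 \meet \cdom(\cT_1) \sprec \cT'_2 \meet \cdom(\cT_2)$ ensures the argument-side evidences are related by $\sprec$; the matching-induced ascription on the function may introduce evidences $\ev[\?\!->\!\?]$ on one side and a fully precise $\ev[\cT]$ on the other, which are only related by $\gprec$---this is exactly the scenario covered by rule ($\sprec$Masc$_\ev$). Similarly, ($\sprec$appG) elaborates with type matching against $\forall X.\?$, and uses the premise $\cT'_1 \sprec \cT'_2$ to justify that instantiation types coincide in precision with the \gsfe precision rule for type application.

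The main obstacle will be the application cases, specifically reasoning carefully about when type matching $\similar{}{}$ forces the use of ($\sprec$Masc$_\ev$) rather than ($\sprec$asc$_\ev$). I expect the proof to require a case split on whether the two sides of the precision relation both go through type matching, only one does, or neither does; the rule ($\sprec$Masc$_\ev$) was specifically added to Figure~\ref{fig:sprec-term} to cover asymmetric cases where the less precise side is expanded by matching while the more precise side is not. Establishing that the side conditions of ($\sprec$Masc$_\ev$) (namely $\similar{\cT'_i}{\cT_i}$ for the translated types) hold will require tracking the exact shape of evidences produced by matching. Once this bookkeeping is done, the induction concludes straightforwardly.
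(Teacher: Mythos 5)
Your proposal is correct and follows essentially the same route as the paper's proof: induction on the \gsf strict-precision derivation, using the fact that the elaboration's initial evidences are interiors (i.e.\ precision meets) so that the meet side-conditions in the rules of Figure~\ref{fig:weak-dgg-source} directly yield $\sprec$-related evidences for ($\sprec$asc$_\ev$), with rule ($\sprec$Masc$_\ev$) absorbing exactly the type-matching expansions in elimination positions. The only nit is that the evidences in the ($\sprec$v)/abstraction cases are reflexive rather than ``fully precise''; what matters, as you otherwise note, is that $\cT_1 \sprec \cT_2$ makes them $\sprec$-related.
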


Finally, Proposition \S\ref{prop:equivalencesprecelab} and \S\ref{prop:dggLemma}, allow us to establish the strict dynamic gradual guarantee for \gsf programs.

\begin{restatable}[\sdgg]{theorem}{wdggMGSF}
\label{theorem:wdggMGSF}
 Suppose $\impR[]{t_1}{t_2}{\cT_1}{\cT_2}$ and $|- t_1: \cT_1$.
 \begin{enumerate}[label=\alph*.]
  \item If $\gsfreds{t_1}{v_1}$, then $\gsfreds{t_2}{v_2}$ and $\impR[]{v_1}{v_2}{\cT_1}{\cT_2}$. \\
   If $t_{1}\divergesy$ then $t_{2}\divergesy$.
  \item If $\gsfreds{t_2}{v_2}$, then $\gsfreds{t_1}{v_1}$ and $\impR[]{v_1}{v_2}{\cT_1}{\cT_2}$, or $\gsfreds{t_1}{\error}$.\\
   If $t_{2}\divergesy$, then $t_{1}\divergesy$ or $\gsfreds{t_1}{\error}$.
 \end{enumerate}
\end{restatable}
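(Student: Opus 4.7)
The proof plan is to reduce the theorem about \gsf programs to the already-established small-step \sdgg for \gsfe (Prop.~\ref{prop:dggLemma}), using the soundness of strict precision under elaboration (Prop.~\ref{prop:equivalencesprecelab}) as the bridge. First, by the definition of $\gsfreds{\cdot}{\cdot}$, both $t_1$ and $t_2$ are elaborated to \gsfe terms $t'_1, t'_2$ with $|- t_i \translate t'_i : \cT_i$, and evaluation of $t_i$ means evaluation of $t'_i$. Prop.~\ref{prop:equivalencesprecelab} immediately gives $\impR[]{t'_1}{t'_2}{\cT_1}{\cT_2}$, and under the empty store this entails $\cdot \vdash t'_1 \sprec \cdot \vdash t'_2$.

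For part (a), I would proceed by induction on the number of reduction steps of $\conf[\emptyenv]{t'_1} \red^{*} \conf[\store_1]{v_1}$. The base case is immediate. For the inductive step, Prop.~\ref{prop:dggLemma}(a) yields a matching step $\conf[\emptyenv]{t'_2} \red \conf[\store'_2]{t''_2}$ with strict precision preserved on the residuals and stores, and the induction hypothesis takes over. When $t'_1$ becomes a value, Prop.~\ref{prop:dggLemma}(b) forces $t'_2$ to be a value as well, so $\gsfreds{t_2}{v_2}$ holds and the strict-precision relation on the final configurations gives $\impR[]{v_1}{v_2}{\cT_1}{\cT_2}$ (invoking again the elaboration-source correspondence so that the conclusion is stated at the \gsf level). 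For the divergence clause, I would argue by contradiction: if $t_2$ did not diverge, then by type safety (Prop.~\ref{prop:typesafety}) it would either terminate with a value or an error; each case, combined with repeated application of Prop.~\ref{prop:dggLemma} run in reverse on prefixes of the $t'_1$-trace, would yield a contradiction with the infinite $t'_1$-trace. Equivalently, one can package this as a coinductive simulation argument using Prop.~\ref{prop:dggLemma}(a) directly.

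Part (b) follows from part (a) together with type safety (Prop.~\ref{prop:typesafety}), which states that any well-typed \gsf term either terminates with a value, errors, or diverges. Since $|- t_1 : \cT_1$, we are in one of these three cases. If $t_1 \divergesy$, part (a) gives $t_2 \divergesy$, contradicting $\gsfreds{t_2}{v_2}$; so $t_1$ either terminates with a value or errors. In the former case, part (a) supplies $\gsfreds{t_2}{v'_2}$ with $\impR[]{v_1}{v'_2}{\cT_1}{\cT_2}$, and determinism of evaluation forces $v'_2 = v_2$. In the latter, $\gsfreds{t_1}{\error}$ is exactly the alternative disjunct. The divergence subcase of (b) is handled symmetrically: type safety on $t_1$ leaves termination with value (ruled out by part (a) which would force $t_2$ to terminate), termination with $\error$, or divergence.

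The main obstacle is not in this reduction itself, which is essentially bookkeeping, but in Prop.~\ref{prop:dggLemma}, whose proof has to confront all the evidence-manipulating rules of \gsfe reduction, in particular $(R\textrm{asc})$ and $(R\textrm{app}\cT)$. There, one needs $\sprec$-monotonicity of consistent transitivity (Prop.~\ref{prop:monotone-trans-precsM}) together with a $\sprec$-monotonicity property for evidence instantiation and for the construction of the outer evidence $\evout$ in the type-application rule, which is precisely what motivated restricting $\sprec$ away from $\forall X.\cT, X, \alpha \sprec \?$. The additional subtlety at the source level is that rule $(\sprec\text{Masc}_{\ev})$ must absorb the type-matching-induced ascriptions introduced by elaboration; justifying Prop.~\ref{prop:equivalencesprecelab} is where the somewhat ad hoc presentation of $\sprec$ for \gsf in Fig.~\ref{fig:weak-dgg-source} (with meets of ascribed and computed types) pays off, and each clause of the elaboration must be checked to produce a \gsfe term derivable by the strict-precision rules.
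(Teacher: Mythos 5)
Your proposal is correct and follows essentially the same route as the paper: elaborate both terms, transport strict precision across elaboration via Proposition~\ref{prop:equivalencesprecelab}, and then iterate the small-step result of Proposition~\ref{prop:dggLemma} (with type safety and determinism handling part (b) and the divergence clauses). The only minor imprecision is the remark about translating the value conclusion ``back to the \gsf level'': since $\gsfreds{t_i}{v_i}$ already yields \gsfe values, the relation $\impR[]{v_1}{v_2}{\cT_1}{\cT_2}$ on results is read directly off the final configuration precision, so no further correspondence is needed there.
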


\section{\gsf: Gradual Parametricity}
\label{sec:gsf-param}

In this section, we first discuss two different notions of parametricity for gradual languages that have been developed in the literature (\S\ref{sec:parametricities}), in order to situate the notion of gradual parametricity for \gsf (\S\ref{sec:parametricity-in-GSF}). Then, we show in \S\ref{sec:gdd-violation} that this notion of parametricity is incompatible with the \gdgg. This tension is established solely driven by the definition of parametricity, and not by monotonicity of consistent transitivity (\S\ref{sec:dgg-violation}). This suggests that the incompatibility is shared by other languages with essentially the same notion of gradual parametricity, for which the dynamic gradual guarantee has so far been left as an open question.
Finally, we explore gradual free theorems in \gsf based on examples discussed in the literature, using both gradual parametricity and the \sdgg in order to establish such results (\S\ref{sec:free-theorems}).

\subsection{On Gradual Parametricities}
\label{sec:parametricities}
The notion of parametricity established by \citet{reynolds:83} is usually defined by interpreting types as {\em binary logical relations}. The fundamental property of such a relation (also known as the {\em abstraction theorem}) states that a well-typed term is related to itself at its type. 
This technique is fairly standard and can be summarized as follows.
The logical relation is defined using two mutually-defined interpretations: one for values and one for computations.
For simplicity and uniformity, throughout this section we use notation 
$(v_1, v_2) \in \setv{\cT}$ when values $v_1$ and $v_2$ are related values at type $\cT$ under environment $\rho$, and notation $(t_1, t_2) \in \sett{\cT}$ when terms $t_1$ and $t_2$ are related computations at type $\cT$ under environment $\rho$. An environment $\rho$, which maps a type variable to two types and a relation, is used to relate values at abstract types as explained below. Let us briefly go through the definitions.
Two base values are related if they are the same: 
$$\setv{\B} = \{(v,v) \in \atomterm{B}\}$$
where $\atomterm{T}  = \{(t_1,t_2) | t_1 : \tpesub(T) \land  t_2 : \tpesub(T)\}$.
Two functions are related if given two related argument the application yield related computations:
$$\setv{T_1 -> T_2} = \{(v_1,v_2) \in \atomterm{T_1 -> T_2} | \forall (v'_1,v'_2) \in \setv{T_1}. (v_1\;v'_1, v_2\;v'_2) \in \sett{T_2} \}$$
Two type abstractions are related if their instantiations to two arbitrary types yield related computations for any given relation between the instantiated types:
\begin{multline*}
\setv{\forall X. T} = \{(v_1,v_2) \in \atomterm{\forall X. T} | \\
\forall T_1,T_2, \forall R \in \mathsf{Rel}[T_1,T_2]. (v_1\;[T_1], v_2\;[T_2]) \in \sett[\rho,X \mapsto (T_1,T_2,R)]{T} \}
\end{multline*}
This relation allows us to relate values at abstract types: two values are related at an abstract type $X$, if they are in the relation for $X$:
$$\setv{X} = R \text{ where } \rho(X) = (T_1,T_2,R)$$
Finally, two computations are related if they reduce to two related values.
$$\sett{T} = \{(t_1,t_2)  \in \atomterm{T} | t_1 \red^{*} v_1 => (t_2 \red^{*} v_2 \land (v_1, v_2) \in \setv{T}) \}$$

This definition of parametricity for the statically-typed polymorphic lambda calculus is standard and uncontroversial. Conversely, parametricity for gradual languages---or {\em gradual parametricity}---is a novel concept around which different efforts have been developed, yielding different notions. The subtle differences in interpretation come from the specificities of gradual typing, namely the potential for runtime errors due to type imprecision. Much of it is linked to the mechanism used to enforce type abstraction.
Apart from \gsf, gradual parametricity has only be proven for \lamB~\cite{ahmedAl:icfp2017} and \polyG~\cite{newAl:popl2020}, under two fairly different interpretations. Technically, both are defined using logical relations that are fairly standard, except for three important cases: polymorphic types, type variables, and of course, the unknown type.
We now briefly review and compare both approaches.

\myparagraph{Gradual parametricity in \lamB}
Building on prior work by~\citet{matthewsAhmed:esop2008}, \lamB uses runtime type generation to reduce type applications, and a form of automatic (un)sealing is introduced via casts and type names during reduction. For instance, the casted value $1: \Int \Rightarrow \alpha$ represents an integer sealed with type name $\alpha$.
Two values are related at a type name $\alpha$, if both values are casted to $\alpha$ and belong to the relation of $\alpha$. 
Because type names have indefinite dynamic extent, the logical relation is now ternary, adding {\em worlds} that hold (among other data) the association between type names and the chosen relation between instantiation types. For instance $\setv{\B} = \{(W, v,v) \in \atomterm{B}\}$.
The lexical environment $\rho$ now binds type variables to type names.
For instance, $(W, 1: \Int \Rightarrow \alpha, 2: \Int \Rightarrow \alpha) \in \setv[\rho]{\alpha} = \setv[\rho]{X}$, when $\rho(X)=\alpha$
and $W(\alpha).R = \{(1,2)\}$.

As sealing and unsealing is introduced automatically at runtime, to reason parametrically about type abstractions, \lamB does not directly relate type applications as computations. 
For instance, consider term $\Lambda X.\lambda x: X. x$, which is related with itself $(W, \Lambda X.\lambda x: X. x, \Lambda X.\lambda x: X. x) \in$  
\mbox{$\setv[\emptyset]{\forall X. X -> X}$}. 
If we instantiate these values with $\Int$, choosing relation $\{(1,2)\}$, then 
as  $W.\Sigma_i \triangleright (\Lambda X.\lambda x: X. x)\;[\Int] \red  W.\Sigma_i, \alpha := \Int \triangleright (\lambda x: \alpha. x) : \alpha -> \alpha \Rightarrow \Int -> \Int$,
according to the classic definition of parametricity, it must be the case that
$$(W', (\lambda x: \alpha. x) : \alpha -> \alpha \Rightarrow \Int -> \Int, (\lambda x: \alpha. x) : \alpha -> \alpha \Rightarrow \Int -> \Int) \in \setv[\rho]{X-> X}$$
for a future world $\W'$ (notation $\W' \futureW \W$) such that $W'(\alpha).R = \{(1,2)\}$.
The problem here is that
because $(W', 1: \Int \Rightarrow \alpha,2: \Int \Rightarrow \alpha) \in \setv[\rho]{X}$, then the following applications should be related computations
\begin{align*}
& (W', ((\lambda x: \alpha. x) : \alpha -> \alpha \Rightarrow \Int -> \Int) \;(1: \Int \Rightarrow \alpha),\\ 
& ~((\lambda x: \alpha. x): \alpha -> \alpha \Rightarrow \Int -> \Int)\;(2: \Int \Rightarrow \alpha)) \in \sett[\rho]{X}
\end{align*}
But these application expressions do not type check! Furthermore, reducing $(1 : \Int \Rightarrow \alpha) : \Int \Rightarrow \alpha$ would always yield an error.

Therefore, instead of relating the two type application expressions as computations, \lamB relates only the bodies of the type abstractions {\em after} the type applications have been performed (highlighted in gray), in an extended world:
\begin{multline*}
\setv{\forall X. \cT} = \{(W, v_1,v_2) \in \atomterm{\forall X. \cT} | \forall \cT_1,\cT_2, \forall R \in \mathsf{Rel}[\cT_1,\cT_2]. \forall W' \futureW W \\
W'.\Sigma_1 \triangleright 
v_1\;[\cT_1] \red 
W'.\Sigma_1, \alpha: \cT_1 \triangleright  
(t_1 : \_ \Rightarrow \_)  \;\land\\
W'.\Sigma_2 \triangleright 
v_2\;[\cT_2] \red 
W'.\Sigma_2, \alpha: \cT_2 \triangleright 
(t_2 : \_ \Rightarrow \_) \;\land\; \\
(W' \extworld (\alpha, \cT_1, \cT_2,  R), \Gbox{t_1, t_2}) \in \sett[\rho \lcorchete X \mapsto \alpha \rcorchete]{\cT} \}
\end{multline*}
After both type applications take a step, only the inner terms $t_1$ and $t_2$ are related. Observe how this definition ``strips out'' the outermost casts (whose source and target types are omitted here). These casts automatically realize (un)sealing.

This technique makes it possible to reason about two values that expect already-sealed pair of related values. In the previous example, we know that if 
$W.\Sigma_i \triangleright (\Lambda X.\lambda x: X. x)\;[\Int] \red 
W.\Sigma_i, \alpha := \Int \triangleright (\lambda x: \alpha. x) :$ 
\mbox{$\alpha -> \alpha \Rightarrow \Int -> \Int$}
then 
$(W', \lambda x: \alpha. x, \lambda x: \alpha. x) \in \setv[\rho]{X -> X}$, where $W'$ is the extended future world. Therefore we can deduce that
$(W', (\lambda x: \alpha. x)\;(1 : \Int \Rightarrow \alpha), (\lambda x: \alpha. x)\;(2 : \Int \Rightarrow \alpha)) \in \sett[\rho]{X}$.

To reason about related type applications as computations, and not by considering the inner terms only, one needs to use a {\em compositionality} lemma. This lemma relates two values after the unsealing of some type name $\alpha$. It is important to notice that to apply this lemma, $\alpha$ must be {\em synchronized} in $W$, \ie~bound to the same type on both sides ($\cT_1=\cT_2=\cT$) and to the value relation of that type ($W(\alpha).R = \setv[\rho]{\cT}$).
This synchronization requirement is needed to prevent the unsealing of unrelated values such as $(W', 1 : \Int \Rightarrow \alpha, 2 : \Int \Rightarrow \alpha)$. Otherwise, after unsealing, we would have $(W', 1, 2) \in \setv[\emptyset]{\Int}$, which is clearly false. 

\myparagraph{Gradual parametricity in \polyG} \citet{newAl:popl2020} recently developed another approach to gradual parametricity, which has the benefit of avoiding the convoluted treatment of type applications described above, and doing without type names altogether. In doing so, the notion of gradual parametricity they present is faithful to Reynolds original presentation. Note however that this comes at a cost: the {\em syntax} of \polyG departs importantly from \sysF, by requiring all sealing and unsealing to happen explicitly in the term syntax, with outward scoping of type variables:
$$
\framebox{\text{\sysF}}\; ((\Lambda X. \lambda x:X. x)\;[\Int]\;1)+1 \qquad
\framebox{\text{\polyG}}\; \mathsf{unseal}_X ((\Lambda X. \lambda x:X. x)\;[X=\Int]\;(\mathsf{seal}_X 1)) + 1
$$

Technically, gradual parametricity for \polyG is established by first translating \polyG to an intermediate language \polyC and finally to \CBPV, a variant of Levy's Call-by-Push-Value~\cite{levy:tlca1999} with open sums to encode the unknown type.
The logical relation of parametricity is defined for \CBPV, and differs importantly from that of \lamB. In fact, it is very close to the standard presentation used in static parametric languages. In particular, to relate type abstractions,
the definition requires type applications (and not some inner terms) to be related as computations, as expected in the standard treatment of parametricity. 
Crucially, this is possible only because type application never incurs automatic insertion of casts to seal/unseal values, as would happen in \lamB, 
because in this approach sealing and unsealing are explicit in the syntax of terms.

Let us revisit the example discussed above for \lamB, now in \polyG. Extrapolating the logical relation to be defined over \polyG directly, we know that
$(\Lambda X.\lambda x: X. x, \Lambda X.\lambda x: X. x) \in \setv{\forall X. X -> X}$. Therefore after type application, and following the definition of the relation for functions, we know that
$((\lambda x: \Int. x)\; 1,
(\lambda x: \Int. x)\; 2) \in \sett[\rho]{X}
$.
The resulting values are related values because $(1,2) \in \setv[\rho]{X}$.

\myparagraph{Comparing parametricities}
The notion of gradual parametricity of \polyG is stronger than that of \lamB, as it directly embodies the kind of parametric reasoning that one is used to in static languages. While \lamB ensures a form of gradual parametricity, this notion is weaker, because given two related type abstractions and two arbitrary (possibly different) types, we cannot directly reason about both corresponding type applications: we can only directly reason about the body of the type abstractions after the type applications have reduced. 

While the notion of gradual parametricity of \polyG is superior, as already mentioned, it is enabled by sacrificing the syntax of \sysF. In this work, we are interested in gradualizing \sysF, and studying the properties we can get, rather than in designing a different static source language in order to accommodate the desired reasoning principles. This led us to embrace runtime sealing through type names, as in \lamB, and consequently, to aspire to a weaker notion of gradual parametricity than that of \polyG.
We do believe that both approaches are fully valuable and necessary to understand the many ways in which gradual typing can embrace such an advanced typing discipline.

In particular, as illustrated by \citet{newAl:popl2020}, the weaker notion of parametricity adopted in \gsf can lead to behavior that breaks the (strong notion of) parametricity enjoyed by \polyG. Note that this can however only occur when manipulating values of {\em imprecise} polymorphic types; for values of static types, the reasoning principles of standard parametricity do apply. 
The example they present starts from the value
$v \triangleq (\Lambda X. \lambda x: X. \ttt) :: \forall X. ? -> \Bool$. Although $v$ is related to itself at type  $\forall X. ? -> \Bool$, 
two {\em different} instantiations (to $\Int$ and $\Bool$, respectively) are not related computations, \ie $(W, v\;[\Int], v\;[\Bool]) \not\in \sett{\? -> \Bool}$.
This is because, given two related arguments at type $\?$ such as $\ev[\Int]1::\?$ twice,
$v\;[\Int]\;(\ev[\Int]1::\?)$ reduces to $\ttt$, whereas 
$v\;[\Bool]\;(\ev[\Int]1::\?)$ reduces to an error. 
The logical relation only tells us that after instantiation, terms
$\ev[\richtype{\Int} -> \Bool](\lambda x: \alpha. \ttt) :: \? -> \Bool$
and 
$\ev[\richtype{\Bool} -> \Bool](\lambda x: \alpha. \ttt) :: \? -> \Bool$
are indeed related at type $\? -> \Bool$.
In this case, if we try to apply both functions to $\ev[\Int]1::\?$, both programs fail.
The only arguments that can be passed such that both applications succeed are related 
sealed values at type $\?$, such as 
$(W, \cast{\pr{\Int, \richtype{\Int}}}{1}::\?, \cast{\pr{\Int, \richtype{\Bool}}}{\ttt}::\?) \in \setv{\?}$ 
(assuming an appropriate relation for $\alpha$). Note that we cannot use the compositionality lemma to relate the type applications as computations, because $\alpha$ is not synchronized in this case (it is bound to $\Int$ in one term and to $\Bool$ in the other). The compositionality lemma can only be used to related terms such as $t\;[\cT]$ with themselves.

Finally, note that the fact that $v\;[\Bool]\;(1::\?)$ reduces to an error in \gsf
points to a wider point in the design space of gradually-typed languages: how {\em eagerly} should type constraints be checked? Indeed, $v\;[\Bool]$ is $\lambda x:\Bool.\ttt$, whose application to an underlying $\Int$ value is ill-typed and can legitimately be expected to fail. In that respect, \gsf follows GTLC~\cite{siekTaha:sfp2006,siekAl:snapl2015}, in which $(\lambda x:\Bool.\ttt)\;(1::\?)$ also fails with a runtime cast error. This eager form of runtime type checking likewise follows from the Abstracting Gradual Typing methodology as formulated by \citet{garciaAl:popl2016}. An interesting perspective would be to study a lazy variant of AGT, and whether it recovers properties of alternative approaches~\cite{newAhmed:icfp2018}.

It is interesting to observe that no runtime error is raised in \lamB for this example, despite the fact that the parametricity logical relation is essentially the same as that of \gsf. The difference comes from the runtime semantics of \lamB: as we have illustrated in \S\ref{sec:design}, \lamB does not track the type instantiations that occur on imprecise types. This means that the underlying typing violation observed by \gsf, which manifests as a runtime error, is not noticed in \lamB. Therefore, this example highlights yet another point of tension in the design space of \sysF-based gradual languages.

\subsection{Gradual Parametricity in \gsf}
\label{sec:parametricity-in-GSF}

\begin{figure}[hp]
\begin{small}
    \begin{flalign*}
    \begin{array}{@{}>{\displaystyle}l@{}>{\displaystyle{}}l@{}}
\setv{\basetype} &=\quad \{\lgrpp{\W}{v}{v}[\basetype]\} \\
\lgrvm{\W}{v_1}{v_2}{\cT_1 -> \cT_2}{
        \forall \W' \futureW \W.
        \forall v'_1, v'_2.\\ 
        &\qquad \quad (\W', v'_1, v'_2) \in  \setv{\cT_1} => 
       \lgrp{\W'}{{v_1}\; {v'_1}}{{v_2}\; {v'_2}} \in \sett{\cT_2}
      }\\
\lgrvm{\W}{v_1}{v_2}{\cT_1 \pairsy \cT_2}{\\ 
        & \qquad \quad
       \lgrp{\W}{\tproj{1}[v_1]}{\tproj{1}[v_2]} \in \sett{\cT_1} \mland   
        \lgrp{\W}{\tproj{2}[v_1]}{\tproj{2}[v_2]} \in \sett{\cT_2}
      }\\ 
\lgrvm{\W}{v_1}{v_2}{\forall X.\cT}{
        \forall \W' \futureW \W. \forall t_1, t_2, \cT_1, \cT_2, \alpha, \ev[1], \ev[2].\\ 
        & \qquad \quad  \forall R \in \reln[{\getIdxp}]{\cT_1}{\cT_2}. \\
        & \qquad \qquad \quad 
        (\twfsimpl{\getStorep[1]}{\cT_1} \land \twfsimpl{\getStorep[2]}{\cT_2}  \land\\ 
        & \qquad \qquad \quad
        \;\;\storeeval[\getStorep[1]] {v_1}  [\cT_1] \red \storeeval[\getStorep[1],\alpha := \cT_1] \cast{\ev[1]}{t_1} :: \tpesub(\cT) [\cT_1 / X] \mland \\
        & \qquad \qquad \quad
        \;\;\storeeval[\getStorep[2]] {v_2}  [\cT_2] \red \storeeval[\getStorep[2],\alpha := \cT_2] \cast{\ev[2]}{t_2} :: \tpesub(\cT) [\cT_2 / X])  => \\
        & \qquad \qquad \quad 
        \downstep \lgrp{\W' \extworld (\alpha, \cT_1, \cT_2,  R)}{t_1}{t_2} \in  \sett[\tpesub \lcorchete X \mapsto \alpha \rcorchete]{\cT}
  }\\
\setv{X} &=\quad \setv{\tpesub(X)}\\
\lgrvmalpha{\W}{\cast{\newev{\cE_{11}, \richtype{\cE_{12}}}[\key_1 ][\lock_1\cup \{\richtype{\cE_{12}}\}]}{u_1} :: \alpha}{\cast{\newev{\cE_{21}, \richtype{\cE_{22}}}}{u_2} :: \alpha}{\alpha}{ \\
      & \qquad \quad
      \lgrp{\W}{\cast{\newev{\cE_{11}, \cE_{12}}[\key_1][\lock_1]}{u_1} :: \getStore[1](\alpha)}{\cast{\newev{\cE_{21}, \cE_{22}}[\key_2][\lock_2]}{u_2} :: \getStore[2](\alpha)} \in \getRel(\alpha)}\\
\lgrvmalpha{\W}{\cast{\ev[1]}{u_1} :: \?}{
        \cast{\ev[2]}{u_2} :: \?}{\?}{ 
        \unknowify{\pi_2(\ev[i])} = \cT~\land  
        \\
    & \qquad \quad
        (\W, \cast{\ev[1]}{u_1} :: \cT, \cast{\ev[2]}{u_2} :: \cT) \in \setv{\cT} 
      }\\[0.2em]
\hline\\[-2.3ex]
\lgrtm{\W}{t_1}{t_2}{\cT}{
        \forall i < \getIdx, (\forall \Wstore_1, v_1. \ \storeeval[\getStore[1]] t_1 \red^i \storeeval[\Wstore_1] v_1 =>  \\
        & \qquad \quad
        \exists \W' \futureW  \W , v_2. \ \storeeval[\getStore[2]] t_2 \red^{*}\storeeval[\getStorep[2]] v_2 \land \getIdxp + i = \getIdx \land {} \\
        & \qquad \quad
        \getStorep[1] = \Wstore_1 \land \lgrp{\W'}{v_1}{v_2} \in \setv{\cT}) \mland \\
        & \qquad \quad
        (\forall \Wstore_1. \storeeval[\getStore[1]] t_1 \red^i \error => \exists \Wstore_2. \storeeval[\getStore[2]] t_2 \red^{*} \error) 
        }\\[0.2em]
\hline\\[-2.3ex]
\sets[\emptyenv] & = \quad \world\\
\sets[\Wstore, \alpha := \cT] & = \quad 
        \begin{block}
        \sets \cap \{\W \in \world| \getStore[1](\alpha) = \cT \land \getStore[2](\alpha) = \cT \mland \\ |- \getStore[1] \land |- \getStore[2] \land \getRel(\alpha) = \acotrel{\setv[\emptyset]{\cT}}{\getIdx}\}\\[0.2em]
        \end{block}\\
\hline\\[-2.3ex]
\setd[\emptyenv] & = \quad \set{(\W, \emptyset)| \W \in \world}\\
\setd[\Delta, X] & = \quad \set{(\W, \tpesub[X \mapsto \alpha])| (\W, \tpesub) \in \setd \land \alpha \in \dom(\W.\krel)}\\[0.2em]
\hline\\[-2.3ex]
\setg{\emptyenv} & = \quad \set{(\W, \emptyset)| \W \in \world}\\
\setg{\Gamma, x:{\cT}} & = \quad \set{(\W, \gamma[x \mapsto (v_1, v_2)])| (\W, \gamma) \in \setg{\Gamma} \land (\W, v_1, v_2) \in \setv{\cT}}\\[0.2em]
\hline\\[-2.3ex]
\logaproxg{t_1}{t_2}{\cT} & \delequal \quad 
        \begin{block}
       \staticgJ{t_1} \inTermT{\cT} \land \staticgJ{t_2} : \TermT{\cT} \land  \forall \W \in \sets, \tpesub, \gamma. \\
        ( (\W, \tpesub) \in \setd \land (\W, \gamma) \in \setg{\Gamma})  => (\W, \tpesub(\gamma_1(t_1)), \tpesub(\gamma_2(t_2))) \in \sett{\cT}
        \end{block}\\
\logeqg{t_1}{t_2}{\cT}  & \delequal \quad \logaproxg{t_1}{t_2}{\cT} \land \logaproxg{t_2}{t_1}{\cT}\\[0.2em]
\hline
\end{array} 
\end{flalign*}

\begin{align*}
\atom{\cT_1}{\cT_2} = & \{ (\W, t_1, t_2)| \getIdx < n \land \W \in \world  \land \staticgJ[\getStore[1], \cdot, \cdot]{t_1}  \inTermT{\cT_1} \land \staticgJ[\getStore[2], \cdot, \cdot]{t_2}  \inTermT{\cT_2}\} \\
\atomv{\cT_1}{\cT_2} = & \{(\W, v_1, v_2) \in \atom{\cT_1}{\cT_2}\}\quad\quad
\atomterm{\cT} =   \displaystyle\cup_{n \geq 0}\{ (\W, t_1, t_2) \in \atom{\tpesub(\cT)}{\tpesub(\cT)}\}\\
\atomvalue{\cT} = & \{ (\W, v_1, v_2) \in \atomterm{\cT}| 
\unlift{\pi_2(\getev{v_1})} = \unlift{\pi_2(\getev{v_2})} \}\\
\world = & \displaystyle\cup_{n \geq 0}{\worldn}\\
\worldn = & \{(j, \store_1, \store_2, \krel) \in \nat \times \tnstore \times \tnstore \times (\TypeName -> \relj)| \\ &
j < n \ \land |- \store_1 \ \land |- \store_2 \land \forall \alpha \in \dom(\krel). \krel(\alpha) \in \reln[j]{\store_1(\alpha)}{\store_2(\alpha)}\}\\
\reln{\cT_1}{\cT_2} = & 
\begin{block}
\{ R \in \atomv{\cT_1}{\cT_2}| \forall(\W, v_1, v_2) \in R. \forall \W' \futureW \W. (\W', v_1, v_2) \in R\}
\end{block}\\
\acotrel{R}{n} = & \set{(W, e_1, e_2) \in R| \getIdx \leq n}\quad\quad
\acotrel{\krel}{n} = \set{\alpha \mapsto \acotrel{R}{n}| \krel(\alpha) = R}\\
\krel' \futureW \krel \delequal & \forall \alpha \in \dom(\krel). \krel'(\alpha) = \krel(\alpha)\\
\W' \futureW \W \delequal &  \getIdxp \leq \getIdx \land \getStorep[1] \supseteq \getStore[1] \land \getStorep[2]\supseteq \getStore[2] \land \getRelp \futureW \acotrel{\getRel}{\getIdxp} \land \W', \W \in \world\\
\downstep W = & (j, \getStore[1], \getStore[2], \acotrel{\getRel}{j}) \quad \text{where } j = \getIdx[j]-1
\end{align*}
\end{small}
\vspace{-1em}
\caption{Gradual logical relation and auxiliary definitions}
\label{fig:glgrv2}\label{fig:lgr}
\end{figure}

We now turn to the technical details of gradual parametricity in \gsf. As explained above, we follow \lamB~\cite{ahmedAl:icfp2017} for our notion of gradual parametricity, due to the use of runtime type name generation for sealing, and the \sysF syntax that requires automatic insertion of (un)sealing evidences at runtime. 

We establish parametricity for \gsf by proving parametricity for \gsfe.
Specifically, we define a {\em step-indexed} logical relation for \gsfe terms, closely following the relation for \lamB. In the following, we focus on the few differences with the \lamB relation, essentially dealing with evidences.
The relation (Figure~\ref{fig:glgrv2}) is defined on tuples $(\W, t_1, t_2)$ that denote two related terms $t_1,t_2$ in a world $\W$. A world is composed of a step index $j$, 
two stores $\store[1]$ and $\store[2]$ used to typecheck and evaluate the related terms, and a mapping $\krel$, which maps type names to relations $R$, used to relate sealed values. The components of a world are accessed through a dot notation, \eg $W.j$ for the step index. 

The interpretations of values, terms, stores, name environments, and type environments are mutually defined, using the auxiliary definitions at the bottom of Figure~\ref{fig:glgrv2}. As usual, the value and term interpretations are indexed by a type and a type substitution $\tpesub$.
We use $\atom{\cT_1}{\cT_2}$ to denote a set of pair of terms of type $\cT_1$ and $\cT_2$, and worlds with a step index less than $n$. We write $\atomv{\cT_1}{\cT_2}$ to restrict that set to values, and $\atomterm{\cT}$ to denote a set of terms of the same type after substitution. The $\atomvalue{\cT}$ variant is similar to $\atomv{\cT_1}{\cT_2}$ but restricts the set to values that have, after substitution, equally precise evidences (the equality is after unlifting because two sealed values may be related under different instantiations). 
$\reln{\cT_1}{\cT_2} $ defines the set of relations of values of type $\cT_1$ and $\cT_2$. We use $\acotrel{R}{n}$ and $\acotrel{\krel}{n}$ to restrict the step index of the worlds to less than $n$. Finally, $\krel' \futureW \krel$ specifies that $\krel'$ is a future relation mapping of $\krel$ (and extension), and similarly $\W' \futureW \W$ expresses that $\W'$ is a future world of $\W$. The $\downstep$ operator lowers the step index of a world by 1.

The logical interpretation of terms of a given type enforces a ``termination-sensitive'' view of parametricity: if the first term yields a value, the second must produce a related value at that type; if the first term fails, so must the second. Note that 
$\atomvalue{\cT}$ requires the second component of the evidence of each value to have the same precision in order to enforce such sensitivity. Indeed, if one is allowed to be more precise than the other, then when later combined in the same context, the more precise value may induce failure while the other does not. 

Two base values are related if they are equal. Two functions are related if their application to related values yields related results. Two type abstractions are related if given any two types and any relation between them, the instantiated terms (without their unsealing evidence) are also related in a world extended ($\extworld$) with $\alpha$, the two instantiation types $\cT_1$ and $\cT_2$ and the chosen relation $R$ between sealed values. Note that the step index of this extended world is decreased by one, because we take a reduction step.
Two pairs are related if their components are pointwise related. Two sealed values are related at a type name $\alpha$ if, after unsealing, the resulting values are in the relation corresponding to $\alpha$ in the current world, $\getRel(\alpha)$.

Finally, two values are related at type $\?$ if they are related at the least-precise type with the same top-level constructor as the second component of the evidence, $\unknowify{\pi_2(\ev[i])}$. The function $\unknowifys$ extracts the top-level constructor of an evidence type, \eg:
$$\unknowify{\cE_1 -> \cE_2}=\? -> \? \qquad \unknowify{\forall X. \cE}=\forall X. \?$$
The intuition is that to be able to relate these unknown values we must take a step towards relating their actual content; evidence necessarily captures at least the top-level constructor (\eg~if a value is a function, the second evidence type is no less precise than $\? -> \?$, \ie~$\unknowify{\cE_1 -> \cE_2}$).

The logical relation is well-founded for two reasons: {\em (i)} in the $\?$ case, $\unknowify{\pi_2(\ev[i])}$ cannot itself be $\?$, as just explained; {\em (ii)} in each other recursive cases, the step index is lowered: for functions and pairs, the relation is between reducible expressions (applications, projections) that either take a step or fail; for type abstractions, the relation is with respect to a world whose indexed is lowered.

The interpretations of stores, type name environments and type environments are straightforward (Figure~\ref{fig:glgrv2}). The logical relation allows us to define logical approximation, whose symmetric extension is logical equivalence. Any well-typed \gsfe term is related to itself at its type:

\begin{restatable}[Fundamental Property]{theorem}{gfunprop}
\label{theorem:gfunprop}
If \;$\EnvSG t \inTermT{\cT}$ then $\logaproxg{t}{t}{\cT}$.
\end{restatable}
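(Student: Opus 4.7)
The plan is to prove the Fundamental Property by induction on the typing derivation $\EnvSG t \inTermT{\cT}$, establishing a compatibility lemma for each typing rule of \gsfe. Each compatibility lemma will state that if the immediate subterms of a construct are logically related to themselves (at the appropriate types), then so is the compound term. Since the relation $\logaproxg{t_1}{t_2}{\cT}$ is defined over closed terms obtained by applying substitutions $\tpesub$ and $\gamma$ drawn from $\setd{\Delta}$ and $\setg{\Gamma}$, each lemma will universally quantify over an arbitrary world $\W \in \setsSP{\getStore[]}$ and valid substitutions, and must establish the term relation for the substituted terms.

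Before attacking the compatibility lemmas, I would prove several auxiliary results. First, \emph{monotonicity} of $\setv{\cT}$ and $\sett{\cT}$ with respect to $\futureW$, which is needed whenever the world is extended (e.g., by a fresh type name at a type application). Second, a \emph{value-inclusion} lemma stating that if $(\W, v_1, v_2) \in \setv{\cT}$ then $(\W, v_1, v_2) \in \sett{\cT}$. Third, a \emph{substitution} lemma showing that $\setv{\cT}$ commutes appropriately with $\tpesub$, so that the $X$ case reduces to looking up the chosen relation in $\rho$. Fourth, and most importantly, a \emph{compositionality} lemma describing how casts around values preserve relatedness when combined via consistent transitivity: roughly, if $(\W, \cast{\ev[1]}{u_1} :: \cT, \cast{\ev[2]}{u_2} :: \cT) \in \setv{\cT}$ and $\ev[1] \trans{=} \ev'$, $\ev[2] \trans{=} \ev'$ are defined and yield evidence justifying $\cT \sim \cT'$, then the ascribed values at $\cT'$ remain related. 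This is the technical workhorse that underlies the ($E$asc) case and the return-of-type-application case.

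The difficult compatibility lemmas are ($E\Lambda$) and ($E$asc). For ($E\Lambda$), given an arbitrary future world $\W'$, arbitrary types $\cT_1, \cT_2$ and any relation $R \in \reln[{\W'.j}]{\cT_1}{\cT_2}$, I must show that after reducing $v_i\;[\cT_i]$ by rule ($R$app$\cT$) the stripped bodies $t_i$ become related at $\cT$ in the world extended with $(\alpha, \cT_1, \cT_2, R)$ under the environment $\tpesub[X \mapsto \alpha]$. This follows from the induction hypothesis on the body, provided I can match how rule ($R$app$\cT$) introduces $\evinst{\ev}{\evlift{\alpha}}$ and $\evoutE$ with the structure of the logical relation; the careful definition of $\evout$ from \S\ref{sec:ref-evidence} (applying $\unlift{\cdot}$ then re-lifting) is exactly what makes this work, since it ``forgets'' the sealing-specific information that would otherwise make the resulting evidence incompatible with the relation for unknown types. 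For ($E$asc), I must dispatch on whether $\cT'$ is a type variable, a type name, $\?$, or a ground constructor, and in each subcase use the compositionality lemma; the case where $\cT'$ reduces via the $\?$ clause of $\setv{\cdot}$ requires checking that $\unknowify{\pi_2(\ev[i])}$ is preserved, which again relies on the monotonicity and optimality properties of consistent transitivity (Lemma~\ref{lm:ctopt}).

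The main obstacle I expect is the interaction between consistent transitivity and the logical relation at abstract types and at $\?$. Specifically, the sealing/unsealing rules (unsl), (sealL), (identL) in Figure~\ref{fig:gsf-trans} are not symmetric in how they affect evidence, and the logical relation at $\alpha$ demands specific shapes of sealing evidence on both sides. Showing that when two related sealed values meet an unsealing evidence the results land in $\getRel(\alpha)$---and dually, that two values related at $\?$ cannot spuriously become related at a more precise type after a transitivity step---will likely require a careful case analysis on evidence shapes together with invariants about which positions may carry type names. Once that technical core is in place, the remaining compatibility lemmas (constants, variables, pairs, projections, operations, application) follow by routine application of the induction hypotheses and the monotonicity/inclusion lemmas.
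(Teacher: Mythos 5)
Your proposal follows essentially the same route as the paper: induction on the typing derivation via per-constructor compatibility lemmas, with a cast/ascription-preservation lemma as the central workhorse (the paper's ``Ascriptions Preserve Relations'' lemma, which you call compositionality), a lemma commuting $\setv{\cdot}$ with $\tpesub$ for the type-application case, and the observation that the unlift/lift roundtrip in $\evout$ is what makes the $\forall$ case go through. The only caveat is terminological: what the paper calls \emph{the} compositionality lemma is your ``substitution lemma'' (relating the interpretation at $\cT[\cT'/X]$ to that at $\cT$ under a synchronized type name $\alpha$), not your cast-preservation lemma---but both ingredients are present in your plan, so the approaches coincide.
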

As standard, the proof of the fundamental property uses compatibility lemmas for each term constructor and the compositionality lemma\iffullv{ (\S\ref{asec:LRProperties})}.
Almost every compatibility lemma relies on the fact that the ascription of two related values yield related terms.
\begin{restatable}[Ascriptions Preserve Relations]{lemma}{ascriptionlemmapaper}
\label{prop:ascriptonlemmapaper}
If
  $(\W, v_1, v_2) \in \setv{\cT}$,
  $\ijudgment{\ev}{\cT}{\cT'}$, 
  $\W \in \sets$, and $(\W, \tpesub) \in \setd$,
 then
 $(\W, \tpesubSI[1]{\ev} v_1 :: \tpesub(\cT'),\tpesubSI[2]{\ev} v_2 :: \tpesub(\cT')) \in \sett{\cT'}$.
\end{restatable}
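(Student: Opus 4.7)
\begin{proofsketch}
The plan is to proceed by induction on a suitable well-founded measure combining the step index $\W.j$ of the world and the structure of the types involved in the consistency judgment $\ev \Vdash \cT \sim \cT'$, using the fact that the relation on values of unknown type ultimately unfolds to a relation at a type whose top-level constructor is determined by the evidence.

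First, I would unfold the term relation $\sett{\cT'}$ and observe that the only reductions applicable to $\tpesubSI[i]{\ev}\, v_i :: \tpesub(\cT')$ are instances of rule $(R\text{asc})$: since $v_i = \cast{\ev[i]}{u_i} :: \cT_i$ is itself an ascribed raw value, the ascription reduces by composing $\ev[i]$ with $\tpesubSI[i]{\ev}$ via consistent transitivity, yielding either a value $\cast{(\ev[i] \trans{=} \tpesubSI[i]{\ev})}{u_i} :: \tpesub(\cT')$ or $\error$. The heart of the proof is therefore to show that (i) if consistent transitivity fails on one side, it fails on the other, and (ii) if both succeed, the resulting ascribed values are related at $\cT'$.

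For step (i), I would rely on the key property enforced by $\atomvalue{\cT}$ in the value relation, namely that $\unlift{\pi_2(\getev{v_1})} = \unlift{\pi_2(\getev{v_2})}$. Combined with the fact that the relations $R \in \reln{\cT_1}{\cT_2}$ stored in the world are restricted to $\atomv{\cT_1}{\cT_2}$ (so that sealed pairs share the same outer type-name structure), this guarantees that the shape of $\pi_2(\ev[i])$ matches on both sides, so consistent transitivity succeeds on one side iff it succeeds on the other. For step (ii), I would case-split on $\cT'$ and, symmetrically, on the shape of the evidence. For base types the result is immediate. For function and pair types, I would use the inductive hypothesis on the domain/codomain (respectively component) consistency judgments extracted from $\ev$ via $\invdom$, $\invcod$, $\evproj{i}$, together with the compatibility of application and projection in $\sett{\cdot}$: given any future world $\W' \futureW \W$ and related arguments, one uses Lemma~\ref{lm:ctopt} (associativity and optimality of consistent transitivity) to push the outer ascription through the eliminator and recover an application/projection of related values at more precise types. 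For type-name and unknown cases, one uses the definition of $\setv{\alpha}$ and $\setv{\?}$ directly, together with the lifting/unlifting coherence that ensures $\unknowify{\pi_2(\cdot)}$ is preserved by consistent transitivity.

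The universal case is where I expect the main difficulty. Here $\cT = \forall X.\cT_0$ and one must show that after ascription, the two type abstractions remain related. The subtlety is that upon later type application, rule $(R\text{app}\forall)$ generates a fresh type name and constructs both an inner evidence $\evinst{\ev'}{\evlift{\alpha}}$ and an outer evidence $\evout$ from the \emph{composed} ascription evidence $\ev[i] \trans{=} \tpesubSI[i]{\ev}$. I must therefore show that composing the existing unfolding in the definition of $\setv{\forall X.\cT_0}$ with this outer composition still produces evidence and ascribed bodies that land in $\sett[\tpesub[X \mapsto \alpha]]{\cT_0}$ in the extended world $\W' \extworld (\alpha, \cT_1, \cT_2, R)$. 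This requires a careful lemma about how $\evout$ interacts with consistent transitivity and the roundtrip $\enrich{\unlift{\cdot}}{\store}$ built into its definition --- precisely the design choice described in \S\ref{sec:ref-evidence} whose purpose is to make this compositionality go through. The step index decreases when descending under a type abstraction, which is what keeps the induction well-founded; the unknown-type case similarly decreases in type structure because $\unknowify{\pi_2(\ev[i])} \neq \?$.
\end{proofsketch}
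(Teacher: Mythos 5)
Your sketch is correct and follows essentially the same route the paper takes: unfold $\sett{\cT'}$ to a single $(R\text{asc})$ step, use the equal-precision condition on $\pi_2$ of the evidences built into $\atomvalue{\cT}$ to synchronize definedness/failure of consistent transitivity across the two sides, and then induct on the type/evidence structure (with the step index handling the $\forall$ and $\?$ cases), invoking associativity and optimality of $\trans{=}$ to push the composed ascription through eliminators and the $\evout$/instantiation interaction in the universal case. The only place your sketch is thinner than the actual argument is the auxiliary lemma that definedness of $\trans{=}$ is determined by the unlifted structure shared by both sides, but you correctly identify the hypothesis that makes it go through.
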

Note that type substitution on evidences takes as parameter the corresponding store: $\tpesubSI[i]{\ev}$ is syntactic sugar for $\tpesub(\getStore[i],\ev)$, lifting each substituted type name in the process, \eg~if $\tpesub(X) = \alpha$, $\getStore[1](\alpha) = \Int$, and $\getStore[2](\alpha) = \Bool$, then
$\tpesubSI[1]{\pr{X,X}} = \pr{\richtype{\Int}, \richtype{\Int}}$, and $\tpesubSI[2]{\pr{X,X}} = \pr{\richtype{\Bool}, \richtype{\Bool}}$.

\subsection{Parametricity vs. the \gdgg in \gsf}
\label{sec:gdd-violation}

We now give a different perspective from that presented in \S\ref{sec:dgg-violation}, regarding the violation of the general dynamic gradual guarantee (DGG, stated with respect to $\gprec$). More precisely, we show that the definition of parametricity for \gsf (\S\ref{sec:gsf-param}) is incompatible with the \gdgg. To do so, we again prove that there exists two terms in \gsf, related by precision, whose behavior violates the DGG, but this time we do so with a proof of the intermediate results that is fully-driven by the definition of parametricity, and not by monotonicity of consistent transitivity.
We present the proof sketch of the intermediate results in order to highlight the key properties that imply this incompatibility. This is particularly relevant because these properties also manifest in \lamB, for which the DGG has not been formally explored yet. 

Recall from \S\ref{sec:dgg-violation} that the term that helps us establish the violation of the DGG is  $\idu$, a variant of the polymorphic identity function $\idx$ whose term variable $x$ is given the unknown type. This term always fails when fully applied.

\begin{restatable}{lemma}{idfails}
\label{theorem:id?fails}
For any $\staticgJ[ ]  v : \?$ and $|- {\cT}$, we have $\gsfreds{(\Lambda X. \lambda x:\?.x::X) \;[\cT] \; v}{\error}$.
\end{restatable}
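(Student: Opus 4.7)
\begin{proofsketch}
The plan is to reduce $(\Lambda X.\lambda x{:}\?.x::X)\;[\cT]\;v$ step by step in \gsfe and pinpoint the consistent transitivity combination that is guaranteed to fail. The term elaborates to a \gsfe value of type $\forall X.\?\barr X$, in which the internal ascription $x::X$ is guarded by initial evidence $\pr{X,X}$ (the meet of $\?$ and $X$). Rule ($R$app$\cT$) generates a fresh type name $\alpha := \cT$ and substitutes $\evlift{\alpha}$ for $X$; the inner ascription thus becomes guarded by $\pr{\richtype{\cT},\richtype{\cT}}$. After resolving the outer/inner ascription pair using $\evout$ via ($R$asc), we obtain an ascribed lambda of type $\?\barr\cT$.

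Applying this lambda to $v = \cast{\ev[v]}{u_v}::\?$ with ($R$app) substitutes $v$ for $x$ in the body (the argument-side combination $\ev[v] \trans{=} \pr{\?,\?}$ is unproblematic and leaves $\ev[v]$ essentially unchanged), producing a configuration whose innermost redex is
\[
\cast{\pr{\richtype{\cT},\richtype{\cT}}}{\,\cast{\ev[v]}{u_v}::\?\,}::\alpha.
\]
By ($R$asc), reducing this requires the combination $\ev[v] \trans{=} \pr{\richtype{\cT},\richtype{\cT}}$ to be defined.

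The central claim is that this combination is always undefined, so the $\error$ branch of ($R$asc) fires and ($Rf$err) propagates the error to the top level. The argument rests on a freshness invariant: because $\alpha$ is generated \emph{after} $v$ was constructed, neither $\ev[v]$ nor the intrinsic type $\cT_u$ of the raw value $u_v$ can mention $\alpha$ or any enrichment $\richtype{\cdot}$ of it. Moreover, since $v$ is a value whose $u_v$ is a constant, lambda, type abstraction, or pair, the second component of $\ev[v]$ is a nontrivial evidence type determined by $\cT_u$---in particular it is neither $\?$ nor of the form $\richtype{\cT}$. Inspecting the consistent transitivity rules of Figure~\ref{fig:gsf-trans} (and their appendix counterparts), a composition $\pr{\cE_1,\cE_2} \trans{=} \pr{\richtype{\cT},\richtype{\cT}}$ can only be defined when $\cE_2$ is either $\?$ (via the idL-family rules) or of the form $\richtype{\cT}$ (via unsl followed by structural recursion); since neither holds here, no rule applies.

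The main technical obstacle will be the case analysis on the top-level shape of $\ev[v]$ required to confirm that no consistent transitivity rule can fire. This relies on an invariant, preserved by ($R$app$\cT$) and provable by induction on reduction, that evidences inside a well-formed value in a store $\Sigma$ only reference type names from $\dom(\Sigma)$, so in particular they cannot mention the name freshly introduced by ($R$app$\cT$).
\end{proofsketch}
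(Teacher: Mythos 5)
Your argument is plausible and, as far as the operational semantics of \gsfe goes, reaches the right failure point; but it takes a genuinely different route from the paper. The paper does \emph{not} trace the reduction and argue about which consistent-transitivity rules can fire. Instead, it proves Lemma~\ref{theorem:id?fails} semantically: by the fundamental property (Theorem~\ref{theorem:gfunprop}), $v_\forall$ is related to itself at $\forall X.\,\? -> X$; instantiating the logical relation at two \emph{structurally different} types $\cT_1=\cT$ and $\cT_2$ (with an auxiliary Lemma~\ref{lm:incompatgeneral} to get the two arguments related at $\?$), it concludes that if the application reduced to a value, the two results would have to be related at $X$, i.e.\ the \emph{same} raw value $u$ would have to be ascribed to the structurally different $\cT_1$ and $\cT_2$ --- impossible. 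Since the term cannot diverge, it must step to $\error$. The paper is explicit about why it chooses this route: the result then ``holds irrespective of the actual choices for representing evidence,'' which is what lets \S\ref{sec:gdd-violation} transfer the DGG-incompatibility to \lamB and frame it as a consequence of the \emph{logical interpretation of} $\forall X.\cT$ rather than of any particular transitivity rule. Your proof, by contrast, is wedded to the concrete evidence representation; it would buy a more elementary, self-contained argument (no logical relation, no worlds), but it would not support the paper's broader claim.

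There is also a deferred step in your sketch that is really the whole crux: the claim that $\pr{\cE_1,\cE_2} \trans{=} \pr{\richtype{\cE},\richtype{\cE}}$ is undefined whenever $\cE_2$ is neither $\?$ nor an evidence type name for the same $\alpha$. Figure~\ref{fig:gsf-trans} shows only \emph{selected} rules, so this requires an exhaustive case analysis over the full definition (including the symmetric (idR)/(sealR) variants and the structural rules); in particular you must check that (sealL) cannot rescue the combination by reducing it to $\pr{\cE_1,\cE_2}\trans{}\pr{\richtype{\cE},\cE}$ and then succeeding some other way. Your freshness invariant (that $\ev[v]$ mentions no type names because $v$ is closed and typed in the empty store, so in particular not the freshly generated $\alpha$) is the right supporting fact, but until the case analysis is actually carried out the central claim is asserted rather than proved.
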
 
\begin{proof}
Let $\idu \triangleq \Lambda X. \lambda x:\?.x::X$, $\staticgJ[ ] \idu  \translate v_{\forall} : \forall X. \? -> X$, and $v$ s.t. $\staticgJ[ ]  v \translate v_\? : \?$. 

By the fundamental property (Th.~\ref{theorem:gfunprop}), 
$\logaproxg[]{v_{\forall}}{v_{\forall}}{\forall X. \? -> X}$ so for any $\W_0 \in \sets[\emptyenv]$, $(\W_0, v_{\forall}, v_{\forall}) \in \sett[\emptyset]{\forall X. \? -> X}$. Because $v_{\forall}$ is a value, $(\W_0, v_{\forall}, v_{\forall}) \in \setv[\emptyset]{\forall X. \? -> X}$.
By reduction, $\conf[\cdot]{v_{\forall} \; [\cT_i]} \red^{*}\conf[\store'_i]{\cast{\evp[i]}{v_i} :: \? -> \cT_i}$ for some $ \evp[i]$, $\ev[i]$ and $\ev[i\alpha]$, where $\store'_i = \{ \alpha := \cT_i\}$ and $v_i = \cast{\ev[i]}{(\lambda x: \?. \cast{(\ev[i\alpha]}{x} :: \alpha))} :: \? -> \alpha$.
We can instantiate the definition of $\setv[\emptyset]{\forall X. \? -> X}$
with $\W_0$, $\cT_1 = \cT$ and $\cT_2$ structurally different (and different from $\?$), some $R \in \reln[{\getIdxg[\W][0]}]{\cT_1}{\cT_2}$, $v_1$, $v_2$, $\evp[1]$ and $\evp[2]$, then we have that 
$(\W_1, v_1, v_2) \in \sett[X \mapsto \alpha]{\? -> X}$, where $\W_1 = (\downstep (\W_0 \extworld (\alpha, \cT_1, \cT_2,  R))$. As $v_1$ and $v_2$ are values, $(\W_1, v_1, v_2) \in \setv[X \mapsto \alpha]{\? -> X}$.
Also, by associativity of consistent transitivity, the reduction of $\conf[\store'_i]({\cast{\evp[i]}{v_i} :: \? -> \cT_i}) \; v_{\?}$ is equivalent to that of $\conf[\store'_i]{\cast{\cod(\evp[i])}{(v_i \; (\cast{\dom(\evp[i])}{v_{\?}} :: \?))} :: \cT_i}$.

By the fundamental property (Th.~\ref{theorem:gfunprop}) we know that $\logaproxg[]{v_\?}{v_\?}{\?}$; we can instantiate this definition with $\W_0$, and we have that $(\W_0, v_\?, v_\?) \in \setv[\emptyset]{\?}$. 
By Lemma~\ref{lm:incompatgeneral},
$(\W_1, \cast{\dom(\evp[1])}{v_{\?}} :: \? , \cast{\dom(\evp[2])}{v_{\?}} :: \?) \in \sett[X \mapsto \alpha]{\?}$.
If $\cast{\dom(\evp[1])}{v_{\?}} :: \?$ reduces to $\error$ then the result follows immediately. 
Otherwise, $\conf[\store'_i]{\cast{\dom(\evp[1])}{v_{\?}} :: \?} \red^{*} \conf[\store'_i]{v''_i}$, and $(\W_2, v''_1, v''_2) \in \setv[X \mapsto \alpha]{\?}$, where $\W_2 =  \downstep \W_1$, and some $v''_1$ and $v''_2$.
We can instantiate the definition of $\setv[X \mapsto \alpha]{\? -> X}$ with $\W_2$, $v''_1$ and $v''_2$, obtaining that $(\W_2, v_1 \; v''_1, v_2 \;  v''_2) \in \sett[X \mapsto \alpha]{X}$.
We then proceed by contradiction. Suppose that $\conf[\store'_i]{v_i \; v''_i} \red^{*} \conf[\store''_i]{v'_i}$ (for a big-enough step index).
If $v''_i  = \cast{\ev''_{iv}}{u} :: \?$, then by evaluation $v'_i = \cast{\evp[iv]}{u} :: \alpha$, for some $\evp[iv]$.
But by definition of $\setv[X \mapsto \alpha]{X}$, it must be the case that for some $\W_3 \futureW \W_2$, $(\W_3, \cast{\evp[1v]}{u} :: \cT_1, \cast{\evp[2v]}{u} :: \cT_2) \in R$, which is impossible because $u$ cannot be ascribed to structurally different types $\cT_1$ and $\cT_2$. Therefore $v_1 \; v''_1$ cannot reduce to a value, and hence the term ${v_{\forall}\;[\cT] \; v_\?}$ cannot reduce to a value either. Because $v_{\forall}$ is non-diverging, its application must produce \error.
\end{proof}

The proof above uses a (rather technical) lemma, which crisply captures when Lemma~\ref{theorem:id?fails} holds in a more general setting.
Intuitively, if we consider one step of execution of the application of
$(\Lambda X. \lambda x:\?.t)$ to two different arbitrary types, and the resulting outermost evidences/casts,
then the pair of any $v'$ ascribed to 
the domains of each evidence/cast are related computations at type $\?$.

\begin{restatable}{lemma}{incompatgeneral}
\label{lm:incompatgeneral}
Let $|- (\Lambda X. \lambda x:\?.t) \translate v_\forall : \forall X. \? -> X$ and
$|- v \translate v_\? : \?$.
For any $\cT_1$ and $\cT_2$, such that $\unknowify{\cT_1} \not= \unknowify{\cT_2}$,
if 
$\conf[\cdot]{v_\forall \; [\cT_i]} \red
\conf[\alpha := \cT_i]{\cast{\ev_i}{v_i} :: \? -> \cT_i}$, $\isjudgment{\ev[i]}{\? -> \alpha}{\? -> \cT_i}$\\
then
$\forall \W \in \sets[\emptyenv], \forall R \in \reln[{\getIdx}]{\cT_1}{\cT_2}$,
$\lgrp{\W \extworld (\alpha, \cT_1, \cT_2, R)}{\cast{\invdom(\ev[1])}{v'} :: \?}{\cast{\invdom(\ev[2])}{v'} :: \?} \in  \sett[X \mapsto \alpha]{\?}$
\end{restatable}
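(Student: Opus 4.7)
The proof plan exploits the following key observation: $\invdom(\ev[1])$ and $\invdom(\ev[2])$ are syntactically identical, so the two terms $\cast{\invdom(\ev[1])}{v'}::\?$ and $\cast{\invdom(\ev[2])}{v'}::\?$ are literally the same term; the claim then reduces to showing this term is related to itself in the extended world, which follows from the Fundamental Property and monotonicity of the logical relation under future worlds.

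First, I would unpack the structure of $\ev[i]$ produced by Rule $(R\mathrm{app}\cT)$ at the type application $v_\forall\;[\cT_i]$. Because $v_\forall$ elaborates $\Lambda X.\lambda x:\?.t$ ascribed to $\forall X.\?\to X$, the evidence feeding the type application is the reflexive $\ev[\forall X. \? \to X]$, so the roundtrip $\enrich{\unlift{\cdot}}{\store}$ that defines $\cEU$ leaves the polymorphic structure intact. Unfolding the definition of $\evout$ yields $\ev[i] = \pr{\? \to \richtype{\cT_i}, \? \to \cT_i}$, and hence
\[
\invdom(\ev[1]) \;=\; \invdom(\ev[2]) \;=\; \pr{\?, \?}.
\]
Consequently the two cast subterms in the conclusion are syntactically the same term, call it $t^{\star} \triangleq \cast{\pr{\?,\?}}{v'}::\?$.

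Second, I would invoke the Fundamental Property (Theorem~\ref{theorem:gfunprop}) on $t^{\star}$. Since $v' : \?$ in the empty store (as in the intended usage from Lemma~\ref{theorem:id?fails}, where $v' = v_{\?}$ is produced in an empty store), $t^{\star}$ is well-typed at $\?$ in the empty store, so by the Fundamental Property $(\W_0, t^{\star}, t^{\star}) \in \sett[\emptyset]{\?}$ for any initial $\W_0 \in \sets[\emptyenv]$ with a sufficiently large step index. Because the closed type $\?$ is invariant under any type substitution, $\sett[\emptyset]{\?} = \sett[X \mapsto \alpha]{\?}$.

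Third, I would transport this relation to the extended world. The world $\W \extworld (\alpha, \cT_1, \cT_2, R)$ is a future world of an appropriately chosen $\W_0$ (since it only extends the stores and relation mapping with a fresh $\alpha$), so by monotonicity of the step-indexed logical relation under $\futureW$, the relation lifts: $(\W \extworld (\alpha, \cT_1, \cT_2, R), t^{\star}, t^{\star}) \in \sett[X \mapsto \alpha]{\?}$. Substituting back the syntactic form of $t^{\star}$ on each side yields the stated conclusion.

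The main obstacle will be the first step: carefully tracing through the definition of $\evout$ (with its $\unlift$, $\enrich$, $\cEU$, and substitution operations) to verify that $\invdom(\ev[i])$ is independent of~$i$ and equals $\pr{\?,\?}$. This symmetry is the crux: it explains why the hypothesis $\unknowify{\cT_1} \neq \unknowify{\cT_2}$ is not actually needed for this intermediate lemma and is only invoked later by Lemma~\ref{theorem:id?fails} to derive the sealed-value contradiction. Once the symmetry is nailed down, the rest is a direct appeal to the Fundamental Property and standard future-world monotonicity of the step-indexed relation.
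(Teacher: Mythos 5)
Your proposal is correct and takes essentially the same route as the paper's own proof: the paper likewise unfolds the elaborated value to compute $\ev[i] = \pr{\? -> \evlift{\alpha_i}, \? -> \cE_i}$, observes that $\invdom(\ev[i]) = \pr{\?,\?}$ and that this ascription is inert (since $\ev \trans{} \pr{\?,\?} = \ev$ for any $\ev$), and concludes by applying the Fundamental Property to $v_\?$ and instantiating the resulting relation at the extended world $\W \extworld (\alpha, \cT_1, \cT_2, R)$. The only cosmetic differences are that the paper types $v_\?$ under $\Delta = X$ so as to land in $\sett[X \mapsto \alpha]{\?}$ directly via $\setd[X]$ (rather than arguing that $\sett[\emptyset]{\?}$ coincides with $\sett[X \mapsto \alpha]{\?}$) and handles the one reduction step with the weakening (step-down) lemma; your observation that the hypothesis $\unknowify{\cT_1} \not= \unknowify{\cT_2}$ is unused in this intermediate lemma is also borne out by the paper's proof.
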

\begin{proof}
  Notice that $v_\forall$ has to be of the form $(\cast{\evp}{(\Lambda X. \cast{\evpp}{(\lambda x: \?. t')} :: \? -> X}) :: \forall X. \? -> X)$, where $\evp = \pr{\forall X. \? -> X, \forall X. \? -> X}$ and $\evpp = \pr{\? -> X, \? -> X}$.
  Then $\conf[\cdot]{v_\forall \; [\cT_i]} \red  \cast{\pr{\? -> \evlift{\alpha_i}, \? -> \cE_i}}{t'}$ for some $t'$, where $\evlift{\alpha_i} = \enrich{\alpha}{\alpha \mapsto \cT_i}$ and $\cE_i = \enrich{\cT_i}{\cdot}$.
  We know that $\cdot; \cdot; \cdot|- v_\? : \?$ then as $X \not\in \FTV(v)$, $\cdot; X; \cdot|- v_\? : \?$, therefore by the fundamental property (Thm~\ref{theorem:gfunprop}),
  $\logaproxg[\cdot; X; \cdot]{v_\?}{v_\?}{\?}$, therefore as $\W \in \sets[\emptyenv]$, we can pick
  $\W' = \W \extworld (\alpha, \cT_1, \cT_2, R) \in \sets[\emptyenv]$, and $(\W', X \mapsto \alpha) \in \setd[X]$ and thus conclude that $\lgrp{\W'}{v_\?}{v_\?} \in  \sett[X \mapsto \alpha]{\?}$.
  Now notice that $\invdom(\ev[i]) = \pr{\?, \?}$, but $\ev \trans{} \pr{\?, \?} = \ev$ for any evidence $\ev$, therefore
  $\conf[\alpha := \cT_i]{\cast{\invdom(\ev[i])}{v_\?} :: \?} \red \conf[\alpha := \cT_i]{v_\?}$, then we have to prove that
  $\lgrp{\downstep \W'}{v_\?}{v_\?} \in  \sett[X \mapsto \alpha]{\?}$ which follows directly from the weakening lemma.
\end{proof}

As a consequence of Lemma~\ref{theorem:id?fails}, the dynamic gradual guarantee is violated in \gsf.
\begin{restatable}{corollary}{incompatibility}
\label{theorem:incompat}
There exist $|- t_1 : \cT$ and $t_2 \sqsupseteq t_1$ such that $\gsfreds{t_1}{v}$ and $\gsfreds{t_2}{\error}$.
\end{restatable}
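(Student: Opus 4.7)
The plan is to exhibit concrete witnesses $t_1$ and $t_2$, appealing directly to Lemma~\ref{theorem:id?fails} for the failing side and to the conservative-extension results (Propositions~\ref{prop:static-eq} and \ref{prop:dgequivs}) for the successful side. Specifically, I would take
\[
t_1 \triangleq (\Lambda X.\,\lambda x{:}X.\, x :: X)\;[\Int]\; 1
\qquad\text{and}\qquad
t_2 \triangleq (\Lambda X.\,\lambda x{:}\?.\, x :: X)\;[\Int]\; 1,
\]
so that $t_1$ and $t_2$ share the same surface structure, differing only in the annotation on the bound variable $x$. The ascription $x::X$ is well-typed in both terms (trivially in $t_1$, and in $t_2$ because $\? \sim X$), which ensures that both are well-typed \gsf terms at the same type $\Int$.

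Next I would verify $t_1 \gprec t_2$. This is a structural check: using the inductive rules for type precision in Figure~\ref{fig:gsf-statics}, the only difference between the two terms is the annotation on $x$, namely $X$ versus $\?$, and $X \gprec \?$ holds by definition; every other subterm (including the instantiation type $\Int$, the ascription type $X$, and the argument $1$) coincides.

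For the two reduction behaviors, I would handle $t_1$ first: it is a fully static \gsf term, so by Proposition~\ref{prop:static-eq} it is typeable in \SPFL, and by Proposition~\ref{prop:dgequivs} its evaluation in \gsf cannot produce $\error$; a direct \SPFL reduction trace gives $t_1 \gsfredsy 1$. For $t_2$, I would instantiate Lemma~\ref{theorem:id?fails} with $\cT = \Int$ and $v = 1$, yielding $t_2 \gsfredsy \error$.

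There is no real obstacle here, since all of the technical content has been established in the preceding lemmas. The only delicate point is the formulation of $t_1$ and $t_2$ so that they are syntactically related by precision (as the paper's notion of term precision requires matching structure): this is why I include the vacuous ascription $::X$ already in $t_1$, so that the precision relation is established solely by the change $X \gprec \?$ in the binder annotation rather than by structural rewrapping.
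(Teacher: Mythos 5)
Your choice of witnesses and overall decomposition match the paper's proof exactly: the paper likewise takes $t_1 \triangleq \idx\;[\cT]\;v$ and $t_2 \triangleq \idu\;[\cT]\;v'$ with $\idx = \Lambda X.\lambda x{:}X.\,x::X$ and $\idu = \Lambda X.\lambda x{:}\?.\,x::X$, obtains the successful reduction of $t_1$ by direct evaluation, and obtains $\gsfreds{t_2}{\error}$ from Lemma~\ref{theorem:id?fails}. There is, however, one step in your version that does not go through as written: you instantiate Lemma~\ref{theorem:id?fails} with $v = 1$, but the lemma's hypothesis is $\vdash v : \?$, and in \gsf the constant $1$ has type $\Int$, not $\?$ --- there is no subsumption, so only an ascription such as $1::\?$ can give a base constant the unknown type. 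Hence the lemma does not literally apply to your $t_2 = (\Lambda X.\lambda x{:}\?.\,x::X)\;[\Int]\;1$ (the term is well-typed via consistency in the application rule, but its argument is not a value of type $\?$). The paper sidesteps this by taking the argument of $t_2$ to be a value $v'$ with $\vdash v' : \?$ and the argument of $t_1$ to be a more precise $v \gprec v'$ (concretely, $1::\Int$ versus $1::\?$), which keeps $t_1 \gprec t_2$ while satisfying the lemma's hypothesis. This is exactly the structural-matching concern you anticipated when you inserted the vacuous $::X$ ascription into $t_1$; the same care is needed in the argument position. The repair is mechanical --- either adjust both arguments as the paper does, or argue separately that the elaborations of $\cdots\;1$ and $\cdots\;(1::\?)$ coincide because the translation inserts the $\?$ ascription on the argument anyway --- but one of the two is needed for the appeal to Lemma~\ref{theorem:id?fails} to be valid.
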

\begin{proof} 
Let $\mathit{id_X} \triangleq \Lambda X. \lambda x: X. x :: X$, and $\mathit{id_\?} \triangleq \Lambda X. \lambda x: \?. x :: X$. By definition of precision, we have $id_X \gprec id_\?$. Let $\staticgJ[ ]{v} \inTermT{\cT}$ and $\staticgJ[ ]{v'} \inTermT{\?}$, such that $v \gprec v'$.
Pose $t_1 \triangleq \mathit{id_X}\; [\cT] \; v$ and $t_2 \triangleq  \mathit{id_\?}\; [\cT] \; v'$. By definition of precision, we have $t_1\gprec t_2$.
By evaluation, $\gsfreds{t_1}{v}$.
But by Lemma~\ref{theorem:id?fails}, $\gsfreds{t_2}{\error}$.
\end{proof}

Interestingly, Lemma~\ref{theorem:id?fails} holds irrespective of the actual choices for representing evidence in \gsfe.
The key reason is the logical interpretation of $\forall X.\cT$. Therefore the incompatibility described here does not apply only to \gsf but to other gradual languages that use similar logical relations such as \lamB: in fact, we have been able to prove that Lemmas~\ref{theorem:id?fails} and~\ref{lm:incompatgeneral} also hold in \lamB (by using casts instead of evidence), so we conjecture that the reason of the incompatibility is the same as in \gsf.

\subsection{Gradual Free Theorems in \gsf}
\label{sec:free-theorems}

The parametricity logical relation (\S\ref{sec:gsf-param}) allows us to define notions of logical approximation ($\preceq$) and equivalence ($\approx$) that are sound with respect to contextual approximation ($\casy$) and equivalence ($\cesy$), and hence can be used to derive free theorems about well-typed \gsf terms~\cite{wadler:fpca89,ahmedAl:icfp2017}. The definitions of contextual approximation and equivalence, and the soundness of the logical relation, are fairly standard\iffullv{ and left to \S\ref{asec:ContextualEquivalence}}.
As shown by \citet{ahmedAl:icfp2017}, in a gradual setting, the free theorems that hold for \sysF are weaker, as they have to be understood ``modulo errors and divergence''. 
\citet{ahmedAl:icfp2017} prove two such free theorems in \lamB. 
However, these free theorems only concern {\em fully static} type signatures. This leaves unanswered the question of what {\em imprecise} free theorems are enabled by gradual parametricity. To the best of our knowledge, this topic has not been formally developed in the literature so far, despite several claims about expected theorems, exposed hereafter. 

\citet{igarashiAl:icfp2017} report that the \sysF polymorphic identity function, if allowed to be cast to $\forall X. \? -> X$, would always trigger a runtime error when applied, suggesting that functions of type $\forall X. \? -> X$ are always failing. Consequently, \sysFg rejects such a cast by adjusting the precision relation (\S\ref{sec:static-issues}). But the corresponding free theorem is not proven.
Also, \citet{ahmedAl:popl2011} declare that parametricity dictates that any value of type $\forall X. X -> \?$ is either constant or always failing or diverging (p.7). This gradual free theorem is not proven either. In fact, in both an older system \citep{ahmedAl:stop2009} and its newest version \citep{ahmedAl:icfp2017}, as well as in \sysFg, casting the identity function to $\forall X. X -> \?$ yields a function 
that returns {\em without errors}, though the returned value is still sealed, and as such unusable (\S\ref{sec:dynamic-issues}). 
The parametricity relation in \gsf does not impose such behavior: it only imposes uniformity of behavior, including failure, of polymorphic terms, which leaves some freedom regarding when to fail. In particular, we show next that ascribing the \sysF identity function to $\forall X. \? -> X$ yields a function that behaves exactly as the identity function (and hence never fails).

We start by defining a small lemma, which states that if we ascribe any value to a less precise type (according to $\gprec$), then the term steps to a less strictly precise value (according to $\sprec$):
\begin{restatable}{lemma}{ascribingharmeless}
\label{prop:ascribingharmeless}
Consider $v$ and $\cT'$ such that $|- v : \cT$, $|- v \translate v''  : \cT$,  and $\cT \gprec \cT'$. Then 
$\gsfreds{v :: \cT'}{v'}$ and $v'' \sprec v'$.
\end{restatable}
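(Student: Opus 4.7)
The plan is to trace the elaboration of $v :: \cT'$ into \gsfe, reduce the result in one step, and then apply the strict-precision rule for ascription. Since $v$ is a \gsf value, its elaboration $v''$ is a \gsfe value of the form $\cast{\ev[v]}{u}::\cT$, where $u$ is a raw value. By the AGT-derived translation---which inserts at each typing position the reflexive interior of the corresponding consistency judgment---the outermost evidence on a freshly elaborated value is $\ev[v] = \pr{\cT,\cT}$, as verified by case analysis on $v$ (constants, lambda abstractions, type abstractions).

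The elaboration of $v :: \cT'$ yields $\cast{\ev}{v''}::\cT'$, where $\ev$ is the interior of the judgment $\cT \sim \cT'$. Since $\cT \gprec \cT'$, the precision meet satisfies $\cT \meet \cT' = \cT$, so $\ev = \pr{\cT,\cT}$. Rule $(R\text{asc})$ reduces this configuration to $\cast{(\ev[v] \trans{=} \ev)}{u}::\cT'$ whenever consistent transitivity is defined. Because $\ev[v]$ and $\ev$ are both the reflexive evidence $\pr{\cT,\cT}$, their combination succeeds---the middle components coincide---and yields $\pr{\cT,\cT}$ by the identity-like rules of consistent transitivity applied structurally. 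Hence $\gsfreds{v :: \cT'}{v'}$ with $v' = \cast{\pr{\cT,\cT}}{u}::\cT'$.

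To establish $v'' \sprec v'$, I would apply rule $(\sprec\text{asc}_{\ev})$, whose three premises become: (i) $\pr{\cT,\cT} \sprec \pr{\cT,\cT}$, by reflexivity of $\sprec$ on evidence; (ii) $\impR{u}{u}{\cT}{\cT}$, by reflexivity of $\sprec$ on raw values, proved by a straightforward structural induction using the rules $(\sprec\const_{\ev})$, $(\sprec\lambda_{\ev})$, and $(\sprec\Lambda_{\ev})$; and (iii) $\cT \gprec \cT'$, by hypothesis. The conclusion follows immediately.

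The main obstacle is the careful verification that consistent transitivity composes two copies of the reflexive evidence $\pr{\cT,\cT}$ back to $\pr{\cT,\cT}$. For types $\cT$ involving function or universal constructors, the structural rules of $\trans{=}$ must be unfolded case-by-case to confirm this identity behavior (in particular the interaction with lifted type-name annotations when $\cT$ contains runtime-generated type names). A secondary step is establishing reflexivity of $\sprec$ on raw values and on translated terms; while standard, this is not stated explicitly as a lemma in the paper and must be proved by mutual structural induction before it can be invoked in premise (ii) of $(\sprec\text{asc}_{\ev})$.
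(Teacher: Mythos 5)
Your proof is correct, but it isolates the key evidence computation differently from the paper. The paper's proof rests on a single, more general auxiliary lemma: for \emph{any} evidence with $\isjudgment{\ev}{\cT_1}{\cT_2}$, if $\evp$ is the initial evidence of $\cT_2 \sim \cT_3$ with $\cT_2 \gprec \cT_3$, then $\ev \trans{=} \evp = \ev$ --- that is, the evidence introduced by a widening ascription is absorbed on the right regardless of what the left evidence is. You instead first pin down the left evidence as the reflexive $\pr{\cT,\cT}$ (by inspecting how the elaboration annotates source values) and then prove the idempotence $\pr{\cT,\cT} \trans{=} \pr{\cT,\cT} = \pr{\cT,\cT}$. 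Both routes close the argument for this statement, because under $\vdash v : \cT$ (empty store, closed term) the elaborated evidence of a source value is indeed reflexive and contains no type names, so the delicate $\richtype{\cE}$ interactions you flag cannot actually arise here. What the paper's formulation buys is generality and reuse: it applies unchanged to values whose evidence has already gained precision during reduction (where the evidence is no longer $\pr{\cT,\cT}$), and it avoids the extra induction over the translation that your first step requires. What your formulation buys is a fully concrete computation that makes the appeal to rule ($\sprec$asc$_{\ev}$) immediate. The remaining obligations you identify --- reflexivity of $\sprec$ on evidence and on elaborated raw values --- are equally needed (implicitly) by the paper's one-line proof, so recording them as separate inductions is appropriate rather than a gap.
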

\begin{proof}
The proof of this lemma is straightforward using an auxiliary lemma
that states that consistent transitivity between an evidence $\ev$ and the initial evidence between the outermost type of the $\ev$ consistent judgment and a less precise type, has no effect on $\ev$. Formally,
consider $\isjudgment{\ev}{\cT_1}{\cT_2}$, and $\evp$ initial evidence of  $\cT_2 \sim \cT_3$, for $\cT_3$ such that $\cT_2 \gprec \cT_3$, then $\ev \trans{=} \evp = \ev$.
\end{proof}

The \sdgg-related Lemmas~\ref{prop:dggLemma} and~\ref{prop:ascribingharmeless} help us prove that in \gsf types $\forall X. \? -> X$ and $\forall X. X -> \?$ are inhabited by non-constant, non-failing, parametricity-preserving terms. In particular, both types admit the ascribed System F identity function, among many others (for instance, \mbox{$\Lambda X.\lambda x:X.\lambda f: X\!->\!X. f\; x$} of type $\forall X. X\!->\!(X\!->\!X)\!->\!X$ can also be ascribed to $\forall X. X\!->\!\?$). 

We formalize this using the following corollary:
\begin{restatable}{corollary}{corfreetheorem}
\label{prop:corfreetheorem}
Let $t$ and $v$ be static terms such that $|- t : \forall X.T$, $|- v : T'$, and $\gsfreds{t[T']\;v}{v'}$.
\begin{enumerate}
  \item If $\forall X.T \gprec \forall X. X -> \?$ then $\gsfreds{(t :: \forall X. X -> \?) [T']\;v}{v''}$, and $v' \sprec v''$.
  \item If $\forall X.T \gprec \forall X. \? -> X$ then $\gsfreds{(t :: \forall X. \? -> X) [T']\;v}{v''}$, and $v' \sprec v''$.
\end{enumerate}
\end{restatable}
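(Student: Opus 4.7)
The plan is to establish a strict term precision relation between the original program and the ascribed program, and then invoke the strict dynamic gradual guarantee (Theorem~\ref{theorem:wdggMGSF}). I would focus on case 1; case 2 is fully symmetric, since both hypotheses $\forall X.T \gprec \forall X.X \!-\!\!> \!\?$ and $\forall X.T \gprec \forall X.\? \!-\!\!>\! X$ force $T$ to have a function shape with the appropriate component being more precise, and the rest of the argument only uses generic ascription/application rules.

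First, since $t$ is static with $|- t : \forall X.T$, Proposition~\ref{prop:dgequivs} rules out failure, and the assumption that $t[T']\;v$ terminates forces $t$ itself to reduce to a static value $v_t = \Lambda X.t_0$ of type $\forall X.T$. I would then apply Lemma~\ref{prop:ascribingharmeless} to $v_t$ and $\forall X.X \!-\!\!>\! \?$, obtaining $\gsfreds{v_t :: \forall X.X \!-\!\!>\! \?}{v_t'}$ together with $v_t^{tr} \sprec v_t'$, where $v_t^{tr}$ denotes the \gsfe translation of $v_t$. The essential point is that the ascription only wraps an outer evidence around the same raw value, and this evidence stays strictly more precise than the intrinsic evidence carried by $v_t^{tr}$.

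Next, I would compare the two target programs by reducing each until the leftmost polymorphic value has been produced (on the ascribed side this includes fully consuming the outer ascription using the previous step). At that point both configurations have the shape $V\;[T']\;v$, with $V = v_t^{tr}$ on one side and $V = v_t'$ on the other, and $v_t^{tr} \sprec v_t'$ by construction. Since $T'$ and $v$ are static, they are $\sprec$-self-related, so rules ($\sprec$appG$_{\ev}$) and ($\sprec$app$_{\ev}$) lift this into a strict precision judgement on the full \gsfe configurations (via Proposition~\ref{prop:equivalencesprecelab} if one prefers to stay at the \gsf level). The small-step \sdgg (Proposition~\ref{prop:dggLemma}), iterated to a big-step statement, then yields that because the more precise side reduces to $v'$, the less precise side also reduces to some value $v''$, with $v' \sprec v''$.

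The most delicate step will be bridging between the \gsf-level statement and the \gsfe-level strict precision machinery, specifically verifying that the ascription $(t :: \forall X. X\!-\!\!>\!\?)$ terminates into an evidence pattern that plugs into the subsequent type application $[T']$ without breaking $\sprec$. Concretely, one has to check that the outer evidence $\evout$ created by the $R$app$\cT$ step on the ascribed side stays $\sprec$-related to the corresponding evidence on the unascribed side, which is exactly what the careful definition of $\evout$ in Section~\ref{sec:ref-evidence} combined with $\sprec$-monotonicity of consistent transitivity (Proposition~\ref{prop:monotone-trans-precsM}) is designed to guarantee. Once that invariant is established, iterating Proposition~\ref{prop:dggLemma} closes the proof routinely.
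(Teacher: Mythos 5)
Your proposal is correct and follows essentially the same route as the paper, which derives this corollary from Lemma~\ref{prop:ascribingharmeless} (imprecise ascription of a value yields a $\sprec$-related value) combined with the small-step \sdgg of Proposition~\ref{prop:dggLemma}, iterated after establishing strict precision of the residual configurations. The one refinement worth noting is that the final bridging step you flag as delicate is already discharged inside Proposition~\ref{prop:dggLemma} itself (its type-application case relies on $\sprec$-monotonicity of evidence instantiation), so once the initial $\sprec$-relation between the two applications is in place, no separate argument about $\evout$ is needed.
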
 

Such results, which illustrate the motto that ``extrinsic imprecision is harmless'', constitute a valuable compositionality guarantee when embedding fully-static (\sysF) terms in a gradual world.

\myparagraph{Cheap Theorems} The intuition of $\forall X. \? -> X$ denoting always-failing functions is not entirely misguided: in \gsf, this result does hold {\em for a subset} of the terms of that type. This leads us to observe that we can derive ``cheap theorems'' with gradual parametricity: obtained not by looking only at the type, but by also considering the head constructors of a term. For instance:
\begin{restatable}{theorem}{freetheoremone}
\label{theorem:freepropgeneric2}
Let $v \triangleq \Lambda X.\lambda x:\?.t$ for some $t$, such that $\staticgJ[ ] v \inTermT{\forall X. \? -> X}$. Then for any $\staticgJ[ ]  v' \inTermT{\cT}$, we either have $\gsfreds{v \; [\cT] \; v'}{\error}$ or $v \; [\cT] \; v' \divergesy$.
\end{restatable}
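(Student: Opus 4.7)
The plan is to use the parametricity logical relation of \S\ref{sec:gsf-param} to show that the application $v\;[\cT]\;v'$ cannot reduce to a value; the conclusion then follows by type safety (Proposition~\ref{prop:typesafety}). The key insight is that when instantiating the value relation for universal types, the chosen relation $R$ between the two instantiation types is arbitrary; picking $R = \emptyset$ (which trivially belongs to $\reln{\cT}{\cT}$, since the subset- and future-world-closure conditions hold vacuously) makes it impossible for any two values to be related at the abstract type $X$, and hence forces the computation at type $X$ to either fail or diverge. This strategy follows the pattern of Lemma~\ref{theorem:id?fails} but is cleaner, since we can take $\cT_1 = \cT_2 = \cT$ and need not exhibit a structurally-distinct second type.

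First, I would apply the fundamental property (Theorem~\ref{theorem:gfunprop}) to $v$ (with elaboration $v_\forall$) and to $v'$ (with elaboration $v_\?$) to obtain $(\W_0, v_\forall, v_\forall) \in \setv[\emptyset]{\forall X. \? -> X}$ and $(\W_0, v_\?, v_\?) \in \setv[\emptyset]{\cT}$ for any $\W_0 \in \sets[\emptyenv]$ of sufficiently large step index. Instantiate the $\forall X$ clause of the value relation with $\cT_1 = \cT_2 = \cT$, a fresh $\alpha$, and $R = \emptyset$. Because $v_\forall$ is syntactically the same on both sides and the initial stores are equal, both type applications step identically, producing an extended world $\W_1 = \W_0 \extworld (\alpha, \cT, \cT, \emptyset)$ and related inner functions of type $\? -> X$ under the substitution $\tpesub = [X \mapsto \alpha]$.

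Next, invoke Lemma~\ref{prop:ascriptonlemmapaper} (Ascriptions Preserve Relations) with initial evidence for $\cT \sim \?$ to promote $(v_\?, v_\?)$ into $\sett[\emptyset]{\?}$: if ascribing $v_\?$ to $\?$ fails on the left, then the overall application reduces to \error{} and we are done; otherwise we obtain related argument values in $\setv[\tpesub]{\?}$ in a future world $\W_2$. Applying the related inner functions to these related arguments, the function case of the value relation places the pair of applications in $\sett[\tpesub]{X}$. But $\setv[\tpesub]{X} = \setv[\tpesub]{\alpha}$ requires the underlying unsealed pair to belong to $\getRel(\alpha) = \emptyset$, so no pair of values is related at $X$. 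By the definition of $\sett{X}$, if the left application reduced to a value, the right would have to reduce to a related value---impossible. Hence the computation cannot terminate with a value, and type safety leaves only $\gsfreds{v\;[\cT]\;v'}{\error}$ or $v\;[\cT]\;v' \divergesy$.

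The main obstacle will be threading step indices cleanly: because the logical relation is step-indexed, we must ensure $\W_0$ is chosen with a step index exceeding the reduction length of $v\;[\cT]\;v'$ up to any potential terminating configuration. This is standard for step-indexed relations---quantifying $\W_0$ over all worlds in $\sets[\emptyenv]$ and leveraging the $\forall i < \W.j$ shape of $\sett{\cT}$ to cover arbitrary finite reductions---but care is needed to ensure that future-world monotonicity survives the intermediate transition through the ascription of $v_\?$ to $\?$ and through the application step of the inner function.
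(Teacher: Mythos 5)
Your proposal is correct, and it follows the same scaffolding as the paper's argument (which is carried out in detail for Lemma~\ref{theorem:id?fails} and reused for this theorem): apply the fundamental property (Theorem~\ref{theorem:gfunprop}), instantiate the $\forall X$ clause of the value relation, relate the arguments at $\?$, use the arrow case to land in $\sett[\tpesub]{X}$, and conclude that termination to a value is impossible, so type safety (Proposition~\ref{prop:typesafety}) leaves only $\error$ or divergence. Where you genuinely diverge is the key instantiation and the source of the contradiction. The paper takes $\cT_1 = \cT$ and $\cT_2$ \emph{structurally different}, with an arbitrary $R$, and derives the contradiction from the fact that a single raw value $u$ cannot be ascribed to two structurally different types; this requires the cross-instantiation Lemma~\ref{lm:incompatgeneral} to relate $\cast{\invdom(\ev[1])}{v'}::\?$ and $\cast{\invdom(\ev[2])}{v'}::\?$ across the two different stores. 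You instead take $\cT_1 = \cT_2 = \cT$ and $R = \emptyset$ (which is indeed a legitimate element of $\reln{\cT}{\cT}$, as both closure conditions hold vacuously), so the two runs are literally identical, relating the arguments reduces to the fundamental property plus Lemma~\ref{prop:ascriptonlemmapaper} and weakening, and the contradiction is immediate from $\getRel(\alpha) = \emptyset$ propagating to all future worlds. This is the classical empty-relation free-theorem trick; it is arguably cleaner and avoids Lemma~\ref{lm:incompatgeneral} entirely, at the cost of not exercising machinery the paper reuses elsewhere (for the DGG incompatibility). The one step you should make explicit rather than leave implicit is the paper's associativity-of-consistent-transitivity argument: the $\forall X.\cT$ clause only relates the \emph{inner} terms $t_1, t_2$ after stripping the outer evidence $\evout$, so to transfer ``the inner application cannot terminate'' to ``$v\;[\cT]\;v'$ cannot terminate'' you must decompose $(\cast{\evout}{t_1}::\? -> \cT)\;w$ into $\cast{\invcod(\evout)}{(t_1\;(\cast{\invdom(\evout)}{w}::\?))}::\cT$ and observe that the final codomain cast cannot turn a non-value into a value. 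Also track the step index as you note: choose $\W_0$ with index exceeding any putative terminating reduction, which the universal quantification over worlds permits.
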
 

This result is proven by exploiting the gradual parametricity result (Theorem~\ref{theorem:gfunprop}). Note that what makes it a ``free'' theorem is that it holds independently of the body $t$, therefore   
{\em without having to analyze the whole term}. Not as good as a free theorem, but cheap.

\section{Embedding Dynamic Sealing in \gsf}
\label{sec:embedding}
A gradual language is expected to cover a spectrum between two typing disciplines, such as simple static typing and dynamic typing. The static end of the spectrum is characterized by the conservative extension results~\cite{siekAl:snapl2015}, which we have established for \gsf with respect to \sysF (Proposition~\ref{prop:static-eq} and Proposition~\ref{prop:dgequivs}).
The dynamic end of the spectrum is typically characterized by an {\em embedding} from the considered dynamic language to the gradual language~\cite{siekAl:snapl2015}. For instance, in the case of GTLC~\cite{siekTaha:sfp2006}, the dynamic language is an untyped lambda calculus with primitives. 

In this section, we study the ``dynamic end'' of \gsf. Unsurprisingly, \gsf can embed an untyped lambda calculus with primitives. More interestingly, we highlight the expressive power of the underlying type name generation mechanism of \gsf by proving that \gsf can faithfully embed an untyped lambda calculus with {\em dynamic sealing}, \lambdaSeal.  
This language, also known as the cryptographic lambda calculus, was first studied in a typed version by \citet{pierceSumii:2000}, and then untyped \cite{sumiiPierce:popl2004}. One of their objectives was to study whether dynamic sealing could be used in order 
to dynamically impose parametricity via a compiler from \sysF to \lambdaSeal. 
Recently, \citet{devrieseAl:popl2018} prove that such a compiler is not fully abstract, 
\ie~compiled \sysF equivalent terms are not contextually equivalent in \lambdaSeal. Nevertheless, the dynamic sealing mechanism of \lambdaSeal to protect abstract data, and its relation to gradual parametricity, is an interesting question.

In the following, we first present the embedding of a standard dynamically-typed language, called \dynLang, in \gsf. Then, we introduce \lambdaSeal, which extends \dynLang with dynamic sealing constructs. We define an embedding of \lambdaSeal terms into \gsf, and prove that this embedding is semantic preserving. As far as we know, this is the first result regarding the ``dynamic end'' of a gradually-parametric language, establishing a crisp connection with dynamic sealing.

\subsection{Embedding a Dynamically-Typed Language in \gsf}
\label{sec:embed-untyped}

The essence of embedding a dynamically-typed language in a gradual language is to ascribe every introduction form with the unknown type~\cite{siekTaha:sfp2006,siekAl:snapl2015}. For instance the expression $(1\;2)$ from a dynamically-typed language can be embedded as $(1::\?)\;(2::\?)$. Observe that not adding the ascriptions would yield an ill-typed term, as per the conservative extension result with respect to the static typing discipline. 
Let us call \dynLang the dynamically-typed lambda calculus with pairs and primitives. The embedding of \dynLang terms into \gsf is defined as:

\begin{footnotesize}
\begin{center}
\begin{minipage}{0.2\textwidth}
\begin{align*}
  \compileLSA{\const} &= \const :: \? \\
  \compileLSA{\lambda x.t} &= (\lambda x:\?.\compileLSA{t}) :: \?  \\
  \compileLSA{\pair{t_1}{t_2}} &= \pair{\compileLSA{{t_1}}}{\compileLSA{{t_2}}} :: \? \\
  \compileLSA{\op{\vectorOp{t}}} &= \letT{\vectorOp{x}: \?}{{\compileLSA{{\vectorOp{t}}}}}{\op{\vectorOp{x}} :: \?}
  \end{align*}
\end{minipage}\quad\quad
\begin{minipage}{0.2\textwidth}
\begin{align*}
  \compileLSA{x} &= x \\
  \compileLSA{t_1\;t_2} &= \letT{x: \?}{\compileLSA{t_1}}{\letT{y: \?}{\compileLSA{t_2}}{x \; y}}\\
  \compileLSA{\proj{1}[t]} &= \proj{1}[\compileLSA{t}]\\
  \compileLSA{\proj{2}[t]} &= \proj{2}[\compileLSA{t}]
  \end{align*}
\end{minipage}
\end{center}
\end{footnotesize}

The only novelty here with respect to prior work is that the embedding produces application terms in A-normal form in order to ensure that embedded terms behave as expected. For example, the term $\lste{(1 \; \Omega)}$, with $\lste{\Omega = (\lambda x. x\; x) \; (\lambda x. x\; x)}$, diverges in the dynamically-typed language. But if we would embed an application $\compileLSA{t_1\;t_2}$ simply as $ \compileLSA{t_1} \; \compileLSA{t_2}$, the embedded term would fail in \gsf instead of diverging, because evidence combination would detect the underlying type error before reducing the application. Note that this precaution is unnecessary for pairs, because there are no typing constraints between both components.

As we will see, the correctness of embedding \dynLang follows from the correctness of embedding of \lambdaSeal, discussed next.

\subsection{The Cryptographic Lambda Calculus \lambdaSeal}
\label{sec:cryptographic-lambda-calculus}

\begin{figure}[t]
  \begin{small}
  \begin{displaymath}
    \begin{array}{rcll}
    \multicolumn{4}{c}{  
      \lste{x \in \Var, \seal \in \Seal, \heap  \subset \Seal}
      }\\
     \lste{ t } & ::= & \lste{\const} | \lste{\lambda x.t }| \lste{\pair{t}{t}}| \lste{x} | \lste{t\;t} | \lste{\proj{i}[t]} | \lste{\op{\vectorOp{t}}} |  \lste{\sealC{x}{t}} | \lste{\sealedC{t}{t}} | \lste{\unsealC{x}{t}{t}{t}}| \lste{\seal} & \text{(terms)}\\
      \lste{v} & ::= & \lste{\const} | \lste{\lambda x.t}  | \lste{\pair{v}{v}} | \lste{\sealedC{v}{\seal}} | \lste{\seal} & \text{(values)} 
    \end{array}   
  \end{displaymath}
 \end{small}
\begin{small}
\begin{flushleft}
\framebox{\lste{$\storeevalLS{t} \longrightarrow \storeevalLS t$}}
~\textbf{Notion of reduction}
\end{flushleft}
\begin{mathpar}
\lste{\storeevalLS{(\lambda x. t) \;v} \longrightarrow \storeevalLS {t[v/x]}} \and
\lste{\storeevalLS{\proj{i}[\pair{v_1}{v_2}]} \longrightarrow \storeevalLS{v_i}} \and
  \and
  \lste{\storeevalLS{\op{\vectorOp{v}}} \longrightarrow \storeevalLS{\redop{\vectorOp{v}}}} \and
\lste{\storeevalLS{\sealC{x}{t}} \longrightarrow \storeevalLS[\heap, \seal]{t[{\seal}/x]}}\quad \text{where }\lste{\seal \not\in \heap}\and
 \lste{\storeevalLS{\unsealC{x}{\seal}{\sealedC{v}{\seal'}}{t}} \nred}
  \begin{cases}
  \lste{\storeevalLS{t[v/x]} \qquad \;\;\;\;\;\;\;  \seal \equiv \seal'} \\
  \lste{\unsealError \qquad \;\seal \not\equiv \seal'}
  \end{cases} 
\end{mathpar}
\begin{flushleft}
\framebox{\lste{$ \storeevalLS t \longmapsto \storeevalLS t$}}
~\textbf{Evaluation frames and reduction}
\end{flushleft}
\begin{displaymath}
\begin{array}{rcll}
\lste{f} & ::= & \lste{[] \ t} | \lste{ v \ [] } | \lste{\pair{[]}{t}} | \lste{\pair{v}{[]}} | \lste{\proj{i}[[]]} | \lste{\sealedC{{[]}}{t}} | \lste{\sealedC{v}{{[]}}}| \lste{\op{\vectorOp{v}, [] , \vectorOp{t}}} & \text{(term frames)} \\
 &  & | \lste{\unsealC{x}{{[]}}{t}{t}} | \lste{\unsealC{x}{{v}}{[]}{t}} &  
\end{array}
\end{displaymath}
\begin{mathpar}
\inference{\lste{\storeevalLS{t} \nred \storeevalLS[\heap']{t'}}}{\lste{\storeevalLS{t} \red \storeevalLS[\heap']{t'}}}
\and
\inference{\lste{\storeevalLS{ t } \red \storeevalLS[\heap']{ t'}}}{\lste{ \storeevalLS {f[t]} \red \storeevalLS[\heap']{f[t']}}}\\
\inference{\lste{\storeevalLS{t} \nred \unsealError}}{\lste{\storeevalLS{t} \red \unsealError}}\and 
\inference{\lste{\storeevalLS{t} \red \unsealError}}{\lste{\storeevalLS{f[t]} \red \unsealError}}
\end{mathpar}
\end{small}
 \caption{\lambdaSeal: Untyped Lambda Calculus with Sealing}  
  \label{fig:lambdaseal}
  \label{fig:lambdaseal-syntax-statics}  
  \label{fig:lambdaseal-dyn}
\end{figure}

The cryptographic lambda calculus \lambdaSeal is an extension of \dynLang with primitives for protecting abstract data by sealing~\cite{sumiiPierce:popl2004}. 
Figure~\ref{fig:lambdaseal} presents the syntax and dynamic semantics of the \lambdaSeal language we consider here, which is a simplified variant of that of~\cite{sumiiPierce:popl2004}.
In addition to standard terms, which correspond to \dynLang, the \lambdaSeal language introduces four new syntactic constructs dedicated to sealing. First, the term $\lste{\sealC{x}{t}}$ generates a fresh key to seal and unseal values, bound to $\lste{x}$ in the body $\lste{t}$. Seals, denoted by the metavariable $\lste{\seal}$, are values tracked in the set of allocated seals $\lste{\heap}$.
 The sealing construct $\lste{\sealedC{t_1}{t_2}}$ evaluates $\lste{t_1}$ to a value $\lste{v}$ and $\lste{t_2}$ to a seal $\lste{\seal}$, and seals $\lste{v}$ with $\lste{\seal}$. Term $\lste{\unsealC{x}{t_1}{t_2}{t}}$ is for unsealing. At runtime, $\lste{t_1}$ should evaluate to a seal $\lste{\seal}$ and $\lste{t_2}$ to a sealed value $\lste{\sealedC{v}{\seal'}}$. If $\lste{\seal = \seal'}$, unsealing succeeds and $\lste{t}$ is evaluated with $\lste{x}$ bound to $\lste{v}$. Otherwise, unsealing fails, producing a runtime sealing error $\lste{\unsealError}$.\footnote{The original term for unsealing in \lambdaSeal has the syntax $\lste{\unsealC{x}{t_1}{t_2}{t} \;\mathsf{else} \; t_3}$; if the unsealing fails, reduction recovers from error evaluating $\lste{t_3}$. To be able to encode such a construct, we would need to extend \gsf with error handling.}

To illustrate, consider the following term:
$$\lste{\sealC{x}{\sealC{y}{\lambda b.\unsealC{n}{x}{\sealedC{1}{(\mathsf{if} \; b \; \mathsf{then} \; x \; \mathsf{else} \; y)}}{n + 1}}}}$$

This term first generates two fresh seals $\lste{x}$ and $\lste{y}$, and then defines a function that receives a boolean $\lste{b}$ and attempts to unseal a sealed value. The value $\lste{1}$ is sealed using either $\lste{x}$ or $\lste{y}$, depending on $\lste{b}$, and unsealed with $\lste{x}$. If the function is applied to $\lste{\true}$, unsealing succeeds because the seals coincide, and the function returns $\lste{2}$. Otherwise, unsealing fails, and an $\lste{\unsealError}$ is raised.

Overall, we can distinguish three kinds of runtime errors in \lambdaSeal: in addition to unsealing errors, $\lste{\unsealError}$, there are two kinds of runtime {\em type} errors (omitted in Figure~\ref{fig:lambdaseal}), hereafter called \typeError and \sealError. The former corresponds to standard runtime type errors such as applying a non-function, and can happen in \dynLang. The latter is specific to \lambdaSeal, and corresponds to expressions that do not produce seals when expected, such as $\lste{\sealedC{1}{2}}$.

\subsection{Embedding \lambdaSeal in \gsf}
\label{sec:embedding-lambdaSeal-in-gsf}
We now present a semantic-preserving embedding of \lambdaSeal terms in \gsf. The embedding relies on a general seal/unseal generator, expressed as a \gsf term. This term, called $\su$ hereafter, is a polymorphic pair of two functions, of type $\forall X.\pairtype{(X -> \?)}{(\? -> X)}$, instantiated at the unknown type, and ascribed to the unknown type:
$$\su \equiv (\Lambda X.\pair{(\lambda x:X. x :: \?)}{(\lambda x:\?. x :: X)}) \; [\?] :: \?$$

When evaluated, the type application generates a fresh type name, simulating the seal generation of \lambdaSeal's term $\lste{\sealC{x}{t}}$. Then the first component of the pair represents a sealing function, while the second component represents an unsealing function, which can only successfully be applied to values sealed with the first component.
We write $\sus$ to denote a particular value resulting from the evaluation of the term $\su$, where the type name $\sigma$ is generated and stored in $\store$.
Crucially, a value that passed through $\fst[\sus]$ is sealed, and can only be observed after passing through the unsealing function $\snd[\sus]$. Trying to unseal it with a different function results in a runtime error.

\begin{figure}[t]
\begin{footnotesize}
\begin{center}
\begin{minipage}{0.2\textwidth}
\begin{align*}
  \compileLSA{\const} &= \const :: \? \\
  \compileLSA{\lambda x.t} &= (\lambda x : \?.\compileLSA{t}) :: \?  \\
  \compileLSA{\pair{t_1}{t_2}} &= \pair{\compileLSA{{t_1}}}{\compileLSA{{t_2}}} :: \? \\
  \compileLSA{\op{\vectorOp{t}}} &= \letT{\vectorOp{x}: \?}{{\compileLSA{{\vectorOp{t}}}}}{\op{\vectorOp{x}} :: \?}
  \end{align*}
\end{minipage}\quad\quad
\begin{minipage}{0.2\textwidth}
\begin{align*}
  \compileLSA{x} &= x \\
  \compileLSA{t_1\;t_2} &= \letT{x: \?}{\compileLSA{t_1}}{\letT{y: \?}{\compileLSA{t_2}}{x \; y}}\\
  \compileLSA{\proj{1}[t]} &= \proj{1}[\compileLSA{t}]\\
  \compileLSA{\proj{2}[t]} &= \proj{2}[\compileLSA{t}]
  \end{align*}
\end{minipage}\\
\begin{align*}
\compileLSA{\sealC{x}{t}} &= \letC{x: \?}{\su }{\compileLSA{{t}}}\\
  \compileLSA{\sealedC{t_1}{t_2}} &= \letT{x: \?}{\compileLSA{t_1}}{\letT{y: \?}{\compileLSA{t_2}}{\proj{1}[{{y}}] \; {{x}}}}\\
  \compileLSA{\unsealC{z}{t_1}{t_2}{t_3}} &= \letT{x: \?}{\compileLSA{t_1}}{\letT{y: \?}{\compileLSA{t_2}}{\letC{z: \?}{\proj{2}[x]\; y}{\compileLSA{t_3}}}} 
   \end{align*}
$$\text{where } \su \equiv ((\Lambda X.\pair{(\lambda x:X. x :: \?)}{(\lambda x:\?. x :: X)}) \; [\?]) :: \?$$
\end{center}
\end{footnotesize}

 \caption{Embedding \lambdaSeal in \gsf}
  \label{fig:compilelambdasealToGsf}
\end{figure}

\myparagraph{Embedding Translation}
Figure~\ref{fig:compilelambdasealToGsf} defines the embedding from \lambdaSeal to \gsf. The cases unrelated to sealing are as presented in \S\ref{sec:embed-untyped}.
The crux of the embedding is in the use of the term $\su$. A seal generation term $\lste{\sealC{x}{t}}$ is embedded into \gsf by let-binding the variable $x$ to the term $\su$, whose value $\sus$ will be substituted in the translation of $\lste{t}$. Recall that the first component of the pair $\sus$ is used for sealing, and the second one for unsealing.  
Therefore, the sealing operation $\lste{\sealedC{t_1}{t_2}}$ is embedded by let-binding the translations of $\lste{t_1}$ and $\lste{t_2}$ to fresh variables $x$ and $y$, and applying the first component of $y$ (the sealing function) to $x$ (the value to be sealed). Likewise, an unsealing $\lste{\unsealC{z}{t_1}{t_2}{t_3}}$ is embedded by binding the translation of $\lste{t_1}$ and $\lste{t_2}$ to fresh variables $x$ and $y$, then unsealing $y$ using the second component of $x$ (the unsealing function), and binding the result to $z$, for use in the translation of the term $\lste{t_3}$. The use of A-normal forms in the embedding of sealing and unsealing is required because both are eventually interpreted as function applications, so the precaution discussed in \S\ref{sec:embed-untyped} applies.
Finally, note that because seals $\seal$ cannot appear in source text, so the translation need not consider them.

\myparagraph{Illustration}
As example, the embedding of the \lambdaSeal term $\lste{\sealC{x}{\sealC{y}{\unsealC{n}{x}{\sealedC{1}{x}}{n + 1}}}}$ is the following \gsf term: 

\begin{small}
\begin{mathpar}
\begin{array}{ll}
&\letC{x:\?}{\su}{}\\
&\letC{y:\?}{\su}{}\\
&\letC{u:\?}{x}{}\\
&\letC{z:\?}{(\letC{n_1:\?}{1}{\letC{s:\?}{x}{\fst[s] \; n_1)}}}{}\\
&\letC{n:\?}{\snd[u] \; z}{n + 1}
\end{array} 
\end{mathpar}
\end{small}

The following reduction trace shows the most critical steps of the program above. We define $\sue$ as the translation of $\su$ to \gsfe, and $\suse$ is the value of $\sue$, where a fresh seal $\sigma$ is generated.  Note that we omit some trivial evidences and type annotations for readability. This program generates two fresh type names, reducing the first $\su$ to $\suse$ and the second one to $\susep$. Then, after a few substitution steps, the first component of $\suse$ is applied to $1$, sealing the value, and then applies the second component of $\suse$, unsealing the sealed value. The whole program reduces to $2$.

\begin{small}
\begin{mathpar}
\begin{array}{rcrlrr}
&\emptyenv & \triangleright& \letC{x}{\sue}{\letC{y}{\sue}{\cdots}}{} & \text{\footnotesize initial program}\\
\red^{*}&\sigma := \?, \sigma' := \? & \triangleright& 
{\letC{u}{\suse}{\cdots}}{\; \letC{s}{\suse}{\cdots}} & \text{\footnotesize $\sigma$ and $\sigma'$ are generated} \\
\red^{*}&\sigma := \?, \sigma' := \?& \triangleright& \letC{z}{\fst[\suse] (\ev[\Int]1::\?)}{\cdots}& \text{\footnotesize substitution steps}\\
\red^{*}&\sigma := \?, \sigma' := \?& \triangleright& {\letC{n}{\snd[\suse] (\cast{\pr{\Int,\richtype[\sigma]{\Int}}}{1::\?)}}{n + 1}}& \text{\footnotesize argument is sealed by $\sigma$} \\
\red^{*}&\sigma := \?, \sigma' := \?& \triangleright& {\letC{n}{\cast{\ev[\Int]}{1 :: \?}}{n + 1}}& \text{\footnotesize unsealing eliminates $\sigma$}\\
\red^{*}&\sigma := \?, \sigma' := \?& \triangleright& {\cast{\ev[\Int]}{2 :: \?}}&\\
\end{array} 
\end{mathpar}
\end{small}

If we slightly modify the previous \lambdaSeal program by $\lste{\sealC{x}{\sealC{y}{\unsealC{n}{y}{\sealedC{1}{x}}{n + 1}}}}$, then unsealing fails with $\lste{\unsealError}$ because it uses a different seal to unseal than the one used to seal. The embedding of this \lambdaSeal term in \gsf is very similar to the previous one; the only difference is that, now, $u$ is bound to $y$. The following reduction trace illustrates where the embedding of the \lambdaSeal term fails. Note that the resulting value of {\small $\snd[\susep]$} is {\small$\cast{\pr{\? -> \richtype[\sigma']{\?}, \? -> \?}}{(\lambda x:\?.\cast{\pr{\richtype[\sigma']{\?}, \richtype[\sigma']{\?}}}{x :: \sigma'}) :: \? -> \?}$}. Then, the sealed value {\small$\cast{\pr{\Int,\richtype[\sigma]{\Int}}}{1::\?}$} is substituted in the body of the function, failing in the consistent transitivity {\small$\cast{\pr{\Int, \richtype[\sigma]{\Int}}}{}\trans{}\cast{\pr{\richtype[\sigma']{\?}, \richtype[\sigma']{\?}}}{}$}.

\begin{small}
\begin{mathpar}
\begin{array}{rcrlrr}
&\emptyenv & \triangleright& \letC{x}{\sue}{\letC{y}{\sue}{\cdots}}{} & \text{\footnotesize initial program}\\
\red^{*}&\sigma := \?, \sigma' := \? & \triangleright& 
{\letC{u}{\susep}{\cdots}}{\; \letC{s}{\suse}{\cdots}} & \text{\footnotesize $\sigma$ and $\sigma'$ are generated} \\
\red^{*}&\sigma := \?, \sigma' := \?& \triangleright& \letC{z}{\fst[\suse] (\ev[\Int]1::\?)}{\cdots}& \text{\footnotesize substitution steps}\\
\red^{*}&\sigma := \?, \sigma' := \?& \triangleright& {\letC{n}{\snd[\susep] (\cast{\pr{\Int,\richtype[\sigma]{\Int}}}{1::\?)}}{n + 1}}& \text{\footnotesize argument is sealed by $\sigma$} \\
\red^{*}&&& \error&\text{\footnotesize \error \; unsealing by $\sigma'$}
\end{array} 
\end{mathpar}
\end{small}

\subsection{Embedding of \lambdaSeal}
\label{sec:semantics-preservation}
\newsavebox{\lsRGsf}

We now prove that the embedding of \lambdaSeal into \gsf is correct, namely that a \lambdaSeal term and its translation to \gsf behave similarly: either they both terminate to a value, both diverge, or both yield an error. Note that the semantic preservation theorem below only accounts for what we call {\em valid} \lambdaSeal terms, \ie~terms that do not produce runtime type errors related to sealing, \ie~\sealError. We come back to this point at the end of this section.
We write ${\lste{{t}\terminationsy}}$ or $\terminationLS{t}$ if $\storeevalLS[\emptyenv]{t} \; \lste{\red^{*}} \; \storeevalLS[\heap]{v}$, for some $\lste{v}$ and $\heap$. We write ${\lste{\divergeLS{t}}}$ if $\lste{t}$ diverges, and ${\lste{{t}\terminationsy \stError}}$ if $\storeevalLS[\emptyenv]{t} \; \lste{\red^{*}} \; \stError$, where $\stError \triangleq \typeError \text{ or } \unsealError$. As before, we write $\terminationGSF$ if $|- t  \translate t_{\ev} : \?$ and $\conf[\emptyenv]{t_{\ev}} \red^{*} \conf[\store] v$, for some $v$ and $\store$.

\begin{restatable}[Embedding of \lambdaSeal]{theorem}{semanticsPreservationLamdaSealMain}
\label{theorem:semanticsPreservationLamdaSealMain}
Let $\lste{t}$ be a valid closed \lambdaSeal term.
\begin{enumerate}[label=\alph*.]
  \item $|- \compileLSA{t} : \?$
  \item ${\lste{{t}\terminationsy}}$ implies $\gsfreds{\compileLSA{t}}$
  \item ${\lste{\divergeLS{t}}}$ implies $\compileLSA{t} \divergesy$
  \item ${\lste{{t}\terminationsy \stError}}$ implies  $\gsfreds{\compileLSA{t}}{\error}$
  \end{enumerate}
\end{restatable}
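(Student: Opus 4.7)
\begin{proofsketch}
The plan is to prove the four parts in order, with parts (b)--(d) all reduced to a single simulation-style result.

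Part (a) is a straightforward induction on the structure of the \lambdaSeal term $\lste{t}$, using the fact that every clause of the embedding in Figure~\ref{fig:compilelambdasealToGsf} terminates in a top-level ascription to $\?$ (or a projection/unpacking whose inputs are of type $\?$). For the variable case, I would strengthen the statement to open terms, using a \gsf environment $\Gamma$ that maps every free \lambdaSeal variable to $\?$. The only non-trivial clause is $\lste{\sealC{x}{t}}$, where one needs to check that $\su$ has type $\?$: this follows because $\Lambda X.\pair{\lambda x{:}X.x::\?}{\lambda x{:}\?.x::X}$ has type $\forall X.\pairtype{(X -> \?)}{(\? -> X)}$, so $\su$ type-checks by (GappG) and (Gasc), using that $\cinsta(\forall X.\_, \?)$ and then $\? \sim \?$ hold.

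For parts (b)--(d) the plan is to set up a cross-language logical relation
\[
\simulationRel{t_{\ev}}{\lste{t}}{\?}
\]
between \gsfe configurations and \lambdaSeal configurations, indexed by the type $\?$. Values are related as follows: constants are related when equal; a pair is related pointwise; a \gsf $\lambda$-value ascribed to $\?$ is related to a \lambdaSeal $\lambda$-value if applying both to related arguments yields related computations; and, crucially, a \lambdaSeal seal $\lste{\seal}$ is related to a \gsf value $\sus$ of the form $(\Lambda X.\pair{\ldots}{\ldots})\;[\?]::\?$ by registering $\seal \leftrightarrow \sigma$ in a bijection $\phi$ between \lambdaSeal seals and \gsf type names, and a \lambdaSeal sealed value $\lste{\sealedC{v}{\seal}}$ is related to a \gsf value whose outer evidence contains $\richtype[\phi(\seal)]{\cE}$ in sealing position (right-hand component only), with the underlying raw value related to $\lste{v}$ at $\?$. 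The computation relation is the usual termination-sensitive lift: related source terminates iff related target terminates with related results, and divergence/error are likewise preserved, modulo the administrative reduction steps introduced by the A-normal-form let-bindings in the embedding.

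Next, I would prove a \emph{fundamental lemma}: for any valid \lambdaSeal term $\lste{t}$ and any substitution $\gamma$ relating source and target environments (including any active $\phi$), the pair $(\gamma_{\ev}(\compileLSA{t}), \gamma(\lste{t}))$ lies in the computation relation at $\?$. This is proved by induction on $\lste{t}$, using congruence/compatibility lemmas: the cases for constants, $\lambda$, application, pair, and projection are routine (the A-normal expansions reduce administratively and preserve the relation). The interesting cases are the three sealing constructs. For $\lste{\sealC{x}{t}}$, the target evaluates $\su$ to some $\sus$, generating a fresh type name $\sigma$; \lambdaSeal generates a fresh seal $\seal$; we extend $\phi$ with $\seal\mapsto\sigma$ and apply the IH on $\lste{t}$. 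For $\lste{\sealedC{v_1}{\seal}}$, the target applies the first component of $\sus$ to the encoded $v_1$, producing a value whose outer evidence is $\pr{\cE,\richtype[\sigma]{\cE}}$ (sealing), which exactly matches the sealed-value clause of the relation. For $\lste{\mathsf{let}\;\{x\}_{\seal}=\sealedC{v_1}{\seal'}\;\mathsf{in}\;t_3}$, the target applies the second component of $\sus$ (carrying $\richtype[\sigma]{\cE}$ in unsealing position) to the sealed encoded value; by rule (unsl) of consistent transitivity in Fig.~\ref{fig:gsf-trans}, this succeeds exactly when $\sigma=\phi(\seal')$, i.e.\ when $\seal=\seal'$, mirroring the \lambdaSeal unsealing rule; otherwise consistent transitivity is undefined and \gsf produces $\error$, matching $\lste{\unsealError}$. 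Validity of $\lste{t}$ rules out \sealError (the other non-unseal failures), so every \typeError is caught by the ambient gradual evidence checks in \gsf exactly as in \lambdaSeal.

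Once the fundamental lemma is established, parts (b)--(d) follow immediately: taking $\gamma=\emptyset$, the closed translation $\compileLSA{t}$ is related to $\lste{t}$ at $\?$, so termination on one side implies termination on the other with related values (part~b), divergence is preserved (part~c), and error on the \lambdaSeal side is matched by $\error$ on the \gsf side (part~d). The main obstacle, as usual for parametricity-style cross-language relations in \gsf, will be the sealed-value clause: one must keep careful track of how $\richtype[\sigma]{\cE}$ appears on only one side of evidence pairs during reduction, how evidence combination interacts with the outer $\evout$ ascriptions introduced by the $(R\mathrm{app}\cT)$ step that first evaluates $\su$, and how A-normal administrative steps are absorbed. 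The \sdgg (Theorem~\ref{theorem:wdggMGSF}) together with the monotonicity of consistent transitivity under $\sprec$ (Proposition~\ref{prop:monotone-trans-precsM}) will be useful to show that these administrative steps cannot spuriously introduce or remove failures.
\end{proofsketch}
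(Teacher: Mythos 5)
Your overall architecture is the same as the paper's: both proofs introduce a cross-language relation between \lambdaSeal terms and their \gsfe counterparts, show that the embedding lands in the relation, and then transfer termination, divergence, and errors across it, with seals matched to type names and sealed values recognized by a type name appearing only in the right-hand (sealing) component of the outer evidence. The key technical device differs, though. The paper's relation is a \emph{syntactic simulation}: its function clause relates $\lste{\lambda x.t_1}$ to $\cast{\ev[\?->\?]}{(\lambda x.t_2)::\?}$ by relating the open bodies under an extended environment, and the work is then carried by a substitution lemma plus a small-step simulation lemma (each \lambdaSeal step, including error steps, is matched by zero or more \gsfe steps preserving the relation); extra ``(R$\ldots$)'' rules close the relation under the intermediate evidence-bearing forms that arise during reduction. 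You instead propose an \emph{extensional logical relation} (functions related iff they map related arguments to related computations) together with a fundamental lemma proved via compatibility lemmas. Both routes can work; the paper's buys a well-founded definition for free (structural on terms) at the cost of having to enumerate and preserve all intermediate syntactic shapes, while yours localizes the reasoning in compatibility lemmas but must be closed under exactly those same intermediate evidence forms anyway, since the relation has to survive reduction of the translated term.

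Two caveats on your version. First, an extensional function clause at type $\?$, combined with pairs and sealed values also at $\?$, is not obviously well-founded over an untyped source language: you would need to step-index the relation (as the paper does for its parametricity relation, but does \emph{not} need to do for this simulation argument). You should say so explicitly; as written, the definition of your relation is circular. Second, the appeal to the \sdgg{} and to $\sprec$-monotonicity of consistent transitivity to discharge the administrative A-normal-form steps is neither necessary nor really usable here: those results compare two \gsf{} programs related by precision, whereas the administrative steps are ordinary reductions of a single \gsfe{} term and are absorbed directly by allowing the target to take $\red^{*}$ steps per source step (as in the paper's simulation lemma). Dropping that dependency simplifies the argument and avoids importing machinery that does not apply.
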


\begin{figure}[!p]
\label{def:relationLambdaSealGsf}
\begin{small}
  \begin{mathpar}
    \inference[(TRx)]{ x : \? \in \Gamma 
    }{  \simulationRelT{x}{x}{\?}}
    \and
    \inference[(Rb)]{\ftype(b) = \basetype }
    {\simulationRelT{b}{\cast{\ev[\basetype]}{b :: \?}}{\?}}
    \and
     \inference[(TRu)]{
       \simulationRelT{v}{\cast{\ev[\gT]}{u :: \?}}{\?}
    }{
     \simulationRelT{v}{\cast{\ev[\gT]}{(\cast{\ev[\gT]}{u :: \gT}) :: \?}}{\?}
    }
    \and
    \inference[(Rs)]{\seal \in \heap  & \sigma := \? \in \store 
    }{  \simulationRelT{\sigma}{\suse}{\?}}
    \and
 \inference[(Rp)]{
       \simulationRelT{v_1}{\cast{\ev[\gT_1]}{u_1 :: \?} }{\?}  & 
        \simulationRelT{v_2}{\cast{\ev[\gT_2]}{u_2 :: \?} }{\?}
    }{
     \simulationRelT{\pair{v_1}{v_2}}{\cast{\ev[\pairtype{\gT_1}{\gT_2}]}{\pair{u_1}{u_2} :: \?}}{\?}
    }\and
     \inference[(R$\lambda$)]{
       \simulationRelT{t_1}{t_2}{\?}[\heap; \store; \Gamma, x : \?]
    }{
       \simulationRelT{(\lambda x.t_1)}{\cast{\ev[\? -> \?]}{(\lambda x.t_2) :: \?}}{\?}
    }
\and
 \inference[(TRpt)]{
       \simulationRelT{t_1 }{t'_1}{\?}  & 
        \simulationRelT{t_2}{t'_2}{\?}
    }{
     \simulationRelT{\pair{t_1}{t_2}}{\cast{\ev[\pairtype{\?}{\?}]}{\pair{t'_1}{t'_2} :: \?}}{\?}
    }\and
 \inference[(R$\?$)]{
       \simulationRelT{t}{t'}{\?}
    }{
     \simulationRelT{t}{\cast{\ev[\?]}{t' :: \?}}{\?}
    }
    \and
    \inference[(Rapp)]{
       \simulationRelT{v_1 }{v'_1}{\?}  &
       \simulationRelT{v_2 }{v'_2 }{\?}
    }{
        \simulationRelT{v_1 \; v_2}{(\cast{\ev[\? -> \? ]}{v'_1 :: \? -> \? })\;v'_2}{\?}
        }
    \and
    \inference[(TRappL)]{
       \simulationRelT{t_1 }{t'_1}{\?}  &
       \simulationRelT{t_2 }{t'_2 }{\?}
    }{
        \simulationRelT{t_1 \; t_2}{\letT{x}{t'_1}{\letT{y}{t'_2}{(\cast{\ev[\? -> \? ]}{x :: \? -> \? })\; y}}}{\?} 
    }
    \and
    \inference[(RappR)]{
       \simulationRelT{v_1 }{v'_1}{\?}  &
       \simulationRelT{t_2 }{t'_2 }{\?}
    }{
        \simulationRelT{v_1 \; t_2}{\letT{y}{t'_2}{(\cast{\ev[\? -> \? ]}{v'_1 :: \? -> \? })\; y}}{\?} 
    } 
    \and
\inference[(TRpi)]{
      \simulationRelT{t}{t'}{\?} 
    }{
      \simulationRelT{\proj{i}[t]}{\proj{i}[\cast{\ev[{\pairtype{\?}{\?}}]}{t' :: \pairtype{\?}{\?}}]}{\?}  
    }
    \and   
    \inference[(TRsG)]{
        \simulationRelT{t}{t'}{\?}[{\heap;\store; \Gamma, x: \?}]
    }{
        \simulationRelT{\sealC{x}{t}}{\letC{x}{\sue}{t'}}{\?} 
    }
    \and
       \inference[(Rsed1)]{
      \simulationRelT{v_1}{v'_1}{\?} & \simulationRelT{v_2}{v'_2}{\?}
    }{
      \simulationRelT{\sealedC{v_1}{v_2}}{{\cast{\ev[\? -> \?]}{\proj{1}[\cast{\ev[\pairtype{\?}{\?}]}{{{{v'_2}} :: \pairtype{\?}{\?}}}]:: \? -> \?} \; {{v'_1}}}}{\?}   
    }
    \and
       \inference[(TRsed1L)]{
      \simulationRelT{t_1}{t'_1}{\?} & \simulationRelT{t_2}{t'_2}{\?}
    }{
      \simulationRelT{\sealedC{t_1}{t_2}}{\letT{x}{t'_1}{\letT{y}{t'_2}{(\cast{\ev[\? -> \?]}{\proj{1}[\cast{\ev[\pairtype{\?}{\?}]}{{{{y}} :: \pairtype{\?}{\?}}}]:: \? -> \?}) \; {{x}}}}}{\?}   
    }
    \and
       \inference[(Rsed1R)]{
      \simulationRelT{v_1}{v'_1}{\?} & \simulationRelT{t_2}{t'_2}{\?}
    }{
      \simulationRelT{\sealedC{v_1}{t_2}}{\letT{y}{t'_2}{{(\cast{\ev[\? -> \?]}{\proj{1}[\cast{\ev[\pairtype{\?}{\?}]}{{{{y}} :: \pairtype{\?}{\?}}}]:: \? -> \?}) \; {{v'_1}}}}}{\?}   
    }
\and 
\inference[(Rsed2)]{
      \simulationRelT{v}{\cast{\pr{\gE_1,{\gE_2}}}{u :: \?}}{\?} &\seal \in \heap &\sigma := \? \in \store 
    }{
      \simulationRelT{\sealedC{v}{\seal}}{\cast{\pr{\gE_1,\richtype[\sigma]{\gE_2}}}{u :: \?}}{\?}   
    } 
    \and
       \inference[(Runs)]{
      \simulationRelT{v_1}{v'_1}{\?} & \simulationRelT{v_2}{v'_2}{\?} & \simulationRelT{t_3}{t'_3}{\?}[\heap;\store; \Gamma, z: \?]
    }{
      \simulationRelT{\unsealC{z}{v_1}{v_2}{t_3}}{\letC{z}{{\cast{\ev[\? -> \?]}{\proj{2}[\cast{\ev[\pairtype{\?}{\?}]}{{{{v'_1}} :: \pairtype{\?}{\?}}}]:: \? -> \?} \; {{v'_2}}}}{t'_3}}{\?}   
    }
    \and
       \inference[(TRunsL)]{
      \simulationRelT{t_1}{t'_1}{\?} & \simulationRelT{t_2}{t'_2}{\?} & \simulationRelT{t_3}{t'_3}{\?}[\heap;\store; \Gamma, z: \?]
    }{
      \simulationRelT{\unsealC{z}{t_1}{t_2}{t_3}}{\letT{x}{t'_1}{\letT{y}{t'_2}{{{\letC{z}{{\cast{\ev[\? -> \?]}{\proj{2}[\cast{\ev[\pairtype{\?}{\?}]}{{{{x}} :: \pairtype{\?}{\?}}}]:: \? -> \?} \; {{y}}}}{t'_3}}}}}}{\?}   
    }
    \and
       \inference[(RunsR)]{
      \simulationRelT{v_1}{v'_1}{\?} & \simulationRelT{t_2}{t'_2}{\?} & \simulationRelT{t_3}{t'_3}{\?}[\heap;\store; \Gamma, z: \?]
    }{
      \simulationRelT{\unsealC{z}{v_1}{t_2}{t_3}}{\letT{y}{t'_2}{{\letC{z}{{\cast{\ev[\? -> \?]}{\proj{2}[\cast{\ev[\pairtype{\?}{\?}]}{{{{v'_1}} :: \pairtype{\?}{\?}}}]:: \? -> \?} \; {{y}}}}{t'_3}}}}{\?}   
    }
  \end{mathpar}
  \end{small}
 \caption{Simulation relation between \lambdaSeal and \gsfe terms}
  \label{fig:bisimulationMain}
\end{figure}

To prove Theorem~\ref{theorem:semanticsPreservationLamdaSealMain}, we use a simulation relation $\sylogeqLS$ between \lambdaSeal and \gsfe, defined in Figure~\ref{fig:bisimulationMain}. 
The simulation relation $\simulationRelT{t}{t_{\ev}}{\?}$  uses a set of allocated seals $\heap$ by the reduction of the \lambdaSeal term $\lste{t}$. The \gsfe term  $t_{\ev}$ typechecks in the typing environment $\Gamma$ where all variables have type unknown, and it typechecks and it is evaluated in the store $\store$ with all its type names instantiated, also, to the unknown type. In all the rules of the simulation, we implicitly assume that $\heap$ and $\store$ are synchronized, \ie if $\seal \in \heap$ then $\sigma := \? \in \store$. Rules whose names begin with (TR) relate a \lambdaSeal term and its translation in \gsfe, \ie embedding first the \lambdaSeal term into \gsf, and then translating the resulting \gsf term to a \gsfe term. For instances, Rule (TRu) relates the \lambdaSeal value $\lste{1}$ with the  $\gsfe$ term $\cast{\ev[\Int]}{(\ev[\Int] 1 :: \Int) :: \?}$. Note that Rule (TRu) uses metavariable $\gT$ to denote the possible types of \gsfe raw values ($u$), obtained by the embedding: either a base type $\basetype$, an unknown function type $\? -> \?$, or a pair of raw values $\pairtype{\gT}{\gT}$. Rule (TRsG) relates the seal generation term $\sealC{x}{t}$ with the \gsfe term that let-binds the variable $x$ to the term $\sue$ to be substituted in $t'$; it requires that the bodies of the seal generation and let-binding be related.
The remaining rules, whose names begin with (R), help us keep terms related as they reduce. One of the most important rules is (Rsed2), which relates a \lambdaSeal sealed value with a \gsf value that has sealing evidence. Rule (Rsed1) relates a sealed value $\sealedC{v_1}{v_2}$ with a \gsfe term that takes the first component of $v'_2$ (expected to be a $\suse$ value related to the seal $\lste{v_2}$), and applies it to $v'_1$ related to $\lste{v_1}$. Dually, Rule (Runs) relates a term for unsealing with a \gsfe term that takes the second component of $v'_1$ (expected to be a $\suse$ value related to the seal $\lste{v_1}$), and applies it to $v'_2$ related to $\lste{v_2}$ (expected to be a sealed value). Also, the rule requires the bodies 
$\lste{t_3}$ and $t_3$ to be related.
  
We first establish a number of useful lemmas. First, all \gsfe terms that are in the relation have type unknown, simulating the fact that they are related to untyped \lambdaSeal terms.
\begin{restatable}{lemma}{typingGsfeTermsByUnknownMain}
\label{lemma:typingGsfeTermsByUnknownMain}
If $\simulationRelT{t}{t_{\ev}}{\?}$ then $\store; \Gamma |- t_{\ev} : \?$.
\end{restatable}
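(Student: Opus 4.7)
The plan is to proceed by straightforward induction on the derivation of the simulation relation $\simulationRelT{t}{t_\ev}{\?}$ defined in Figure~\ref{fig:bisimulationMain}. For each rule, I show that the conclusion's \gsfe term is typeable at \? under the given store $\store$ and context $\Gamma$ (where, by the invariant on the relation, every variable in $\Gamma$ is bound to \?, and every type name in $\dom(\store)$ is bound to \?). Most cases follow by applying the corresponding \gsfe typing rule (Figure~\ref{fig:gsfe}) after invoking the inductive hypothesis on each premise of the form $\simulationRelT{\cdot}{\cdot}{\?}$: for instance, (R$\lambda$) gives a lambda of type $\? -> \?$ which, together with the evidence $\ev[\? -> \?] \Vdash (\? -> \?) \sim \?$, is ascribed to \? via (Easc); (Rp) and (TRpt) are handled analogously with the pair type; applications, let-bindings, and projections all ascribe their subterms to the expected top-level constructor before the elimination, so they discharge the consistency premises in (Eapp), (Eapp$\cT$), (Epairi).

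The only slightly non-routine cases are (Rs) and (Rsed2). For (Rs), I must argue that $\suse$ has type \?. Since $\sue$ is obtained by translating $\su \equiv ((\Lambda X.\pair{\lambda x{:}X.x{::}\?}{\lambda x{:}\?.x{::}X})\;[\?]) :: \?$, the outer ascription forces $\sue : \?$; then, because $\suse$ arises from evaluating $\sue$ in some store extending $\store$ with $\sigma := \?$, a routine appeal to the fact that the reduction relation (on \gsfe) preserves typing (Prop.~\ref{prop:typesafety}, lifted from \gsf to \gsfe) delivers $\store; \Gamma |- \suse : \?$. For (Rsed2), the conclusion's term is $\cast{\pr{\gE_1, \richtype[\sigma]{\gE_2}}}{u :: \?}$, and I must check that the evidence still supports a consistency judgment whose right-hand side is \?. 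This follows because $\sigma \in \dom(\store)$ with $\store(\sigma) = \?$, so by the rule for type names in consistency (Fig.~\ref{fig:gsf-statics}) $\richtype[\sigma]{\gE_2}$-typed terms remain consistent with \?, and then (Easc) applies using the inductive hypothesis on the underlying $u$.

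The main obstacle, if any, is purely bookkeeping: there are many rules in Figure~\ref{fig:bisimulationMain}, and each involves carefully tracking which evidences justify which consistency judgments under the store invariant. A subsidiary point of care is ensuring that the translation-oriented rules (those prefixed (TR)) and the reduction-closure rules (prefixed (R) without T) are treated uniformly; in both families, each intermediate subterm inherits its typing directly from the inductive hypothesis combined with the standard \gsfe typing rules. No new metatheoretic machinery is required beyond what is already developed in \S\ref{sec:gsf-statics} and \S\ref{sec:gsf-dynamics}, so the proof reduces to a (large but mechanical) case analysis.
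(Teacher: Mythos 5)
Your proof is correct and follows the standard route---induction on the derivation of the simulation relation, discharging each rule with the corresponding \gsfe typing rule under the invariant that $\Gamma$ and $\store$ bind everything to $\?$---which is essentially the same approach the paper takes for this auxiliary lemma. The only cases needing any care, (Rs) and (Rsed2), are handled adequately as you describe (for (Rs) one can also just type the value $\suse$ directly rather than appealing to preservation, but either argument works).
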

Also, the relation $\sylogeqLS$ guarantees that if we have a \lambdaSeal value related to a \gsfe term, then 
the latter reduces to a related value. 

\begin{restatable}{lemma}{valueOnTheLeftMain}
\label{lemma:valueOnTheLeftMain}
If $\simulationRel{v}{t_{\ev}}{\?}$ , then there exists $v_{\ev}$ s.t. $\conf[\store]{t_{\ev}}  \red^{*} \conf[\store]{v_{\ev}}$, and $\simulationRel{v}{v_{\ev}}{\?}$.
\end{restatable}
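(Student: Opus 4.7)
The plan is to proceed by induction on the derivation of $\simulationRel{v}{t_{\ev}}{\?}$, case-analyzing the last rule applied. Since the left-hand side is a \lambdaSeal value $v$---ranging over $\lste{\const}$, $\lste{\lambda x.t}$, $\lste{\pair{v_1}{v_2}}$, $\lste{\seal}$, and $\lste{\sealedC{v}{\seal}}$---only a restricted set of rules from Figure~\ref{fig:bisimulationMain} can apply, and I would first enumerate them. I split the analysis into three groups: (i) rules whose right-hand side is already a \gsfe value; (ii) rules whose right-hand side reduces in a single ($R$asc) step once an inductive hypothesis is applied; and (iii) rules that require multi-step reduction to reach a value.

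For group (i), namely (Rb), (R$\lambda$), (Rs), (Rsed2), and (Rp) applied to component values, the conclusion is immediate: take $v_{\ev} = t_{\ev}$. For group (ii), rule (TRu) gives $t_{\ev} = \cast{\ev[\gT]}{(\cast{\ev[\gT]}{u :: \gT}) :: \?}$, which reduces via ($R$asc) using $\ev[\gT] \trans{=} \ev[\gT] = \ev[\gT]$ (same evidence combined with itself), yielding $\cast{\ev[\gT]}{u :: \?}$, which the premise already places in the relation with $v$. Rule (R$\?$) applies the inductive hypothesis to the strict subderivation $\simulationRel{v}{t'}{\?}$ to obtain $t' \red^{*} v'$ with $\simulationRel{v}{v'}{\?}$; then by a small inversion lemma on the shape of a value in the relation, $v'$ has the form $\cast{\evp}{u :: \?}$, so the outer ascription combines via ($R$asc), and rules (idL)/(idR) of consistent transitivity from Figure~\ref{fig:gsf-trans} ensure $\evp \trans{=} \ev[\?] = \evp$, giving back $v'$.

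For group (iii), the interesting cases are (TRpt) when both subterms are values and (Rsed1) when the second component is a seal. In (TRpt), I apply the IH to each subterm to obtain values $v'_i$, combine them via ($R$pair) into a pair value, and then combine the outer ascription via ($R$asc); the resulting term matches (Rp). In (Rsed1) with $v_2 = \lste{\seal}$, inversion together with (Rs) forces $v'_2 = \suse$; the reduction then proceeds through ($R$proj$i$) on the pair projection and ($R$app) on the sealing function application, with evidence combination producing a sealing evidence of the form $\pr{\gE_1, \richtype[\sigma]{\gE_2}}$ on the raw value underlying $v'_1$, which is precisely the shape required by (Rsed2).

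The main obstacle I expect is the (Rsed1) case: tracing mechanically through the reduction of $\proj{1}[\cast{\ev[\pairtype{\?}{\?}]}{\suse :: \pairtype{\?}{\?}}]$, then the application to $v'_1$, and verifying that the evidence composition---using rule (sealL) of consistent transitivity together with the fact that $\suse$'s first component has the intrinsic type $\alpha \to \?$ after the type application has generated $\sigma$---produces exactly the evidence structure $\pr{\gE_1, \richtype[\sigma]{\gE_2}}$ demanded by (Rsed2). This likely needs an auxiliary lemma characterizing the shape of $\suse$ (specifically, what its first projection reduces to as an ascribed function value), so that the sealing step is forced to introduce a type name in the second evidence component only. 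Once that lemma is in hand, the remaining cases are routine applications of the IH together with evaluation-frame reductions.
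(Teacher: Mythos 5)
Your proposal is correct and follows essentially the same route as the paper: induction on the derivation of the simulation relation, with the already-value rules immediate, the ascription-wrapping rules discharged by one $(R\text{asc})$ step using idempotence/identity of consistent transitivity, and the $(R\text{sed1})$/(TRpt) cases traced through the frame reductions to land in the shape required by $(R\text{sed2})$/(Rp). The auxiliary characterization of $\suse$ and its first projection that you flag is indeed the only non-routine ingredient, and it is exactly what the paper relies on to show the application step produces evidence of the form $\pr{\gE_1,\richtype[\sigma]{\gE_2}}$.
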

For example, we know by Rule (Rsed1L) that 
{\small $$\simulationRel{\sealedC{1}{\seal}}{\letT{x}{\cast{\ev[\Int]}{1 :: \?}}{\letT{y}{{\suse}}{ (\cast{\ev[\? -> \?]}{\proj{1}[\cast{\ev[\pairtype{\?}{\?}]}{{y :: \pairtype{\?}{\?}}}]:: \? -> \?}) \; x}}}{\?}[\seal; \sigma:= \?]$$}
Thus, we know that the \gsfe term reduces to a value, in this case, $\cast{\pr{\Int, \richtype[\sigma]{\Int}}}{1 :: \?}$.

Lemma~\ref{lemma:substitutionpreservesrelationMain} establishes 
substituting related values in related terms yields related terms.
\begin{restatable}{lemma}{substitutionpreservesrelationMain}
\label{lemma:substitutionpreservesrelationMain}
If $\simulationRelT{t}{t_{\ev}}{\?}[\heap;\store; \Gamma, x:\?]$ and $\simulationRelT{v}{v_{\ev}}{\?}$, then \mbox{$\simulationRelT{t[v/ x]}{t_{\ev}[v_{\ev}/ x]}{\?}$}.
\end{restatable}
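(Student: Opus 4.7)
The plan is to proceed by structural induction on the derivation of the simulation judgment $\simulationRelT{t}{t_{\ev}}{\?}[\heap;\store; \Gamma, x:\?]$, performing a case analysis on the last rule used. The goal in each case is to show that the relation is preserved when $x$ is substituted by $v$ on the left and by $v_\ev$ on the right. Because the rules of Figure~\ref{fig:bisimulationMain} are syntax-directed on the pair $(t, t_\ev)$ up to a few ``administrative'' rules (such as TRu and R?), the case analysis mirrors the term structure closely.

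The base cases are straightforward. For rule (TRx) with the bound variable, both sides substitute to exactly $v$ and $v_\ev$, and the relation holds directly by the hypothesis $\simulationRelT{v}{v_{\ev}}{\?}$. For (TRx) with any other variable $y \neq x$, substitution is the identity on both sides and the judgment is preserved. For the constant rule (Rb) and the seal rule (Rs), the terms contain no free variables and substitution is again the identity. For the ascription/raw-value administrative rules (TRu), (R?), and (Rsed2), the conclusion follows by invoking the IH on the inner relation and then reapplying the same rule, since the enclosing evidences and ascriptions involve no free term variables.

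The structural cases (pairs Rp/TRpt, applications Rapp/TRappL/RappR, projections TRpi, sealing Rsed1/TRsed1L/Rsed1R, unsealing Runs/TRunsL/RunsR) all follow the same pattern: apply the IH to each related subterm, then reassemble using the same rule. For these, one must verify that the introduced administrative pieces (such as the evidences $\ev[\? -> \?]$, $\ev[\pairtype{\?}{\?}]$, and the let-bindings generated by A-normalization) are untouched by the substitution, which holds because $x$ does not occur in them by the usual Barendregt convention on fresh bound names.

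The main obstacle—and really the only subtle point—is the handling of binders, namely rules (R$\lambda$), (TRsG), (Runs), (TRunsL), (RunsR), and all the let-bindings produced by the translation. In each of these cases the IH must be applied under a context extended with a fresh bound variable (say $y$ or $z$), and one must ensure $y \notin \FV(v) \cup \FV(v_\ev)$ and that the bound variable on both sides is the same, so that the substitution commutes with the binder. By $\alpha$-renaming, we may always assume the bound variable is fresh with respect to $x$, $v$, and $v_\ev$, which guarantees that $(\lambda y.t_0)[v/x] = \lambda y.(t_0[v/x])$ on the left and analogously on the right. Then the IH applied in the extended typing context $\Gamma, y:\?, x:\?$ reordered to $\Gamma, x:\?, y:\?$ delivers the needed related bodies, and rule (R$\lambda$) (or the corresponding let/unseal/seal rule) can be reapplied. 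A small auxiliary observation is that the relation is stable under weakening of $\Gamma$, so that $\simulationRelT{v}{v_\ev}{\?}$ lifts to the extended context required to apply the IH.
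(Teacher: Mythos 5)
Your proposal is correct and matches the paper's approach: the proof is by induction on the derivation of $\simulationRelT{t}{t_{\ev}}{\?}[\heap;\store; \Gamma, x:\?]$ with a case analysis on the rules of Figure~\ref{fig:bisimulationMain}, using the hypothesis on $v$ and $v_{\ev}$ in the variable case and reassembling via the IH elsewhere. The points you single out—freshness of bound variables so substitution commutes with binders, and weakening of $\Gamma$ to lift the value relation under binders—are exactly the administrative details the paper's proof relies on.
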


Lemma~\ref{lemma:simulatesReduction} shows that the relation $\sylogeqLS$ simulates both the notions of reduction \lste{$\nred$} and $\nred$, and the reduction relations \lste{$\red$} and $\red$, including error cases. 
Note that a single step of reduction in \lambdaSeal can be simulated by several reduction steps in \gsfe, hence the use of $\red^{*}$ in the conclusions of the lemma cases.
For example, we have $\simulationRel{\fst[\pair{1}{2}]}{\fst[\cast{\ev[\pairtype{\?}{\?}]}{(\cast{\ev[\pairtype{\Int}{\Int}]}{\pair{1}{2} :: \?) :: \pairtype{\?}{\?}}}]}{\?}$, and the \gsfe term needs to reduce inside the frame $\fst[{[]}]$ before eliminating the projection like the \lambdaSeal term. 

\begin{restatable}{lemma}{simulatesReduction}
\label{lemma:simulatesReduction} 
Suppose that $\lste{t}$ is a term of \lambdaSeal, $t_{\ev}$ is a term from \gsfe and $\simulationRel{t}{t_{\ev}}{\?}$.  
\begin{enumerate}[label=\alph*.]
\item \label{case:caseA} If ${\lste{\storeevalLS{t} \nred \storeevalLS[\heapp]{t'}}}$, then there exists $t'_{\ev}$ s.t. $\conf[\store]{t_{\ev}} \red^{*} \conf[\store']{t'_{\ev}}$ and  $\simulationRel{t'}{t'_{\ev}}{\?}[\heapp;\store']$
\item \label{case:caseB}  If ${\lste{\storeevalLS{t} \nred \stError}}$, then $\conf[\store]{t_{\ev}} \red^{*} \error$
\item \label{case:caseC}  If ${\lste{\storeevalLS{t} \red \storeevalLS[\heapp]{t'}}}$, then there exists $t_{\ev}'$ s.t. $\conf[\store]{t_{\ev}} \red^{*} \conf[\store']{t_{\ev}'}$ and  $\simulationRel{t'}{t_{\ev}'}{\?}[\heapp;\store']$
\item \label{case:caseD}  If ${\lste{\storeevalLS{t} \red \stError}}$, then $\conf[\store]{t_{\ev}} \red^{*} \error$
\end{enumerate}
\end{restatable}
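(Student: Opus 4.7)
The plan is to prove the four parts of Lemma~\ref{lemma:simulatesReduction} by mutual case analysis, with parts (a) and (b) handled first (they concern the notion of reduction $\nred$) and parts (c) and (d) then obtained by a straightforward structural induction on evaluation frames using (a) and (b) as base cases. For parts (a) and (b), I proceed by case analysis on the \lambdaSeal reduction rule that fires, then by inversion on the simulation derivation $\simulationRel{t}{t_{\ev}}{\?}$ to determine the possible shapes of the related \gsfe term. In each case I exhibit an explicit reduction sequence $\conf[\store]{t_{\ev}} \red^{*} \conf[\store']{t'_{\ev}}$, or to \error, and rebuild a simulation derivation for the resulting configuration, appealing to Lemmas~\ref{lemma:valueOnTheLeftMain} and~\ref{lemma:substitutionpreservesrelationMain} to handle value coercions and substitutions.

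Most reduction cases follow a uniform pattern. For $\beta$-reduction $(\lambda x.t)\;v$, the related \gsfe term (by Rules (Rapp), (R$\lambda$), and possibly (R$\?$)) has the shape $(\cast{\ev[\?\!->\!\?]}{(\lambda x.t')::\?})\;v'_{\ev}$; several $(R\mathrm{asc})$ and $(R\mathrm{app})$ steps reach $t'[v''_{\ev}/x]$, and Lemma~\ref{lemma:substitutionpreservesrelationMain} re-establishes the simulation. Projections and primitive operations are analogous. For the sealing operations the key shapes come from Rules (Rsed1), (Rsed2), and (Runs): a $\lste{\sealedC{v}{\seal}}$ step is matched in \gsfe by pair-projection, then an application of the sealing function $\fst[\suse]$ which rewrites evidence from $\pr{\cE_1,\cE_2}$ to $\pr{\cE_1,\richtype[\sigma]{\cE_2}}$, yielding exactly the shape required by (Rsed2). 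A successful unseal step reduces by $\fst[\suse]$'s dual, and consistent transitivity eliminates the $\sigma$ in the middle positions (rule (unsl) of Figure~\ref{fig:gsf-trans}), producing a value related to the unsealed one. The seal-generation reduction $\lste{\sealC{x}{t}} \nred_{\heap,\seal} t[\seal/x]$ is matched by $\letC{x}{\sue}{t'}$ reducing $\sue$ to $\suse$ (allocating $\sigma := \?$ in the store), and then substituting, with (Rs) providing $\simulationRel{\seal}{\suse}{\?}[\heap,\seal;\store,\sigma:=\?]$.

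The error cases in (b) and (d) require the most care. For a \typeError arising from, say, applying a non-function, inversion on the simulation shows the \gsfe term has a top-level $\cast{\ev[\?\!->\!\?]}{\cdot :: \?\!->\!\?}$ around a value whose underlying evidence contradicts $\?\!->\!\?$; consistent transitivity then fails and produces \error via rule ($R$err) lifted through the evaluation frame. The \unsealError case is the most delicate: when $\lste{\unsealC{x}{\seal}{\sealedC{v}{\seal'}}{t}}$ fails because $\seal \neq \seal'$, the related \gsfe term performs $\snd[\susep]$ applied to a value sealed by $\sigma$ (not $\sigma'$); the composition of the unsealing evidence $\pr{\richtype[\sigma']{\?},\richtype[\sigma']{\?}}$ with the sealed-value evidence $\pr{\cE_1,\richtype[\sigma]{\cE_2}}$ is undefined, exactly as illustrated in the second reduction trace of \S\ref{sec:embedding-lambdaSeal-in-gsf}, so the \gsfe term reduces to \error.

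The main obstacle will be the bookkeeping around the A-normal form translation and the (R$\?$)/(TRu) rules, which allow extra layers of ascriptions to be absorbed into related \gsfe terms. These layers mean that the \gsfe term may need a bounded number of administrative ($R$asc) steps before exposing a redex; the induction has to enumerate these forms to produce the multi-step reduction $\red^{*}$. A second, subtler obstacle is keeping $\heap$ and $\store$ synchronized across steps, so that every new seal allocation in \lambdaSeal corresponds to a matching type-name allocation $\sigma := \?$ in \gsfe, and the (Rs) and (Rsed2) rules remain applicable at the new configurations. Once this case-by-case ``redex simulation'' is established, parts (c) and (d) follow immediately by induction on the evaluation frame $\lste{f}$, since the simulation relation is defined compositionally on term frames and reduction under frames in both languages is just context closure.
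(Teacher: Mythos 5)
Your proposal is correct and follows essentially the same route as the paper: a case analysis covering the simulation rules and the fired reduction rule for parts (a) and (b), discharging each case with Lemmas~\ref{lemma:valueOnTheLeftMain} and~\ref{lemma:substitutionpreservesrelationMain} and the behaviour of consistent transitivity (including its failure for mismatched seals), with parts (c) and (d) then reduced to (a) and (b) by case analysis on the frame-closure of reduction. Your explicit handling of the administrative ascription layers introduced by (R$\?$)/(TRu) and of the $\heap$/$\store$ synchronization invariant matches the concerns the paper's proof implicitly relies on.
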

\begin{proof}
The proof is by induction on $\simulationRel{t}{t_{\ev}}{\?}$ and by analysis of the different cases.
\begin{case}[\ref{case:caseA}] Most of the cases use Lemma~\ref{lemma:valueOnTheLeftMain}, Lemma~\ref{lemma:substitutionpreservesrelationMain} and the consistent transitivity relation.
\end{case}
\begin{case}[\ref{case:caseB}] Most of the cases use Lemma~\ref{lemma:valueOnTheLeftMain} and the consistent transitivity relation.
\end{case}
\begin{case}[\ref{case:caseC}] The proof follows by cases analysis on ${\lste{\storeevalLS{t} \red \storeevalLS[\heapp]{t'}}}$and  from Case~(\ref{case:caseA}).
\end{case}
\begin{case}[\ref{case:caseD}] The proof follows by cases analysis on ${\lste{\storeevalLS{t} \red \stError}}$ and  from Case~(\ref{case:caseB}).
\end{case}
\end{proof}

The main property of the relation $\sylogeqLS$ is that related terms behave similarly:
\begin{restatable}[]{lemma}{TerminationSimulatedTermLambdaSealMain}
\label{lemma:TerminationSimulatedTermLambdaSealMain}
If $\simulationRel{t}{t_{\ev}}{\?}[]$ then 
\begin{itemize}
  \item ${\lste{\terminationLS{t}}}$ implies  $\conf[\emptyenv]{t_{\ev}} \red^{*} \conf{v_{\ev}}$, where $\simulationRel{v}{v_{\ev}}{\?}$.
  \item ${\lste{\divergeLS{t}}}$ implies $t_{\ev}$ diverges.
  \item ${\lste{\terminationLSTError{t}}}$ implies  $\conf[\emptyenv]{t_{\ev}} \red^{*}  {\error}$.
  \end{itemize}
\end{restatable}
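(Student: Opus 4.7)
My plan is to prove each of the three clauses essentially independently, with Lemma~\ref{lemma:simulatesReduction} as the workhorse in each case. The simulation relation $\sylogeqLS$ has been set up so that a single reduction step in \lambdaSeal is mirrored by zero-or-more reduction steps in \gsfe while preserving the relation; we chain these correspondences along the full reduction trace of $\lste{t}$.

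For the termination clause, I would proceed by induction on the length of the reduction sequence $\lste{\storeevalLS[\emptyenv]{t} \red^{*} \storeevalLS[\heap]{v}}$. In the base case (zero steps) we have $\lste{t = v}$ already a value, so Lemma~\ref{lemma:valueOnTheLeftMain} produces a $v_{\ev}$ with $\conf[\emptyenv]{t_{\ev}} \red^{*} \conf{v_{\ev}}$ and $\simulationRel{v}{v_{\ev}}{\?}$. In the inductive step, $\lste{\storeevalLS[\emptyenv]{t} \red \storeevalLS[\heap']{t'} \red^{*} \storeevalLS[\heap]{v}}$; Lemma~\ref{lemma:simulatesReduction}(c) gives some $t'_{\ev}$ with $\conf[\emptyenv]{t_{\ev}} \red^{*} \conf[\store']{t'_{\ev}}$ and $\simulationRel{t'}{t'_{\ev}}{\?}[\heap';\store']$, where $\store'$ extends $\emptyenv$ only if a fresh seal/type name was allocated (keeping $\heap'$ and $\store'$ synchronized). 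The inductive hypothesis applied to $\lste{t'}$ and $t'_{\ev}$ (after weakening to start from $\store'$, which is straightforward because the extra type names are fresh and bound to $\?$) then yields the desired $v_{\ev}$.

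For the error clause, the reduction decomposes as $\lste{\storeevalLS[\emptyenv]{t} \red^{*} \storeevalLS[\heap']{t'} \red \stError}$. A straightforward induction on the length of the initial reduction, using Lemma~\ref{lemma:simulatesReduction}(c) at each step, produces $t'_{\ev}$ such that $\conf[\emptyenv]{t_{\ev}} \red^{*} \conf[\store']{t'_{\ev}}$ and $\simulationRel{t'}{t'_{\ev}}{\?}[\heap';\store']$. Then Lemma~\ref{lemma:simulatesReduction}(d) finishes by showing $\conf[\store']{t'_{\ev}} \red^{*} \error$, and transitivity concludes $\conf[\emptyenv]{t_{\ev}} \red^{*} \error$.

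The main obstacle is the divergence clause, because Lemma~\ref{lemma:simulatesReduction}(c) allows the \gsfe side to take zero steps in principle, which is incompatible with concluding divergence from a purely compositional chaining argument. The cleanest route is by contradiction: suppose $t_{\ev}$ does not diverge, so $\conf[\emptyenv]{t_{\ev}} \red^{*} \conf[\store'']{v_{\ev}}$ for some value $v_{\ev}$ or $\conf[\emptyenv]{t_{\ev}} \red^{*} \error$. Using the infinite \lambdaSeal reduction sequence $\lste{t = t_0 \red t_1 \red t_2 \red \cdots}$ and iterating Lemma~\ref{lemma:simulatesReduction}(c), I build matching \gsfe states $t_{\ev}^i$ with $\simulationRel{t_i}{t_{\ev}^i}{\?}$ and $\conf{t_{\ev}^i} \red^{*} \conf{t_{\ev}^{i+1}}$. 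Since $\gsfe$ reduction is deterministic and $v_{\ev}$ (or $\error$) is terminal, for some index $k$ we must have $t_{\ev}^k$ be precisely $v_{\ev}$ (or the error state). Inspection of the relation cases in Figure~\ref{fig:bisimulationMain} shows that if $t_{\ev}^k$ is a value, then $\lste{t_k}$ must be a value (rules (Rb), (R$\lambda$), (Rs), (Rp), and (Rsed2) are the only value-shaped cases, each of which relates a \lambdaSeal value to the \gsfe value); similarly an error state on the \gsfe side forces a dead configuration on the \lambdaSeal side, which contradicts $\lste{t_k \red t_{k+1}}$. This yields the contradiction and establishes divergence. The careful case analysis on the value-shape of $t_{\ev}^k$ vs.\ the shape of $\lste{t_k}$ under the relation is the technically delicate step, but it is a finite inspection of the rules of Figure~\ref{fig:bisimulationMain}.
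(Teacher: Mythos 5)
Your overall strategy---chaining Lemma~\ref{lemma:simulatesReduction} along the \lambdaSeal reduction trace and closing the terminated case with Lemma~\ref{lemma:valueOnTheLeftMain}---is exactly the paper's proof, which consists of a three-line case analysis citing precisely those lemmas. Your expansions of the termination and error clauses are correct and add the explicit inductions that the paper leaves implicit.

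The divergence clause, however, contains a gap that your own framing makes visible. You rightly observe that Lemma~\ref{lemma:simulatesReduction}(\ref{case:caseC}) only yields $\conf{t_{\ev}^{i}} \red^{*} \conf{t_{\ev}^{i+1}}$, i.e.\ possibly \emph{zero} \gsfe steps per \lambdaSeal step, and you switch to a contradiction argument to compensate. But the inference ``since \gsfe reduction is deterministic and $v_{\ev}$ is terminal, for some index $k$ we must have $t_{\ev}^{k}$ be precisely $v_{\ev}$'' does not follow: the matched states $t_{\ev}^{0}, t_{\ev}^{1}, \dots$ form a monotone sequence of positions along the finite deterministic reduction of $t_{\ev}$, and such a sequence can stabilize at a \emph{non-terminal} intermediate state if all but finitely many of the simulation steps are vacuous. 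In that scenario the \gsfe side never reaches a value or an error, your case inspection of Figure~\ref{fig:bisimulationMain} (value-shaped versus non-value-shaped related terms) never applies, and no contradiction is obtained. To close the gap you need a productivity (non-stuttering) property: every \lambdaSeal step is matched by at least one \gsfe step, or at least only finitely many consecutive steps can be matched vacuously (e.g.\ via a well-founded measure on the \gsfe term that strictly decreases whenever the match takes zero steps). Given how the embedding wraps every redex in ascriptions and let-bindings this is plausible, but it must be extracted from the proof of Lemma~\ref{lemma:simulatesReduction} and stated explicitly; the paper's own one-line justification of this clause elides the same point, so you have identified, but not yet resolved, the genuinely delicate step.
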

\begin{proof}
The proof is by case analysis on the reduction of $t$.
\begin{itemize}
\item Suppose ${\lste{\terminationLS{t}}}$. Then  $\conf[\emptyenv]{t_{\ev}} \red^{*} \conf{v_{\ev}}$ and $\simulationRel{v}{v_{\ev}}{\?}$ by Lemmas \S\ref{lemma:valueOnTheLeftMain} and \S\ref{lemma:simulatesReduction}(\ref{case:caseC}).
\item Suppose ${\lste{\divergeLS{t}}}$. Then $t_{\ev}$ diverges by Lemma \S\ref{lemma:simulatesReduction}(\ref{case:caseC}).
\item Suppose ${\lste{\terminationLSTError{t}}}$, then  $\conf[\emptyenv]{t_{\ev}} \red^{*}  {\error}$ by Lemma \S\ref{lemma:simulatesReduction}(\ref{case:caseC} and \ref{case:caseD}).
\end{itemize}
\end{proof}

Finally, a \lambdaSeal term and its embedding into \gsfe are related.
\begin{restatable}{lemma}{semanticsPreservationLamdaSealRelationMain}
\label{lemma:semanticsPreservationLamdaSealRelationMain}
If $|- \compileLSA{t} \translate t_{\ev} : \?$, then $\simulationRelT{t}{t_{\ev}}{\?}[]$.
\end{restatable}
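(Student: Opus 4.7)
The plan is to proceed by structural induction on the \lambdaSeal term $\lste{t}$, using the (TR) rules of Figure~\ref{fig:bisimulationMain} as the syntactic schemas that mirror exactly the shape of $\compileLSA{\cdot}$ followed by elaboration $\translate$. For each source constructor of \lambdaSeal, we inspect $\compileLSA{t}$ (given in Figure~\ref{fig:compilelambdasealToGsf}), compute the elaboration to \gsfe, and show that the resulting \gsfe term, together with $\lste{t}$, is an instance of the corresponding (TR) rule under the empty $\heap$, empty store, and typing environment of all-$\?$ bindings. The inductive hypotheses supply the premises of the (TR) rule for subterms.

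Concretely, the base cases are variables (matched by (TRx)), constants (matched by (Rb) since $\compileLSA{\const} = \const :: \?$ elaborates to $\cast{\ev[\basetype]}{\const}:: \?$), and the sealing construct $\lste{\sealC{x}{t}}$, whose embedding $\letT{x{:}\?}{\su}{\compileLSA{t}}$ is handled by (TRsG); here the inductive hypothesis on $\lste{t}$ in the extended context delivers the premise. The inductive cases for $\lste{\lambda x.t}$, pairs, and projections follow a single pattern: the embedding wraps the translated subterms with an outer ascription $::\?$ and the \gsf-to-\gsfe translation inserts the appropriate initial evidence ($\ev[\?\!->\!\?]$, $\ev[\pairtype{\?}{\?}]$, etc.), which are precisely the evidences appearing in rules (R$\lambda$), (TRpt), and (TRpi). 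The cases that use A-normal form ($\lste{t_1\; t_2}$, $\lste{\sealedC{t_1}{t_2}}$, $\lste{\unsealC{z}{t_1}{t_2}{t_3}}$, and $\lste{\op{\vectorOp{t}}}$) are matched by the rules (TRappL), (TRsed1L), (TRunsL), and an analogous (TR) schema for operators: each let-binder in the embedding corresponds syntactically to the let-binders displayed in these rules, so the match is immediate once one observes how the translation decorates each subterm's ascription to $\?$ with an appropriate $\ev[\?]$.

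The only delicate point is bookkeeping: one has to verify that every evidence the elaboration $\translate$ inserts coincides with the evidence required by the corresponding (TR) rule. This requires an auxiliary observation that, since the embedding only ever targets type $\?$ at top level (every subterm in source position is ascribed to $\?$, apart from the type-applied skeleton of $\su$), the translation inserts exactly the reflexive initial evidences $\ev[\?]$, $\ev[\?\!->\!\?]$, $\ev[\pairtype{\?}{\?}]$, or $\ev[\basetype]$ demanded by the corresponding (TR) rule. For $\su$, the translation produces a fully precise initial evidence at $\forall X.\pairtype{(X\!->\!\?)}{(\?\!->\!X)}$, consistent with what (TRsG) expects at the binder position (the $\su$ redex is not yet reduced, so no $\suse$ or type name $\sigma$ appears in the related \gsfe term at this stage; rule (Rs) and the ``R'' rules only become relevant during the reduction-preservation argument of Lemma~\ref{lemma:simulatesReduction}).

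The hardest part is therefore not conceptual but notational: keeping the evidences and type annotations produced by $\translate$ in lockstep with the (TR) rules as the induction descends. I would handle this once and for all by proving a small technical lemma stating that for any closed \lambdaSeal term $\lste{t}$, the elaboration $\translate$ applied to $\compileLSA{t}$ inserts only the canonical initial evidences listed above at each ascription boundary; the main induction then becomes a direct case-by-case matching of syntactic schemas, with the inductive hypothesis applied to immediate subterms and standard weakening of $\Gamma$ used when entering binders (as in $\lste{\lambda x.t}$ and $\lste{\sealC{x}{t}}$).
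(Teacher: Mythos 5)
Your proposal is correct and follows essentially the same route as the paper: a structural induction on the \lambdaSeal term that matches each clause of the embedding (Figure~\ref{fig:compilelambdasealToGsf}), after elaboration to \gsfe, against the corresponding (TR)-rule of Figure~\ref{fig:bisimulationMain}, with the judgment generalized to open terms over an all-$\?$ environment and the (R)-rules reserved for the reduction-preservation argument of Lemma~\ref{lemma:simulatesReduction}. The one bookkeeping detail worth making explicit is that constants elaborate to a doubly-ascribed form, so that case goes through (Rb) composed with (TRu) rather than (Rb) alone; otherwise your auxiliary observation about canonical initial evidences is exactly the right way to discharge the evidence-matching obligations.
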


Semantics preservation (Theorem~\ref{theorem:semanticsPreservationLamdaSealMain}) follows from 
Lemma~\ref{lemma:TerminationSimulatedTermLambdaSealMain} and
Lemma~\ref{lemma:semanticsPreservationLamdaSealRelationMain}.

\myparagraph{Leaking the Encoding} As mentioned earlier, the semantic preservation result does not account for \lambdaSeal terms that can raise runtime seal type errors, \sealError. The reason is that, without further caution, the encoding of seals as pairs of functions could be abused. For instance, the term $\lste{\unsealC{y}{\pair{\lambda x.x}{\lambda x.x}}{1}{y}}$ raises a \sealError in \lambdaSeal, because the expression that is supposed to produce a seal produces a pair of functions. Nevertheless, the embedding of this term in \gsf reduces to $1$. To properly deal with such cases---and therefore obtain a semantic preservation statement with equivalences instead of implications---would require introducing a primitive way of distinguishing ``proper seals'' produced by the translation from standard pairs of functions (\eg~an implementation could used a privately generated token). Of course, in a statically-typed version of \lambdaSeal~\cite{pierceSumii:2000}, this problem is sidestepped because a \sealError can never occur at runtime.

\myparagraph{Embedding of the Dynamically-Typed Language}
Finally, a direct consequence of the semantics preservation theorem is that the embedding of \dynLang is also correct; in fact the embedding result holds as stated by~\cite{siekAl:snapl2015}(Theorem 2), with equivalences instead of implications:

\begin{restatable}[Embedding of \dynLang]{corollary}{equivalenceSemantics}
\label{corollary:equivalenceSemantics}
Let $\lste{t}$ be a closed \dynLang term.
\begin{enumerate}[label=\alph*.]
  \item $|- \compileLSA{t} : \?$
  \item ${\lste{{t}\terminationsy}}$ if and only if $\gsfreds{\compileLSA{t}}$
  \item ${\lste{\divergeLS{t}}}$ if and only if $\compileLSA{t} \divergesy$
  \end{enumerate}
\end{restatable}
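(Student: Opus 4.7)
The plan is to lift Theorem~\ref{theorem:semanticsPreservationLamdaSealMain} from implications to equivalences by exploiting the fact that \dynLang is exactly the sub-language of \lambdaSeal obtained by dropping the sealing constructs ($\lste{\sealC{x}{t}}$, $\lste{\sealedC{t_1}{t_2}}$, $\lste{\unsealC{z}{t_1}{t_2}{t_3}}$) and seal values $\lste{\sigma}$. First I would observe that on sealing-free source terms, $\compileLSA{\cdot}$ coincides with the embedding from \S\ref{sec:embed-untyped}, and that every closed \dynLang term is a closed \lambdaSeal term. Well-typedness (part a) then follows by a routine induction on $t$ using the \gsf typing rules: each clause either ascribes a raw value at $\?$ or uses A-normal form to let-bind subterms at $\?$, so the ascription, application, pair, projection, and operator rules all line up with the expected type.

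For the forward directions of (b) and (c), I would appeal directly to Theorem~\ref{theorem:semanticsPreservationLamdaSealMain}(b)--(c) after establishing that every closed \dynLang term is a \emph{valid} \lambdaSeal term, i.e.\ does not reduce to $\lste{\sealError}$. This in turn follows from a sealing-freeness invariant: the reduction rules that can raise $\lste{\sealError}$ (variants of the $\lste{\mathsf{let}\{x\}_\sigma = v \;\mathsf{in}\; t}$-like rules in the full language, and the sealed-value rules) only fire on terms containing the sealing constructs, and no reduction rule of \lambdaSeal ever introduces $\lste{\nu}$, $\lste{\{\cdot\}_\cdot}$, or a seal value into a sealing-free term. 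Hence, a straightforward induction on reductions preserves sealing-freeness, and \sealError becomes unreachable.

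For the reverse implications in (b) and (c), I would argue by exclusion using trichotomy on both sides. On the \lambdaSeal side, restricted to valid \dynLang terms, the only possible outcomes are $\lste{t \terminationsy}$, $\lste{\divergeLS{t}}$, or $\lste{t \terminationsy \typeError}$, and these are mutually exclusive. On the \gsf side, Proposition~\ref{prop:typesafety} (together with the conservative-extension-style observation that embeddings are well-typed at $\?$) gives the matching trichotomy for $\compileLSA{t}$: either $\gsfreds{\compileLSA{t}}{v}$, $\compileLSA{t} \divergesy$, or $\gsfreds{\compileLSA{t}}{\error}$. The three forward implications of Theorem~\ref{theorem:semanticsPreservationLamdaSealMain} map each \dynLang outcome to a unique \gsf outcome, so the converses of (b) and (c) follow: if $\gsfreds{\compileLSA{t}}{v}$ then $t$ cannot diverge nor type-error (otherwise the forward implications would contradict the \gsf trichotomy), so it must terminate; symmetrically for divergence.

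The main obstacle, though modest, is getting the sealing-freeness preservation lemma watertight, since it is the only place where the argument genuinely depends on the shape of \lambdaSeal reduction rather than on generic trichotomy reasoning. Once that invariant is established, promoting the implications of Theorem~\ref{theorem:semanticsPreservationLamdaSealMain} to the biconditionals required here becomes essentially mechanical.
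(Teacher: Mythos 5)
Your proposal is correct and follows essentially the same route as the paper, which proves this corollary in one line by combining Theorem~\ref{theorem:semanticsPreservationLamdaSealMain} with the observation that \sealError cannot arise for \dynLang terms (they contain no sealing constructs, so they are automatically valid \lambdaSeal terms). Your additional details---the sealing-freeness preservation invariant and the trichotomy/exclusion argument for upgrading the implications to biconditionals---are exactly the steps the paper leaves implicit, so there is nothing to change.
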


This result follows from Theorem~\ref{theorem:semanticsPreservationLamdaSealMain} combined with the fact that a \sealError simply cannot occur in \dynLang, which has no sealing-related terms.

\section{Beyond Explicit Polymorphism: Dynamic Implicit Polymorphism}
\label{sec:dynamic-implicit-polymorphism}

Being able to faithfully embed untyped programs in a gradual language, as presented in \S\ref{sec:embedding}, does not necessarily mean that the interoperability between typed and untyped programs is fully supported. In particular, because \sysF and \gsf have term-level constructs related to type abstraction and application, there is a kind of ``impedance mismatch'' between both worlds.
Specifically, the strict reliance of \gsf on {\em explicit polymorphism} gets in the way.
Consider the following example: 
\begin{lstlisting}[numbers=none]
  let id:(*$\forall. X. X -> X$*) = (*$\Lambda$*)X.(*$\lambda$*)x:X.x
  let f:? = (*$\lceil\lambda$*)y.y 1(*$\rceil$*)
  f id 
\end{lstlisting}
When the body of \lstinline|f| tries to apply \lstinline|id| to \lstinline|1|, a runtime error is raised: $\forall X. X -> X$ cannot be used as an $\Int -> \Int$ function.
Similarly, the following example is also rejected at runtime:
\begin{lstlisting}[numbers=none]
  let g:? = (*$\lambda$*)x:(*$(\forall. X. X -> X)$*).x [Int] 1
  let h:? = (*$\lceil\lambda$*)x.x(*$\rceil$*)
  g h
\end{lstlisting}
This time, the runtime error is raised when \lstinline|g| is applied to \lstinline|h|, because 
$\? -> \?$ (the ``underlying type'' of \lstinline|h|) is not consistent with the polymorphic function type $\forall X. X -> X$.

Of course, we can fix both examples fairly easily by inserting a type instantiation to $\?$ in the former (rewriting the last expression to \lstinline|f (id [?])|), and a type abstraction in the latter (rewriting the last expression to \lstinline|g (|$\Lambda$\lstinline|_.h)|).
In essence, these fixes reflect the subtyping relation induced by implicit polymorphism~\cite{mitchell:ic1988,oderskyLaufer:popl1996} (\S\ref{sec:back:param}). However, due to gradual typing, we cannot statically and modularly decide when to insert such type instantiations and abstractions. In the examples, both \lstinline|f| and \lstinline|h| have type $\?$, so their use without such adaptations is perfectly valid typing-wise, and modularly we cannot inspect their definitions to anticipate which adaptations can be necessary.

\subsection{Implicit Polymorphism, Dynamically}
\label{sec:implicit-polymorphism-dynamically}
To avoid the interoperability issues of \gsf presented above, a gradual polymorphic language could bake in implicit polymorphism in its gradual typing rules. This is the approach taken by \lamB~\cite{ahmedAl:icfp2017}, which features two type compatibility rules to support this kind of implicit polymorphism:
\begin{mathpar}
  \inference[(Comp-AllR)]{\Sigma; \Delta, X |- T_1 <: T_2 & X \not\in T_1}
  {\Sigma; \Delta |- T_1 <: \forall X. T_2}
  \and
  \inference[(Comp-AllL)]{\Sigma; \Delta |- T_1[\? / X] <: T_2}
  {\Sigma; \Delta |- \forall X. T_1 <: T_2}
\end{mathpar}
As first identified by \citet{xieAl:esop2018}, the problem with these rules is that they break the conservative extension of \sysF: types $\forall X. X -> X$ and $\forall X. \Int -> \Bool$ are deemed compatible in \lamB. In \gsf, concentrating on explicit polymorphism ensures the conservative extension result, but limits interoperability. Here, we propose a novel technique to solve this conundrum: to support implicit polymorphism {\em dynamically}, as part of the runtime semantics of \gsf.

Specifically, we extend the reduction rules of \gsf by adding fallbacks in some cases where consistent transitivity is not defined. Intuitively, if consistent transitivity fails because a subterm $t$ of type $\forall X. \cT_1$ is not consistent with some type $\cT_2$, instead of raising an error, we can instantiate $t$ with the unknown type, $t\;[\?]$, and keep reducing the program. This mimics Rule (Comp-AllL) from \lamB. Similarly, if consistent transitivity fails because a subterm $t$ of type $\cT_1$ is not consistent with $\forall X. \cT_2$, we can insert a dummy type abstraction $\forall \_. t$, and keep reducing the program. This effectively mimics Rule (Comp-AllR) dynamically.

\subsection{Dynamic Implicit Polymorphism in \gsf}
\label{sec:dynamic-implicit-polymorphism-in-gsf}
Technically, we realize dynamic implicit polymorphism in \gsf by calling the partial function $\impPolyName$ after every failed consistent transitivity. The definition of $\impPolyName$ and its use in the modified runtime semantics are presented in Figure~\ref{fig:implicit-poly}. 

The $\impPolyName$ function takes as argument a value $v$ and an evidence $\ev$---such that the consistent transitivity between the underling evidence of $v$ and $\ev$ is undefined---and returns an adjusted term if a potential implicit polymorphism interaction is detected.

The first case of the definition of $\impPolyName$ inserts a type abstraction when a polymorphic type is expected but the value is something else.
The second case inserts a type instantiation to $\?$ when a value other than a type abstraction is expected.
The last two cases deal with higher-order scenarios. Specifically, the third case applies when both evidences are pairs of functions: implicit polymorphism can be plausible in both argument and return positions of evidence function types, so we delay plausible adjustments to until the function is applied. Similarly, the fourth case applies when both evidences are type abstractions; we delay plausible adjustments to when the type abstraction is instantiated.

Figure~\ref{fig:implicit-poly}\footnote{
$\cscheme$ (consistently) extracts the schema of a gradual type, \ie~$\cscheme(\forall X.\cT) = \cT$, $\cscheme(\?) = \?$, undefined o/w.} shows the modified reduction rules (Rasc) and (Rapp), which now fallback on $\impPolyName$ whenever consistent transitivity is not defined. If $\impPolyName$ is defined, the runtime adaptation of dynamic implicit polymorphism happens, otherwise a runtime error is raised. Note that this adaptation, which potentially converts a value to a term, cannot introduce divergence because (1) when inserting a type instantiation, the size of the underlying evidence of value $v$ in $\impPoly{v}{\ev}{}$ will get smaller after the actual instantiation, and (2) when inserting a type abstraction around a value $v$, subsequent steps of reductions can introduce a bounded amount of type abstraction around $v$, bounded by the size of $\ev$ in $\impPoly{v}{\ev}{}$.

\begin{figure}[t]
\begin{small}
\begin{multline*}
\impPoly{v}{\ev[2]}{\cT_2} = \\
  \qquad\begin{cases}
    \cast{\ev[\forall X. \cT_1]}{(\Lambda X. \cast{\ev[1]}{u} :: \?)} :: \cT_1 & v = \cast{\ev[1]}{u} :: \cT_1 \land \ev[1] \neq \pr{\forall X. \_, \forall X. \_} \land\ev[2] = \pr{\forall X. \_, \forall X. \_} \\
    \cast{\ev[\cT_1]}{((\cast{\ev[1]}{(\Lambda X. t)} :: \forall X. \?)\; [\?])} :: \cT_1 & v = \cast{\ev[1]}{(\Lambda X. t)} :: \cT_1 \land \ev[1] = \pr{\forall X. \_, \forall X. \_} \land\ev[2] \neq \pr{\forall X. \_, \forall X. \_}\\
    \cast{->\ev[\cT_1]}{(\lambda x: \?. (\cast{\ev[1]}{u} :: \? -> \?) ) \; (\cast{\ev[\?]}{x} :: \?) } :: \cT_1 & v = \cast{\ev[1]}{u} :: \cT_1 \land \ev[i] = \pr{\_ -> \_, \_ -> \_}\\ 
    \cast{\forall X.\ev[\cT_1]}{(\Lambda X. \cast{\cscheme(\ev[1])}{t} :: \?)} :: \cT_1 & v = \cast{\ev[1]}{(\Lambda X. t)} :: \cT_1 \land \ev[i] = \pr{\forall X. \_, \forall X. \_}
  \end{cases}
\end{multline*}
where $X$ is fresh\\ 
Notation: $\forall X. \pr{\cE_1, \cE_2} \triangleq \pr{\forall X.\cscheme(\cE_1), \forall X.\cscheme(\cE_2)}$ and $->\pr{\cE_1, \cE_2} \triangleq \pr{\dom(\cE_1) -> \cod(\cE_1), \dom(\cE_2) -> \cod(\cE_2)}$ 
\end{small}

\begin{small}
\begin{displaymath}
  \text{($R$asc}) \qquad \conf{\cast{\ev[2]}(\cast{\ev[1]}{u} :: \cT_1) :: \cT_2} \nred
  \begin{cases}
  \conf{\cast{(\ev[1] \trans{=} \ev[2])}{u} :: \cT_2 }\\
  \conf{\cast{\ev[2]}{t} :: \cT_2} \qquad 
    \begin{block}
    \text{if not defined, and }\\
    t = \impPoly{\cast{\ev[1]}{u} :: \cT_1}{\ev[2]}{\cT_2}
    \end{block}\\
  \error \qquad \text{otherwise}
  \end{cases}
\end{displaymath}

\begin{multline*}
  \text{($R$app)} \qquad   \conf{(\cast{\ev[1]}{
              (\lambda x:\cT_{11}.t) :: \cT_{1} -> \cT_{2}})\;
          (\cast{\ev[2]}{u} :: \cT_1)}
          \nred\\
          \begin{block}
          \begin{cases}
          \conf{\cast{\invcod(\ev[1])}{
              (  t[
              \cast{(\ev[2] \trans{=} \invdom(\ev[1]))}
               {u} :: 
              \cT_{11})/x])} :: \cT_{2}} 
              \\
          \conf{(\cast{\ev[1]}{
              (\lambda x:\cT_{11}.t) :: \cT_{1} -> \cT_{2}})\;
          t'} \qquad 
          \begin{block}
            \text{if not defined, and }\\
            t' = \impPoly{\cast{\ev[2]}{u} :: \cT_1}{\invdom(\ev[1])}{\cT_1}
          \end{block}\\
          \error \qquad \text{otherwise}
          \end{cases}
          \end{block}
\end{multline*}
\end{small}

\caption{Dynamic implicit polymorphism in \gsf}
\label{fig:implicit-poly} 
\end{figure}

\subsection{Dynamic Implicit Polymorphism in Action}
\label{sec:dynamic-implicit-polymorphism-in-action}

Let us revisit the first example presented in this section. Suppose 
$\text{\lstinline|id|} = \cast{\ev[\forall X. X -> X]}{(\Lambda X. \lambda x. x)} :: \forall X. X->X$ and 
$\text{\lstinline|f|} = \cast{\ev[\?->\?]}{(\lambda y: \?. (\cast{\ev[\?->\?]}{y} :: \? -> \?) \; \cast{\ev[\Int]}{1} :: \?)} :: \?$.
This example failed right before applying \lstinline|id| to \lstinline|1|. With the new semantics, it successfully reduces to \lstinline|1| as sketched below:

\begin{small}
\begin{mathpar}
\begin{array}{rllr}
\conf[\emptyenv]{\text{\lstinline|f|}~\text{\lstinline|id|}}
& \red^{*} &\conf[\emptyenv]{\dots
  (\cast{\ev[\? -> \?]}{
    (\cast{\ev[\forall X. X -> X]}{(\Lambda X. \lambda x. x)} :: \forall X. X->X)
  } :: \? -> \?) \; \cast{\ev[\Int]}{1} :: \?\dots
}& \text{\footnotesize imminent failure}\\
\text{\footnotesize($R$asc)} & \red & \conf[\emptyenv]{\dots
  (\cast{\ev[\? -> \?]}{
    ((\cast{\ev[\forall X. X -> X]}{(\Lambda X. \lambda x. x)} :: \forall X. \Gbox{\?}) \Gbox{[\?]})
  } :: \? -> \?) \; \cast{\ev[\Int]}{1} :: \?\dots
} & \text{\footnotesize $\impPolyName$ is used}\\
\text{\footnotesize($R$inst)} & \red & \conf[\alpha \mapsto  \?]{\dots
  (\cast{\ev[\? -> \?]}{
    (\pr{\evlift{\alpha} -> \evlift{\alpha}, \? -> \?}(\cast{\ev[\evlift{\alpha} -> \evlift{\alpha}]}{(\lambda x. x)} :: \?) :: \?)
  } :: \? -> \?) \; \cast{\ev[\Int]}{1} :: \?\dots
} & \\
& \red^{*} & \conf[\alpha \mapsto  \?]{\ev[\Int]1::\?} &
\end{array}
\end{mathpar}
\end{small}

In the ($R$asc) step, $\ev[\forall X. X->X] \trans{=} \ev[\? -> \?]$ is undefined, but because $\impPoly{\text{\lstinline|id|}}{\ev[\? -> \?]}{\? -> \?}$  is defined, the program is adjusted by instantiating \lstinline|id| to $\?$.

Let us now revisit the second example above. Suppose

\begin{small}
\begin{mathpar}
\begin{array}{rcl}
\text{\lstinline|g|} &=& \cast{\ev[(\forall X. X -> X) -> \Int]}{(\lambda x. (\cast{\ev[\Int -> \Int]}{(x [\Int])}::\Int->\Int) (\cast{\ev[\Int]}{1}::\Int))} :: \? \\ 
\text{\lstinline|h|} &=& \cast{\ev[\?->\?]}{(\lambda x. x)} :: \?
\end{array}
\end{mathpar}
\end{small}

The example failed before applying \lstinline|g| to \lstinline|h|; with the adjusted semantics the program now runs successfully as outlined below:

\begin{small}
\begin{mathpar}
\begin{array}{rll}
\conf[\emptyenv]{\text{\lstinline|g|}~\text{\lstinline|h|}} & \red & \conf[\emptyenv]{
  (\cast{\ev[(\forall X. X -> X) -> \Int]}{(\lambda x. \dots )} :: \? -> \?)\; 
  \cast{\ev[\?->\?]}{(\lambda x. x)} :: \? 
} \\
\text{\footnotesize($R$app)}& \red & \conf[\emptyenv]{
  (\cast{\ev[(\forall X. X -> X) -> \Int]}{(\lambda x. \dots )} :: \? -> \?)\;
  \cast{\Gbox{\ev[\forall X. X -> X]}}{(\Gbox{\Lambda X.} \cast{\ev[\?->\?]}{(\lambda x. x)} :: \?)} :: \?
} \\
& \red^{*} & \conf[\alpha \mapsto  \?]{\ev[\Int]1::\?} \\
\end{array}
\end{mathpar}
\end{small}

\noindent In the ($R$app) step, $\ev[\? -> \?] \trans{=} \ev[\forall X. X -> X]$ is undefined, but as $\impPoly{\text{\lstinline|h|}}{\ev[\forall X. X -> X]}{\forall X. X -> X}$  is defined, the program is fixed by inserting a dummy type abstraction around \lstinline|h|.

Let us now illustrate dynamic implicit polymorphism with a higher-order scenario.
Consider the the following program:
\begin{lstlisting}[numbers=none]
let f:?  = (*$\Lambda$*)Y.(*$\lambda$*)y.(*$\lambda$*)x.x in
let g:(*$\forall Y. Y -> (\forall X. X -> X)$*) = f in
(g [Int] 1) [Int] 2
\end{lstlisting}
Suppose

\begin{small}
\begin{mathpar}
\begin{array}{rcl}
\text{\lstinline|f|} &=& \cast{\ev[\forall Y. \? -> (\? -> \?)]}{(\Lambda Y. \lambda y. \dots }) :: \? \\ 
\text{\lstinline|g|} &=& \cast{\ev[\forall Y. Y -> (\forall X. X -> X)]}{\text{\lstinline|f|}} :: \?
\end{array}
\end{mathpar}
\end{small}

Here, function \lstinline|f| of underlying type $\forall Y. \? -> (\? -> \?)$ is used as a function of type $\forall Y. Y -> (\forall X. X -> X)$. This example fails in plain \gsf, because both types are not compatible.
With the adjusted semantics the program runs successfully as outlined next.

\begin{small}
\begin{mathpar}
\begin{array}{rll}
\conf[\emptyenv]{\dots\text{\lstinline|g|}\;[\Int]\;\dots} & = & \conf[\emptyenv]{
  \dots (\cast{\ev[\forall Y. Y -> (\forall X. X -> X)]}{(\cast{\ev[\forall Y. \? -> (\? -> \?)]}{(\Lambda Y. \lambda y. \dots }) :: \?)} :: \_) [\Int] \dots
} \\
\text{\footnotesize($R$asc)}& \red & \conf[\emptyenv]{
  \dots (\cast{\ev[\forall Y. Y -> (\forall X. X -> X)]}{(\cast{\Gbox{\ev[\forall Y.\?]}}{(\Lambda Y. \Gbox{\ev[\? -> (\? -> \?)]}(\lambda y. \dots) :: \? }) \Gbox{:: \?})} :: \_) [\Int] \dots
} \\
& \red^{*} & \conf[\alpha \mapsto \Int]{
  \dots (\cast{\ev[\alpha -> (\forall X. X -> X)]}{(\cast{\ev[\? -> (\? -> \?)]}{(\lambda y. \dots }) :: \_)} :: \_) \dots
} \\
\text{\footnotesize($R$asc)}& \red & \conf[\alpha \mapsto \Int]{
  \dots (\cast{\ev[\alpha -> (\forall X. X -> X)]}{
    ((\Gbox{\lambda z.} (\cast{\ev[\? -> \?]}{(\lambda y. \dots)} :: \_) \Gbox{(\cast{\ev[\?]}{z} :: \_}))
  } :: \_) \dots
} \\
& \red^{*} & \conf[\alpha \mapsto \Int]{
  \dots (\cast{\ev[\forall X. X -> X]}{(\cast{\ev[\? -> \?]}{(\lambda x. \dots }) :: \_)} :: \_) [\Int] \dots 
}\\
\text{\footnotesize($R$asc)} & \red & \conf[\alpha \mapsto \Int]{
  \dots (\cast{\ev[\forall X. X -> X]}{(\cast{\Gbox{\ev[\forall X. \? -> \?]}}{(\Gbox{\Lambda X.} (\lambda x. \dots })) :: \_)} :: \_) [\Int] \dots 
}\\
& \red^{*} & \conf[\alpha \mapsto \Int, \beta \mapsto \Int]{
  \cast{\ev[\Int]}{2} :: \?
}
\end{array}
\end{mathpar}
\end{small}

In the first ($R$asc) step, $\ev[\forall Y. \? -> (\? -> \?)] \trans{=} \ev[\forall Y. Y -> (\forall X. X -> X)]$ is undefined, and as $\impPolyName$ cannot introduce instantiations or type abstractions (they should be made inside the body of \lstinline|f|), it adjusts \lstinline|f| by pushing the scheme of its evidence inside of the type abstraction.
In the second ($R$asc) step, $\ev[\? -> (\? -> \?)] \trans{=} \ev[\alpha -> (\forall X. X -> X)]$ is undefined. Once again, $\impPolyName$ cannot introduce instantiations or type abstractions at that moment, so it creates a function proxy to delay plausible adjustments.
Finally, in the third ($R$asc) step, $\impPolyName$ is called as presented in the second example above, making the program reduce successfully to $2$.

\myparagraph{On Laziness}
One of the consequences of using proxies to delay adjustments for dynamic implicit polymorphism is that some programs that use to fail, may now fail later (or not fail at all).
For instance, consider the term $(\lambda x:\Int. x) :: \? :: \Bool -> \Bool$. This term reduces to a runtime error in plain \gsf, but with the adjusted semantics, it reduces to the function value
$f = \ev[\Bool -> \Bool](\lambda y: \?. (\cast{\ev[\Int -> \Int]}{(\lambda x:\Int. x)} :: \?) (\cast{\ev[\?]}{y} :: \?) :: \Bool -> \Bool$. Sure enough, if $f$ is ever applied, it will always raise an exception; \eg~$f\;\ttt$ fails because $\ev[\Int] \trans{=} \ev[\Bool]$ is not defined. 

We could try to recover the more eager semantics of \gsf by adding a consistency check before applying $\impPolyName$, to determine whether the evidence types involved in the consistent transitivity relation are deemed consistent by using rules such as (Comp-AllR) or (Comp-AllL), and raise an error otherwise.
For instance, program $(\lambda x:\Int. x) :: \? :: \Bool -> \Bool$ would now fail, because the undefinedness of consistent transitivity has nothing to do with implicit polymorphism. While this approach would recover some eagerness, it would still be lazier than plain \gsf. Indeed, the term $(\Lambda X.\lambda x:X. x) :: \? :: \Bool -> \Int$ would still reduce without failure (to an always-failing function).

\myparagraph{Parametricity with Dynamic Implicit Polymorphism}
We conjecture that parametricity still holds in the adjusted semantics with dynamic implicit polymorphism.
Informally, note that in $\impPoly{v}{\ev[2]}{} = t$, the underlying evidence of $v$ either gets pushed inside a type abstraction or function, or it is used in a type instantiation.  This means that the way type names interact with other types during consistent transitivity may only be delayed, and not changed.
If a program raises an error because a non-parametric behavior is detected (e.g. $\pr{\Int, \richtype{\Int}} \trans{=} \pr{\Int, \Int}$), then $\impPolyName$ will never be able to recover or delay that program.

More formally, a first observation is that $\impPolyName$ adapts well-typed values into well-typed terms.
\begin{restatable}{lemma}{dipwelltype}
\label{prop:dipwelltype}
If $\staticgJ[\store, \cdot, \cdot]{v}  \inTermT{\cT}$, and 
$\impPoly{v}{\ev}{} = t$ for some $\ev$, then
$\staticgJ[\store, \cdot, \cdot]{t}  \inTermT{\cT}$
\end{restatable}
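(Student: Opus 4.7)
The plan is to proceed by case analysis on the four defining clauses of $\impPoly{v}{\ev}{}$ in Figure~\ref{fig:implicit-poly}. In each case the task reduces to re-deriving a typing judgment for the constructed term, where most sub-derivations are obtained by reusing (or weakening) the given derivation of $\staticgJ[\store, \cdot, \cdot]{v} \inTermT{\cT}$, and the remaining work is to verify that the newly introduced evidences and ascriptions are well-formed. Since $v$ is a closed value of type $\cT$, by canonical forms $v = \cast{\ev[1]}{u} :: \cT$ for some raw value $u$ and inner evidence $\ev[1]$ supporting a judgment whose right component is $\cT$.

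For Case~1 (type-abstraction insertion), I would first weaken the typing of $\cast{\ev[1]}{u} :: \?$ to the extended type-variable context with a fresh $X$, derive that $\Lambda X. \cast{\ev[1]}{u} :: \?$ has type $\forall X.\?$, and finally ascribe back to $\cT$ using the evidence $\ev[\forall X.\cT]$ (which, being an interior computation, supports $\forall X.\? \sim \cT$ by definition). Case~2 is dual: the side condition $\ev[1] = \pr{\forall X.\_, \forall X.\_}$ gives $u = \Lambda X.t'$, so ascribing to $\forall X.\?$, instantiating at $\?$ to get a term of type $\?$, and ascribing back to $\cT$ with $\ev[\cT]$ each typecheck by direct application of rules (Easc) and (Eapp$\cT$).

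For Cases~3 and~4 (the higher-order, delayed cases), the constructed term is a wrapper---either a $\lambda$-abstraction (Case~3) or a $\Lambda$-abstraction (Case~4)---whose body contains $v$ under a sub-evidence derived from $\ev[1]$. I would proceed inside-out: (i)~show the inner occurrence of $v$ is typable, (ii)~show that the composite evidences $->\ev[\cT]$ and $\forall X.\ev[\cT]$ are well-formed and support the appropriate consistency judgment (by unfolding the syntactic sugar for these notations and appealing to the fact that $\invdom$, $\invcod$ and $\cscheme$ preserve the support of consistency judgments at the corresponding sub-positions), and (iii)~conclude that the whole abstraction has type $\cT$ after the outer ascription. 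Case~4 additionally requires a short freshness/weakening argument for $X$.

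The main obstacle is not conceptual but administrative: verifying that the compound evidence notations introduced in Figure~\ref{fig:implicit-poly} ($\ev[\cT]$, $->\ev[\cT]$, $\forall X.\ev[\cT]$) really yield well-formed evidences supporting the required consistency judgments. The cleanest presentation factors this out as a small auxiliary lemma relating these notations to interior/precision meet, after which each of the four cases becomes a direct application of the typing rules of Figure~\ref{fig:gsfe}. Notably, no case invokes reduction or consistent transitivity, so none of the monotonicity subtleties of \S\ref{sec:dgg-violation} arise here.
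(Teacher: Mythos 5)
Your proposal is correct and matches the paper's argument: a case analysis on the four clauses of $\impPolyName$, re-deriving the typing of each constructed term from the canonical form $v = \cast{\ev[1]}{u} :: \cT$, with the only real work being the check that the reused evidence $\ev[1]$ and the compound evidences ($\ev[\cT]$, $->\ev[\cT]$, $\forall X.\ev[\cT]$, $\cscheme(\ev[1])$) support the now-less-precise consistency judgments, which follows since weakening the right-hand type of a judgment only weakens its interior. Your observation that no consistent-transitivity computation is involved is also the reason the lemma is purely administrative.
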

This means that the resulting term can be safely plugged in the original evaluation context.

More importantly, given two related values, the application of $\impPolyName$ is either defined for both, or undefined for both.
\begin{restatable}{lemma}{dipsync}
\label{prop:dipsync}
If $\logaproxg[\store; \Delta]{v_1}{v_2}{\cT}$, $\ev |- \ceqrulessimpl{\cT}{\cT'}$, then 
$\impPoly{v_1}{\ev}{}$ is defined if and only if $\impPoly{v_2}{\ev}{}$ is defined.
\end{restatable}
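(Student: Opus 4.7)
The plan is to observe first that both invocations of $\impPolyName$ share the same evidence $\ev$ as their second argument, so the conditions on the target evidence in Figure~\ref{fig:implicit-poly} evaluate identically for $v_1$ and $v_2$. Consequently, the definedness of $\impPoly{v_i}{\ev}{}$ reduces to two syntactic features of $v_i$: whether its underlying ascription evidence has top-level shape $\pr{\forall X. \_, \forall X. \_}$ or $\pr{\_ \to \_, \_ \to \_}$, and whether its raw value is a type abstraction $\Lambda X. t$. It therefore suffices to establish that these two features coincide between $v_1$ and $v_2$.

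The main argument proceeds by case analysis on $\cT$, unfolding the appropriate clause of $\setv{\cT}$ from Figure~\ref{fig:glgrv2}. At base type $\basetype$ the two values are equal, and the result is immediate. At $\cT_1 -> \cT_2$ both values are function values whose second evidence components are function-shaped; at $\forall X. \cT$ both are type abstractions with polymorphic-shaped evidences; at a pair type both are pairs, so $\impPolyName$ is undefined for both; at a type name $\alpha$ both are sealed values whose evidences bear the corresponding type-name marker. The general invariant I would rely on in each case is that, for a well-typed value $v_i = \cast{\ev[i]}{u_i} :: \cT_i$, the top-level constructor of the first component of $\ev[i]$ is forced by the shape of the raw value $u_i$ and the top-level constructor of the second component is forced by $\cT_i$. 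Given that logical relatedness at these types already pins down both the raw-value shape and the ascribed type shape on both sides, the two features on which $\impPolyName$'s definedness depends agree between $v_1$ and $v_2$, so the iff follows.

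The hard part will be the $\cT = \?$ case, where $v_1$ and $v_2$ are related only through the side condition $\unknowify{\pi_2(\ev[i])} = \cT'$ together with the requirement that they lie in $\setv{\cT'}$ after re-ascription to $\cT'$. My plan here is to invoke the induction hypothesis at the more precise type $\cT'$ on the re-ascribed values and then transport the conclusion back through the outer $\?$-ascription, using once more the invariant that well-typedness of the raw value constrains the top-level shape of the first evidence component on both sides. The case $\cT = X$ reduces immediately to $\setv{\tpesub(X)}$ and thereby falls back to one of the previously-treated cases, closing the argument.
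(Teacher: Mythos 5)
Your opening reduction is the right one: since $\ev$ is shared, definedness of $\impPoly{v_i}{\ev}{}$ depends only on whether the raw value of $v_i$ is a type abstraction and on the top-level shapes of the two components of $v_i$'s own evidence. But the ``general invariant'' you say you rely on in each case is false in both halves, and in one of your cases this matters. First, the second evidence component is \emph{not} forced by the ascribed type $\cT_i$ when $\cT_i$ is imprecise (it is only more precise than $\cT_i$); the fact you actually need is built into the relation for exactly this purpose: $\atomvalue{\cT}$ requires $\unlift{\pi_2(\getev{v_1})} = \unlift{\pi_2(\getev{v_2})}$, so the second components agree syntactically (hence in top-level constructor) at \emph{every} type, with no case analysis. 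Second, at a type name $\alpha$ the raw values need \emph{not} have the same shape---$\setv{\alpha}$ only requires the unsealed values to lie in the chosen $R$, which may pair a sealed integer with a sealed type abstraction---so the first components can carry different constructors, contradicting your invariant. The case still closes, but for a different reason: both second components are of the form $\richtype{\cE}$, so the three clauses of $\impPolyName$ that constrain the value's evidence positively (each demands $\ev[1] = \pr{\forall X.\_,\forall X.\_}$ or $\pr{\_ -> \_,\_ -> \_}$, a condition on \emph{both} components) fail for both values, while the first clause's condition on $\ev[1]$ is a negation that holds for both, leaving its applicability to depend on $\ev$ alone. Make that the explicit argument. (The same caveat applies to your pair case: ``both are pairs'' does not by itself rule out the first clause, whose condition on $\ev[1]$ is satisfied by pair values; you must additionally observe that $\ev \Vdash \cT \sim \cT'$ at a pair type forces $\pi_1(\ev)$ to be pair-shaped, or simply note that the clause fires for both or for neither.)

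One further repair in the $\?$ case: the induction hypothesis cannot be invoked verbatim, because the re-ascribed values are related at $\cT'' = \unknowify{\pi_2(\ev[i])}$ while your $\ev$ is evidence for $\? \sim \cT'$ and need not be evidence for $\cT'' \sim$ anything, which the lemma's hypothesis requires. Either generalize the statement to an arbitrary evidence $\ev$ (harmless, since definedness inspects only shapes) or argue directly that $\impPolyName$ ignores the outer type annotation, so the relevant shapes are already determined by relatedness at $\cT''$ together with the $\atomvalue{\cT}$ equality. With these repairs the proof goes through.
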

This is important to maintain the failure sensitivity of the logical relation.
Then, given two related values, the application of $\impPolyName$ (if defined) yield related terms.
\begin{restatable}[\textbf{Compatibility-dip}]{proposition}{compgdip}
\label{prop:compgdip}
If $\logaproxg[\store; \Delta]{v_1}{v_2}{\cT}$, $\ev |- \ceqrulessimpl{\cT}{\cT'}$, $\impPoly{v_1}{\ev}{}$ is defined, and $\swellGamma$  then $\logaproxg{\impPoly{v_1}{\ev}{}}{\impPoly{v_2}{\ev}{}}{\cT}$.
\end{restatable}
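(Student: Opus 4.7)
The plan is to proceed by case analysis on the four clauses of the definition of $\impPolyName$ given in Figure~\ref{fig:implicit-poly}, showing in each case that the resulting terms are logically related at $\cT$ by appealing to the compatibility lemmas already used to prove the Fundamental Property (Theorem~\ref{theorem:gfunprop}), together with Lemma~\ref{prop:ascriptonlemmapaper} which tells us that ascribing two related values yields related computations. Lemma~\ref{prop:dipsync} guarantees that the two applications of $\impPolyName$ take the same branch, so we never have to relate two structurally different adaptations; and Lemma~\ref{prop:dipwelltype} gives us the typing premises we need to invoke those compatibility lemmas.

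First I would handle the two ``static'' cases. For clause~1, where both adaptations produce $\cast{\ev[\forall X.\cT_1]}{(\Lambda X. \cast{\ev[i]}{u_i}::\?)} :: \cT_1$, we note that by assumption $(W, \cast{\ev[1]}{u_1} :: \cT_1, \cast{\ev[2]}{u_2} :: \cT_1) \in \setv{\cT}$, so the bodies of the type abstractions are related at $\?$ in every future world (using that $X$ is fresh and weakening the logical relation); then the $\Lambda$-compatibility lemma gives us that the type abstractions are related at $\forall X.\?$, and Lemma~\ref{prop:ascriptonlemmapaper} lets us ascribe the result to $\cT_1$. Clause~2 is similar but uses the compatibility lemma for type application (instantiating at $\?$), noting that a single reduction step off the outermost ascription restores us to the body of the related $\Lambda$.

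Next I would turn to the two higher-order ``proxy'' cases, which I expect to be the main obstacle. In clause~3, both adaptations produce the $\eta$-expanded form $\lambda x:\?. (\cast{\ev[i]}{u_i}::\?->\?)\;(\cast{\ev[\?]}{x}::\?)$. To show these are related at the expected function type, the key move is to apply the compatibility lemma for $\lambda$: given any two related arguments at $\?$ in a future world, we must show that the two applications are related computations. Ascribing the related arguments to $\?$ yields related computations (Lemma~\ref{prop:ascriptonlemmapaper}), and the values $\cast{\ev[i]}{u_i}::\?->\?$ are related by hypothesis plus ascription, so by the compatibility lemma for application the whole bodies are related. Clause~4 is analogous, but works through the $\forall$-case of the logical relation: given any two types $\cT'_1,\cT'_2$ and any relation $R$, the two inner terms $\cast{\cscheme(\ev[i])}{t_i}::\?$ reduce in the extended world to the substituted bodies, and we must show these are related at $\?$. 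Here the delicate point is that $\cscheme(\ev[i])$ only captures the top-level scheme of $\ev[i]$, so we have to substitute through and invoke the original logical relatedness of the $\Lambda X. t_i$ at type $\forall X.\cT_1$ (which, after one step of type application, reduces to related bodies per the definition of $\setv{\forall X.\cT_1}$), then use Lemma~\ref{prop:ascriptonlemmapaper} to ascribe to $\?$.

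The main subtlety throughout is bookkeeping of step indices and world extensions in cases~3 and~4: the proxy introduces one administrative reduction step before the interesting work happens, so we have to be careful to invoke the logical relation at an appropriately downstepped world and to match up the stores generated by the two parallel reductions when fresh type names are created in clause~4. Once these bookkeeping details are aligned with the corresponding arguments in the proofs of the standard $\lambda$-, application-, $\Lambda$-, and type-application-compatibility lemmas, the result follows without any new parametricity reasoning: $\impPolyName$ only rearranges existing values behind additional ascriptions, instantiations, or $\eta$-expansions, none of which disturb the relational interpretation of types.
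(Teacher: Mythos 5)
Your proposal is correct and follows essentially the same route as the paper: a case analysis on the clauses of $\impPolyName$, using Lemma~\ref{prop:dipsync} to ensure both sides take the same branch and Lemma~\ref{prop:dipwelltype} for the typing premises, then discharging each case with the existing ascription and compatibility lemmas on the grounds that the adaptation only pushes the underlying evidence inside a type abstraction or function proxy (or consumes it in an instantiation at $\?$), thereby delaying rather than altering any evidence interaction. The bookkeeping you flag in the proxy cases (administrative reduction steps, downstepped worlds, and matching the evidence produced by ($R$app$\cT$) against $\cscheme(\ev[i])$ under substitution) is indeed where the remaining work lies, and it aligns with the corresponding compatibility arguments used for Theorem~\ref{theorem:gfunprop}.
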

This last property is key to prove the fundamental property for the adjusted semantics. Intuitively, consider the executions of two related programs, in which a consistent transitivity error is detected in both executions. Then either $\impPolyName$ is undefined for both programs and the result holds, or $\impPolyName$ is defined for both and the resulting terms are related. Finally, as both terms reduce to related values, they can be plugged into contexts that yield related terms (compatibility propositions).

\section{Gradual Existential Types in \gsf}
\label{sec:existentials}

Existential types are the foundation of data abstraction and information hiding: concrete representations of abstract data types are elements of existential types~\citep{mitchellPlotkin:toplas1888,pierce:tapl}. It is well known that existential types can be encoded in terms of universal types~\citep{pierce:tapl}. However, several polymorphic languages~\citep{ahmed:esop2006,ahmedAl:popl2009,neisAl:icfp2009} include both universal and existential types primitively, instead of relying on the encoding. The reason is that proving certain properties, such as representation independence results, is much simpler with direct support for existential types. 

Although some efforts have already been developed to protect data abstraction in a dynamically-typed language~\citep{abadiAl:jfp1995, rossberg:ppdp2003,sumiiPierce:popl2004}, prior work on gradual parametric polymorphism leaves the treatment of existential types as future work~\citep{ahmedAl:icfp2017, toroAl:popl2019}. In this section, we present an extension of \gsf with existential types, dubbed \gsfex. We first briefly review existential types (\S\ref{sec:existentials-intro}) and why a direct treatment is preferable to an encoding (\S\ref{sec:gsfex-poe}). We then informally introduce gradual existential types in action (\S\ref{sec:gsfex-action}) before formally developing \gsfex (\S\ref{sec:gsfex-formal}). Finally, we discuss the metatheory of \gsfex (\S\ref{sec:gsfex-properties}).

\subsection{Existential Types in a Nutshell}
\label{sec:existentials-intro}

An {\em abstract data type} (ADT for short) guarantees that a client can neither guess nor depend on its implementation~\citep{reynolds:83,mitchellPlotkin:toplas1888}.  Formally, an ADT consists of a type name $A$, a concrete representation type $T$, implementations of some operations for creating, querying and manipulating values of type $T$, and an abstraction boundary enclosing the representation and operations~\cite{pierce:tapl}. Thus, an ADT provides a public name to a type but hides its representation. The {\em representation independence} property for an ADT establishes that we can change its representation without affecting clients. This property is a particularly useful application of relational parametricity~\citep{reynolds:83}; we can show that two different implementations of an ADT are contextually equivalent so long as there exists a relation between their concrete type representations that is preserved by their operations. 

Data abstraction is formalized by extending \sysF with existential types, of the form $\tyExists[X][T]$. Elements of an existential type are usually called packages, written $\packExists[T'][t][X][T]$, where $T'$ is the hidden representation type and the term component $t$ has type $T[T'/X]$. The existential elimination construct $\unpackExists$ allows the components of the package to be accessed by a client, keeping the actual representation type hidden.
Packages with different hidden representation types can inhabit the same existential type. Thus, we can implement an ADT in different ways, creating different existential packages. 

For instance, consider a semaphore ADT with three operations: $\mathit{bit}$ to create a semaphore, $\mathit{flip}$ to produce a semaphore in the inverted state, and $\mathit{read}$ to consult the state of the semaphore, as a $\Bool$. We can encode such an ADT as an existential type with a triple:
$$
\semaf \equiv \exists X. \pairtype{X}{\pairtype{(X -> X)}{(X -> \Bool)}}
$$
Alternatively, for readability, we can use a hypothetical record syntax:
$$
\semaf \equiv \tyExists[X][\{\mathit{bit:}\;X, \mathit{flip:}\;X -> X, \mathit{read:}\;X -> \Bool\}]
$$
Below are two equivalent implementations of this $\semaf$ ADT:

\begin{small}
\begin{flalign*}
\begin{array}{ll}
s_1 \equiv \packExistsC[\Bool][v_1][\semaf] & where \;  v_1 \equiv {\{{\mathit{bit} = \true,\; }{{\mathit{flip} = (\lambda x :\Bool. \neg \; x),\; }{\mathit{read} = (\lambda x:\Bool. x)\}}}}\\
s_2 \equiv \packExistsC[\Int][v_2][\semaf] & where \;  v_2 \equiv \{\mathit{bit} = 1,\; \mathit{flip} = (\lambda x:\Int. 1 - x),\; \mathit{read} = (\lambda x:\Int. 0 < x) \}
\end{array}
\end{flalign*}
\end{small}

In the first implementation, the concrete representation type is $\Bool$, and in the second it is $\Int$.  The representation and operations of the $\semaf$ ADT are abstract to a client, in the sense that the representation of $\mathit{bit}$ is hidden, and it can only be manipulated and queried by the operations $\mathit{flip}$ and $\mathit{read}$ . For instance, if we have the expression $\unpackExists[X][x][s][t]$, where $s$ is an implementation of $\semaf$, we can do $(x.\mathit{read}\; (x.\mathit{flip} \; x.\mathit{bit}))$ in the expression $t$, but ${(x.\mathit{read} \; (x.\mathit{flip} \; \true))}$ or ${x.\mathit{bit} == \true}$ are invalid programs that do not typecheck. Note that these programs would not get stuck with $s_1$, but they would crash with $s_2$.

\subsection{Existential Types: Primitive or Encoded?}
\label{sec:gsfex-poe}
Existential types are closely connected with universal types, and in fact they can simply be {\em encoded} in terms of universal types, using the following encoding~\cite{pfpl}:

\begin{small}
\begin{flalign*}
\begin{array}{rlll}
\exists X. T & \equiv & \forall Y. (\forall X. T -> Y) -> Y &\\
\packExists[T'][t][X][T]& \equiv & \Lambda Y. \lambda f: (\forall X. T -> Y). f \;[T']\;t &\\
\unpackExists[X][x][t_1][t_2] & \equiv & t_1 \; [T_2] \; (\Lambda X. \lambda x: T. t_2) \; where \; |- t_1: \exists X. T \; and \; X; x: T |- t_2 : T_2 &  
\end{array}
\end{flalign*}
\end{small}

The intuition behind this encoding is that an existential type is viewed as a universal type taking the overall result type $Y$, followed by a polymorphic function representing the client with result type $Y$, and yielding a value of type $Y$ as result. A package is a polymorphic function taking the client as argument, and unpacking corresponds to applying this polymorphic function.

Therefore, to study gradual existential types in \gsf, one could simply adopt this encoding. However, if we want to reason about interesting properties such as representation independence and free theorems, it is preferable to give meaning to existential types directly.

The benefit of a direct treatment of existential types can already be appreciated in the fully-static setting, with the simple examples of packages $s_1$ and $s_2$ above. Suppose we want to show that $s_1$ and $s_2$ are contextually equivalent, \ie~indistinguishable by any context. To prove this equivalence, it is sufficient to show that the packages are related according to a parametricity logical relation that is sound with respect to contextual equivalence~\citep{reynolds:83}. Using the direct interpretation of existential types, such a proof is considerably easier and more intuitive than using their universal encodings.

The additional complexity of reasoning about existential types via their universal encoding hardly scales to more involved examples. For instance, \citet{ahmedAl:popl2009} prove challenging cases of equivalences in the presence of abstract data types and mutable references, where the encoding would have been a liability; hence their choice of supporting existential types directly. Considering that the \gsf logical relation also involves a number of technicalities (evidence, worlds, etc.), providing direct support for existentials is all the more appealing.

\subsection{Gradual Existential Types in \gsfex}
\label{sec:gsfex-action}
In this section, we show some illustrative examples of gradual existential types in action, highlighting their benefits and expected properties when type imprecision is involved. In particular, we want to dynamically preserve the information hiding property presumed for abstract data types.

\myparagraph{Typed-Untyped Interoperability}
Gradual existential types allow programmers to embed an untyped implementation of a library as a static ADT, by picking the unknown type as the hidden representation type. For instance, if $v_3$ is an untyped record, then $s_3$ below is a gradually well-typed implementation of the $\semaf$ ADT. The translation $\lceil \cdot \rceil$ embeds untyped terms in the gradual language, basically by introducing $\?$ on all binders and constants~\cite{siekTaha:sfp2006}.

\begin{small}
\begin{flalign}
\begin{array}{ll}
&\text{let} \; v_3 = \{\mathit{bit} = 1, \mathit{flip} = (\lambda x. 1 - x),\mathit{read} = (\lambda x. 0 < x)\} \; \text{in} \\
&\text{let} \; s_3 = {\packExistsC[\?][\lceil v_3 \rceil][\semaf]} \; \text{in} \; C[s_3]\\
&\text{where } C \equiv \text{let} \; s = [] \; \text{in} \;
\unpackExistsIn[X][x][{s}] \; (x.\mathit{read} \; (x.\mathit{flip} \; x.\mathit{bit}))
\end{array}
\end{flalign}
\end{small}

The package $s_3$ is essentially a version of the package $s_2$ where types have been erased (replaced with the unknown type). As illustrated later (\S\ref{sec:existentialequivalences}), one can prove in \gsfex that $s_3$ is contextually equivalent to $s_2$ (and hence to $s_1$ as well), using a direct interpretation of gradual existential types. The static client or context $C$, given a package implementation of the $\semaf$ ADT, changes the state of the semaphore and then reads the state. The whole example runs without error, producing $\false$ as the final result.

Of course, we could have associated a package implementation that does not respect the ADT signature. For instance, we define $v'_3$ as a variant of $v_3$, where $\mathit{flip}$ has type $\? -> \Bool$. We obtain the package $s'_3$, which is still gradually well-typed. However, using the package with client $\clienteEx$ results in a runtime type error. The runtime error happens when the $\neg$ operator is applied to $x.\mathit{bit}$, because $\neg$ expects a $\Bool$ argument, but dynamically $\mathit{bit}$ is an $\Int$.

\begin{small}
\begin{flalign}
\begin{array}{l}
\text{let} \; v'_3 = \{\mathit{bit} = 1, \mathit{flip} = (\lambda x. \neg \; x),\mathit{read} = (\lambda x. 0 < x)\} \; \text{in} \\
\text{let} \; s'_3 = {\packExistsC[\?][\lceil v_3' \rceil][\semaf]} \; \text{in} \; C[s'_3]
\end{array}
\end{flalign}
\end{small}

The dual case of typed/untyped interoperability is that of a static package being used in dynamic code. The following example defines the untyped function $g$, which take as arguments the function $f$ and an expression $x$ to be applied to $f$. The function $g$ is applied to the typed components of the package $s_2$, reducing the whole program without error to $\true$.

\begin{small}
\begin{flalign}
\begin{array}{l}
\text{let} \; g = (\lambda f. \lambda x. f \; x) \; \text{in}  \;  \unpackExistsIn[X][x][s_2] \; 
((g \; x.\mathit{read}) \; x.\mathit{bit})
\end{array}
\end{flalign}
\end{small}

Taking the same example, but changing $x.\mathit{bit}$ to the expression 
$(1 :: \?)$ yields a runtime error, because the function $x.\mathit{read}$ is expecting a sealed value, but instead it receives an unsealed $\Int$.

\begin{small}
\begin{flalign}
\begin{array}{l}
 \text{let} \; g = (\lambda f. \lambda x. f \; x) \; \text{in}  \; \unpackExistsIn[X][x][s_2] \; 
((g \; x.\mathit{read}) \; (1 :: \?))
\end{array}
\end{flalign}
\end{small}

\myparagraph{Optimistic Type Checking} The following example shows how the optimistic gradual type checker accepts programs that run without errors, which would be rejected with a static type checker.

\begin{small}
\begin{flalign*}
\begin{array}{l}
\unpackExistsIn[X][x][s_2]\;\\
\text{let} \; f = \lambda z. \text{if}(z) \; \text{then} \; (x. \mathit{flip} :: \?) \; \text{else} \: ((\lambda x: \Int. 1 - x) :: \?) \; \text{in}\\
\text{let} \; v'_2 = \{\mathit{bit} = x.\mathit{bit},\; \mathit{flip} = f \; \true,\; \mathit{read} = x.\mathit{read} \} \; \text{in}\\
\text{let} \; s'_2 =  {\packExistsC[X][v'_2][\semaf]} \; \text{in}\\
\unpackExistsIn[Y][y][s'_2] \; C[s'_2]
\end{array}
\end{flalign*}
\end{small}

The package $s'_2$ is essentially the same as $s_2$---in fact they are equivalent. The function $f$ receives a $\Bool$ argument to decide whether to return the (hidden) $\mathit{flip}$ function from package $s_2$, or a literal (not hidden) function. This program is gradually well-typed because of the ascriptions to the unknown type in the  branches of the conditional. In contrast, a static type system would reject this program (without the $\?$ascriptions in the conditional branches) because the then branch would have type $X -> X$, while the else branch  would have type $\Int -> \Int$. 
The gradual program runs properly, yielding $\false$ as a result.  

Note that if the definition of $\mathit{flip}$ in $v'_2$ would be $f\;\false$, then a runtime error would be raised. 
The error would be produced during the evaluation of the definition of $s'_2$ because $v'_2$ ought to have type $\pairtype{X}{\pairtype{(X -> X)}{(X -> \Bool)}}$, but instead it would have type $\pairtype{X}{\pairtype{(\Int -> \Int)}{(X -> \Bool)}}$. 
Also, keeping $f\;\true$ for $\mathit{flip}$ in $v'_2$, but changing the representation type of the package $s'_2$ to $\?$ would generate a runtime error in the application of $(y.\mathit{flip})$ to $(y.\mathit{bit})$.

\myparagraph{Intrinsic vs. Extrinsic Imprecision}
Another point to take into account is the nature of the imprecision of a term of existential type. As discussed previously regarding universal types (\S\ref{sec:wdgg-gsfe}), the imprecision for existential types can be either intrinsic or extrinsic. The following program is fully static except for the imprecise ascription of $s_2$ to the type {\small$\semafvar \equiv \tyExists[X][\pairtype{X}{\pairtype{(X -> \?)}{(X -> \Bool)}}]$}. Observe that {\small$ \semaf \gprec \semafvar $}.

\begin{small}
\begin{flalign}
\begin{array}{l}
\unpackExistsIn[X][x][s_2 :: \semafvar] \; (x.\mathit{read} \; (x.\mathit{flip} \; (x.\mathit{flip} \; x.\mathit{bit})))
\end{array}
\end{flalign}
\end{small}

Here we are in presence of an ascribed imprecision (\ie~extrinsic), preserving the \gsf property that if we ascribe a static closed term to a less precise type, its behavior is preserved.
Thus, this program runs without error, and evaluates to $\true$. 

Conversely, in the following example, the imprecision is intrinsic due to the imprecise signature {\small$\semafvar$} of the package.

\begin{small}
\begin{flalign}
\begin{array}{l}
\unpackExistsIn[X][x][{\packExistsC[\Int][v_2][\semafvar]}]\; (x.\mathit{flip} \; (x.\mathit{flip} \; x.\mathit{bit})) + 10
\end{array}
\end{flalign}
\end{small}

This program is accepted statically. The function $x.\mathit{flip}$ has type $X -> \?$, which specifies that it has to be applied to a sealed value and could return another sealed value, or in this case, an $\Int$ value. The application $(x.\mathit{flip} \; x.\mathit{bit})$ has type $\?$, and it is used as the argument of $x.\mathit{flip}$ again, optimistically treated as an abstract type. Then, the result of the second application of $x.\mathit{flip}$ is added to $10$, being optimistic again with the result of the function $x.\mathit{flip}$, but this time at type $\Int$. 

In this example, the imprecise signature of the package raises some challenges. We want to enforce that the interaction of gradual components of the package with static components does not reveal hidden information. Taking this into account, the program fails at runtime because of the attempt to use $x.\mathit{flip}$ with both types $X -> X$ and $X -> \Int$. Note that if we allow both behaviors of the function $x.\mathit{flip}$, returning $11$, then we would be revealing that the hidden representation type is $\Int$. Thus, we admit at runtime the first application of $x.\mathit{flip}$, accepted only with the type $X -> \Int$, but it fails in the second application because it receives an $\Int$ instead of a sealed value.

\subsection{Semantics of \gsfex}
\label{sec:gsfex-formal}

In this section, we formally present the design and semantics of \gsfex, an extension of \gsf with existential types that exhibits the behaviors illustrated above. First, we introduce the static language \SPFLex, which is the starting point to apply AGT. Actually, we only apply AGT to the new features in \SPFLex since the others have already been gradualized. Then, we focus on \gsfex, the static and dynamic semantics derived by AGT. Finally, we show the principal properties that \gsfex fulfills.

\begin{figure}[t]
  \begin{small}
  \begin{displaymath}
    \begin{array}{rcll}
    \multicolumn{4}{c}{  
      x \in \Var, X \in \VarType, \alpha \in \TypeName \quad
      \sstore  \in \TypeName \finto \Type,
      \Delta \subset \VarType,
      \Gamma \in \Var \finto \Type
      }\\
      T & ::= & \cdots | \tyExists[X][T] & \text{(types)}\\
      t & ::= & \cdots | \packExists[T][t][X][T] | \unpackExists[X][x][t][t] & \text{(terms)}\\
      v & ::= & \cdots | \packExists[T][v][X][T] & \text{(values)}
    \end{array}   
  \end{displaymath}
 \end{small}
 \begin{small}
 \begin{flushleft}
  \framebox{$\EnvSS t : T$}~\textbf{Well-typed terms}
  \end{flushleft}
  \begin{mathpar}
    \inference[(Tpack)]{\EnvSS[\sstore][\Delta][\Gamma]t : T_1 & \eqrules{T_1}{T[T'/ X]} & \Sigma; \Delta |- T'}{\EnvSS \packExists[T'][t][X][T] : \tyExists[X][T]} \and
    \inference[(Tunpack)]{ \EnvSS t_1 : T_1 & \EnvSS[\sstore][\Delta, X][\Gamma, x : \schemeEx({T_1})]t_2 : T_2 & \Sigma; \Delta |- T_2
    }{\EnvSS \unpackExists : T_2}
  \end{mathpar}
\end{small}
\begin{small}
\begin{flushleft}
\framebox{$\eqrules{T}{T}$}~\textbf{Type equality}
\end{flushleft}
  \begin{mathpar} 
           \inference{\eqrules{T_1}{T_2}[\sstore][\Delta,X]}{\eqrules{\exists X. T_1}{\exists X. T_2}}     \end{mathpar}
\end{small}

\begin{small}
\begin{flushleft}
\framebox{$ \storeeval{t} \longrightarrow \storeeval t$}
~\textbf{Notion of reduction}
\end{flushleft}
\begin{mathpar}
    \storeeval (\unpackExists[X][x][{\packExists[T'][v][X][T]}][t] \longrightarrow  \storeeval[\sstore, \alpha := T'] t[\alpha/X][v/x] \quad\text{where }\alpha \not\in \dom(\sstore) \\ \and
\end{mathpar}
\begin{flushleft}
\framebox{$ \storeeval t \longmapsto \storeeval t$}
~\textbf{Evaluation frames and reduction}
\end{flushleft}
\begin{displaymath}
\begin{array}{rcll}
f & ::= & \cdots| \unpackExists[X][x][{[]}][t] & \text{(term frames)}
\end{array}
\end{displaymath}
\end{small}
 \caption{{\SPFLex: Syntax, Static and Dynamic Semantics (extends Figure~\ref{fig:spfl})}}  
  \label{fig:spflET}
  \label{fig:spfl-syntax-staticsET}  
  \label{fig:spfl-dynET}
\end{figure}

\myparagraph{The Static Language \SPFLex}
We derive \gsfex by applying AGT to \SPFL extended with existential types, called \SPFLex (Figure~\ref{fig:spflET}).
We extend \SPFL statics with the rules (Tpack) and (Tunpack) for a package and its elimination form, which are standard. We augment the definition of type equality to deal with existential types, and use the $\schemeEx$ function to extract the schema of an existential type: 
$\schemeEx({\exists X. {T}}) = T$ and is undefined otherwise.

The dynamic semantics of the unpack constructor is very similar to the type application; also a fresh type name $\alpha$ is generated and bound to the representation type $T'$ in the global type name store $\sstore$. Then, we substitute $\alpha$ (instead of the representation type) and the term component, for the variables $X$ and $x$ in the body of the unpack. Like \SPFL, \SPFLex is also type safe, and all well-typed terms are parametric.

\myparagraph{\gsfex: Statics}
We derive the statics of \gsfex following AGT. As in Section~\ref{sec:spfl-lang}, we first define the syntax of gradual typing, and we give them meaning through the concretization function. Then, we lift the static semantics of the static language to gradual settings using the corresponding abstraction function, which forms a Galois connection. Being consistent with the above, we extend the syntactic category of gradual types $\cT \in \GType$ with existential types:
$$ \cT  ::= \basetype | \cT -> \cT| \forall X. \cT| \pairtype{\cT}{\cT}| X  |\alpha| \? |\Gbox{\exists X. \cT}$$

As usual, the unknown type represents any type, including existential types. We naturally extend the concretization function $\cs$ and abstraction function $\as$ to existential types, preserving the Galois connection established earlier (Proposition~\ref{Galoisconnection}):

\begin{small}
\begin{displaymath}
    \conc{\exists X. \cT} = \{\exists X. T| T \in \conc{\cT}[\store;\Delta,X]\}
    \qquad\qquad
    \abst{\set{\overline{\exists X.T_i}}} = \exists X. \abst{{\set{\overline{T_i}}}}[\store;\Delta, X]
\end{displaymath}
\end{small}

We define in Figure~\ref{fig:gsf-staticsETexistential} the inductive definition of type precision, which is equivalent to Definition~\ref{def:precision} (Proposition ~\ref{PrecisioninductivelyExistential}). As a result, $\exists X.\?$ denotes any existential type,  is more precise than the unknown type and less precise than $\exists X. X -> X$. 
The is straightforward (Figure~\ref{fig:spfl-dynET}), and remains sound and optimal 

With the meaning of gradual types, the \gsfex static semantics follow as usual with AGT. In this case, we need to define the gradual counterpart of the type equality predicate, whose lifting is type consistency. Following Definition~\ref{def:consistency}, we can find in Figure~\ref{fig:gsf-staticsETexistential} an equivalent inductive characterization of type consistency (Proposition~\ref{prop:ConsistencyinductivelyExistential}). Then, we lift functions using abstraction, concretization and Definition~\ref{def:fun-lift}. Our only new function in \SPFLex is $\schemeEx$, whose lifting $\cschemeEx : \GType \rightharpoonup \GType$ is as expected: 

\begin{small}
    \begin{displaymath}
    {\cschemeEx(\exists X. {\cT}) = \cT} \qquad 
    {\cschemeEx(\?) = \?} \qquad
    {\cschemeEx(\cT)\undefinedow}
\end{displaymath}
\end{small}

The gradual typing rules of \gsfex (Figure~\ref{fig:gsf-staticsETexistential}) extend those of \gsf. The new rules are obtained by replacing type predicates and functions with their corresponding consistent liftings in the static typing rules. Observe that Rule (Gpack) uses type consistency instead of type equality so that the implementation term can be of a type that is distinct from, but consistent with the package type (after substituting for the representation type). For example:

\begin{small}
  \begin{displaymath} \packExistsC[\Bool][v_1][\semafvartres] \qquad \text{where }\semafvartres \equiv \tyExists[X][\pairtype{X}{\pairtype{(X -> X)}{(X -> \?)}}]
\end{displaymath}
\end{small}

Here, the type of $v_1$ is $\pairtype{\Bool}{\pairtype{(\Bool -> \Bool)}{(\Bool -> \Bool)}}$, which is more precise than $\semafvartres[\Bool/X]$.

Rule (Gunpack) uses the consistent existential schema function $\cschemeEx$, which allows a term of unknown type to be optimistically treated as a package, and therefore unpacked.
\begin{figure}[t]
\begin{small}
  \begin{displaymath}
    \begin{array}{rcll}
      \multicolumn{4}{c}{  
      x \in \Var, X \in \VarType, \alpha \in \TypeName \quad
      \store  \in \TypeName \finto \GType,
      \Delta \subset \VarType,
      \Gamma \in \Var \finto \GType
      }\\
      \cT & ::= & \cdots | \tyExists[X][\cT] & \text{(gradual types)}\\
      t & ::= &  \cdots | \packExists[\cT][t][X][\cT] | \unpackExists[X][x][t][t] & \text{(gradual terms)}\\
    \end{array}   
  \end{displaymath}
 \end{small}
  \begin{small}
   \begin{flushleft}
  \framebox{$\EnvSG t : \cT$}~\textbf{Well-typed terms}
  \end{flushleft}
  \begin{mathpar}
    \inference[(Gpack)]{\EnvSG[\store][\Delta][\Gamma]t : \cT_1 & \ceqrules{\cT_1}{\cT[\cT'/ X]} & \gtwf{\cT'} }{\EnvSG \packExists[\cT'][t][X][\cT]: \tyExists}
     \and
    \inference[(Gunpack)]{ \EnvSG t_1 : \cT_1 & \EnvSG[\store][\Delta, X][\Gamma, x : \cschemeEx({\cT_1})]t_2 : \cT_2 & \gtwf{\cT_2}
    }{\EnvSG \unpackExists : \cT_2}
  \end{mathpar}
\end{small}
\begin{small}
\begin{flushleft}
\framebox{$\ceqrules{\cT}{\cT}$}~\textbf{Type consistency}
\end{flushleft}
    \begin{mathpar}
           \inference{\ceqrules{\cT_1}{\cT_2}[{\store; \Delta, X}]}{\ceqrules{\exists X. \cT_1}{\exists X. \cT_2}}
    \end{mathpar}
  \end{small}
  \begin{small}
\begin{flushleft}
\framebox{$\tprules{\cT}{\cT}$}~\textbf{Type precision}
\end{flushleft}
\begin{mathpar}
           \inference{\tprules{\cT_1}{\cT_2}[\store;\Delta, X]}{\tprules{\exists X. \cT_1}{\exists X. \cT_2}}
\end{mathpar}
\end{small}
 \caption{\gsfex: Syntax and Static Semantics (extends Figure~\ref{fig:gsf-statics})}
  \label{fig:gsf-staticsETexistential}
\end{figure}

\myparagraph{\gsfex: Dynamics}
We now turn to the dynamic semantics of \gsfex. As we did before, we give the dynamic semantics of \gsfex in terms of a more informative variant called \gsfeex.  In \gsfeex, all values are ascribed, and ascriptions carry evidence.

Figure~\ref{fig:gsfeexistential} presents the syntax, static and dynamics semantics of \gsfeex; essentially those of \gsfe naturally extended with existential types. 
The reduction rule ($R$unpack) specifies the reduction of an unpack expression: we substitute 
a fresh type name $\alpha$ for $X$ in the body of the unpack, as well as a (carefully ascribed) package implementation for $x$. In particular, this rule combines the evidence from the actual implementation term $\cast{\ev_1}{u :: \cT_1}$ with the evidence of the package, substituting the representation type on the left $\cT'$ and the fresh type name $\alpha$ on the right for the type variable $X$. Note that the evidence $\ev$ justifies that the static type of the package declared by the keyword ``as'' is consistent with $\exists X. \cT$. Thus $\evidenceExists{\ev}{\evlift{\cT'}}{\evlift{\alpha}}$ justifies that the static type after the substitution by $\cT'$ is consistent with $\cT[\alpha/X]$. Consequently, the resulting evidence of $\cast{\ev_1 \trans{=} \evidenceExists{\ev}{\evlift{\cT'}}{\evlift{\alpha}}}$ justifies that the type of the implementation term is consistent with $\cT [\alpha/X]$. Failure to justify this judgment produces an error, specifying that the implementation term is not appropriate. This evidence plays a key role in making the implementation term abstract, \ie ensuring information hiding.

\begin{figure}[t]
\begin{small}
  \begin{displaymath}
  \begin{array}{rcll}
    t &::=& \cdots| \cast{\ev}{(\packExists[\cT][t][X][\cT]) :: \cT}| \unpackExists[X][x][t][t] & (\text{terms})\\
    u & ::= & \cdots | \packExistst[\cT'][v][X][\cT] & (\text{raw values})\\
    \end{array}   
  \end{displaymath}
   \end{small}
  \begin{small}
   \begin{flushleft}
  \framebox{$\EnvSG \tu : \cT$}~\textbf{Well-typed terms}
  \end{flushleft}
  \begin{mathpar}
      \inference[(Epack)]{\staticgJ{t} : \cT_1[\cT'/ X] & \gtwf{\cT'} & \ijudgment{\ev}[\store; \Delta]{\exists X.\cT_1}{\cT}}{\staticgJ{{{\cast{\ev}{(\packExists[\cT'][t][X][\cT_1]) :: \cT}}}}  : \cT}
     \and
    \inference[(Eunpack)]{\EnvSG t_1 : \tyExists[X][\cT_1] & \EnvSG[\store][\Delta, X][\Gamma, x : \cT_1]t_2 : \cT_2 & \gtwf{\cT_2}
    }{\EnvSG \unpackExists : \cT_2}
  \end{mathpar}
\end{small} 
\begin{small}  
\begin{flushleft}
\framebox{$\conf{t} \nred \conf{t} \text{ or } \error$}
~\textbf{Notion of reduction}
\end{flushleft}
\begin{equation*}
\text{($R$unpack)} \;\unpackExists[X][x][{\cast{\ev}{\packExistsr[\cT'][\cast{\ev_1}{u :: \cT_1}][X][\cT] :: \tyExists[X][\cT]}}][t] \nred 
\begin{block}
          \begin{cases}
          \conf[\store']{\substTermPaper{X}{\evlift{\alpha}}{t}[(\cast{(\ev[1] \trans{=} \evidenceExists{\ev}{\evlift{\cT'}}{\evlift{\alpha}})}{{u :: \cT[\alpha/X]})/x]}}
            \\
            \text{where } \store' \triangleq \store, \alpha := {\cT'} \text{ for some } \alpha \notin \dom(\store) \\
             \text{and } \evlift{\alpha} = \evliftname_{\store'}(\alpha)\\
          \error \qquad \text{if not defined}
          \end{cases}
          \end{block}
\end{equation*}
\end{small} 
\begin{small}  
\begin{flushleft}
\framebox{$\conf{t} \red \conf{t} \text{ or } \error$}
~\textbf{Evaluation frames and reduction}
\end{flushleft}
\begin{equation*}
      \begin{array}{rcl}
      f & ::= & \cdots| \cast{\ev}{\packExistsr[\cT][{[]}][X][\cT] :: \cT}
      \end{array}
  \end{equation*}
\end{small}
 \caption{{\gsfeex: Syntax, Static and Dynamic Semantics (extends Figure~\ref{fig:gsfe})}}
  \label{fig:gsfeexistential}
\end{figure}

To support the dynamic semantics for existential types, we need to extend the representation of evidence types $\cE$ in \gsfeex, adding $\exists X. \cE$ for existential evidence types.
Additionally, we extend the definitions of consistent transitivity naturally:
consistent transitivity between evidences with existential types simply relies on the underlying schemes:

{\small $$\inference[(ex)]{
        \pr{\cE_{1}, \cE_{2}} \trans{} \pr{\cE_{3}, \cE_{4}} = \pr{\cE'_{1}, \cE'_{2}}
      }
      {\pr{\exists X. \cE_{1}, \exists X. \cE_{2}} \trans{=} \pr{\exists X. \cE_{3}, \exists X. \cE_{4}} = 
      \pr{\exists X. \cE'_{1}, \exists X. \cE'_{2}}}$$}

\myparagraph{Illustration}  We now return to the gradual semaphore implementation $s^{*}_3$, which is the translation of the term $s_3$ from \gsfex to \gsfeex. Remember that all base values in \gsfeex are ascribed to their base types, but for simplicity below, we omit trivial evidences. The following reduction trace illustrates all the important aspects of reduction in \gsfeex:

\begin{small}
\begin{mathpar}
\begin{array}{p{2.9em}p{0.8em}lr}
&&\unpackExists[X][x][\cast{\ev[\semaf]}{\packExistsr[\?][v^{*}_3][\semaf] :: \semaf}][(x.\mathit{read} \; (x.\mathit{flip} \;x.\mathit{bit}))] & \text{\footnotesize initial evidence}\\
  \footnotesize($R$unpack)&$\red^{*}$&
  (\pr{\?\barr\Bool, \richtype{\?}\barr\Bool}
  (\lambda x. 0 < x)::\alpha\barr\Bool) 
& \\

\footnotesize($R$proj$i$)& &((\pr{\?\barr\Int, \richtype{\?}\barr\richtype{\Int}}
  (\lambda x. 1 - x)::\alpha\barr\alpha) \; (\pr{\Int, \richtype{\Int}} 1::\alpha))& \text{\footnotesize consistent transitivity}\\

  \footnotesize($R$app)&$\red$&
  (\pr{\?\barr\Bool, \richtype{\?}\barr\Bool}
  (\lambda x. 0 < x)::\alpha\barr\Bool) (\pr{\Int, \richtype{\Int}} (1 - 1) ::\alpha)
  & \text{\footnotesize unsealing eliminates $\alpha$}\\

\footnotesize($R$op,$R$asc)&$\red^{*}$&
  (\pr{\?\barr\Bool, \richtype{\?}\barr\Bool}
  (\lambda x. 0 < x)::\alpha\barr\Bool) (\pr{\Int, \richtype{\Int}} 0 ::\alpha)
  &\text{\footnotesize the return is sealed}\\
\footnotesize($R$app)&$\red^{}$&
  \ev[\Bool](0 <  0)::\Bool & \text{\footnotesize unsealing eliminates $\alpha$}  \\

\footnotesize($R$op,$R$asc)&$\red^{*}$& 
  \ev[\Bool]\false::\Bool
  & \text{\footnotesize}
\end{array} 
\end{mathpar}
\end{small}

In this example, the initial evidence of the package is fully static. We omit some steps in the reduction, but is crucially to show in the rule ($R$unpack) how the evidence $\evidenceExists{{\ev[\semaf]}}{{\?}}{\richtype{\?}}$ is calculated:

\begin{small}
\begin{mathpar}
\begin{array}{l}
\evidenceExists{{\ev[\semaf]}}{{\?}}{\richtype{\?}} \equiv \pr{\pairtype{\?}{\pairtype{(\? -> \?)}{(\? -> \Bool)}}, \pairtype{\richtype{\?}}{\pairtype{(\richtype{\?} -> \richtype{\?})}{(\richtype{\?} -> \Bool)}}}
\end{array} 
\end{mathpar}
\end{small}

After some application of the rule ($R$proj$i$), the term component is protected by the type name $\alpha$. The application step ($R$app) then gives rise to unsealing evidence to interact with the implementation and sealing evidence to protect the implementation.

\subsection{Properties of \gsfex}
\label{sec:gsfex-properties}

In this section, we summarize the main properties, statics and dynamics, concerning \gsf. We cover the refined criteria for gradual typing and parametricity.

\myparagraph{Static Properties} 
We can show that the \gsfex meet the same static properties as \gsf.
\begin{restatable}[\gsfex: Precision, inductively]{proposition}{PrecisioninductivelyExistential}
\label{PrecisioninductivelyExistential} The inductive definition of type precision given in Figure~\ref{fig:gsf-staticsETexistential} is equivalent to Definition~\ref{def:precision}.
\end{restatable}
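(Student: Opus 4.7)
My plan is to prove Proposition~\ref{PrecisioninductivelyExistential} by extending the proof of Proposition~\ref{Precisioninductively}, which already establishes the equivalence for all \gsf type constructors. The extension amounts to adding one new inductive case to the structural induction (for $\exists X.\cT$) on each direction of the iff.

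For soundness of the inductive definition (rules $\Rightarrow$ Def.~\ref{def:precision}), I would proceed by induction on the derivation of $\tprules{\cT_1}{\cT_2}$. The only new case is the rule for existentials: from $\tprules{\cT_1}{\cT_2}[\store;\Delta,X]$ conclude $\tprules{\exists X.\cT_1}{\exists X.\cT_2}$. By the induction hypothesis, $\conc{\cT_1}[\store;\Delta,X] \subseteq \conc{\cT_2}[\store;\Delta,X]$. Then unfolding the extended definition of $\cs$ for existentials,
\[
\conc{\exists X.\cT_1} = \{\exists X.T \mid T \in \conc{\cT_1}[\store;\Delta,X]\} \subseteq \{\exists X.T \mid T \in \conc{\cT_2}[\store;\Delta,X]\} = \conc{\exists X.\cT_2},
\]
which is exactly what Def.~\ref{def:precision} requires.

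For completeness (Def.~\ref{def:precision} $\Rightarrow$ rules), I would proceed by induction on the structure of $\cT_2$, exactly as in the proof of Proposition~\ref{Precisioninductively}. The only new case is $\cT_2 = \exists X.\cT'_2$. Here, $\conc{\exists X.\cT'_2}$ contains only static types of the form $\exists X.T$. Since $\conc{\cT_1} \subseteq \conc{\cT_2}$ and $\conc{\cT_1}$ is non-empty (even $\conc{X}$ or $\conc{\alpha}$ is a singleton), every element of $\conc{\cT_1}$ must be an existential static type; by inspection of $\cs$, this forces $\cT_1$ to be either $\exists X.\cT'_1$ for some $\cT'_1$, or syntactically incompatible shapes can be ruled out (\eg~$\cT_1 = \?$ would give $\conc{\cT_1} = \Type \not\subseteq \conc{\cT_2}$; $\cT_1$ with any other top-level constructor would contain non-existential static types). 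In the remaining case $\cT_1 = \exists X.\cT'_1$, the set inclusion on $\cs$ reduces pointwise to $\conc{\cT'_1}[\store;\Delta,X] \subseteq \conc{\cT'_2}[\store;\Delta,X]$, and the induction hypothesis yields $\tprules{\cT'_1}{\cT'_2}[\store;\Delta,X]$, from which the new inductive rule concludes $\tprules{\exists X.\cT'_1}{\exists X.\cT'_2}$. If instead $\cT_2 = \?$, the rule $\tprules{\cT}{\?}$ applies trivially, as in the original proof.

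No step here poses a real obstacle: the structure mirrors the $\forall X.\cT$ case already handled in Proposition~\ref{Precisioninductively}, since concretization treats binders pointwise on the body in the extended environment $\store;\Delta,X$. The mildly delicate point is the syntactic case analysis ruling out $\cT_1$ having a non-matching top-level constructor when $\cT_2 = \exists X.\cT'_2$; this is handled by observing that $\cs$ on any non-$\?$, non-existential head produces at least one non-existential static type, contradicting the inclusion.
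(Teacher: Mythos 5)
Your proposal is correct and follows essentially the same route as the paper: both directions are handled by adding the single new $\exists X.\cT$ case to the existing structural inductions from Proposition~\ref{Precisioninductively}, using the pointwise definition of concretization on the binder's body and the fact that injectivity of $T \mapsto \exists X.T$ lets the set inclusion reduce to the bodies. The case analysis ruling out a mismatched head constructor for $\cT_1$ (including $\cT_1 = \?$ via $\conc{\?} = \Type$) is exactly the observation needed, so there is no gap.
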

\begin{restatable}[\gsfex: Consistency, inductively]{proposition}{ConsistencyinductivelyExistential}
\label{prop:ConsistencyinductivelyExistential} The inductive definition of type consistency given in Figure~\ref{fig:gsf-staticsETexistential} is equivalent to Definition~\ref{def:consistency}.
\end{restatable}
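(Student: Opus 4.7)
The plan is to prove this proposition by extending the inductive proof used for Proposition~\ref{Consistencyinductively} (the \gsf version) with the new case for existential types. Since the other type constructors were already handled in the \gsf proof, only the new rule for $\exists X.\cT$ needs fresh attention; structurally, the argument mirrors that for $\forall X.\cT$, because existential and universal types have the same binding structure in the definitions of $\cs$ and $\as$.

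For the \textbf{forward direction} (soundness of the inductive definition), I would proceed by induction on the derivation of $\ceqrules{\cT_1}{\cT_2}$. For the new rule, assume $\ceqrules{\cT'_1}{\cT'_2}[\store;\Delta,X]$ with conclusion $\ceqrules{\exists X.\cT'_1}{\exists X.\cT'_2}[\store;\Delta]$. By the induction hypothesis, there exist $\sstore \in \conc{\store}$ and $T'_i \in \conc{\cT'_i}[\store;\Delta,X]$ such that $\eqrules{T'_1}{T'_2}[\sstore;\Delta,X]$. Then $\exists X.T'_i \in \conc{\exists X.\cT'_i}[\store;\Delta]$ by the definition of concretization for existentials, and by the (newly-added) equality rule for existentials in \SPFLex, $\eqrules{\exists X.T'_1}{\exists X.T'_2}[\sstore;\Delta]$ holds, discharging the semantic side.

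For the \textbf{reverse direction} (completeness), I would induct on the structure of $\cT_1$ (or equivalently on the derivation of the static equality $\eqrules{T_1}{T_2}[\sstore]$). In the new case $\cT_1 = \exists X.\cT'_1$, fix witnesses $\sstore \in \conc{\store}$, $T_1 \in \conc{\exists X.\cT'_1}$, $T_2 \in \conc{\cT_2}$, with $\eqrules{T_1}{T_2}[\sstore]$. By the shape of $\conc{\exists X.\cT'_1}$, we have $T_1 = \exists X.T'_1$ for some $T'_1 \in \conc{\cT'_1}[\store;\Delta,X]$. Inspecting the type equality rules of \SPFLex, the only rules that derive $\eqrules{\exists X.T'_1}{T_2}$ either require $T_2 = \exists X.T'_2$ with $\eqrules{T'_1}{T'_2}[\sstore;\Delta,X]$, or unfold a type name on the right-hand side via the store. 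In the first case, $T_2 \in \conc{\cT_2}$ forces $\cT_2$ to be either $\exists X.\cT'_2$ (with $T'_2 \in \conc{\cT'_2}[\store;\Delta,X]$) or $\?$; both subcases let me apply the induction hypothesis on the bodies and conclude $\ceqrules{\cT'_1}{\cT'_2}[\store;\Delta,X]$ (resp.\ $\ceqrules{\cT'_1}{\?}[\store;\Delta,X]$), from which the inductive rule for existentials (resp.\ the $\?$-rule) derives $\ceqrules{\exists X.\cT'_1}{\cT_2}[\store;\Delta]$. The case where $T_2$ is obtained by unfolding a type name is handled by the consistency rule for type names on the right, appealing to the induction hypothesis under the unfolded store.

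The main obstacle I expect is bookkeeping around the interaction between type name stores, binding of $X$, and the concretization extension $\conc{\cdot}[\store;\Delta,X]$: I need to ensure that the well-formedness side-conditions (which are implicit in Definition~\ref{def:consistency} via the requirement that $\eqrules{T_1}{T_2}[\sstore]$ is derivable) are compatible with the gradual well-formedness premises of the inductive rules (\eg~$\gtwf{\cT}$ in the $\?$-rules). This parallels exactly the treatment for $\forall X.\cT$ in the \gsf version, so I would largely reuse that argument, and the equivalence then follows as a straightforward extension of Proposition~\ref{Consistencyinductively}.
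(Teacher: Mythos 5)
Your proposal is correct and follows essentially the same route as the paper: the equivalence is established exactly as for Proposition~\ref{Consistencyinductively}, with the existential case added by mirroring the universal-type case in both directions (induction on the consistency derivation for soundness, and case analysis on the static equality derivation together with the shape of the concretization for completeness). The bookkeeping concerns you flag about stores, the bound variable $X$, and well-formedness are handled identically to the $\forall X.\cT$ case, so no new difficulty arises.
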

The type system of \gsfex is equivalent to the \SPFLex type system on fully-static terms (Proposition~\ref{prop:static-eqExistential}), where $|-_S$ denote the typing judgment of \SPFLex.

\begin{restatable}[\gsfex: Static equivalence for static terms]{proposition}{StaticequivalenceforstatictermsExistential}
\label{prop:static-eqExistential}
Let $t$ be a static term and $\cT$ a static type ($\cT = T$). We have 
$|-_S t : T$ if and only if $|- t : T$
\end{restatable}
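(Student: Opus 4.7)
The plan is to prove both directions by mutual induction on typing derivations (as standard, the statement is strengthened to open terms and static contexts, so that the induction goes through). The forward direction ($|-_S t : T$ implies $|- t : T$) proceeds by induction on the \SPFLex typing derivation: for each static rule, we need to produce a corresponding \gsfex derivation using the consistent (gradual) counterparts of the type predicates and functions. The backward direction is analogous, using the static rules, relying on the fact that on static inputs the gradual notions degenerate to the static ones.

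The bulk of the cases are exactly those already handled in Proposition~\ref{prop:static-eq}; I would only sketch the two genuinely new cases introduced by \SPFLex/\gsfex, namely pack and unpack. For (Tpack)/(Gpack), the static premise is $\eqrules{T_1}{T[T'/X]}$ while the gradual premise is $\ceqrules{\cT_1}{\cT[\cT'/X]}$; since $\cT_1$, $\cT$, $\cT'$ are all static (and hence $\cT_1 = T_1$, etc.), Proposition~\ref{prop:ConsistencyinductivelyExistential} together with the fact that consistency is the existential lifting of type equality (Definition~\ref{def:consistency}) implies $\ceqrules{T_1}{T[T'/X]}$ iff $\eqrules{T_1}{T[T'/X]}$. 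Well-formedness of $T'$ transfers trivially. For (Tunpack)/(Gunpack), the only non-trivial point is that $\cschemeEx$ restricted to static existential types coincides with $\schemeEx$: indeed, $\cschemeEx(\exists X.T) = T = \schemeEx(\exists X.T)$, and the $\?$ case does not arise since $\cT_1$ is static. The inductive hypothesis on $t_1$ and $t_2$ then gives the required derivation in the other system.

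As a supporting lemma, I would explicitly record that \emph{on static inputs, each consistent lifting coincides with the corresponding static function}. This is a straightforward consequence of Definition~\ref{def:fun-lift} together with the Galois connection (Proposition~\ref{Galoisconnection}): for a static $T$, $\conc{T} = \set{T}$, so $\consistent{F}(\overline{T}) = \abst{\set{F(\overline{T})}} = F(\overline{T})$. Combined with the analogous fact for type predicates (here, equality vs. consistency), every premise of a static rule matches a premise of its gradual counterpart and vice versa.

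The main obstacle, such as it is, is purely bureaucratic: ensuring the statement is formulated over open terms with type-name stores and contexts so that the induction hypothesis applies under binders (notably in (T$\Lambda$), (Tpack), and (Tunpack), which extend $\Delta$ and/or $\Gamma$), and checking that well-formedness side conditions on $\cT'$ in (Gpack) and $\cT_2$ in (Gunpack) are preserved in both directions. Once the supporting lemma on consistent liftings is in place, every case is a direct application of the induction hypothesis together with the coincidence of the gradual and static judgments on static types.
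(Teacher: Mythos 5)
Your proposal is correct and follows essentially the same route as the paper: induction over typing derivations (stated over open terms with store and contexts), combined with the observation that on static types $\conc{T}=\set{T}$, so consistency collapses to equality and every consistent lifting (including $\cschemeEx$) coincides with its static counterpart, with pack/unpack being the only new cases on top of Proposition~\ref{prop:static-eq}. No gaps.
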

The static semantics of \gsfex satisfy the static gradual guarantee (Proposition~\ref{prop:StaticgradualguaranteeExistential}), where type precision (Def.~\ref{def:precision}) extends naturally to {\em term} precision $\iffullv{ (\S\ref{asec:gsfWithExistential})}$.
\begin{restatable}[\gsfex: Static gradual guarantee]{proposition}{StaticgradualguaranteeExistential}
\label{prop:StaticgradualguaranteeExistential}
Let $t$ and $t'$ be closed \gsfex terms such that $t \gprec t'$ and $|- t : \cT$.
Then $|- t' : \cT'$ and $\cT \gprec \cT'$.
\end{restatable}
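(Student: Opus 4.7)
The plan is to prove the result by induction on the typing derivation of $t$, generalized to open terms: namely, if $\EnvSG[\store][\Delta][\Gamma] t : \cT$ and $\swellGamma[\store'][\Delta][\Gamma']$ with $t \gprec t'$, $\store \gprec \store'$, and $\Gamma \gprec \Gamma'$ (pointwise precision on stores and environments), then $\EnvSG[\store'][\Delta][\Gamma'] t' : \cT'$ with $\cT \gprec \cT'$. The closed-term statement follows by instantiating with empty contexts. This mirrors the strategy used for Proposition~\ref{prop:Staticgradualguarantee} in \gsf, so only the cases introduced by existential types (rules (Gpack) and (Gunpack)) require genuinely new reasoning; all other cases are unchanged from the \gsf proof.

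The key reusable lemmas I will invoke (and extend to existentials where needed) are: (a) monotonicity of consistency, i.e.~if $\cT_1 \gprec \cT_1'$, $\cT_2 \gprec \cT_2'$, and $\ceqrules{\cT_1}{\cT_2}$, then $\ceqrules{\cT_1'}{\cT_2'}$; (b) monotonicity of the consistent type functions $\cdom$, $\ccod$, $\cinsta$, $\cprojts_i$, and now $\cschemeEx$, in the sense that if $\cT \gprec \cT'$ and the function is defined on $\cT$, then it is defined on $\cT'$ and produces a less precise type; and (c) stability of precision under type substitution, i.e.~if $\cT_1 \gprec \cT_1'$ and $\cT_2 \gprec \cT_2'$, then $\cT_1[\cT_2/X] \gprec \cT_1'[\cT_2'/X]$. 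Properties (a)--(c) for the existing constructors follow as in \gsf from the Galois connection (Proposition~\ref{Galoisconnection}); I need to extend each to the existential case, which is straightforward because $\exists X.\cT$ is a covariant constructor under precision and concretization (Figure~\ref{fig:c-a}).

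For the new case (Gpack), assume $\EnvSG \packExists[\cT'_0][t_0][X][\cT_b] : \exists X.\cT_b$, derived from $\EnvSG t_0 : \cT_1$, $\ceqrules{\cT_1}{\cT_b[\cT'_0/X]}$, and $\gtwf{\cT'_0}$. By term precision, the less precise term has the form $\packExists[\cT''_0][t'_0][X][\cT'_b]$ with $t_0 \gprec t'_0$, $\cT'_0 \gprec \cT''_0$, and $\cT_b \gprec \cT'_b$. The induction hypothesis gives $\EnvSG t'_0 : \cT_1'$ with $\cT_1 \gprec \cT_1'$. Using (c), $\cT_b[\cT'_0/X] \gprec \cT'_b[\cT''_0/X]$, and combining with (a), $\ceqrules{\cT_1'}{\cT'_b[\cT''_0/X]}$, so (Gpack) applies to yield $\EnvSG \packExists[\cT''_0][t'_0][X][\cT'_b] : \exists X.\cT'_b$, and $\exists X.\cT_b \gprec \exists X.\cT'_b$ by the existential precision rule of Figure~\ref{fig:gsf-staticsETexistential}. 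For (Gunpack), assume $\EnvSG \unpackExists[X][x][t_1][t_2] : \cT_2$ derived from $\EnvSG t_1 : \cT_1$ and $\EnvSG[\store][\Delta,X][\Gamma,x:\cschemeEx(\cT_1)] t_2 : \cT_2$. The less precise term is $\unpackExists[X][x][t'_1][t'_2]$ with $t_i \gprec t'_i$. Applying the induction hypothesis to $t_1$ yields $\EnvSG t'_1 : \cT_1'$ with $\cT_1 \gprec \cT_1'$; by (b) applied to $\cschemeEx$ we get $\cschemeEx(\cT_1) \gprec \cschemeEx(\cT_1')$, so the extended environment for the body becomes less precise in a monotone way. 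A second use of the induction hypothesis (on $t_2$, with the extended environments related by precision) yields $\EnvSG[\store][\Delta,X][\Gamma,x:\cschemeEx(\cT_1')] t'_2 : \cT_2'$ with $\cT_2 \gprec \cT_2'$, and (Gunpack) applies.

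The only slightly delicate step is (b) for $\cschemeEx$ in the (Gunpack) case: I must check that when $\cT_1$ already has existential shape, the less precise $\cT_1'$ is still either existential or $\?$, and in both subcases $\cschemeEx$ is defined and yields a less precise body. This falls out directly from the inductive definition of type precision for existentials in Figure~\ref{fig:gsf-staticsETexistential}, which allows $\exists X.\cT$ to be more precise only than another existential or than $\?$. I expect this to be the main (small) obstacle; everything else is a routine extension of the corresponding proof for \gsf.
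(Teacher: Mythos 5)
Your proposal is correct and follows essentially the same route as the paper: the static gradual guarantee is proved by induction on the (open-term) typing derivation, using monotonicity under precision of consistency, of the consistent type functions (now including $\cschemeEx$), and of substitution, all of which extend directly to the covariant existential constructor, with the closed-term statement obtained as a corollary. The only detail worth making explicit is that the well-formedness premises ($\gtwf{\cT'}$ in (Gpack) and $\gtwf{\cT_2}$ in (Gunpack)) are also preserved, since a less precise type only replaces subcomponents by $\?$ and therefore cannot acquire new free type variables.
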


\myparagraph{Dynamic Gradual Guarantees}
Not surprisingly, \gsfex does not satisfy the dynamic gradual guarantee (\S\ref{sec:gsf-dgg}) with respect to precision $\gprec$ for existential types.
Let us return to the semaphore implementation $s_1$. Note that
$s_1 \gprec s_4$, where $s_4 = \packExists[\Bool][v_1][X][\pairtype{X}{\pairtype{X -> X}{\? -> \Bool}}]$. If we use these terms in the same context as follows, we will obtain that
$$\unpackExists[X][x][s_1][(x.\mathit{read} \; (x.\mathit{flip} \; x.\mathit{bit}))] \gprec \unpackExists[X][x][s_4][(x.\mathit{read} \; (x.\mathit{flip} \;  x.\mathit{bit}))]$$
However, the term on the left reduces to $2$, while the (less precise) term on the right produces a runtime error because of the attempt to apply the function $\mathit{read}$ (in this case of type $\? -> \Bool$) to a sealed value. 
On the other hand, with a simple extension of strict precision to existential types, \gsfex does satisfy the weaker dynamic gradual guarantee \sdgg (Theorem~\ref{theorem:wdggMGSF}).

\myparagraph{Parametricity}
We establish parametricity for \gsfex by proving parametricity for \gsfeex. We extend the step-indexed logical relation for \gsfe (Figure~\ref{fig:lgr}), adding the interpretation of existential types. Usually, two packages are related if their term components are related under some conditions~\cite{ahmed:esop2006,neisAl:icfp2009}. But in gradual settings, the definition of $\setv{\exists X. \cT}$ is more complex. We start with the classical interpretation of the existential types adapted to our previous logical relation:

\begin{small}
      \begin{flalign*}
      \begin{array}{@{}>{\displaystyle}l@{}>{\displaystyle{}}l@{}}
  \lgrvm{\W}{\cast{\ev[1]}{\packExistsr[\cT_1][v_1][X][\cT'_1] :: \exists X.\tpesub(\cT)}}{\cast{\ev[2]}{\packExistsr[\cT_2][v_2][X][\cT'_2] :: \exists X.\tpesub(\cT)}}{\exists X.\cT}{
  \\& \qquad \quad
          \forall \W' \futureW \W, \alpha. \exists R \in \reln[{\getIdxp}]{\cT_1}{\cT_2}.
          (\W' \extworld (\alpha, \cT_1, \cT_2,  R),
          v_1, v_2) \in  \setv[\tpesub \lcorchete X \mapsto \alpha \rcorchete]{\cT}
        }
           \end{array}
          \end{flalign*}
  \end{small}

Let us focus on some simple and not very interesting programs, but useful to explain our existential types interpretation. For example, if we relate these two package, $\cast{\ev[\exists X.X]}{\packExistsr[\Int][\cast{\ev[\Int]}{1 :: \Int}][X][\cT] :: \exists X.X}$ and $\cast{\ev[\exists X.X]}{\packExistsr[\Bool][\cast{\ev[\Bool]}{\true :: \Bool}][X][\cT] :: \exists X.X}$, under the above definition, then we would have to prove that their component terms, $\cast{\ev[\Int]}{1 :: \Int}$ and $\cast{\ev[\Bool]}{\true :: \Bool}$, are related in $\setv[{\tpesub[X -> \alpha]}]{X}$, which is not true. Keep in mind that for two terms to be related in our logical relationship they must have the same type, and they are related in a type variable if they are related in the type variable substituted by its associated type name. 

Taking the above into account, we change the logical interpretation of existential types slightly.

\begin{small}
      \begin{flalign*}
      \begin{array}{@{}>{\displaystyle}l@{}>{\displaystyle{}}l@{}}
  \lgrvm{\W}{\cast{\ev[1]}{\packExistsr[\cT_1][v_1][X][\cT'_1] :: \exists X.\tpesub(\cT)}}{\cast{\ev[2]}{\packExistsr[\cT_2][v_2][X][\cT'_2] :: \exists X.\tpesub(\cT)}}{\exists X.\cT}{
  \\& \qquad \quad
          \forall \W' \futureW \W, \alpha. \exists R \in \reln[{\getIdxp}]{\cT_1}{\cT_2}. (\W' \extworld (\alpha, \cT_1, \cT_2,  R), \\
          & \qquad \quad
          {\cast{\Gbox{\evidenceExists{\ev[1]}{\evlift{\cT_1}}{\evlift{\alpha}}}}{v_1 :: \Gbox{\tpesub(\cT)[\alpha/X]}}},  {\Gbox{\cast{\evidenceExists{\ev[2] }{\evlift{\cT_2}}{\evlift{\alpha}}}}{v_2 :: \Gbox{\tpesub(\cT)[\alpha/X]}}}) \in  \Gbox{\sett[\tpesub \lcorchete X \mapsto \alpha \rcorchete]{\cT}}
        }
           \end{array}
          \end{flalign*}
  \end{small}
  
First, we establish that two packages are related if their term components ascribed to the existential type body, substituting the fresh type name $\alpha$ by $X$, are related. Second, since we ascribed term components to other types, we need evidence justifying this.
More specifically, we need two evidences that justify $\tpesub(\cT)[\cT_1/X]$ is consistent with  $\tpesub(\cT)[\alpha/X]$ and $\tpesub(\cT)[\cT_2/X]$ is consistent with  $\tpesub(\cT)[\alpha/X]$, respectively. In this sense, we use evidences $\evidenceExists{\ev[1]}{\evlift{\cT_1}}{\evlift{\alpha}}$ and $\evidenceExists{\ev[2]}{\evlift{\cT_2}}{\evlift{\alpha}}$; they are just $\ev[1]$ and $\ev[2]$, substituting representation types in the left and the fresh type name $\alpha$ in the right, by $X$. Note that the combination of these evidences with the internal evidences of the package (term component evidences) through transitivity can fail.

This interpretation of existential types is pretty complete but is not enough. Now, suppose that we have the packages $\cast{\ev[\exists X.\?]}{\packExistsr[\Int][\cast{\ev[\Int]}{1 :: \Int}][X][\cT] :: \exists X.\?}$ and $\cast{\ev[\exists X.\?]}{\packExistsr[\Bool][\cast{\ev[\Int]}{1 :: \Int}][X][\cT] :: \exists X.\?}$. These two packages are very similar; the only difference consist in their representation type. They are related under the above interpretation of existential types, due the fact that we can relate $\cast{\ev[\?]}{(\cast{\ev[\Int]}{1 :: \Int)}}$ and $\cast{\ev[\?]}{(\cast{\ev[\Int]}{1 :: \Int)}}$ under the unknown type. But we do not want to relate these packages. First, it is easy to show that the encodings to universal types of these two packages are not related because they do not behave in the same way. Second, we could use the packages in the same context (\eg if we ascribe them by the type $\exists X. X$) with different behaviors, losing the property that says if two packages are related then they are contextually equivalents. Therefore, we need to be more strict in the definition of when to packages are related. 

Finally, we define the interpretation of existential types as follows:

  \begin{small}
      \begin{flalign*}
      \begin{array}{@{}>{\displaystyle}l@{}>{\displaystyle{}}l@{}}
  \lgrvm{\W}{\cast{\ev[1]}{\packExistsr[\cT_1][v_1][X][\cT'_1] :: \exists X.\tpesub(\cT)}}{\cast{\ev[2]}{\packExistsr[\cT_2][v_2][X][\cT'_2] :: \exists X.\tpesub(\cT)}}{\exists X.\cT}{
  \\& \qquad \quad
          \forall \W' \futureW \W, \alpha. \exists R \in \reln[{\getIdxp}]{\cT_1}{\cT_2}.\Gbox{\forall \ijudgment{\ev}[\getinitialStore; \initialDelta]{\exists X.\cT}{\exists X.\cT}.} \\
          & \qquad \quad
          (\Gbox{(\ev[1] \trans{} \tpesubSI[1]{\ev}) \land (\ev[2] \trans{} \tpesubSI[2]{\ev})})  =>
          (\W' \extworld (\alpha, \cT_1, \cT_2,  R),\\
          & \qquad \quad
          {\cast{\evidenceExists{(\ev[1] \Gbox{\trans{} \tpesubSI[1]{\ev}})}{\evlift{\cT_1}}{\evlift{\alpha}}}{v_1 :: \tpesub(\cT)[\alpha/X]}},  {\cast{\evidenceExists{(\ev[2] \Gbox{\trans{} \tpesubSI[2]{\ev}})}{\evlift{\cT_2}}{\evlift{\alpha}}}{v_2 :: \tpesub(\cT)[\alpha/X]}}) \in  \sett[\tpesub \lcorchete X \mapsto \alpha \rcorchete]{\cT}
        }
           \end{array}   
          \end{flalign*}
  \end{small}

 The representation type of a package in gradual settings act as a pending substitution, which has to make sense for all possible (more precise) existential types. In a static world, we do not have to deal with this problem, because evidence never gains precision, and the initial type checking ensures that the program never fails. For this reason, we extend the interpretation of existential types quantifying in all evidence that justifies $\exists X. \cT$ is consistent with $\exists X. \cT$. The universal quantification in all possible evidence that justifies the above judgment ensures that the representation type will behave correctly for any existential type, more precise than $\exists X. \cT$. We arrive at the same reasoning of the interpretation of existential types, studying the encoding of existential into universal types.

\myparagraph{Representation Independence and Gradual Free Theorems}
\label{sec:existentialequivalences}
We prove the soundness of the logical relation extended with existential types with respect to contextual equivalence.
\begin{restatable}{proposition}{ConetaxtualeqExistential}
\label{prop:ConetaxtualeqExistential}
 If $\logeqg{t_1}{t_2}{\cT}$, then $\contequiv{t_1}{t_2}{\cT}$.
\end{restatable}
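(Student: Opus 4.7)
The plan is to follow the standard route for proving soundness of a logical relation with respect to contextual equivalence, adapted to the extensions introduced in \gsfex. Concretely, I would (i) extend the fundamental property (Theorem~\ref{theorem:gfunprop}) from \gsfe to \gsfeex by establishing compatibility lemmas for the two new constructors $\packExists$ and $\unpackExists$, (ii) lift these into a congruence property stating that logically equivalent terms remain related under arbitrary well-typed contexts $\ctx$, and (iii) invoke an adequacy lemma to transfer logical equivalence at the top level into observational equivalence, from which $\contequiv{t_1}{t_2}{\cT}$ follows.

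The technical core is the two new compatibility lemmas. For $\mathsf{pack}$, given related implementation terms at type $\cT_1[\cT'/X]$ and a suitable ascription evidence, I must show that the resulting ascribed packages lie in $\setv{\exists X. \cT_1}$. The refined interpretation presented in \S\ref{sec:gsfex-properties} requires, for every future world, every fresh $\alpha$, and every $\ev$ refining the self-consistency of $\exists X.\cT_1$, that the transitive compositions $\ev[i] \trans{=} \tpesubSI[i]{\ev}$ be defined on both sides simultaneously and that the resulting ascribed bodies be related in $\sett[\tpesub[X \mapsto \alpha]]{\cT_1}$. The argument proceeds by picking the canonical relation $R$ associated with the representation witness, appealing to Lemma~\ref{prop:ascriptonlemmapaper} to show that the additional ascription (computed via $\evidenceExists{\cdot}{\evlift{\cT_i}}{\evlift{\alpha}}$) preserves the relation, and using the assumed relatedness of $t_1, t_2$ at the body type. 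For $\mathsf{unpack}$, given related packages and bodies related under the extended environments $\Delta, X$ and $\Gamma, x:\cT_1$, I step both sides with rule ($R$unpack), instantiate the existential interpretation with a fresh $\alpha$ and an appropriate evidence $\ev$, and then apply a substitution lemma for $\setd$ and $\setg$ to conclude that the reducts are related computations.

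Congruence then follows by a straightforward induction on the typing derivation $\typingctxg[\cT][\cT'][\ctx]$, using each compatibility lemma at the corresponding frame. Adequacy is already embedded in the definition of $\sett{\cT}$ in Figure~\ref{fig:lgr}, which is termination- and error-sensitive by construction. Composing the three pieces yields the result: for any closing context $\ctx : (\store; \Delta; \Gamma \vdash \cT) \tcsy (\store' \vdash \cT')$, congruence lifts $\logeqg{t_1}{t_2}{\cT}$ to $\logeqg[\store';\emptyenv;\emptyenv]{\ctx[t_1]}{\ctx[t_2]}{\cT'}$, and adequacy then delivers matching termination/error behavior, which is exactly $\contequiv{t_1}{t_2}{\cT}$.

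The main obstacle will be the compatibility lemma for $\mathsf{pack}$ under the stricter existential interpretation, specifically the universal quantification over every $\ev$ refining $\exists X. \cT$. One must carefully track how the composed evidence propagates through the substitution of the representation type and the fresh type name, and show that the defined-on-one-side/defined-on-the-other-side condition is symmetric in a useful way. The associativity and optimality of consistent transitivity (Lemma~\ref{lm:ctopt}) will be instrumental, but they must be combined with precision-based reasoning about the shape of $\ev$ at the $\exists$-constructor, which is more delicate than the corresponding argument for $\forall$ because the package additionally carries an internal implementation evidence $\ev[i]$ that must be threaded through the composition before reaching the body.
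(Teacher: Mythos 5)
Your proposal follows the same route the paper takes: soundness of the logical relation with respect to contextual equivalence is established in the standard way, via compatibility lemmas for each construct (here extended with $\mathsf{pack}$/$\mathsf{unpack}$ under the refined existential interpretation), a congruence argument by induction on the context typing derivation, and adequacy built into the error- and termination-sensitive definition of $\sett{\cT}$. You also correctly isolate the genuinely delicate point, namely the $\mathsf{pack}$ compatibility lemma and the universal quantification over evidences refining $\exists X.\cT \sim \exists X.\cT$, so no gap to report.
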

With this result, we can return to the semaphore example and show the representation independence for the two different implementations $s_1$ and $s_3$. Let us recall the definition of these packages, where the former uses $\Bool$ as representation type, while the latter uses the unknown type:

\begin{small}
\begin{flalign*}
\begin{array}{ll}
s_1 \equiv \packExistsC[\Bool][v_1][\semaf] & where \;  v_1 \equiv {\{{\mathit{bit} = \true,\; }{{\mathit{flip} = (\lambda x :\Bool. \neg \; x),\; }{\mathit{read} = (\lambda x:\Bool. x)\}}}}\\
 s_3 \equiv {\packExistsC[\?][\lceil v_3 \rceil][\semaf]} & where \; v_3 \equiv \{\mathit{bit} = 1, \mathit{flip} = (\lambda x. 1 - x),\mathit{read} = (\lambda x. 0 < x)\} 
\end{array}
\end{flalign*}
\end{small}

To prove that these two packages are contextually equivalent (Proposition~\ref{prop:equivalencesemimprecise}), it suffices by Proposition~\ref{prop:ConetaxtualeqExistential} to show that each logically approximates the other. 
(Note that to proceed with the proof below, we deal with the tuple-based representation of $\semaf$, since \gsf has no records.)
We prove only one direction, namely ${s_1}\preceq{s_3}: {\semaf}$; the other is proven analogously. Therefore, we are required to show that ${s^{*}_1}\preceq{s^{*}_3}:{\semaf}$, where  $s^{*}_1$ and $s^{*}_3$ are the translation of $s_1$ and $s_3$ from \gsfex to \gsfeex, respectively.
\begin{restatable}[]{proposition}{equivalencesemimprecise}
\label{prop:equivalencesemimprecise}
$\contequivempty{s_1}{s_3}{\semaf}$
\end{restatable}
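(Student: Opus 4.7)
The plan is to apply Proposition~\ref{prop:ConetaxtualeqExistential} to reduce the statement to a logical equivalence claim, and then use the symmetry of the setup to concentrate on proving one direction of logical approximation, namely $\logaproxg[\emptyenv]{s_1^*}{s_3^*}{\semaf}$ (the other direction $\logaproxg[\emptyenv]{s_3^*}{s_1^*}{\semaf}$ is analogous by swapping the roles of the two components of the witnessing relation). Throughout, $s_1^*$ and $s_3^*$ denote the \gsfeex elaborations of $s_1$ and $s_3$ respectively, which are both packages at type $\semaf$ carrying some initial evidences $\ev[1]$ and $\ev[3]$ for the consistency judgment $\semaf \sim \semaf$.

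To establish the approximation, I would unfold the definition of the value interpretation $\setv{\exists X.(X \pairsy (X -> X) \pairsy (X -> \Bool))}$ given in Section~\ref{sec:gsfex-properties}. This requires exhibiting, for every future world $\W'$ and every fresh type name $\alpha$, a relation $R \in \reln[\W'.j]{\Bool}{\?}$ that connects the two concrete bit representations. The natural choice is
\[
R \triangleq \{(\W'', \true, \cast{\pr{\Int, \Int}}{1 :: \?}),\; (\W'', \false, \cast{\pr{\Int, \Int}}{0 :: \?}) \mid \W'' \futureW \W'\},
\]
which is clearly downward-closed and therefore belongs to $\reln[\W'.j]{\Bool}{\?}$. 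Then for each evidence $\ev$ witnessing $\semaf \sim \semaf$ such that the required consistent transitivity $\ev[i] \trans{} \tpesubSI[i]{\ev}$ is defined for $i \in \{1,3\}$, I would show that the ascribed triples lie in the term relation at $X \pairsy (X -> X) \pairsy (X -> \Bool)$ under the extended world $\W' \extworld (\alpha, \Bool, \?, R)$. By the compatibility lemmas for pairs and projections from the fundamental property (Theorem~\ref{theorem:gfunprop}), this reduces to three component-wise obligations:
\begin{enumerate}
  \item the $\mathit{bit}$ components, once ascribed through the package evidences, are related at $X$ (they reduce to a pair in $R$ in their respective stores);
  \item the $\mathit{flip}$ components, when fed any pair of values in $R$, reduce to another pair in $R$ (mapping $\true \mapsto \false$ on the left and $1 \mapsto 0$ on the right, and conversely);
  \item the $\mathit{read}$ components, when fed any pair of values in $R$, reduce to the same boolean value on both sides.
\end{enumerate}
Each of these three obligations is discharged by straightforward reduction traces combined with Lemma~\ref{prop:ascriptonlemmapaper} (ascriptions preserve the relation), taking care that the sealing/unsealing evidences generated by the use of the existential interpretation do interact correctly via consistent transitivity with the identity-like evidences carried by the untyped implementation.

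The main obstacle will be \emph{(a)} the universal quantification over the witnessing evidence $\ev$ in the definition of $\setv{\exists X.\cT}$, which forces us to verify that none of the four arithmetic/boolean operations in $\lceil v_3 \rceil$ ever causes a consistent-transitivity failure once the representation type $\?$ has been refined to any type consistent with $\semaf$; and \emph{(b)} carefully accounting for the extra internal ascriptions to $\?$ introduced by the embedding $\lceil \cdot \rceil$ on the untyped side, which insert evidences of the form $\pr{\Int,\Int}$ and $\pr{\Bool,\Bool}$ that must compose cleanly with the sealing/unsealing evidences arising from the existential elimination. Both points are tractable because the operations in $v_3$ only inspect integers and the embedding never introduces imprecise intermediate types, so every consistent-transitivity step along the relevant reduction paths either simplifies by rule (unsl) or is already fully precise; but spelling this out carefully is where the technical bulk of the proof lies.
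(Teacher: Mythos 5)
Your proposal matches the paper's proof essentially step for step: reduce to logical approximation via Proposition~\ref{prop:ConetaxtualeqExistential}, prove one direction with the other analogous, unfold the existential value interpretation, pick the same witness relation $R$ pairing $\true$ with $1$ and $\false$ with $0$, and discharge the three component obligations for $\mathit{bit}$, $\mathit{flip}$ and $\mathit{read}$ by direct reduction. The only substantive point you miss is that your anticipated obstacle \emph{(a)} evaporates: since $\semaf$ is fully static, the only evidence $\ev$ justifying $\semaf \sim \semaf$ is the static $\ev[\semaf]$ itself, so $\ev[\semaf] \trans{} \ev = \ev[\semaf]$ and the universal quantification over $\ev$ is trivial.
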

To prove ${s^{*}_1}\preceq{s^{*}_3}:{\semaf}$, we are required to show that for all $\W$, $(\W, s^{*}_1, s^{*}_3) \in \sett[\emptyset]{\semaf}$. Therefore, we have to prove that $|-{s^{*}_i}\inTermT{\semaf}$ (but this is already proven) and $(\W, s^{*}_1, s^{*}_3) \in \setv[\emptyset]{\semaf}$ (since $s^{*}_i$ are already values). Expanding the definition of $\setv[\emptyset]{\semaf}$, we need to show that $\forall \W' \futureW \W$ and $\alpha$, $\exists R \in \reln[{\getIdxp}]{\Bool}{\?}$, such that $\forall \ijudgment{\ev}[\cdot;\cdot]{\semaf}{\semaf}$:
$$(W'', {\cast{\evidenceExists{(\ev[\semaf] {\trans{} {\ev}})}{\Bool}{\evlift{\alpha}}}{v^{*}_1 :: \cT[\alpha/X]}},  {\cast{\evidenceExists{(\ev[\semaf] {\trans{}{\ev}})}{\?}{\evlift{\alpha}}}{v^{*}_3 :: \cT[\alpha/X]}}) \in  \sett[\lcorchete X \mapsto \alpha \rcorchete]{\cT}$$
where $W'' = \W' \extworld (\alpha, \Bool, \?,  R)$, $\cT = \cschemeEx(\semaf) = \pairtype{X}{\pairtype{(X -> X)}{(X -> \Bool)}}$, $s^{*}_1 = \cast{\ev[\semaf]}{\packExistsr[\Bool][v^{*}_1][][] :: \semaf}$ and $s^{*}_2 = \cast{\ev[\semaf]}{\packExistsr[\?][v^{*}_2][][] :: \semaf}$. Since $\ev[\semaf]$ is an static evidence, it can not gain precision and so $(\ev[\semaf] {\trans{} {\ev}}) = \ev[\semaf]$. Therefore, now we are required to show
$$(\downstep W'', v_1', v_3') \in \setv[\lcorchete X \mapsto \alpha \rcorchete]{\cT}$$
where

\begin{small}
\begin{flalign*}
\begin{array}{ll}
v'_1 = & \cast{\pr{\pairtype{\Bool}{\pairtype{(\Bool -> \Bool)}{(\Bool -> \Bool)}}, \pairtype{\richtype{\Bool}}{{(\pairtype{\richtype{\Bool} -> \richtype{\Bool})}{(\richtype{\Bool} -> {\Bool})}}}}}{}\\
&\pair{\true}{\pair{(\lambda x:\Bool. \neg \; x)}{(\lambda x:\Bool. x)}} :: \pairtype{\alpha}{\pairtype{\alpha -> \alpha}{\alpha -> \Bool}}\\
v'_3 = & \cast{\pr{\pairtype{\Int}{\pairtype{(\? -> \Int)}{(\? -> \Bool)}}, \pairtype{\richtype{\Int}}{\pairtype{(\richtype{\?} -> \richtype{\Int})}{(\richtype{\?} -> {\Bool})}}}}{}\\
&\pair{1}{\pair{(\lambda x. 1 - x)}{(\lambda x. 0 < x)}} :: \pairtype{\alpha}{\pairtype{(\alpha -> \alpha)}{(\alpha -> \Bool)}}
\end{array}
\end{flalign*}
\end{small}

Taking $R = \{(W^{*}, \cast{\ev[\Bool]}{\true :: \Bool} ,\cast{\ev[\Int]}{1 :: \?}), (W^{*}, \cast{\ev[\Bool]}{\false :: \Bool} ,\cast{\ev[\Int]}{0 :: \?}) | W^{*} \futureW \W'\}$, it is easy to show that

\begin{small}
\begin{flalign*}
\begin{array}{@{}>{\displaystyle}l@{}>{\displaystyle{}}l@{}}
-\;(\downstep^{2}W'',& \; \cast{\pr{\Bool, \richtype{\Bool}}}{\true :: \alpha} ,\cast{\pr{\Int, \richtype{\Int}}}{1 :: \alpha}) \in \setv[\lcorchete X \mapsto \alpha \rcorchete]{X}\\
-\;(\downstep^{2}W'',& \; \cast{\pr{\Bool -> \Bool,\richtype{\Bool} -> \richtype{\Bool}}}{(\lambda x:\Bool. \neg \; x):: \alpha -> \alpha}, \\
& \;  \cast{\pr{\? -> \Int,\richtype{\?} -> \richtype{\Int}}}{(\lambda x. 1 - x) :: \alpha -> \alpha}) \in \setv[\lcorchete X \mapsto \alpha \rcorchete]{X -> X}\\
-\;  (\downstep^{2}W'', & \;  \cast{\pr{\Bool -> \Bool,\richtype{\Bool} -> {\Bool}}}{(\lambda x:\Bool. x):: \alpha -> \Bool},\\ 
& \;  \cast{\pr{\? -> \Bool,\richtype{\?} -> {\Bool}}}{(\lambda x. 0 < x) :: \alpha -> \Bool}) \in \setv[\lcorchete X \mapsto \alpha \rcorchete]{X -> \Bool}
\end{array}
\end{flalign*}
\end{small}

Note that $\downstep^{2}W'' \futureW \W'$. Thus, the result follows immediately.

\section{Related Work}
\label{sec:related}

We have already discussed at length related work on gradual parametricity, especially the most recent developments~\citep{ahmedAl:icfp2017,igarashiAl:icfp2017,xieAl:esop2018,newAl:popl2020}, highlighting the different design choices, properties and limitations of each. Hopefully our discussions adequately reflect the many subtleties involved in designing a gradual parametric language.

The relation between parametric polymorphism in general and dynamic typing much predates the work on gradual typing. \citet{abadiAl:toplas1991} first note that without further precaution, type abstraction might be violated. Subsequent work explored different approaches to protect parametricity, especially runtime-type generation (RTG) \citep{leroyMauny:fpca1991,abadiAl:jfp1995,rossberg:ppdp2003}. \citet{sumiiPierce:popl2004} and \citet{guhaAl:dls2007} use dynamic sealing, originally proposed by \citet{morris:cacm1972}, in order to dynamically enforce type abstraction. \citet{matthewsAhmed:esop2008} also use RTG in order to protect polymorphic functions in an integration of Scheme and ML. This line of work eventually led to the polymorphic blame calculus~\cite{ahmedAl:popl2011} and its most recent version with the proof of parametricity by \citet{ahmedAl:icfp2017}. We adapt their logical relation to the evidence-based semantics of \gsf.

\citet{houAl:jfp2016} prove the correctness of compiling polymorphism to dynamic typing with embeddings and partial projections; the compilation setting however differs significantly from gradual typing. \citet{newAhmed:icfp2018} use embedding-projection pairs to formulate a semantic account of the dynamic gradual guarantee, coined graduality, in a language with explicit casts. Inspired by the work of \citet{neisAl:icfp2009} on parametricity in a non-parametric language, they extended their approach to gradual parametricity, yielding the \polyG language design with explicit sealing~\cite{newAl:popl2020}, discussed at length in this article.

\citet{devrieseAl:popl2018} disprove a conjecture by \citet{pierceSumii:2000} according to which 
the compilation of \sysF to a language with dynamic sealing primitives is fully abstract, \ie~preserves contextual equivalences. 
They show that, for similar reasons, the embedding of \sysF in a polymorphic blame calculus like \lamB is not fully abstract; their observation also applies to \gsf.
Full abstraction might be too strong a criteria for gradual typing: already in the simply-typed setting, embedding typed terms in gradual contexts is not fully abstract, because gradual types admit non-terminating terms. Exploring full abstraction as yet another possible criterion for a gradual language is an interesting perspective, at least to precisely characterize what such a language ensures to programmers.
In our previous work~\cite{toroAl:popl2019} we show that \gsf satisfies an {\em imprecise termination} property, which is a weaker yet useful result that sheds light on gradual free theorems about imprecise type signatures. In this revised article, we have extended the result by precisely characterizing the space of imprecision evolutions that are harmless.

This work is generally related to gradualization of advanced typing disciplines, in particular to gradual information-flow security typing~\citep{disneyFlanagan:stop2011,fennellThiemann:csf2013,fennellThiemann:ecoop2016,garciaTanter:arxiv2015,toroAl:toplas2018}. In these systems, one aims at preserving {\em noninterference}, \ie~that private values dot not affect public outputs. Both parametricity and noninterference are 2-safety properties, expressed as a relation of two program executions. While \citet{garciaTanter:arxiv2015} show that one can derive a pure security language with AGT that satisfies both noninterference and the dynamic gradual guarantee, \citet{toroAl:toplas2018} find that in presence of mutable references, one can have either the dynamic gradual guarantee, or noninterference, but not both. Also similarly to this work, AGT for security typing needs a more precise abstraction for evidence types (based on security {\em label intervals}) in order to enforce noninterference. Together, these results suggest that 
type-based approaches to gradual typing are in tension with semantically-rich typing disciplines. Solutions might come from restricting the considered {\em syntax}, as in \polyG in the context of parametricity, or the {\em range} of graduality, as recently established by \cite{azevedo:lics2020} in the context of noninterference, where the dynamic end of the spectrum is not fully untyped security-wise.

\section{Conclusion}
\gsf is a gradual parametric language that bridges between \sysF and an untyped language with dynamic sealing primitives. The spectrum between both extremes is fairly continuous, even if not perfectly: because of the implicit type-driven resolution of sealing in its runtime semantics, which appears necessary in order to respect the syntax of \sysF, some evolution scenarios towards imprecision can trigger failure. We precisely characterize the weaker continuity that \gsf supports, along with all its other properties. We also study several extensions towards a more full-fledged and practical language, in particular with a novel dynamic support for implicit polymorphism, and existential types for gradual data abstraction. The design of \gsf is largely driven by the Abstracting Gradual Typing (AGT) methodology. We find that AGT greatly streamlines the static semantics of \gsf, but does not yield a language that respects parametricity by default; non-trivial exploration was necessary to uncover how to strengthen the structure and treatment of runtime evidence in order to recover a notion of gradual parametricity. In turn, this strengthening broke the dynamic gradual guarantee in specific scenarios of loss of precision. 

This work focuses on the semantics and meta-theoretical properties of \gsf, without explicitly taking into account efficiency considerations such as pay-as-you-go~\citep{siekTaha:sfp2006,igarashiAl:icfp2017}, space efficiency~\citep{hermanAl:hosc10,siekAl:popl10}, cast elimination~\citep{rastogi:popl2012}, etc. Optimizing the dynamic semantics of \gsf is left for future work. Likewise, blame tracking has not been considered. Tracking blame in order to report more informative error messages is valuable, but most important is to properly {\em identify} error cases. As extensively discussed, gradual parametricity is subtle, and there are many scenarios when the decision of failing or not is open to debate and various considerations. This work contributes to this discussion by proposing several practical principles, with which related languages with blame tracking do not concur. We expect on-going work by colleagues on incorporating blame into AGT to be directly applicable to \gsf, because there does not seem to be any parametricity-specific challenges related to blame.

More importantly, this work recognizes two main trends in the design of gradual parametric languages: those based on \sysF, like \lamB, \csa, \sysFg and \gsf, and those that depart from that syntax, like \polyG. We believe that \gsf goes beyond prior work in the \sysF trend. Also, we have argued that while \polyG enjoys a stronger metatheory than languages from the other trend, several limitations regarding modularity and abstraction caused by its use of explicit sealing are not benign. The question remains open of whether there is a third way, embracing both \sysF and a fully satisfying metatheory.

\begin{acks}
We thank Amal Ahmed, Dominique Devriese, Kenji Maillard, Max New, Gabriel Scherer, the attendees of various oral presentations of this work, and the anonymous reviewers, for useful feedback and suggestions that improved both the presentation and our understanding of this work.
\end{acks}

\bibliography{_Bib/strings,_Bib/pleiad,_Bib/bib,_Bib/common}

\iffullv{\input{appendix}}
\end{document}